\newcounter{ex}
\theoremstyle{plain}
\newtheorem{theorem}{Theorem}[section]
\newtheorem{lemma}[theorem]{Lemma}
\newtheorem{corollary}[theorem]{Corollary}
\newtheorem{example}[ex]{Example}
\theoremstyle{definition}
\newtheorem{definition}{Definition}[section]
\renewenvironment{proof}[1][\proofname]{\par
  \vspace{-\topsep}
  \pushQED{\qed}%
  \normalfont
  \topsep3pt \partopsep3pt 
  \trivlist
  \item[\hskip\labelsep
        \itshape
    #1\@addpunct{.}]\ignorespaces
}{%
  \popQED\endtrivlist\@endpefalse
  \addvspace{6pt plus 6pt} 
}
\newcommand{\setmargins}[0]{\setlength{\itemsep}{-16pt}\setlength{\parsep}{0pt}\setlength{\parskip}{-4pt}}
\newtheorem*{rep@theorem}{\rep@title}
\newcommand{\newreptheorem}[2]{%
\newenvironment{rep#1}[1]{%
 \def\rep@title{#2 \ref{##1}}%
 \begin{rep@theorem}}%
 {\end{rep@theorem}}}
\newcommand{\ket}[1]{\ensuremath{\vert{#1}\rangle}}
\newcommand{\Tr}[1]{\ensuremath{\mathrm{Tr}\left({#1}\right)}}
\newcommand{\supp}[1]{\ensuremath{\mathrm{supp}\left({#1}\right)}}
\newcommand{\sys}[1]{\ensuremath{\mathrm{sys}\left({#1}\right)}}
\newcommand{\hsys}[1]{\ensuremath{\mathrm{hsys}\left({#1}\right)}}
\newcommand{\rank}[1]{\ensuremath{\mathrm{rank}\left({#1}\right)}}
\newcommand{\pauligrp}[1]{\mathcal{\hat{P}}_{#1}}
\newcommand{\ham}[1]{\text{Ham}(#1)}
\newcommand{\com}[2]{\text{Com}(#1,#2)}
\newcommand{\commap}[2]{\text{Com}_{#1}(#2)}
\newcommand{\edit}[1]{{\color{black}{#1}}}
\DeclarePairedDelimiter\floor{\lfloor}{\rfloor}
\DeclarePairedDelimiter\ceil{\lceil}{\rceil}
\DeclareMathOperator{\lcm}{lcm}
\title{A graph-based formalism for surface codes and twists}
\author{Rahul Sarkar}
\address{Institute for Computational and Mathematical Engineering, Stanford University, Stanford, CA 94305}
\email{rsarkar@stanford.edu}
\thanks{}
\author{Theodore J.~Yoder}
\address{IBM T.J. Watson Research Center, Yorktown Heights, NY}
\email{ted.yoder@ibm.com}
\thanks{}
\date{}
\begin{document}

\maketitle

\begin{abstract}
Twist defects in surface codes can be used to encode more logical qubits, improve the code rate, and implement logical gates. In this work we provide a rigorous formalism for constructing surface codes with twists generalizing the well-defined homological formalism introduced by Kitaev for describing CSS surface codes. In particular, we associate a surface code to \emph{any} graph $G$ embedded on \emph{any} 2D-manifold, in such a way that (1) qubits are associated to the vertices of the graph, (2) stabilizers are associated to faces, (3) twist defects are associated to odd-degree vertices. In this way, we are able to reproduce the variety of surface codes, with and without twists, in the literature and produce new examples. We also calculate and bound various code properties such as the rate and distance in terms of topological graph properties such as genus, systole, and face-width.
\end{abstract}

\tableofcontents

\newpage
\section{Introduction}

Since their introduction \cite{kitaev2003fault}, surface codes have been an essential tool in the quantum engineer's fight against noise, combining a high threshold with relatively simple 2-dimensional qubit connectivity. These codes also exemplify a surprising connection between quantum error-correction and topology. In effect, one associates a CSS quantum code to a graph embedded on a 2-dimensional manifold by assigning qubits to edges and stabilizers to vertices and faces of the graph. The homology of the surface guarantees that the stabilizers commute and that non-trivial logical operators correspond to homologically non-trivial cycles in the graph or its dual. 

However, the simple 2-dimensional connectivity comes with a drawback. Like all 2-dimensional stabilizer codes, surface codes are limited by the Bravyi-Poulin-Terhal bound \cite{bravyi2010tradeoffs}, which says that a code family that uses $N$ qubits to encode $K$ qubits with distance $D$ must satisfy \edit{$cKD^2\le N$} for some constant $c$. For instance, the rotated surface code \cite{bombin2007optimal} achieves $c=1$ in the plane, while the square-lattice toric code \cite{kitaev2003fault} achieves $c=1$ on the torus.

Motivated in part by improving the constant $c$, twist defects have been introduced to surface codes \cite{bombin2010topological,kitaev2012models}. Fundamentally, twist defects are local regions where Pauli errors can create (or destroy) unpaired excitations of both types, $X$ and $Z$, say, whereas elsewhere in the code the parities of these excitations are conserved. However, twist defects appear in several quantitatively different forms. For example, sometimes weight-five stabilizers are involved \cite{bombin2010topological,kitaev2012models}, sometimes only weight-four as in the usual surface code \cite{yoder2017surface,landahl2020surface}, and sometimes twists are embodied by a combination of weight-two and weight-six stabilizers \cite{krishna2020topological}. Qualitatively, these defects behave the same, with $M$ defects adding $(M-2)/2$ logical qubits to the code and with non-trivial logical operators forming paths between pairs of defects. 

Clever placement and implementation of twist defects can lead to savings in qubit count by producing codes with $N=cKD^2$ for relatively small constant $c$. For instance, in the plane, $c=3/4$ \cite{yoder2017surface} and $c=1/2$ \cite{kesselring2018boundaries} have been achieved for surface codes, and $c=3/8$ for color codes \cite{kesselring2018boundaries}. However, there is not a rigorous formalism for constructing surface codes with twists comparable to the well-defined homological formalism introduced by Kitaev for describing CSS surface codes \cite{kitaev2003fault}. Such a formalism would be helpful for looking for better surface codes with perhaps even smaller constants $c$.

Our main goal in this work is to introduce such a formalism, capable of describing surface codes with and without twist defects in a unified manner. Again we associate a code to an embedded graph, but find it most natural to do this in a different way than the homological formalism described above. Instead of placing qubits on edges, we place them on vertices. Only faces of the graph support stabilizers in our description, and odd-degree vertices act as the twist defects. Placing qubits on vertices is not a substantially new idea (for instance, appearing for the degree-4 case in \cite{bombin2007optimal}), but we go beyond prior works in applying the construction to arbitrary embedded graphs, yielding new codes with improved parameters. \edit{We highlight several code families we discuss in this work in Table~\ref{tab:code_summary}.}

Let us briefly summarize the main results of each section. Section~\ref{sec:rotation-systems} rigorously defines embedded graphs and related notation used throughout the paper. Our description of choice for an embedded graph is via a rotation system. Effectively, this is a combinatoric description of the embedded graph, which abstracts away unnecessary details about how exactly the graph is drawn on the surface. What remains is just the adjacency of the graph components (e.g.~vertices, edges, and faces). The rotation system description is very similar to the formalisms used for hyperbolic codes in, for example, \cite{delfosse2013tradeoffs,breuckmann2016constructions} but slightly more general as it applies to non-regular graphs. Although the rotation system description may be overkill for individual codes, for which a picture may suffice, we believe that creating a more comprehensive theory of surface codes deserves such a precise framework. Having the rotation system description is also convenient for computer-assisted searches for codes, as we do for hyperbolic manifolds in Section~\ref{sec:improved_hyperbolic_codes}. We identify a key property of embedded graphs called checkerboardability, which means one can two-color faces of the graph such that adjacent faces are differently colored.

In Section~\ref{sec:Majorana_and_qubit_codes}, we give our surface code construction. We actually provide two equivalent descriptions, one using Majoranas to encode qubits and one using qubits to encode qubits. The Majorana description is similar to that used to describe surface codes in \cite{wen2003quantum,kitaev2006anyons,bravyi2018correcting}. We place two Majoranas on each edge and a single Majorana at each odd-degree vertex. Stabilizers, products of Majoranas, are associated to faces and vertices of the graph. Identifying qubit subspaces at the vertices turns the Majorana code into a qubit surface code with qubits on the vertices, as we described above. A significant result in this section is the computation of the number of encoded qubits in Theorem~\ref{thm:number_encoded_qubits}. This depends on the genus of the surface, its orientability, the number of odd-degree vertices, and the checkerboardability of the surface. Finally, we note that, when the graph is checkerboardable, Kitaev's homological construction, applied to a different but related graph, can be used to describe the same code.

Our focus in Section~\ref{sec:locating_logical_operators} is on finding logical operators and bounding the code distance. A key idea here is to derive a ``decoding" graph from the original embedded graph. Vertices in the decoding graph represent stabilizers, edges represent Pauli errors, and cycles represent logical operators. In principle, one can perform perfect matching in the decoding graph to recover from errors, though for non-checkerboardable codes this may not succeed in achieving the code distance. Like Kitaev's homological construction, we wish to associate homologically non-trivial cycles to non-trivial logical operators. This is easily done in the checkerboardable case of our construction. However, in the non-checkerboardable case, the decoding graph might not embed in the same manifold as the original graph. In this case, we instead find a way to embed the decoding graph in a manifold of potentially higher genus in such a way that this correspondence of homological and logical non-triviality is maintained. Bounds on the code distance follow in terms of lengths of cycles in the decoding graph\edit{, and we provide efficient algorithms to compute those bounds}.

Finally, Section~\ref{sec:code_examples}, we provide several code families that serve as examples of our construction. The first family, \edit{rotated} square-lattice codes on the torus \cite{wen2003quantum}, serves as a familiar introduction. The second family \edit{introduces general rotations of the square-lattice} on the torus to get codes like those in \cite{kovalev2012improved}. Included in this case is a family of cyclic codes that contains the famous 5-qubit quantum code as a member. The third family consists of hyperbolic codes defined using regular tilings of high genus surfaces. Because we are using our qubit-on-vertices definition of surface codes, this leads to new codes, different from those in, for instance \cite{breuckmann2017hyperbolic,breuckmann2017homological}. In these first examples, we take pains to calculate the code distances formally. A final section of examples consists of a planar code family generalizing the triangle code \cite{yoder2017surface} and an embedding of stellated codes \cite{kesselring2018boundaries} on higher genus surfaces. In these cases, we do not formally prove the code distances but provide conjectures. We believe that stellated codes on higher genus surfaces provide the smallest known constant $c$, approaching $1/4$, for codes with stabilizers that are mostly weight four and vanishingly many stabilizers of weight five.

\begin{table}[ht]
    \centering
    \begin{tabular}{|c|c|c|c|}
        \hline
        Code & Section & Notes & References \\\hline\hline
        Majorana surface codes & Sec.~\ref{sec:Majorana_and_qubit_codes}, Fig.~\ref{fig:Maj_code_example} & Encodes qubits into Majoranas & \cite{wen2003quantum,kitaev2006anyons,vijay2015majorana} \\\hline
        Rotated toric codes & Sec.~\ref{sec:square_lattice_toric_codes}, Fig.~\ref{fig:square_lattice_examples} & Encode 1 or 2 qubits on a torus & \cite{wen2003quantum} \\\hline
        General rotated toric codes & Sec.~\ref{sec:rotated_toric_codes}, Fig.~\ref{fig:5-qubit_family} & Contains the $\llbracket 5,1,3\rrbracket$ code & \cite{kovalev2012improved,kovalev2011low} \\\hline
        New hyperbolic codes & Sec.~\ref{sec:improved_hyperbolic_codes}, Fig.~\ref{fig:hyperbolic_codes}, Tab.~\ref{tab:noncheckerboardable_hyperbolic_codes} & Regular tilings, qubits on vertices & -- \\\hline
        Circle-packing codes & Sec.~\ref{sec:generalized_triangle}, Fig.~\ref{fig:circle_packing_code} & Planar codes with $N\approx KD^2/2$ & -- \\\hline
        Stellated codes & Fig.~\ref{fig:stellated_codes} & Planar codes with $N\approx KD^2/2$ & \cite{kesselring2018boundaries} \\\hline
        High-genus stellated codes & Sec.~\ref{sec:generalized_triangle}, Fig.~\ref{fig:high_genus_stellated} & $N\approx KD^2/4$ on 2D manifolds & -- \\
        \hline
    \end{tabular}
    \caption{\edit{A summary of code families that we discuss in this work, some appearing for the first time here. Except for the Majorana codes, these are codes that use qubits to protect logical qubits. 
    }}
    \label{tab:code_summary}
\end{table}
\section{Preliminaries}
\label{sec:rotation-systems}

In this section, we first provide an introduction to the combinatorial description of graph embeddings (also called \textit{maps}) on closed surfaces. We refer to this description as a \textit{rotation system}, following terminology in \cite{stahl1978embeddings,stahl1978generalized, schaefer2018crossing}. We recall that a \textit{closed surface} is a 2-dimensional, compact, and connected topological manifold without boundary (see \cite{lee2010topological, lee2012smooth} for an introduction to topological manifolds), and henceforth we simply say a \textit{manifold} to mean a closed surface (unless specified otherwise). All manifolds admit a unique smooth structure up to diffeomorphisms \cite{rado1925begriff}, and by the \textit{closed surface classification theorem} \cite{seifert1980seifert, lee2010topological, francis1999conway}, each manifold is homeomorphic to a space of one of three types --- (a) Type I: $\mathbb{S}^2$, (b) Type II: a connected sum of one or more copies of $\mathbb{T}^2$, (c) Type III: a connected sum of one or more copies of $\mathbb{RP}^2$. This greatly simplifies the picture when we physically think of ``drawing'' a graph on a manifold, more formally captured using the concept of graph embeddings. A rotation system is simply a combinatorial description of any such graph emdedding. \edit{To garner the most intuition, we believe that rotation systems are most clearly presented by first defining them for the general case of embeddings in orientable and non-orientable manifolds (Types I, II, III), and then discussing the case of embeddings in orientable manifolds (Types I, II) as a special case in Appendix~\ref{sec:app-gluing-orientable}, which is a lot easier to visualize.}
To a large extent, the content of this section follows \cite{stahl1978embeddings,nedela2001regular,siran2001triangle}. Graph theoretic terminology appearing in this section is borrowed from \cite{diestel2000graduate,gross2001topological}. 

\edit{This section is structured as follows. We begin by discussing graph embeddings in Section~\ref{subsec:graph_embeddings}. General rotation systems are then introduced in Section~\ref{subsec:embedding-non-orientable}. In the remainder of this section, we introduce a range of definitions and terminologies from topological graph theory, essential for the subsequent discussions in this paper. In Section~\ref{subsec:checkerboardability}, we introduce the concept of graph checkerboardability. Next in Section~\ref{subsec:homology}, we discuss the notion of $\mathbb{F}_2$-homology of a graph embedding, and in particular introduce the notion of the homological systole. Finally, in Section~\ref{subsec:covers} we discuss covering maps and contractible loops on a manifold.}


\subsection{Graph embeddings}
\label{subsec:graph_embeddings}
Let us first cover graphs and their embeddings. A graph $G(V,E)$ is a collection of vertices $V$ and edges $E$. We will mostly deal with finite graphs, i.e. $1 \leq |V| < \infty$, and $1 \leq |E| < \infty$, and explicitly state when this assumption is violated. 
An edge adjacent to a single vertex is a \textit{loop}. If an edge is not a loop, then it is adjacent to exactly two vertices. The \textit{degree} of a vertex $v$ is denoted $\text{deg}(v)$, and it is the number of edges adjacent to it, with loops counted with multiplicity two. We assume in this section that $G$ is a connected graph.

A graph embedding requires that we actually draw the graph on a manifold $\mathcal{M}$. There may be multiple ways to do this even for a single graph and a fixed manifold. Generally, this drawing procedure is described by a \textit{graph embedding map} $\Gamma:V\cup E\rightarrow\mathcal{M}$ that assigns unique points in $\mathcal{M}$ to each $v\in V$, and \textit{arcs} in $\mathcal{M}$ to each $e\in E$. If $e \in E$ is not a loop adjacent to distinct vertices $v_1$ and $v_2$, then the arc assigned to it must be the image of a homeomorphism $\gamma:[0,1]\rightarrow\mathcal{M}$, with \textit{endpoints} $\gamma(0) = \Gamma(v_1)$ and $\gamma(1) = \Gamma(v_2)$. If $e$ is a loop adjacent to a vertex $v$, then the arc assigned to it is the image of a homeomorphism $\gamma: \mathbb{S}^1 \rightarrow\mathcal{M}$, with both endpoints given by $\gamma(0) = \Gamma(v)$ (here $\mathbb{S}^1$ is parameterized in polar coordinates). Moreover, two arcs may not intersect except possibly at their endpoints.

The set $\mathcal{F}=\mathcal{M}\setminus\Gamma(E)$ is a collection of regions called the \textit{faces} of the embedding, and two points $p,q \in \mathcal{F}$ belong to the same face if and only if there exists a continuous map $\gamma' : [0,1] \rightarrow \mathcal{F}$ such that $\gamma'(0) = p$, and $\gamma'(1) = q$. We say that an embedding of $G$ is a \textit{2-cell embedding} if and only if every face is homeomorphic to the open disc $\mathcal{B}(0,1)= \{(x,y) \in \mathbb{R}^2 : x^2 + y^2 < 1\}$. If the closure of every face is homeomorphic to the closed disc $\overline{\mathcal{B}(0,1)}$, and the set difference of the closure of the face and the face itself is homeomorphic to the circle $\mathbb{S}^{1}$, we say that it is a \textit{closed 2-cell embedding} or a \textit{strong embedding}. Every connected graph admits a 2-cell embedding in some orientable manifold (see Theorem A.16. in \cite{schaefer2018crossing}), and also in some non-orientable manifold (see Theorem 3.4.3 in \cite{gross2001topological}). One can see a graph (a square lattice) embedded in the torus in \edit{Fig.~\ref{fig:rot_sys_ex_A} in Appendix~\ref{sec:app-gluing-orientable}, while a graph embedded in the projective plane is shown in Fig.~\ref{fig:rot_sys_ex} below}.

\subsection{General rotation systems}
\label{subsec:embedding-non-orientable}

\edit{The goal of this section is to establish an equivalence between graphs embedded in manifolds (both orientable and non-orientable) and a combinatorial object called a general rotation system. The description of a graph embedding in the previous section requires a variety of homeomorphisms whose exact details, topologically speaking, do not matter. The rotation system description dispenses with those details by breaking the embedded graph down into a set of \textit{flags} $H$, four for every edge, and describing vertices, edges, and faces of the graph embedding as sets of these flags. Each edge should be thought of as two \textit{half-edges}\footnote{Half-edges are formally defined for the special case of oriented rotation systems in Appendix~\ref{sec:app-gluing-orientable}, and Fig.~\ref{fig:rot_sys_ex_A} shows an example.}, and of the four flags per edge, each half-edge corresponds to exactly two of the flags. The two flags per half-edge effectively introduce an orientation to the half-edge, with the flags corresponding to the two ``sides'' of the manifold, and it is this notion that allows us to handle both orientable and non-orientable manifolds simultaneously. Fig.~\ref{fig:rot_sys_ex} makes it clear how flags can be visualized in the embedding.}
The formal definition of a general rotation system is as follows.
\begin{definition}
\label{def:rotation_system}
A \textit{general rotation system} is a quadruple $R=(H, \lambda, \rho, \tau)$, where $H$ is a \edit{nonempty,} finite set of flags, and $\lambda, \rho, \tau$ are permutations on $H$ satisfying the properties
\begin{enumerate}[(i)]
\item $\lambda, \rho$, and $\tau$ are fixed-point-free involutions, \edit{such that $h$, $\lambda h$, $\rho h$ and $\tau h$ are distinct for every $h \in H$.}
\item $\lambda\tau = \tau\lambda$, or equivalently, $\lambda\tau$ is an involution.
\item the $\textit{monodromy}$ group $M(R)=\langle\lambda,\rho,\tau\rangle$ acts transitively on $H$.
\end{enumerate}
\end{definition}
The involutions $\lambda, \rho, \tau$ permute flags as shown in \edit{Fig.~\ref{fig:rot_sys_ex}}. We define  vertices, edges, and faces as orbits of $\langle \rho,\tau\rangle$, $\langle\lambda,\tau\rangle$, and $\langle\rho,\lambda\rangle$ respectively. From properties (i) and (ii) of Definition~\ref{def:rotation_system}, orbits of $\langle\lambda,\tau\rangle$ have the form $\{h,\lambda h,\tau h, \lambda\tau h : h \in H\}$, i.e. are exactly of size four, and so $|H| \equiv 0 \pmod{4}$. Moreover, one deduces from property (i) that the orbits of $\langle\rho,\lambda\rangle$ and $\langle\rho,\tau\rangle$ have even sizes, with the canonical form of an orbit of size \edit{$2n+4$} given by $\{h,\lambda h,\rho\lambda h, \lambda\rho\lambda h,\dots, (\lambda \rho)^{n\edit{+1}} \lambda h : h \in H\}$, and $\{h,\tau h,\rho\tau h, \tau\rho\tau h,\dots (\tau \rho)^{n\edit{+1}} \tau h : h \in H\}$ respectively. 

The sets of vertices, edges, and faces, which we denote $V$, $E$, and $F$ respectively, define a graph embedding and satisfy Euler's formula
\begin{equation}
\label{eq:Eulers_formula}
\chi = |V|-|E|+|F|,
\end{equation}
where $\chi$ is the Euler characteristic of the manifold, i.e.~if $g$ is the manifold's genus (orientable or non-orientable), then $\chi=2-2g$ if the manifold is orientable, and $\chi=2-g$ if it is non-orientable. Note that property (iii) ensures that $G$ is a connected graph. We will write $G(V,E,F)$ to denote such an embedded graph, as compared to $G(V,E)$ which only denotes the graph without the embedding.

The orbits of the free group $\langle \tau \rangle$ are exactly the half-edges, and additionally we define \emph{sectors} as the orbits of the free group $\langle \rho \rangle$. As $\tau$ and $\rho$ are fixed-point-free involutions, these orbits are just two flags each. We denote a half-edge as $[h]_\tau=\{h,\tau h\}$, and a sector as $[h]_\rho=\{h,\rho h\}$.

\edit{Finally, we note that, in Definition~\ref{def:rotation_system}, the requirement of distinctness in property (i) is unlike prior literature (see e.g.~\cite{nedela2001regular}),
but conveniently rules out the possibilities of having 
degree-1 vertices (because $\rho h\neq\tau h$), unpaired half-edges ($\lambda h\neq\tau h$), and faces bounded by just one edge ($\rho h\neq\lambda h$), all of which are unnecessary and distracting for the later definitions of quantum codes.}

\begin{figure}[h]
    \centering
    \includegraphics[width=0.4\textwidth]{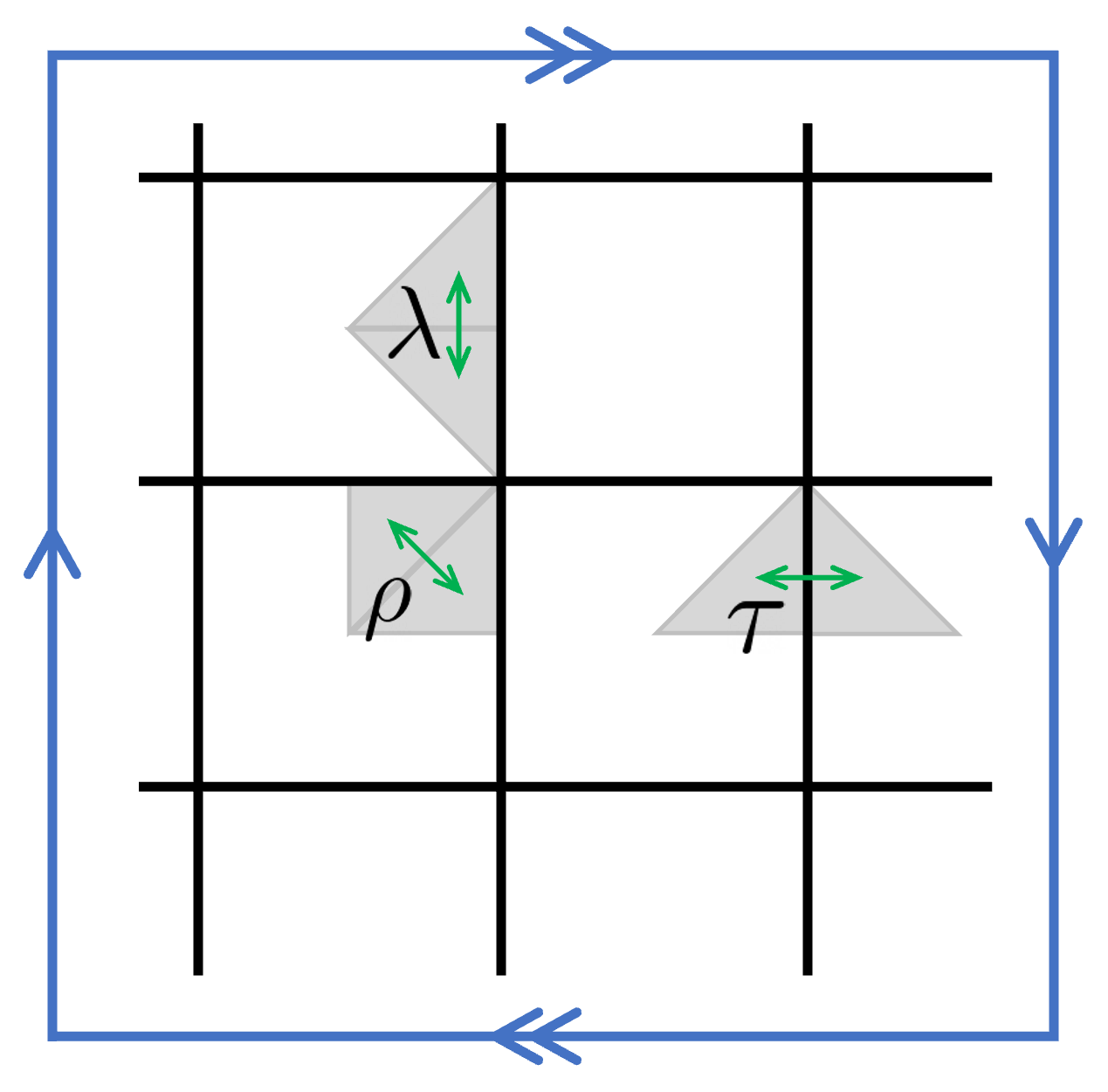}
    \caption{We show a graph embedded on the projective plane. A general rotation system is defined by a set of flags (gray triangles) and three permutations: $\lambda$ swaps flags along the same side of an edge, $\rho$ swaps them within a face adjacent to a vertex, and $\tau$ swaps them across an edge.}
    \label{fig:rot_sys_ex}
\end{figure}

\edit{While Definition~\ref{def:rotation_system} captures graph embedding on both orientable and non-orientable manifolds, the particular case of orientable manifolds is much simpler. This is carried out in Appendix~\ref{sec:app-gluing-orientable}, and the relevant combinatorial object there is the oriented rotation system, defined in Definition~\ref{def:oriented_rotation_system}. However, as mentioned before, this is a special case of the general rotation system, and thus} Definition~\ref{def:rotation_system} generalizes Definition~\ref{def:oriented_rotation_system}. Suppose we have an oriented rotation system $R_O=(H_O,\nu,\epsilon)$. There is a corresponding general rotation system $R=(H,\lambda,\rho,\tau)$ where $H=H_O\times\{1,-1\}$ and for $(h,i)\in H$,
\begin{equation}
\lambda(h,i)=(\epsilon h,-i),\quad \rho(h,i)=(\nu^{i} x,-i),\quad \tau(h,i)=(h,-i).
\end{equation}
This definition ensures that the general rotation system represents the same embedded graph as the given oriented rotation system. 

A general rotation system $(H,\lambda,\rho,\tau)$ describes a graph embedded in an orientable manifold if and only if $H$ can be partitioned into two sets $H_{\pm1}$ such that $\lambda$, $\rho$, and $\tau$ each map elements of either set to elements of the other set. Otherwise, the graph is embedded in a non-orientable manifold. If the manifold is orientable, there are at least two oriented rotation systems describing the same graph embedding, on the same set of half-edges. Take $H_O=H/\tau$ and let $[h]_\tau=\{h,\tau h\}\in H_O$ be a generic half-edge. Define $\nu [h]_\tau = [\rho\tau h]_\tau$, $\nu' [h]_\tau=[\tau\rho h]_\tau$, and $\epsilon [h]_\tau = [\lambda\tau h]_\tau$. Then, both $(H_O,\nu,\epsilon)$ and $(H_O,\nu',\epsilon)$ are oriented rotation systems for the same embedded graph.


Adjacency is an important concept for graphs and their components. Using the general rotation system description, flags are the primitives, and we say a flag is adjacent to a vertex, edge, or face $x\in V\cup E\cup F$, if it is contained in $x$. We define adjacency matrices in terms of flags. The vertex-flag adjacency matrix $A\in\mathbb{F}_2^{|V|\times|H|}$ has $A_{vh}=1$ if and only if $h\in v$. Similarly define an edge-flag adjacency matrix $B \in \mathbb{F}_2^{|E|\times|H|}$, and a face-flag adjacency matrix $C \in \mathbb{F}_2^{|F|\times|H|}$. Although flags are primitives, we can also define adjacency of the more typical graph components. \edit{For example, we say that a vertex $v\in V$ and an edge $e\in E$ are adjacent if $v\cap e\neq\emptyset$.}
Likewise, adjacency of edges and faces, and of vertices and faces can be defined by non-trivial intersection. Note that each edge in an embedded graph is adjacent to at least one and at most two faces. The degree of a vertex $v$ is half the number of flags it is adjacent to, $\text{deg}(v)=|v|/2$, and this definition coincides with the one in Section~\ref{subsec:graph_embeddings}. We also say that two vertices (resp. two faces) are adjacent, if there exists an edge adjacent to both vertices (resp. both faces).

\subsection{Dual graph and checkerboardability}
\label{subsec:checkerboardability}

If an embedded graph $G(V,E,F)$ is described by the general rotation system $(H,\lambda,\rho,\tau)$, then the \textit{embedded dual graph} $\overline{G}(\overline{V}, \overline{E}, \overline{F})$ of $G$ is defined by the general rotation system $(H,\tau,\rho,\lambda)$, i.e. by simply exchanging the permutations $\lambda$ and $\tau$. This implies that $\overline{G}$ has a vertex (resp. face) for every face (resp. vertex) of $G$, while $|E| = |\overline{E}|$, and we conclude that the Euler characteristic of $G$ and $\overline{G}$ are the same. Note also that the graph $\overline{G}$ is connected (by properties of a general rotation system). Moreover $\overline{G}$ defines an embedding in an orientable manifold if and only if $G$ does too --- thus the graphs $G$ and $\overline{G}$ are embedded in the same manifold. The edges of $\overline{G}$ have a natural interpretation: for every edge $e \in E$ adjacent to faces $f,f' \in F$, not necessarily distinct, there is an edge $\overline{e} \in \overline{E}$ between the vertices of $\overline{G}$ corresponding to $f$ and $f'$ (in particular if $f=f'$ then $\overline{e}$ is a loop). The dual of the graph $\overline{G}$ is isomorphic to the graph $G$. There are several more modifications of a graph $G$ that we use in the paper. In Table~\ref{tab:graph_notation}, we list these and the sections where they are defined.

\begin{table}[]
    \centering
    {\renewcommand{\arraystretch}{1.5}
    \begin{tabular}{|c||c|c|c|c|c|c|}
        \hline
         Name & Original & Dual & Medial & Decoding & Doubled & Face-vertex \\\hline
         Notation & $G$ & $\overline{G}$ & $\widetilde{G}$ & $G_{\text{dec}}$ & $G^2$ & $G_{\text{fv}}$ \\\hline
         Section & \ref{subsec:graph_embeddings} & \ref{subsec:checkerboardability} & \ref{subsec:relate_to_homological} & \ref{sec:dec_graph} & \ref{sec:doubled_graphs} & \ref{sec:face-width} \\\hline
    \end{tabular}}
    \caption{Embedded graph $G$ and the derived graphs used in this paper. \edit{Without loss of generality for constructing codes (see the end of Section~\ref{subsec:maj_codes_on_graphs}), we assume $G$ has no vertices of degree less than three.}}
    \label{tab:graph_notation}
\end{table}

Before proceeding any further, given a graph embedding $G(V,E,F)$, we define the face-edge adjacency matrix $\Phi=\frac{1}{2} CB^\top \in \mathbb{F}_2^{|F|\times|E|}$ \edit{(with this matrix multiplication followed by division performed over the integers, and then reduced modulo two)}, which captures the adjacency of the faces and edges of $G$. It is worth pointing out the structure of the matrix $\Phi$. First note that each column of $B^\top$ contains exactly four non-zeros, as each edge $e \in E$ is a set of four flags. Now there are two cases: (i) the two sides of $e$ belong to distinct faces $f,f' \in F$, and (ii) $e$ is surrounded by a single face on both sides, i.e. $e$ is a subset of that face. In the former case, $\Phi_{fe}=\Phi_{f'e}=1$ are the only non-zero entries of the column of $\Phi$ corresponding to $e$, while in the latter case $\Phi_{fe} = 0$ for all $f \in F$. \edit{The face-edge adjacency matrix $\Phi$ can also be interpreted in homological language directly, and this is mentioned in Appendix~\ref{app:algebraic_topology}.}

We now discuss an important property of embedded graphs from the perspective of quantum codes, called checkerboardability.

\begin{definition}
\label{def:checkerboardability}
A graph embedding $G$ is \textit{checkerboardable} if the dual graph $\overline{G}$ is bipartite.
\end{definition}

An equivalent, and a more intuitive way of stating this definition is the following lemma.

\begin{lemma}
\label{lem:checkerboardable_0}
A graph embedding $G(V,E,F)$ described by a general rotation system $(H,\lambda,\rho,\tau)$, is checkerboardable if and only if the faces of $G$ can be two-colored, such that for every flag $h \in H$, $h$ and $\tau h$ are differently colored.
\end{lemma}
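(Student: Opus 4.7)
The plan is to prove both directions by unpacking what the dual $\overline{G}$ looks like flag-by-flag and matching its adjacency structure to the condition on $\tau$. First I would note that since $\overline{G}$ is defined by the general rotation system $(H,\tau,\rho,\lambda)$, its vertex set $\overline{V}$ is the set of orbits of $\langle\rho,\lambda\rangle$, which is precisely $F$; its edges $\overline{E}$ are orbits of $\langle\tau,\lambda\rangle$, and since $\lambda\tau=\tau\lambda$ these orbits are the 4-flag sets $\{h,\tau h,\lambda h,\lambda\tau h\}$, i.e.\ the same sets that form the edges $E$ of $G$. So there is a canonical bijection $e\leftrightarrow\overline{e}$ between $E$ and $\overline{E}$, and $\overline{e}$ is adjacent to a vertex $f\in F$ of $\overline{G}$ exactly when $f\cap\overline{e}\neq\emptyset$.

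Next I would pin down which faces of $G$ are endpoints of $\overline{e}$. Write $f_x\in F$ for the unique face of $G$ containing a flag $x\in H$. Because the involution $\lambda$ keeps a flag on the same side of its edge (in the language of Fig.~\ref{fig:rot_sys_ex}(b)), $h$ and $\lambda h$ lie in the same face, and likewise $\tau h$ and $\lambda\tau h$, so $\overline{e}$ is adjacent to exactly $f_h$ and $f_{\tau h}$. This is the geometric content linking the $\tau$-condition of the lemma to edge-adjacency in $\overline{G}$; it is the only step requiring even mild thought, and it is really just the same case analysis used for the matrix $\Phi$ earlier in the section (case (ii) of that discussion, where $f_h=f_{\tau h}$, corresponds exactly to $\overline{e}$ being a loop in $\overline{G}$).

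For the forward direction, assume $\overline{G}$ is bipartite. Bipartite graphs contain no loops, so $f_h\neq f_{\tau h}$ for every $h\in H$, and a proper $2$-coloring $c:\overline{V}\to\{0,1\}$ restricts to a $2$-coloring of $F$. For any $h$, the faces $f_h$ and $f_{\tau h}$ are joined by $\overline{e}$ in $\overline{G}$, hence $c(f_h)\neq c(f_{\tau h})$, which is the required condition. Conversely, assume a coloring $c:F\to\{0,1\}$ with $c(f_h)\neq c(f_{\tau h})$ for every flag $h$. The inequality in particular forces $f_h\neq f_{\tau h}$, so no $\overline{e}$ is a loop, and by the correspondence above every pair of $\overline{G}$-adjacent vertices is of the form $(f_h,f_{\tau h})$ and is therefore bichromatic; hence $c$ is a proper $2$-coloring of $\overline{G}$, showing $\overline{G}$ is bipartite.

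The main obstacle, such as it is, lies in the bookkeeping of step two: correctly identifying, using only the three involutions and the relation $\lambda\tau=\tau\lambda$, that a given $\overline{e}$ has precisely the two face-endpoints $f_h$ and $f_{\tau h}$ and that loops of $\overline{G}$ correspond exactly to the case $f_h=f_{\tau h}$. Once that identification is in place, both implications reduce to restating the definition of a bipartition.
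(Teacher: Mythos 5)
Your proof is correct and follows essentially the same route as the paper's: both directions reduce to the observation that an edge $\overline{e}$ of $\overline{G}$ joins exactly the faces $f_h$ and $f_{\tau h}$, with loops of $\overline{G}$ corresponding to $f_h=f_{\tau h}$, after which bipartiteness and the two-coloring condition are seen to be restatements of one another. Your version makes the flag-level identification of the endpoints of $\overline{e}$ slightly more explicit than the paper does, but the argument is the same.
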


\begin{proof}
One direction of this equivalence is clear. Suppose $\overline{G}$ is bipartite, and we color the two disjoint vertex sets of $\overline{G}$, corresponding to the bipartition differently. Using the correspondence between the faces of $G$ and vertices of $\overline{G}$, we get a two-coloring of $F$, and hence of the flags $H$. Now pick any flag $h \in H$, and let $h \in e \in E$. Then $h,\tau h$ must belong to distinct faces $f,f' \in F$ respectively, otherwise if $e$ is adjacent to a single face then $\overline{G}$ will contain a loop (as mentioned above), and hence cannot be bipartite. Since $f$ and $f'$ are both adjacent to $e$, they must be differently colored. For the converse, suppose we are given a two-coloring of $F$ (which now induces a coloring of the vertices of $\overline{G}$) such that $h$, and $\tau h$ are differently colored for all $h \in H$. Consider $\overline{G}(\overline{V},\overline{E})$, and partition the vertices as $\overline{V} = \overline{V}_w \sqcup \overline{V}_b$, where each disjoint partition contains all vertices of $\overline{G}$ of a single color. Assume that $\overline{G}$ is not bipartite with this bipartition. If $\overline{G}$ has a loop then there exists $e \in E$ adjacent to a single face. If $\overline{G}$ has no loop, then either $\overline{V}_w$ or $\overline{V}_b$ is not an independent set; so there exist distinct faces $f,f' \in F$ of the same color, adjacent to an edge $e \in E$. In both cases, all four flags of $e$ have the same color which is a contradiction. Thus $\overline{G}$ is bipartite.
\end{proof}

There are numerous other equivalent statements of checkerboardability. We give two more below that we find useful. Note that the only permutation out of $\lambda$, $\rho$, $\tau$ that moves flags across edges of the graph is $\tau$. Therefore using Lemma~\ref{lem:checkerboardable_0}, we have the following:

\begin{lemma}
\label{lem:checkerboardability_rotation_system}
A graph embedding $G(V,E,F)$ described by general rotation system $R=(H,\lambda,\rho,\tau)$, is checkerboardable if and only if we can partition $H$ into two disjoint sets $H_w$ and $H_b$ such that (i) $\lambda$ and $\rho$ map both sets to themselves, and (ii) $\tau$ maps elements of either set to the other set.
\end{lemma}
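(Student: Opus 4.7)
The plan is to use Lemma~\ref{lem:checkerboardable_0} as the crucial bridge between the two formulations of checkerboardability. Recall that a face of $G$ is by definition an orbit of $\langle\lambda,\rho\rangle$, so any subset of $H$ preserved by both $\lambda$ and $\rho$ is automatically a union of faces (and vice versa). This observation carries most of the proof.

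For the forward direction, I would start by assuming $G$ is checkerboardable and applying Lemma~\ref{lem:checkerboardable_0} to produce a two-coloring of the faces. I would then define $H_w$ (respectively $H_b$) to be the set of flags contained in a white (respectively black) face. Since each flag lies in exactly one face, this gives a partition $H = H_w \sqcup H_b$. Condition~(i) follows immediately: $\lambda$ and $\rho$ move flags only within a single face, so they preserve each $H_c$ for $c\in\{w,b\}$. Condition~(ii) is the statement of Lemma~\ref{lem:checkerboardable_0}: for every $h\in H$, the flags $h$ and $\tau h$ lie in faces of different colors, so $\tau$ swaps $H_w$ and $H_b$.

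For the reverse direction, assume a partition $H = H_w \sqcup H_b$ satisfying (i) and (ii) is given. Condition~(i) guarantees that each orbit of $\langle\lambda,\rho\rangle$ lies entirely inside $H_w$ or entirely inside $H_b$, so we obtain a well-defined two-coloring of the faces $F$ by coloring a face white if its flags are in $H_w$ and black otherwise. Condition~(ii) then says that for every flag $h$, the flags $h$ and $\tau h$ belong to differently colored faces, which is exactly the hypothesis of Lemma~\ref{lem:checkerboardable_0}. Applying that lemma gives checkerboardability of $G$.

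The only subtlety, and the step I would take care to spell out, is the observation that $\lambda$- and $\rho$-invariance of a subset of $H$ is equivalent to that subset being a union of faces, since this underlies both directions. Beyond this there is no real obstacle: the lemma is essentially a rephrasing of Lemma~\ref{lem:checkerboardable_0} in which the face-level coloring is lifted to a flag-level coloring compatible with the group action, and the three conditions on the permutations translate transparently between the two pictures.
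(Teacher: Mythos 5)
Your proof is correct and follows essentially the same route as the paper's: both directions reduce to the two-coloring characterization of Lemma~\ref{lem:checkerboardable_0}, with the key observation that $\lambda,\rho$-invariance of a subset of $H$ is equivalent to that subset being a union of faces. The paper's own proof makes exactly this argument, so there is nothing to add.
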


\begin{proof}
Suppose first that $G$ is checkerboardable. Then there is a two-coloring of the faces of $F$ such that for every $h \in H$, $h$ and $\tau h$ are differently colored. Defining $H_w$ and $H_b$ to be the sets of flags of each color then shows that these sets have the required properties (i) and (ii), proving the only if direction. Now assume that $H$ can be partitioned into disjoint sets $H_w$ and $H_b$ satisfying the two properties. If $f \in F$, by property (i) all flags of $f$ must belong to either $H_w$ or $H_b$. Let us color all flags of $H_w$ and $H_b$ black and white respectively, which induces a two-coloring of all the faces of $G$. Property (ii) now implies that for every $h \in H$, $h$ and $\tau h$ are differently colored, proving that $G$ is checkerboardable. 
\end{proof}

In this paper we adopt the notation that $\vec1$ and $\vec0$ are the row vectors of all 1s and all 0s respectively. The arguments above show that in a checkerboard coloring, each edge is adjacent to exactly one face of each color (white and black, say). Thus, checkerboardability implies the existence of a vector $x\in\mathbb{F}_2^{|F|}$ (e.g.~$x_f=1$ if and only if face $f$ is colored white) such that $x\Phi=\vec1$. In fact we also have that $(\vec1-x) \Phi=\vec1$, and these are the only two vectors with this property. This is summarized in the following lemma.

\begin{lemma}
\label{lem:checkerboardability}
Let $G(V,E,F)$ be a graph embedding described by the general rotation system $(H,\lambda,\rho,\tau)$, and let $\Phi$ be its face-edge adjacency matrix. Then
\setmargins
\begin{enumerate}[(a)]
    \item $G$ is checkerboardable if there exists $x\in\mathbb{F}_2^{|F|}$ such that $x\Phi=\vec1$.
    \item If $G$ is checkerboardable then the set $\{x\in\mathbb{F}_2^{|F|} : x\Phi=\vec1 \}$ has exactly two distinct elements $x$ and $x'$ (different from $\vec 0$ and $\vec 1$), which satisfy $x + x' = \vec1$. Thus the partitioning of the flags into the two disjoint sets as described in Lemma~\ref{lem:checkerboardability_rotation_system} is unique.
\end{enumerate}
\end{lemma}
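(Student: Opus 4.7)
My plan is to extract from $\Phi = \frac{1}{2}CB^\top$ the exact column structure in the two possible cases, and then invoke Lemma~\ref{lem:checkerboardable_0} together with the connectedness of $\overline{G}$. First I would note that every edge $e$ is a set of four flags, partitioned among the faces it meets. Computing $(CB^\top)_{fe}$ over the integers simply counts $|f \cap e|$. Hence if $e$ is adjacent to two distinct faces $f \neq f'$ (the generic case, with two flags on each side of $e$), the column of $\Phi$ indexed by $e$ has exactly two nonzero entries, $\Phi_{fe} = \Phi_{f'e} = 2/2 = 1$; if instead $e$ lies between a single face on both sides, then $|f \cap e| = 4$ and the entire column of $\Phi$ is $0$ modulo $2$.

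For part (a), suppose $x\Phi = \vec 1$. The column analysis rules out the single-face case for every edge (otherwise the $e$-entry of $x\Phi$ would be $0 \neq 1$), and forces $x_f + x_{f'} = 1$ for the two faces $f \neq f'$ adjacent to each edge. Coloring face $f$ white when $x_f = 1$ and black otherwise then produces a two-coloring of $F$ in which the flags $h$ and $\tau h$, lying on opposite sides of the edge $[h]_\tau$, land in differently colored faces for every $h \in H$. Lemma~\ref{lem:checkerboardable_0} then yields that $G$ is checkerboardable.

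For part (b), the paragraph immediately preceding the lemma already exhibits one solution $x$, namely the indicator of one of the two color classes, so I would concentrate on uniqueness. Given any solution $y$, set $z = x + y$ so that $z\Phi = \vec 0$. Because $G$ is now checkerboardable, the argument from part (a) guarantees that every edge meets two distinct faces, so $z\Phi = \vec 0$ reads as $z_f = z_{f'}$ for each pair $(f, f')$ of faces sharing an edge. Reinterpreted in $\overline{G}$, this says $z$ is constant on each connected component of $\overline{G}$; since property (iii) of Definition~\ref{def:rotation_system} forces $\overline{G}$ to be connected, we conclude $z \in \{\vec 0, \vec 1\}$. Hence $y \in \{x, x + \vec 1\}$, giving exactly the two solutions $x$ and $x' = x + \vec 1$, and because every edge contributes one face of each color, both color classes are nonempty, so neither solution equals $\vec 0$ or $\vec 1$.

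The only real subtlety I foresee is being careful with the mod-$2$ reduction of the integer product $CB^\top$, so that both the two-face and the one-face column patterns are read off correctly; the rest is a single appeal to connectivity of $\overline{G}$ via the rotation-system axioms.
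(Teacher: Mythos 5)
Your proof is correct and follows essentially the same route as the paper: part (a) rests on the same column analysis of $\Phi$ (two ones per column for a two-face edge, an all-zero column for a one-face edge) followed by an appeal to Lemma~\ref{lem:checkerboardable_0}, and part (b) again reduces everything to connectivity of $\overline{G}$. The only cosmetic difference is in the uniqueness step of (b), where you compute the left kernel of $\Phi$ directly ($z\Phi=\vec 0$ forces $z$ to be constant on the connected dual graph, hence $z\in\{\vec 0,\vec 1\}$) rather than invoking, as the paper does, the uniqueness of the bipartition of the connected bipartite graph $\overline{G}$ --- the two formulations are interchangeable here.
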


\vspace{0.5cm}
\begin{proof}
\setmargins
\begin{enumerate}[(a)]
    \item Suppose there exists $x$ such that $x\Phi=\vec1$, and define $F_b = \{f \in F : x_f = 1\}$, and $F_w = F \setminus F_b$. Color the faces in $F_b$ and $F_w$ differently. Assume for contradiction that $h \in H$ is a flag such that $h$ and $\tau h$ have the same color, and let $h \in e \in E$. Since $h$ and $\lambda h$ belong to the same face, this implies that all four flags of $e$ have the same color. But this implies $\Phi_{fe} = 0$ for all $f \in F$ (by properties of $\Phi$ discussed above), and so $(x\Phi)_e = \vec0$, which is a contradiction.
    \item Assume $G$ is checkerboardable, which also implies that every column of $\Phi$ has exactly two non-zeros (or equivalently each edge $e \in E$ is adjacent to two distinct faces), and so $\vec1 \Phi = \vec0$. We already know from the discussion in the paragraph before this lemma that there exists $y\in\mathbb{F}_2^{|F|}$ such that $y\Phi=\vec1$. Thus we also conclude that $(\vec1 - y) \Phi = \vec1$. Now suppose there is a vector $x \in \mathbb{F}_2^{|F|}$ such that $x \Phi = \vec1$, and define a disjoint partition $F = F_w \sqcup F_b$ of the faces of $G$, as in the proof of (a). This induces a disjoint vertex partition $\overline{V} = \overline{V}_w \sqcup \overline{V}_b$ of the dual graph $\overline{G}(\overline{V},\overline{E})$. Then combining the arguments of the proof of (a), and the proof of the converse part of Lemma~\ref{lem:checkerboardable_0} shows that $\overline{G}$ is bipartite. Since, $\overline{G}$ is connected, the bipartition is unique, and so $x$ is equal to either $y$ or $\vec1 - y$. This proves the first part, and the second part follows from it easily. 
\end{enumerate}
\end{proof}

A consequence of Lemma~\ref{lem:checkerboardability}(b) is that the checkerboard coloring of a checkerboardable graph is essentially unique up to a permutation of the colors. One easy way a graph can fail to be checkerboardable is if it contains an odd-degree vertex $v$. Then, $(\rho\tau)^{\text{deg}(v)}$ would map $h\in v$ to itself, but also because $\tau$ is applied an odd number of times, it would map $h$ from one set, say $H_w$, to the other $H_b$ (where $H_w$ and $H_b$ are as in Lemma~\ref{lem:checkerboardability_rotation_system}), a contradiction. If $g=0$ (the graph is planar, i.e.~embedded on the 2-sphere $\mathbb{S}^2$), we also have have the well known converse: a planar graph embedding is checkerboardable if all vertices have even degree.



\subsection{Homology of graph embeddings}
\label{subsec:homology}

One topological invariant that is important for this paper (and for other quantum codes such as homological codes) is that of $\mathbb{F}_2$-homology. We introduce this concept briefly here, and refer the reader to \cite{hatcher2002algebraic} for a more comprehensive treatment of homology. To do this in the context of embedded graphs, we will first introduce some graph theoretic terminology. \edit{We prefer this approach of presenting homological concepts in graph theoretic language as it places a lighter burden on the reader than having to know algebraic topology; however, the corresponding connections with known objects in algebraic topology (specifically chain complexes) are described in Appendix~\ref{app:algebraic_topology}.}

Recall first that a \textit{trail} $t$ in a graph is a sequence of vertices and edges,
\begin{equation}
t=(v_0,e_1,v_1,e_2,\dots,v_{\ell-1},e_\ell,v_\ell),
\end{equation}
where edge $e_i$ is adjacent to vertices $v_{i-1}$ and $v_i$, and each edge is distinct. Trails can be open (if $v_0\neq v_\ell$) or closed (if $v_0=v_\ell$). Open trails have endpoints $v_0$ and $v_\ell$. Closed trails are also called cycles. Including the vertices in the specification of a trail is standard but superfluous --- a trail $t$ can be represented by a vector $t\in\mathbb{F}_2^{|E|}$ such that $t_i=1$ if and only if the $i^{\text{th}}$ edge of the graph is in the trail. A \emph{path} is a trail without repeated vertices, except the first and last if it is a closed trail. The set of trails generates a group $\mathcal{T}(G)$ with the group operation the symmetric difference of the trails or, equivalently, the addition modulo two of the vectors $y\in\mathbb{F}_2^{|E|}$ representing the trails. Because each edge is a trail, $\mathcal{T}(G)$ is isomorphic to the vector space $\mathbb{F}_2^{|E|}$, and a generating set of $\mathcal{T}(G)$ (or equivalently a basis for $\mathbb{F}_2^{|E|}$) can be constructed to consist only of paths. 

Now let $G(V,E,F)$ be an embedded graph on a manifold, either orientable or non-orientable. Two subgroups of $\mathcal{T}(G)$ are particularly important in the definition of $\mathbb{F}_2$-homology: the subgroup $\mathcal{Z}(G)$ of all the cycles of $G$, and the subgroup $\mathcal{B}(G)$ generated by the cycles of $G$ that are \textit{boundaries} of faces of the embedded graph. Formally
\begin{equation}
\label{eq:BG-ZG}
\begin{split}
    \mathcal{Z}(G) &= \left \langle c : c \text{ is a cycle of } G \right \rangle, \\
    \mathcal{B}(G) &= \left \langle c : c \text{ is a cycle of } G, \; \exists f \in F \text{ such that } \forall \; \text{edges } e \in c, \; \Phi_{fe} = 1 \right \rangle.
\end{split}
\end{equation}

We call cycles in $\mathcal{B}(G)$ homologically trivial and cycles in $\mathcal{Z}(G)$ but not in $\mathcal{B}(G)$ homologically non-trival. The homological systole of the graph, $\hsys{G}$, is the length of the shortest homologically non-trivial cycle. The first homology group over $\mathbb{F}_2$ is the quotient group (or equivalently as the quotient vector space)
\begin{equation}
\label{eq:F2-homology}
    H_1(G, \mathbb{F}_2) = \mathcal{Z}(G)/\mathcal{B}(G).
\end{equation}

It turns out that for two different graphs $G, G'$ embedded in the same manifold $\mathcal{M}$, $H_1(G, \mathbb{F}_2) \cong H_1(G', \mathbb{F}_2)$. Thus in the rest of the paper, we will simply write $H_1(\mathcal{M})$ for any given graph embedding $G(V,E,F)$ in $\mathcal{M}$.

\subsection{Covering maps and contractible loops}
\label{subsec:covers}

A cycle drawn on a manifold $\mathcal{M}$ is contractible if it can be continuously deformed to a point and non-contractible otherwise. Homological non-triviality implies non-contractibility \cite{cabello2007finding}. However, the converse is not true -- consider for instance a cycle winding twice around a torus, which is non-contractible but homologically trivial. Similar to the homological case, we let $\sys{G}$ denote the length of the shortest non-contractible cycle in a graph $G$ embedded on $\mathcal{M}$ and note $\hsys{G}\ge\sys{G}$. A manifold is simply-connected if all cycles are contractible.

We say a connected manifold $\mathcal{M}'$ is a cover of $\mathcal{M}$ if there is a map $\pi:\mathcal{M}'\rightarrow\mathcal{M}$ such that each open disk $U\subseteq\mathcal{M}$ has a pre-image $\pi^{-1}(U)$ that is a union of disjoint open disks, each mapped homeomorphically onto $U$ by $\pi$ \cite{bredon2013topology}. We say $\pi$ is a covering map. An $l$-fold cover is one in which $\pi^{-1}(U)$ is a union of $l$ open disks for all $U$. It is also acceptable to call a 2-fold cover a double cover, a 3-fold cover a triple cover, etc. 

A universal cover $\mathcal{U}$ of a manifold $\mathcal{M}$ is a cover that covers all other covers of $\mathcal{M}$. 
Equivalently, the universal cover of $\mathcal{M}$ is the unique simply-connected cover of $\mathcal{M}$ (see for instance Chapter III.4 of \cite{bredon2013topology}). Compact manifolds are universally covered by either the sphere (if $\mathcal{M}$ is the sphere or projective plane) or the plane (otherwise).

A curve drawn in $\mathcal{M}'$ can be projected down to $\mathcal{M}$ by applying $\pi$. Conversely, a curve $\gamma:[0,1]\rightarrow\mathcal{M}$ can be \emph{lifted} to $\mathcal{M}'$. Roughly one constructs a lift by applying $\pi^{-1}$ to $\gamma$. However, since $\pi^{-1}$ is one-to-many, more precisely a lift of $\gamma$ is defined to be any compact, continuous curve in $\mathcal{M}'$ that maps to $\gamma$ when applying $\pi$.

An interesting connection presents itself between contractibility and the universal cover. A cycle in $\mathcal{M}$ is contractible if and only if its lift into the universal cover $\mathcal{U}$ is also a cycle. Non-contractible curves, on the other hand, lift to curves with two distinct endpoints \cite{cabello2007finding}.

\section{Majorana and qubit surface codes from rotation systems}\label{sec:Majorana_and_qubit_codes}

In this section, we define our framework for constructing qubit surface codes from rotation systems. Majorana surface codes appear as a useful intermediary in the construction. So, building on the last section's introduction to rotation systems, the logical progression of this section is roughly
\begin{equation*}
\begin{array}{ccccc}
\text{Rotation system \textbackslash\space graph embedding}& \longrightarrow& \text{Majorana surface code}& \longrightarrow& \text{Qubit surface code}\\\text{(Section~\ref{sec:rotation-systems})}&&\text{(Section~\ref{subsec:maj_codes_on_graphs})}&&\text{(Section~\ref{subsec:qub_surface_codes})}
\end{array}.
\end{equation*}
Before diving into surface codes, Section~\ref{subsec:maj_ops_intro} gives a quick introduction to Majorana fermions and Majorana fermion codes in general. Afterward, in Section~\ref{subsec:relate_to_homological}, we show how our framework for qubit surface codes generalizes the standard homological definition \cite{kitaev2003fault}, in which qubits are placed on edges, while vertices and faces support stabilizers. We assume some familiarity of the reader with the Pauli group, and Appendix~\ref{sec:cal-prop} contains a brief recap.

\subsection{Majorana operators and codes}
\label{subsec:maj_ops_intro}
\edit{To introduce Majorana operators in this subsection, we largely follow and summarize \cite{bravyi2010majorana}. For even integer $m$,} the Majorana operators $\{\gamma_0,\gamma_1,\dots,\gamma_{m-1}\}$ are linear Hermitian operators acting on the fermionic Fock space $\mathcal{H}_{m/2}=\{\ket{\vec b}:\vec b\in\mathbb{F}_2^{m/2}\}$, or equivalently the $m/2$-qubit complex Hilbert space, satisfying
\begin{align}
\label{eq:Maj_rules}
\gamma_i^2=I,\quad \gamma_i\gamma_j=-\gamma_j\gamma_i,\quad \forall \; 0 \leq i<j \leq m-1.
\end{align} 
Eq.~\eqref{eq:Maj_rules} ensures that each $\gamma_i$ is distinct and different from $I$. The total number of Majorana operators is even because two Majoranas correspond to each fermion in the system. We define a group $\mathcal{J}_m$ consisting of finite products of the Majorana operators $\gamma_i$, and phase factor $i = \sqrt{-1}$. By Eq.~\eqref{eq:Maj_rules}, this group is finite with size $|\mathcal{J}_m|=2^{m+2}$. Elements of $\mathcal{J}_m$ either commute or anticommute. We indicate an element of $\mathcal{J}_m$ uniquely by $\eta \gamma_{a}$ where $\eta\in\{\pm1,\pm i\}$, $a\in\mathbb{F}_2^{m}$, and $\gamma_{a}=\prod_{i:a_i=1}\gamma_i$ (this product is ordered so that the $\gamma_i$ with smaller indices are on the left, and the empty product is defined to be equal to $I$), and the uniqueness of this representation follows from Eq.~\eqref{eq:Maj_rules}. We define the \textit{support} of $\eta \gamma_{a}$ to be the set $\supp{\eta \gamma_{a}}:=\{i : a_i=1\}$, and its \textit{weight} $|\eta \gamma_{a}|:=|\supp{\eta \gamma_{a}}|=|a|$, where $|a|$ is the Hamming weight of $a$. The commutation of the elements of $\mathcal{J}_m$ is now easily expressed as
\begin{equation}
\label{eq:majorana-commutation}
\gamma_{a}\gamma_{b}=(-1)^{|a||b|+a\cdot b}\gamma_{b}\gamma_{a} = (-1)^{\xi(a,b)} \gamma_{a+b}, \;\; \xi(a,b) = \sum_{i: b_i = 1} |\{j > i : a_j = 1\}|,
\end{equation}
where it is understood the operations ``$\cdot$'' and ``$+$'' are performed over $\mathbb{F}_{2}$ (the quantity $|a||b|$ is computed over integers and reduced modulo 2). These vectors are assumed to have context-appropriate length.

Particular choices of Majorana operators can be made by associating them with appropriately chosen Pauli operators on $m/2$ qubits. This can be done in a variety of ways. A particularly famous example is the Jordan-Wigner transformation, which makes the association
\begin{align}
\label{eq:JW}
\text{JW}(\gamma_{2k})=X_k\prod_{i=0}^{k-1}Z_i,\quad \text{JW}(\gamma_{2k+1})=Y_k\prod_{i=0}^{k-1}Z_i,
\end{align}
for all $0 \leq k \leq m/2 - 1$, where $X_i,Y_i,Z_i$ are the Paulis acting on qubit $i$. One can check that the Paulis associated with the Majoranas obey Eq.~\eqref{eq:Maj_rules}.

Majorana fermion codes (see e.g.~\cite{bravyi2010majorana,vijay2017quantum}) are created by specifying a subgroup $\mathcal{S}\le\mathcal{J}_m$ to be a stabilizer \edit{group}. The corresponding codespace is defined as usual to be the $+1$-eigenspace of all stabilizers, a subspace of $\mathcal{H}_{m/2}$. We require the stabilizer \edit{group} be chosen so that
\begin{enumerate}[i,wide]
\renewcommand\labelenumi{(\theenumi)}
\item $\mathcal{S}$ does not contain $-I$.
\item Each $\gamma\in\mathcal{S}$ has even weight.
\end{enumerate}
The first condition arises because we wish the codespace to be non-empty, and the second condition is imposed because we want the stabilizer operators to be physical, preserving fermion parity in the system. In fact, the first condition also implies that $\mathcal{S}$ is Abelian and that all elements of $\mathcal{S}$ are Hermitian. We note this fact and a few others in the following lemma\edit{, whose proof is in Appendix~\ref{app:majorana-proofs}.}

\begin{lemma}
\label{lem:majorana-group-props}
Let $\mathcal{S}$ be a subgroup of $\mathcal{J}_m$, and let $\mathcal{I} \subseteq \mathcal{J}_m$ be non-empty, such that elements of $\mathcal{I}$ commute and are Hermitian. Then $\langle \mathcal{I} \rangle$ is Abelian and Hermitian, and moreover the following holds:

\setmargins
\begin{enumerate}[(a)]
    \item There exists a set $\mathcal{I}'$ formed by multiplying each element of $\mathcal{I}$ by either $1$ or $-1$, such that $-I \not \in \langle \mathcal{I}' \rangle$.
    \item If $-I \not \in \mathcal{S}$, then $\mathcal{S}$ is Abelian and Hermitian. Conversely, if $\mathcal{S}$ is Abelian and Hermitian, then either $-I \not \in \mathcal{S}$, or $\mathcal{S} = \langle \mathcal{S}', -I \rangle$ for some subgroup $\mathcal{S}'$ of $\mathcal{S}$ with $-I \not\in \mathcal{S}'$.
\end{enumerate}
\end{lemma}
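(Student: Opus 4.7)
The key observation underlying everything is that in $\mathcal{J}_m$, an element $\eta\gamma_a$ is Hermitian if and only if $(\eta\gamma_a)^2 = I$: a direct computation gives $\gamma_a^2 = (-1)^{|a|(|a|-1)/2}I$ and $\gamma_a^\dagger = (-1)^{|a|(|a|-1)/2}\gamma_a$, so both conditions reduce to $\eta^2 = (-1)^{|a|(|a|-1)/2}$. This immediately yields the preamble of the lemma: a product of pairwise commuting Hermitian operators is again Hermitian (since $(g_1g_2)^\dagger = g_2^\dagger g_1^\dagger = g_2g_1 = g_1g_2$), so $\langle\mathcal{I}\rangle$ is Abelian and Hermitian, and in particular every element of $\langle\mathcal{I}\rangle$ squares to $I$.

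For part (a), I plan to induct on $|\mathcal{I}|$, which is finite since $\mathcal{J}_m$ is. In the base case $|\mathcal{I}| = 1$, either the single element $g$ equals $-I$ (replace it by $-g = I$) or it does not (keep it). For the inductive step, fix $g_1\in\mathcal{I}$ and apply the hypothesis to $\mathcal{I}\setminus\{g_1\}$, obtaining a sign-flipped $\mathcal{I}_0'$ with $-I\notin\langle\mathcal{I}_0'\rangle$. Since everything commutes and squares to $I$, every element of $\langle g_1,\mathcal{I}_0'\rangle$ has the form $g_1^s h$ for $s\in\{0,1\}$ and $h\in\langle\mathcal{I}_0'\rangle$. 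If some such product equals $-I$, then $s=0$ would give $h=-I\in\langle\mathcal{I}_0'\rangle$, contradicting the inductive hypothesis; so $s=1$ and $g_1 = -h$. Replacing $g_1$ by $-g_1 = h\in\langle\mathcal{I}_0'\rangle$ then collapses the generating set into $\langle\mathcal{I}_0'\rangle$, which does not contain $-I$.

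For part (b), the forward direction uses two commutator arguments inside $\mathcal{J}_m$. For any $g,h\in\mathcal{S}$, Eq.~\eqref{eq:majorana-commutation} gives $ghg^{-1}h^{-1} = \pm I$; since this commutator lies in $\mathcal{S}$ and $-I\notin\mathcal{S}$, it must be $I$, so $\mathcal{S}$ is Abelian. Similarly $g^2\in\mathcal{S}\cap\{\pm I\} = \{I\}$, so $g^2 = I$, and by the key observation $g$ is Hermitian. The converse is easier: if $\mathcal{S}$ is Abelian and Hermitian, every element squares to $I$, so $\mathcal{S}$ is an elementary abelian $2$-group, equivalently an $\mathbb{F}_2$-vector space. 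If $-I\in\mathcal{S}$, pick any $\mathbb{F}_2$-subspace complement $\mathcal{S}'$ of the one-dimensional subspace $\langle -I\rangle$; then $\mathcal{S} = \langle\mathcal{S}',-I\rangle$ with $-I\notin\mathcal{S}'$.

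The only step requiring careful bookkeeping is the Hermitian-iff-involution equivalence in $\mathcal{J}_m$, which ties together Hermiticity, the fact that squares lie in $\{\pm I\}$, and the $\mathbb{F}_2$-vector space structure used throughout. Once that is established, the rest reduces to standard manipulation of commutators in $\mathcal{J}_m$ and linear algebra over $\mathbb{F}_2$.
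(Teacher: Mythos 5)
Your proof is correct and follows essentially the same route as the paper's: the Hermitian-iff-squares-to-$I$ equivalence (the paper's fact that $x^2=I$ iff $x=x^\dag$), induction on $|\mathcal{I}|$ for (a) with a sign flip forced when $-I=g_1 h$ gives $g_1=-h$, and the commutator/square argument for the forward direction of (b). The only divergence is the converse of (b), where you take an $\mathbb{F}_2$-vector-space complement of $\langle -I\rangle$ directly rather than, as the paper does, choosing coset representatives of $\langle -I\rangle$ and invoking part (a); your route is slightly more self-contained for that step and equally valid.
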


Because of Pauli representations of Majoranas like Eq.~\eqref{eq:JW}, it is clear that a Majorana fermion code on $m$ Majoranas corresponds to a stabilizer code on $m/2$ qubits. Applying the Jordan-Wigner transformation, Eq.~\eqref{eq:JW}, to $\mathcal{S}$ converts it to an Abelian group of Pauli operators on $m/2$ qubits, which has size upper bounded by $2^{m/2}$. Therefore, $|\mathcal{S}|\le 2^{m/2}$. Thus, if $\mathcal{S}$ is generated by $m/2-k$ independent operators, then the code encodes $k$ qubits (or $2k$ Majoranas). The centralizer $\mathcal{C}(\mathcal{S})$ of $\mathcal{S}$ is the subset of $\mathcal{J}_m$ that commutes with all elements of $\mathcal{S}$. The \textit{distance} $d$ of the Majorana code is the minimum weight of an element of $\mathcal{C}(\mathcal{S})$ that is not (up to factors of $i$) in $\mathcal{S}$, i.e.~$d=\min\{|\gamma|:\gamma\in\mathcal{C}(\mathcal{S})\setminus\langle \{iI\}\cup\mathcal{S}\rangle\}$.

Consider, for example, a simple Majorana code, generated by just one stabilizer:
\begin{equation}
\label{eq:S_even}
\mathcal{S}_{\text{even}}=\langle i^{m/2}\gamma_0\gamma_1\dots\gamma_{m-1}\rangle.
\end{equation}
The phase $i^{m/2}$ guarantees this stabilizer squares to $I$ and not $-I$. This code encodes $k=(m/2 - 1)$ qubits into $m$ Majoranas with distance $2$, because the centralizer consists of all $\gamma_{a}$ with even weight. The code $\mathcal{S}_{\text{even}}$ will play an important role in our converting Majorana codes defined on graphs to qubit stabilizer codes defined on graphs.


\subsection{Majorana surface codes defined on embedded graphs}
\label{subsec:maj_codes_on_graphs}


\edit{This subsection defines Majorana surface codes that generalize those in prior works \cite{wen2003quantum,kitaev2006anyons,bravyi2010majorana,vijay2015majorana} by starting with an arbitrary} embedded graph $G$ given by a rotation system $R$. We define a Majorana stabilizer code $\mathcal{S}(R)$ (or equivalently $\mathcal{S}(G)$) by placing Majorana operators on each half-edge and each odd-degree vertex, and associating stabilizers to each vertex and face of the embedded graph. The formal definition is as follows.
\begin{definition}
\label{def:majorana_surface_code}
Given a rotation system $R=(H,\lambda,\rho,\tau)$ and the associated embedded graph $G=(V,E,F)$ with $M$ odd-degree vertices, associate a single Majorana $\gamma_{[h]_\tau}$ to each half-edge $[h]_\tau=\{h,\tau h\}\in H/\tau$ and a single Majorana $\bar{\gamma}_v$ to each odd-degree vertex in $v\in V$. Now to every vertex $v\in V$, associate a \textit{vertex stabilizer} $S_v$ by
\begin{equation}
\label{eq:s_v}
S_v=\bigg\{\begin{array}{ll}i^{\text{deg}(v)/2}\left(\prod_{[h]_\tau\subseteq v}\gamma_{[h]_\tau}\right),&\text{deg}(v)\text{ is even}\\i^{(\text{deg}(v)+1)/2}\left(\prod_{[h]_\tau\subseteq v}\gamma_{[h]_\tau}\right)\bar{\gamma}_v,&\text{deg}(v)\text{ is odd}\end{array}
\end{equation}
and to every face $f\in F$, a \textit{face stabilizer} $S_f$ by
\begin{equation}
\label{eq:s_f}
S_f=i^{|\{h\in f : \tau h \not \in f\}|/2}\prod_{\{h\in f : \tau h \not \in f\}}\gamma_{[h]_\tau}.
\end{equation}
The order of Majoranas in these products is important. To clarify this, we give each Majorana a unique label from $\{0,1,\dots,m-1\}$ satisfying certain rules, where $m = 2|E| + M$, and then the products are sorted by ascending order in these labels. When $G$ is not checkerboardable, the labeling can be arbitrary as long as it ensures that any two Majoranas on the same edge, e.g.~$\gamma_{[h]_\tau}$ and $\gamma_{[\lambda h]_{\tau}}$, have successive labels. When $G$ is checkerboardable, the labeling should additionally ensure that for all $0 \leq j \leq m/2-1$, the Majoranas corresponding to labels $2j+1$ and $2j+2$ (evaluating the labels modulo $m$) belong to half-edges which are subsets of the same vertex. We denote the group generated by all $S_v$ and $S_f$ as $\mathcal{S}(R)$ (or equivalently $\mathcal{S}(G)$). The Majorana code associated with $R$ (or $G$) has the stabilizer group $\mathcal{S}(R)$.
\end{definition}

We note a few easy consequences of this definition. First, notice that the total number of Majoranas is even since $M$ is, which is true because $2|E|=\sum_{v\in V}\text{deg}(v)$. Second, notice that the vertex and the face stabilizers have even weight, are Hermitian (since they square to $I$), and commute. The commutation follows from Eq.~\eqref{eq:majorana-commutation} since supports of $S_v$ and $S_{v'}$ do not overlap for distinct $v,v' \in V$, supports of $S_f$ and $S_{f'}$ for distinct $f,f' \in F$ overlap if and only if they share adjacent edges and therefore share an even number of Majoranas, and supports of $S_v$ and $S_f$ for $v \in V$ and $f \in F$ overlap only if $v$ and $f$ are both adjacent to one or more sectors in which case they share an even number of Majoranas (note that this holds even if $v$ and $f$ are both adjacent to any edge $e$ such that $e$ is not adjacent to any other face).

It turns out that the Majorana labeling scheme used in Definition~\ref{def:majorana_surface_code} ensures that the stabilizer group $\mathcal{S}(R)$ has the property $-I \not \in \mathcal{S}(R)$, which is formally proved below in Lemma~\ref{lem:stabilizer_dependence}(g), and since each stabilizer has even weight, so does every element of $\mathcal{S}(R)$; thus $\mathcal{S}(R)$ satisfies the two properties of a Majorana fermion code. If a different Majorana labeling was used, or if the products in Eqs.~\eqref{eq:s_v} and \eqref{eq:s_f} were ordered differently then it is not necessarily true that the first property holds, but in that case by Lemma~\ref{lem:majorana-group-props}(a) one can multiply the stabilizers by either $1$ or $-1$ and get a new set of stabilizers for which the property would still be true. By Lemma~\ref{lem:majorana-group-props}(b), $\mathcal{S}(R)$ is Abelian and Hermitian. Two examples of Majorana surface code stabilizers are shown in Fig.~\ref{fig:Maj_code_example}. 

We note briefly that a labeling scheme satisfying the demands of Definition~\ref{def:majorana_surface_code} exists in the checkerboardable case (the non-checkerboardable case is trivial). If $G$ is checkerboardable it has no odd-degree vertices, so there exists an Euler cycle $(e_0,e_1,\dots,e_{|E|-1})$ that uses all the edges in $E$. Majoranas can be labeled in the order they are encountered by following this cycle.

\begin{figure}[t]
    \centering
    \includegraphics[width=\textwidth]{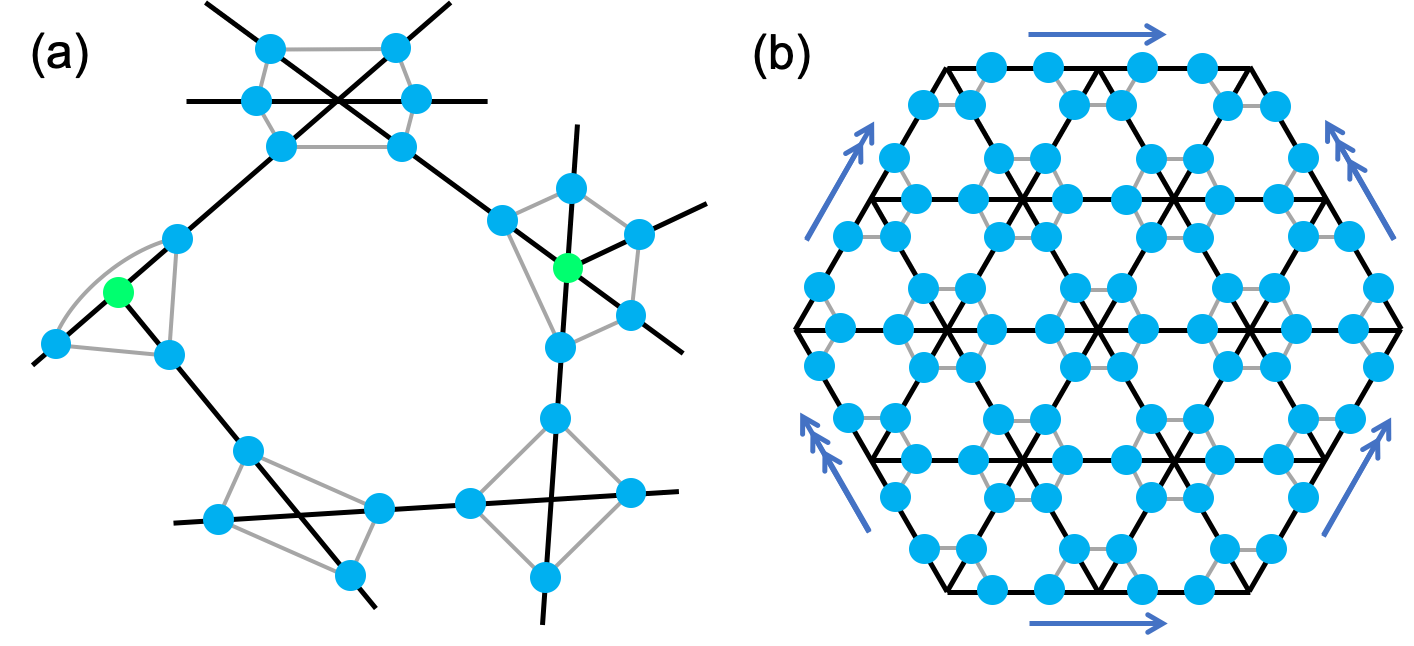}
    \caption{(a) An example section of a Majorana surface code. Majorana operators are placed on each half-edge (blue circles) and at each odd-degree vertex (green circles). The stabilizer associated to an even-degree vertex is the product of Majoranas around that vertex. The stabilizer associated to an odd-degree vertex is the product of Majoranas around that vertex and the Majorana located at that vertex. Gray lines cordon off the Majoronas involved in these stabilizers. The stabilizer associated to the pentagonal face is the product of the ten blue Majoranas on edges around that face. (b) A Majorana surface code of reference \cite{vijay2015majorana}. In our framework, it arises from a graph triangulating the torus. Here opposite sides of the hexagon are identified (directionally, as indicated by arrows) along with Majoranas on those edges. A total of 72 Majoranas survive this identification. All stabilizers associated to both vertices and faces are the product of six Majoranas.}
    \label{fig:Maj_code_example}
\end{figure}

The next lemma is useful to understand the dependence between the stabilizers of the Majorana surface code. \edit{Its proof is in Appendix~\ref{app:majorana-proofs}.}
\begin{lemma}
\label{lem:stabilizer_dependence}
The stabilizers of the Majorana surface code for the embedded graph $G(V,E,F)$ satisfy
\setmargins
\begin{enumerate}[(a)]
\item There is no non-empty subset $V'\subseteq V$ such that $\prod_{v\in V'}S_v= \pm I$.
\item There is no non-empty, proper subset $V'\subset V$ and no subset $F'\subseteq F$ such that $\prod_{v\in V'}S_v= \pm \prod_{f\in F'}S_f$. If there is any odd-degree vertex, then the statement holds for all non-empty subsets $V'\subseteq V$.
\item If $\prod_{v\in V}S_v=\prod_{f\in F'}S_f$ for some subset $F' \subseteq F$, then $G$ is checkerboardable. If $G$ is checkerboardable then $\prod_{v\in V}S_v=\prod_{f\in F'}S_f = \prod_{f\in F \setminus F'}S_f$ for a non-empty proper subset $F' \subset F$, which is determined uniquely up to taking complements.
\item There is no non-empty, proper subset $F'\subset F$ such that $\prod_{f\in F'}S_f= \pm I$.
\item $\prod_{f\in F}S_f=I$.
\item $|F|=1$ if and only if there is $f\in F$ such that $S_f=I$.
\item $-I \not \in \mathcal{S}(G)$.
\end{enumerate}
\end{lemma}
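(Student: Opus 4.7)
The strategy is to first handle all support-level (i.e., ``which Majoranas appear with odd multiplicity'') statements via one unified calculation, and only afterward worry about signs. Under this framework, (a)--(d) reduce to checking when the full product $\prod_{v\in V'}S_v\cdot\prod_{f\in F'}S_f$ has trivial Majorana support, (e) follows by inspection, (f) is an immediate consequence of (e) and (d), and (g) is obtained by combining all previous parts with an explicit sign verification in the few configurations identified by (a)--(c). Part (a) is a one-line observation: each half-edge $[h]_\tau=\{h,\tau h\}$ is contained in exactly one vertex $v\in V$ (the $\langle\rho,\tau\rangle$-orbit of $h$), and each $\bar\gamma_v$ occurs uniquely in $S_v$; hence for any nonempty $V'$ the product $\prod_{v\in V'}S_v$ carries nontrivial Majoranas and cannot equal $\pm I$.

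For (b), (c), and (d), I would run a single unified calculation. Define flag indicators $\chi_{V'},\chi_{F'}:H\to\mathbb{F}_2$ by $\chi_{V'}(h)=1$ iff the vertex containing $h$ lies in $V'$, and similarly for $\chi_{F'}$; note $\chi_{V'}$ is constant on $\langle\rho,\tau\rangle$-orbits and $\chi_{F'}$ on $\langle\rho,\lambda\rangle$-orbits. Because no $\bar\gamma_v$ appears in any face stabilizer, Majorana-support cancellation forces $V'$ to contain no odd-degree vertex, which already handles the strengthened odd-degree case in (b). For $\gamma_{[h]_\tau}$, a direct count gives its multiplicity in the full product as $\chi_{V'}(h)+\chi_{F'}(h)+\chi_{F'}(\tau h)\pmod 2$, so cancellation everywhere becomes the single identity
\begin{equation}
\chi_{F'}(\tau h)=\chi_{F'}(h)+\chi_{V'}(h)\qquad\text{for all }h\in H.\label{eqn:plan-chi}
\end{equation}
Substituting $\lambda h$ for $h$ and invoking $\lambda\tau=\tau\lambda$ (Definition~\ref{def:rotation_system}(ii)) together with $\chi_{F'}(\lambda h)=\chi_{F'}(h)$ yields $\chi_{V'}(\lambda h)=\chi_{V'}(h)$. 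Thus $\chi_{V'}$ is preserved by $\lambda,\rho,\tau$, and by transitivity of $M(R)$ (Definition~\ref{def:rotation_system}(iii)) $\chi_{V'}$ is globally constant, forcing $V'\in\{\emptyset,V\}$; this proves (b). Setting $V'=V$ turns (\ref{eqn:plan-chi}) into the checkerboard condition $\chi_{F'}(\tau h)=\chi_{F'}(h)+1$, equivalent via Lemmas~\ref{lem:checkerboardable_0} and~\ref{lem:checkerboardability}(b) to $G$ being checkerboardable with $F'$ a uniquely-determined color class, giving (c); setting $V'=\emptyset$ turns it into $\chi_{F'}(\tau h)=\chi_{F'}(h)$ and the same transitivity argument forces $F'\in\{\emptyset,F\}$, giving (d).

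For (e), every half-edge Majorana appears either twice (on an edge bordering two distinct faces) or zero times in $\prod_{f\in F}S_f$, so the result is $\eta I$ with $\eta\in\{\pm 1\}$ (Hermiticity of the product rules out $\pm i$); the phase factors $i^{w_f/2}$ in~(\ref{eq:s_f}) combined with the permutation sign incurred by pairing duplicates via~(\ref{eq:majorana-commutation}) are arranged by the labeling of Definition~\ref{def:majorana_surface_code} to yield $\eta=+1$. Part (f) is one line: $|F|=1$ makes (e) read $S_f=I$ directly, and if $|F|>1$ with some $S_f=I$, then $\{f\}$ would be a nonempty proper $F'$ with $\prod_{f\in F'}S_f=\pm I$, contradicting (d). Finally, (g) follows because (a)--(f) restrict the generator combinations that can equal $\pm I$ to (i) the empty product, (ii) $\prod_{f\in F}S_f=+I$ from (e), and, only when $G$ is checkerboardable, (iii) $\prod_{v\in V}S_v\cdot\prod_{f\in F'}S_f$ with $F'$ a color class, which must also be verified to produce $+I$. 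The main obstacle is exactly this sign bookkeeping in (e) and case (iii) of (g): the phases $i^{w_f/2},i^{\text{deg}(v)/2}$ together with the two labeling constraints of Definition~\ref{def:majorana_surface_code} (same-edge Majoranas given consecutive labels in general, and additionally in the checkerboardable case consecutive labels $2j+1,2j+2$ belonging to half-edges of the same vertex) must conspire to give $+I$ rather than $-I$, and the labeling is precisely engineered so that a careful permutation-parity count achieves this.
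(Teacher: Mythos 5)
Your support-level analysis is correct and takes a genuinely different, more unified route than the paper. Where the paper handles (b) and (d) by two separate connectivity arguments (finding a boundary edge of $V'$ in $G$, respectively of $F'$ in $\overline{G}$) and (c) by identifying $\prod_{f\in F'}S_f$ with a vector $x\Phi\in\mathbb{F}_2^{|E|}$, you derive the single cancellation identity $\chi_{F'}(\tau h)=\chi_{F'}(h)+\chi_{V'}(h)$ and then use $\lambda\tau=\tau\lambda$ plus transitivity of the monodromy group to force $\chi_{V'}$ (and, when $V'=\emptyset$, $\chi_{F'}$) to be constant. This is a clean repackaging: monodromy transitivity plays the role of connectivity of $G$ and of $\overline{G}$ simultaneously, and specializing $V'=V$ recovers the checkerboard condition of Lemma~\ref{lem:checkerboardable_0} directly. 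Your multiplicity formula for $\gamma_{[h]_\tau}$ is right in both the two-face and one-face cases, and your handling of the odd-degree Majoranas, of (f), and of the reduction of (g) to three candidate products all match the paper's logic.

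The one genuine gap is that the sign computations you yourself flag as ``the main obstacle'' are asserted rather than carried out, and they are the substantive content of parts (e), the second half of (c), and (g). It does not suffice to say the labeling is ``engineered'' to give $+I$: the paper's proof of (g) actually performs the count. Concretely, the general labeling constraint (same-edge Majoranas get consecutive labels) lets one write each face stabilizer as a product of contiguous, mutually commuting edge operators $i\gamma_{2j}\gamma_{2j+1}$ with no reordering signs, whence $\prod_{f\in F}S_f$ is a product of squares and equals $+I$ (part (e)); and the additional checkerboardable-case constraint (labels $2j+1,2j+2$ lie on half-edges of the same vertex) lets one write each vertex stabilizer as a product of operators $i\gamma_{2j+1}\gamma_{2j+2}$, giving $\prod_{v\in V}S_v=i^{|E|}\gamma_0\gamma_1\cdots\gamma_{2|E|-1}=\prod_{f\in F'}S_f$ with equal signs, which settles (g) and the deferred sign in (c). Without writing out at least this much, the lemma's central claim $-I\notin\mathcal{S}(G)$ is not established; everything up to that point only shows the candidate relations hold up to a sign in $\{\pm1\}$.
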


A consequence of Lemma~\ref{lem:stabilizer_dependence}(g) is that since $\mathcal{S}(R)$ is Abelian and Hermitian, it implies that a non-empty subset of the stabilizers is dependent if and only if the product of its elements is $I$. Let us determine the number of qubits encoded by this Majorana code. There are $|V|+|F|$ stabilizers as defined. However, not all these stabilizers are independent. If the total number of independent stabilizers is $|V|+|F|-\alpha$ for a non-negative integer $\alpha$, then the number of encoded qubits is
\begin{equation}
\label{eq:K_prelim}
K = (2|E|+M)/2 - \left(|V|+|F|-\alpha\right)=M/2-\chi+\alpha,
\end{equation}
using Eq.~\eqref{eq:Eulers_formula}. The remaining work is in determining $\alpha$ by counting dependencies in the stabilizers, which can be done easily using Lemma~\ref{lem:stabilizer_dependence}.

\begin{theorem}
\label{thm:number_encoded_qubits}
A Majorana surface code encodes
\begin{equation}
K=\left\{\begin{array}{ll}2g&,\text{ orientable}\\g&,\text{ non-orientable}\end{array}\right\}+\left\{\begin{array}{ll}0&,\text{ checkerboardable}\\(M-2)/2&,\text{ not checkerboardable}\end{array}\right\}
\end{equation}
qubits, where conditions in brackets are properties of the rotation system $(H,\lambda,\rho,\tau)$, or equivalently the graph embedding, defining the code.
\end{theorem}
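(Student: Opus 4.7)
The plan is to reduce the theorem to computing $\alpha$, the $\mathbb{F}_2$-dimension of the space of linear dependencies among the generating stabilizers $\{S_v\}_{v \in V} \cup \{S_f\}_{f \in F}$, and then plug into Eq.~\eqref{eq:K_prelim} together with $\chi = 2-2g$ (orientable) or $\chi = 2-g$ (non-orientable).

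First, I would set up the framework. Since $\mathcal{S}(G)$ is Abelian and Hermitian with $-I \notin \mathcal{S}(G)$ by Lemma~\ref{lem:stabilizer_dependence}(g), each generator has order $2$ in $\mathcal{S}(G)$, so the dependencies form the kernel of the surjection $\mathbb{F}_2^{V \sqcup F} \twoheadrightarrow \mathcal{S}(G)$ sending the characteristic vector of $(V',F')$ to $\prod_{v\in V'}S_v \prod_{f\in F'}S_f$. Let $\alpha$ denote the dimension of this kernel; then there are $|V|+|F|-\alpha$ independent generators, and Eq.~\eqref{eq:K_prelim} gives $K = M/2 - \chi + \alpha$.

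The main step is to enumerate the dependencies using Lemma~\ref{lem:stabilizer_dependence}. A nontrivial element $(V',F')$ of the kernel corresponds to an equality $\prod_{v\in V'}S_v = \prod_{f\in F'}S_f$ (the $\pm$ ambiguity in (b)--(d) collapses thanks to (g)). Part (a) excludes $V' \neq \emptyset$ with $F' = \emptyset$, and part (b) excludes any non-empty proper $V' \subsetneq V$, so we may restrict to $V' \in \{\emptyset, V\}$. If $V' = \emptyset$, parts (d) and (e) imply the unique non-trivial choice is $F' = F$. If $V' = V$, part (c) implies a solution exists if and only if $G$ is checkerboardable, and in that case there are exactly two such $F'$, which form a complementary pair.

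Tallying the dimensions: in the non-checkerboardable case the kernel is spanned by $(\emptyset,F)$ alone, so $\alpha = 1$. In the checkerboardable case $M = 0$ (any odd-degree vertex obstructs checkerboardability, as noted after Lemma~\ref{lem:checkerboardability}), and the solutions $(V,F')$ and $(V,F\setminus F')$ differ from each other by $(\emptyset,F)$, so together with $(\emptyset,F)$ they span a $2$-dimensional kernel, giving $\alpha = 2$. Substituting these values of $\alpha$ and the appropriate $\chi$ into $K = M/2 - \chi + \alpha$ yields the four cases in the stated formula. The only obstacle is really bookkeeping: verifying that the uniqueness claims in Lemma~\ref{lem:stabilizer_dependence} translate cleanly into the correct kernel dimension and confirming that checkerboardability forces $M=0$ so the two summands in the theorem statement do not interact.
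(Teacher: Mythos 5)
Your proposal is correct and follows essentially the same route as the paper: both arguments reduce the theorem to computing $\alpha$ via Eq.~\eqref{eq:K_prelim} and then use Lemma~\ref{lem:stabilizer_dependence} (with part (g) collapsing the sign ambiguity) to show $\alpha=1$ in the non-checkerboardable case and $\alpha=2$ in the checkerboardable case, where $M=0$. Your phrasing in terms of the kernel of the map $\mathbb{F}_2^{V\sqcup F}\twoheadrightarrow\mathcal{S}(G)$ is a slightly more explicit packaging of the paper's counting of removable generators, but the content is identical.
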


\begin{proof}
We will argue that (i) $\alpha \in \{1,2\}$, and (ii) $\alpha = 2$ if and only if the graph is checkerboardable. Once we have done so and because a checkerboardable graph does not contain any odd degree vertices, we can use Eq.~\eqref{eq:K_prelim} to complete the proof. 

To prove (i), let $\mathcal{I} = \{S_v : v \in V\} \cup \{S_f : f \in F\}$ be the set of stabilizers, so $\langle \mathcal{I} \rangle = \mathcal{S}(R)$. By Lemma~\ref{lem:stabilizer_dependence}(e), one can always remove a single face stabilizer $S_f$ so that $\langle \mathcal{I} \setminus S_f \rangle = \mathcal{S}(R)$, which implies $\alpha \geq 1$. If $\mathcal{I} \setminus S_f$ is not independent, by Lemma~\ref{lem:stabilizer_dependence}(a,b,d) it must be that $\prod_{v\in V}S_v = \prod_{f\in F'}S_f$ for some non-empty subset $F' \subseteq F \setminus f$. If this happens then we remove a vertex stabilizer $S_v$ arbitrarily after which Lemma~\ref{lem:stabilizer_dependence}(b) guarantees that the set $\mathcal{I} \setminus \{S_f, S_v\}$ is independent. Thus $\alpha \leq 2$. To prove (ii), by the preceding arguments $\alpha = 2$ if and only if $\prod_{v\in V}S_v = \prod_{f\in F'}S_f$ for some non-empty proper subset $F' \subset F$, and the latter is true if and only if the graph is checkerboardable by Lemma~\ref{lem:stabilizer_dependence}(c).
\end{proof}

We also point out that if the graph defining the Majorana surface code has vertices with degree one or degree two, then there are vertex stabilizers $S_v$ that are the product of just two modes. These two modes and the stabilizer that is their product can be removed from the code without affecting the number of encoded qubits or code distance. What may be affected is the degree of protection arising from superselection rules \cite{bravyi2010majorana}, e.g.~Kitaev's 1D chain \cite{kitaev2001unpaired} has many such weight-two stabilizers. Nevertheless, our main goal is to convert these Majorana surface codes to qubit codes where superselection is not a relevant protection. Thus, we assume from now on that our graphs have no vertices of degree less than three.

\subsection{From Majorana surface codes to qubit surface codes}
\label{subsec:qub_surface_codes}
\edit{In this subsection we replace Majorana operators with Paulis to make qubit surface codes on arbitrary graphs, thus slightly generalizing an idea from \cite{wen2003quantum,kitaev2006anyons,bravyi2018correcting}.} \edit{First,} recognize that at each even (resp.~odd) degree vertex $v$, because of the vertex stabilizer $S_v$, there is a copy of the code $\mathcal{S}_{\text{even}}$, Eq.~\eqref{eq:S_even}, defined on $\text{deg}(v)$ (resp.~$\text{deg}(v)+1$) distinct Majorana modes and encoding $(\text{deg}(v)-2)/2$ (resp.~$(\text{deg}(v)-1)/2$) qubits. Therefore, to define the qubit code given an embedded graph $G=(V,E,F)$, the first step is to place these many qubits at each vertex. We already assumed the graph has vertices of degree at least three, so there is at least one qubit at each vertex.

Now, consider a single vertex $v$. By definition, Paulis that act only on the qubits at $v$ can be associated to logical operators of the code $\mathcal{S}_{\text{even}}$ located at $v$. We are most interested in the Paulis associated to the sector operators $q_{[h]_\rho}=i\gamma_{[h]_\tau}\gamma_{[\rho h]_\tau}$ for each flag $h\in v$. That is, $q_{[h]_\rho}$ is the product of the two Majoranas on half-edges adjacent to the sector. We are interested particularly in $q_{[h]_\rho}$ because the face stabilizers of the Majorana code can be alternatively written as the product of these sector operators, 
\begin{equation}
\label{eq:face_stabilizer_as_sectors}
S_f=\pm\prod_{[h]_\rho\subseteq f}q_{[h]_\rho}.
\end{equation}
Therefore, choosing Paulis $q_{[h]_\rho}$ will also give $S_f$ in terms of Paulis.

Let us list the sector operators $\{q_{[h]_\rho},q_{[\tau\rho h]_\rho},\dots,q_{[(\tau\rho)^{\text{deg}(v)-1}h]_\rho}\}$ starting from some arbitrary $h\in v$. Notice that these operators have very specific commutation rules, with adjacent (and the first and last elements also considered adjacent) elements anticommuting with one another, and all other pairs commuting. This leads us to study the following lists of Paulis.

\begin{definition}
\label{def:CAL}
A \textit{cyclically anticommuting list} (CAL)  acting on $n$ qubits is a list of $n$-qubit Paulis $\{p_0,p_1,\dots,p_{\ell-1}\}$, in which for distinct $i$ and $j$, $p_i$ and $p_j$ anticommute if and only if $i=j \pm1 \mod \ell$. A CAL of length $\ell\ge 1$ is called \textit{extremal} if there does not exist a CAL of the same length acting on fewer qubits.
\end{definition}

Clearly, the list of sector operators $\{q_{[h]_\rho},q_{[\tau\rho h]_\rho},\dots,q_{[(\tau\rho)^{\text{deg}(v)-1}h]_\rho}\}$ is a CAL of length $\text{deg}(v)$ acting on $(\text{deg}(v)-2)/2$ qubits if $v$ has even degree or acting on $(\text{deg}(v)-1)/2$ qubits if $v$ has odd degree. Let us establish existence of CALs with a construction.

\begin{theorem}
\label{thm:CAL_construct}
A CAL of length $\ell \ge 3$ acting on $n$ qubits in which the Paulis have weight at most two exists if and only if
\begin{equation}
\label{eq:minimal_qubits_CAL}
n\ge\bigg\{\begin{array}{ll}(\ell-2)/2,& \ell \text{ even},\\
(\ell-1)/2,& \ell \text{ odd}.\end{array}
\end{equation}
\end{theorem}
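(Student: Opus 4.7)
I will prove the ``if'' and ``only if'' directions separately. The necessity direction is a clean linear-algebra argument that does not use the weight hypothesis at all, while sufficiency requires an explicit construction in which the weight bound is the binding constraint.

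\emph{Necessity.} Stack the symplectic representations of the $p_i$ into rows of a matrix $V\in\mathbb{F}_2^{\ell\times 2n}$. The CAL hypothesis translates exactly to $V\Omega V^\top = A_\ell$, where $\Omega$ is the standard symplectic form on $n$ qubits and $A_\ell$ is the adjacency matrix of the cycle graph $C_\ell$. Hence $\mathrm{rank}_{\mathbb{F}_2}(A_\ell) \le \mathrm{rank}_{\mathbb{F}_2}(V) \le 2n$. The circulant $A_\ell$ is represented by the polynomial $a(x)=x+x^{\ell-1}$ modulo $x^\ell-1$, so its $\mathbb{F}_2$-rank equals $\ell - \deg\gcd(a(x),\, x^\ell-1)$. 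A short computation in $\mathbb{F}_2[x]$ gives $\deg\gcd = 1$ for odd $\ell$ and $\deg\gcd = 2$ for even $\ell$, yielding $\mathrm{rank}_{\mathbb{F}_2}(A_\ell)=\ell-1$ or $\ell-2$ respectively; dividing by two gives the claimed lower bound on $n$. The weight-at-most-two hypothesis is nowhere used here, so the bound applies to general CALs.

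\emph{Sufficiency.} I would construct the required CAL explicitly via an ``$X$--$Z$ staircase'' that walks across qubits $1,\dots,n$:
\begin{equation*}
X_1,\; Z_1,\; X_1 X_2,\; Z_2,\; X_2 X_3,\; Z_3,\; \dots,
\end{equation*}
so that interior consecutive terms anticommute and interior non-consecutive terms commute (a direct check of symplectic inner products). The list is then closed back to $p_0$ by one additional weight-$\le 2$ Pauli, with a single $Y$-Pauli half-step inserted near the end of the staircase in the odd-$\ell$ case so that the parity of the cyclic closure works out. Small cases $\ell = 3, 4, 5, 6$ can be verified by hand and serve as base cases (e.g.~$\{X_1,Z_1,Y_1 X_2, Y_2, Y_1 Z_2\}$ for $\ell=5$, $n=2$; and $\{X_1, Z_1, X_1X_2, Z_2, X_2, Z_1 Z_2\}$ for $\ell=6$, $n=2$).

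\emph{Main obstacle.} The delicate point is the closing Pauli. The naive choice $Z_1 Z_n$ fails because it anticommutes with the bridges $X_1 X_2$ and $X_{n-1} X_n$ living two hops away on either side, spoiling non-adjacent commutation once $n\ge 3$. I expect to resolve this in one of two ways: (i) perturbing selected $X$-bridges in the staircase to $Y$-bridges so that the first and last bridges commute with the proposed closure, shifting the ``parity defect'' onto qubits that the closure does not touch; or (ii) invoking the Majorana picture of Section~\ref{subsec:maj_ops_intro}: the sector-like operators $q_i = i\gamma_i\gamma_{(i+1)\bmod\ell}$ built from $\ell$ Majoranas automatically form a CAL of length $\ell$, and quotienting the Jordan--Wigner representation by the $\mathcal{S}_{\mathrm{even}}$ stabilizer of Eq.~\eqref{eq:S_even} saves exactly one qubit relative to the naive Jordan--Wigner count, matching the bound on $n$; a careful choice of logical representatives for the $q_i$ within the code then reduces the weight of each operator to at most two.
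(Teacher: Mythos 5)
Your necessity argument is correct and is a legitimately different route from the one the paper uses in the main text. You bound $2n$ below by $\mathrm{rank}_{\mathbb{F}_2}(V\Omega V^{\top})=\mathrm{rank}_{\mathbb{F}_2}(A_\ell)$ and compute the $\mathbb{F}_2$-rank of the cycle circulant via $\gcd(x+x^{\ell-1},x^\ell-1)$; the computation ($\ell-1$ for odd $\ell$, $\ell-2$ for even $\ell$) checks out. The paper's main-text proof instead passes to the partial products $a_j=\prod_{k\le j}p_k$, which form an anticommuting set of size $\ell-1$, and invokes the known bound $2n+1$ on anticommuting sets; your rank argument is essentially the one the paper relegates to Appendix~\ref{sec:cal-prop} (Theorem~\ref{thm:plist_construct} and Example~\ref{ex:cal}). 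Both are fine, and you are right that the weight hypothesis plays no role in this direction.

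The sufficiency direction, however, has a genuine gap, and it sits exactly where you say it does: closing the staircase with weight-$\le 2$ Paulis. You do not resolve it. Worse, the one nontrivial instance you offer of your proposed fix is wrong: in $\{X_1,Z_1,Y_1X_2,Y_2,Y_1Z_2\}$ the operators $X_1$ and $Y_1X_2$ sit at non-adjacent positions $0$ and $2$ of a length-$5$ cycle yet anticommute (they differ by $X$ versus $Y$ on qubit $1$), and likewise $Z_1$ with $Y_1Z_2$ and $Y_1X_2$ with $Y_1Z_2$ violate the required commutations. This failure is symptomatic: any ``perturb a bridge to $Y$'' move that fixes the closure against $Z_1Z_n$ tends to break a non-adjacent commutation with $X_1$ or $Z_1$, so option (i) is not a repair, it is the open problem restated. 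Option (ii) also assumes the conclusion: the sector operators $i\gamma_i\gamma_{i+1}$ do form a CAL on the minimal number of qubits, but the claim that one can choose logical representatives of weight at most two is precisely what the theorem asserts and what must be constructed.

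What the paper does instead is avoid ever having to close a long linear chain. It takes the extremal base cases $\{X,Y,Z\}$ ($\ell=3$, $n=1$) and $\{X,Z,X,Z\}$ ($\ell=4$, $n=1$) and defines a splicing operation
\begin{equation*}
\{p_0,\dots,p_{\ell-1}\}\circ\{p'_0,\dots,p'_{\ell'-1}\}=\{p_0,\dots,p_{\ell/2-2},\,p_{\ell/2-1}p'_0,\,p'_1,\dots,p'_{\ell'-2},\,p'_{\ell'-1}p_{\ell/2},\,p_{\ell/2+1},\dots,p_{\ell-1}\},
\end{equation*}
which produces a CAL of length $\ell+\ell'-2$ on $n+n'$ qubits; since the two glued Paulis act on disjoint qubit sets and each factor has weight one at the splice points, every element of the composite still has weight at most two, and the qubit count stays extremal. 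Iterating from the length-$3$ and length-$4$ seeds reaches every $\ell\ge3$. If you want to salvage your staircase picture, note that the paper's composition applied repeatedly to length-$4$ seeds \emph{is} your staircase $X_1,Z_1,X_1X_2,Z_2,X_2X_3,\dots$ --- but with the crucial feature that the ``closure'' is built in from the start (the seed is already cyclic) rather than appended at the end. As it stands, your sufficiency proof is a plan with its central step missing.
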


\begin{proof}
We construct extremal CALs with length $\ell$ acting on $n$ qubits saturating the lower bound in Eq.~\eqref{eq:minimal_qubits_CAL}. To get non-extremal CALs, more qubits can always be added by acting on them with identity, or just Pauli $Z$, for instance, so long as the commutation relations are unaffected. We note that extremal CALs of length three and four, acting on a single qubit, are $\{X,Y,Z\}$ and $\{X,Z,X,Z\}$. To construct extremal CALs of longer length, we define a composition operation $\circ$ that takes a CAL of length $\ell$ (even) and a CAL of length $\ell'$ (either even or odd) to a CAL of length $\ell+\ell'-2$:
\begin{equation}
\{p_0,\dots,p_{\ell-1}\}\circ\{p'_0,\dots,p'_{\ell'-1}\} := \{p_0,\dots,p_{\ell/2-2},p_{\ell/2-1}p'_0,p'_1,p'_2,\dots,p'_{\ell'-2},p'_{\ell'-1}p_{\ell/2},p_{\ell/2+1}\dots p_{\ell-1}\},
\end{equation}
where the $p_j$s and $p'_j$s represent Paulis that act on $n$ and $n'$ qubits, respectively. Thus the CAL on the right acts on $n+n'$ qubits. Repeated application of this composition suffices to make CALs of any length out of the length three and four CALs given above. Composition preserves extremality --- if $n$ and $n'$ are minimal qubit counts for CALs of lengths $\ell$ and $\ell'$, then
\begin{equation}
n+n'=\bigg\{\begin{array}{ll}(\ell+\ell'-4)/2,& \ell'\text{ even},\\
(\ell+\ell'-3)/2,& \ell'\text{ odd}
\end{array}
\end{equation}
is the minimum number of qubits needed for a length $\ell+\ell'-2$ CAL. A graphical depiction of this composition operation is shown in Fig.~\ref{fig:spinal_expansion}. It should be clear that the construction results in CALs consisting of Paulis of weight just one or two.

\begin{figure}[t!]
    \centering
    \includegraphics[width=0.8\textwidth]{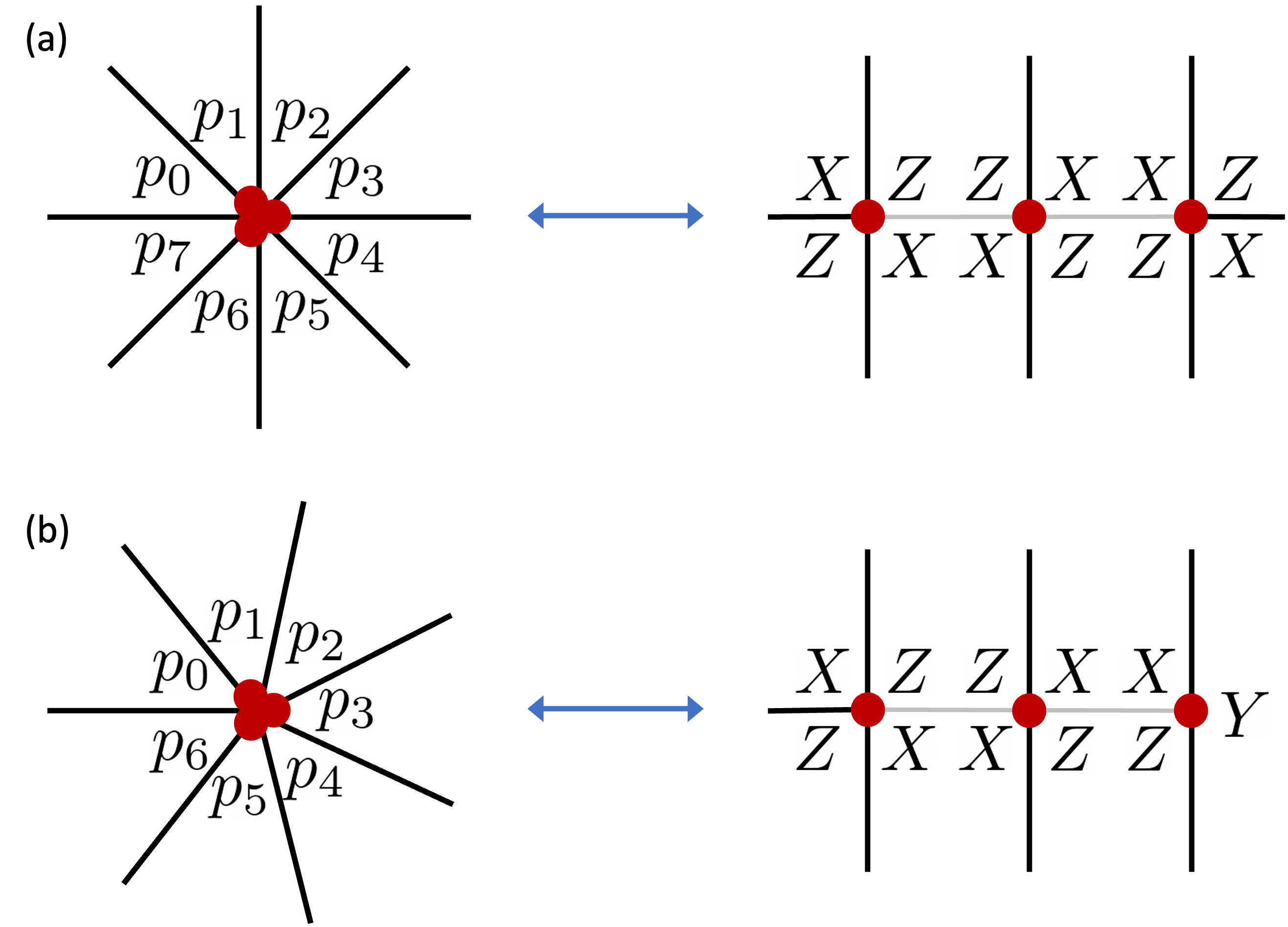}
    \caption{One way to create a CAL of length $\ell$ is to compose CALs of lengths three or four together. In (a), we compose three CALs with length four (right) to make a CAL with length eight (left) acting on three qubits (red circles). In (b), we use a similar composition of two length-four CALs and one length-three CAL to make a length-seven CAL. \edit{In both parts, to go from right to left, shrink the interior, light edges until all three qubits occupy the central vertex. Outgoing dark edges are meant to be identified in cyclical fashion between left and right, which in turn determines the identification of Paulis on the left and right.} In terms of surface codes defined on graphs, this composition operation means that we can always decompose higher degree vertices into vertices with degrees just three or four. Still, we generally will include higher degree vertices in our drawings to highlight symmetries when possible.}
    \label{fig:spinal_expansion}
\end{figure}

There are no CALs of length $\ell$ acting on fewer qubits than in Eq.~\eqref{eq:minimal_qubits_CAL}. Suppose, by way of contradiction, that a CAL $\{p_0,p_1,\dots,p_{\ell-1}\}$ acting on $n$ qubits has length $\ell>2n+2$. Then, there is a set of Paulis 
\begin{equation}
\left\{a_j=\prod_{k=0}^{j}p_k:j\in\{0,1,\dots,\ell-2\}\right\}
\end{equation}
with size $>2n+1$ in which all Paulis anticommute with one another. However, this contradicts known results on the maximum size of such anticommuting sets \cite{sarkar-vandenberg2019}.
\end{proof}

CALs can be thoroughly studied in their own right. The interested reader can find much more discussion of CALs in Appendix~\ref{sec:cal-prop}. For instance, we show some structural properties of extremal CALs, such as the following lemmas which we find useful (see Corollary~\ref{cor:cal-dim-extremal} and Corollary~\ref{cor:extremal-cal-mult-identity}).

\begin{lemma}
\label{lem:CAL_generating}
If $\{p_0,p_1,\dots,p_{\ell-1}\}$ is an extremal CAL of length $\ell \geq 3$ acting on $n$ qubits, then taking products of the $p_i$ generates the entire $n$-qubit Pauli group (up to global phases).
\end{lemma}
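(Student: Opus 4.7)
The plan is to work in the symplectic representation of the Pauli group and extract the conclusion from the $\mathbb{F}_2$-rank of the commutation matrix of the CAL. First, I would pass to the quotient of the $n$-qubit Pauli group by its center $\{\pm I,\pm iI\}$, which is an $\mathbb{F}_2$-vector space of dimension $2n$ equipped with the standard symplectic form $\Omega$; two Paulis commute if and only if their images are symplectically orthogonal. Let $v_0,\dots,v_{\ell-1}\in\mathbb{F}_2^{2n}$ denote the images of $p_0,\dots,p_{\ell-1}$ and let $V$ be the $\ell\times 2n$ matrix with rows $v_i^\top$. The defining property of a CAL says that $M=V\Omega V^\top\in\mathbb{F}_2^{\ell\times\ell}$ is exactly the adjacency matrix of the cycle graph $C_\ell$. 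Since $\Omega$ is invertible, $\text{rank}(M)\le\text{rank}(V)=\dim\mathrm{span}(v_0,\dots,v_{\ell-1})\le 2n$, and the lemma is equivalent to $\dim\mathrm{span}(v_i)=2n$, so it suffices to prove $\text{rank}(M)=2n$.

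Next I would compute $\text{rank}(M)$ by identifying the circulant $M$ with the polynomial $x+x^{\ell-1}=x^{-1}(x+1)^2$ in $R=\mathbb{F}_2[x]/(x^\ell-1)$. Because $x$ is a unit in $R$, $\ker M$ corresponds to the annihilator of $(x+1)^2$ in $R$, whose dimension equals $\min\{2,m\}$ where $m$ is the multiplicity of $(x+1)$ as a factor of $x^\ell-1$ over $\mathbb{F}_2$. Writing $\ell=2^k m'$ with $m'$ odd shows that $m=2^k$; in particular $m=1$ when $\ell$ is odd and $m\ge 2$ when $\ell$ is even. This gives $\text{rank}(M)=\ell-1$ for odd $\ell$ and $\text{rank}(M)=\ell-2$ for even $\ell$.

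By Theorem~\ref{thm:CAL_construct}, extremality means $\ell=2n+1$ in the odd case and $\ell=2n+2$ in the even case, so in both cases $\text{rank}(M)=2n$. Combined with the inequality chain $\text{rank}(M)\le\dim\mathrm{span}(v_i)\le 2n$, this forces $\dim\mathrm{span}(v_i)=2n$, so the $v_i$ span all of $\mathbb{F}_2^{2n}$ and the $p_i$ therefore generate the full $n$-qubit Pauli group up to phases. The only step needing genuine care is the rank computation for the cycle adjacency matrix over $\mathbb{F}_2$; everything else is standard symplectic bookkeeping.
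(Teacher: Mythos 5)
Your proof is correct, but it takes a genuinely different route from the paper's. The paper proves this lemma (via Corollary~\ref{cor:cal-dim-extremal}) by forming the partial products $q_i=\prod_{j\le i}p_j$, observing that these constitute an anticommuting list, and invoking independence results for anticommuting sets (Lemma~\ref{lem:anticommuting-sets}, Theorems~\ref{thm:cal-dim-odd} and \ref{thm:cal-dim-even}) to get $\dim(\mathcal{C})\ge\ell-1$ (odd) or $\ell-2$ (even); extremality then pins $\dim(\mathcal{C})=2n$. You instead bound $\dim\mathrm{span}(v_i)$ from below by the $\mathbb{F}_2$-rank of the commutation matrix $M=V\Omega V^\top$ and compute that rank directly by identifying the circulant $M$ with $x+x^{\ell-1}=x^{-1}(x+1)^2$ in $\mathbb{F}_2[x]/(x^\ell-1)$; your multiplicity computation (kernel dimension $\min\{2,2^k\}$ for $\ell=2^km'$ with $m'$ odd) is correct, and the final step via Theorem~\ref{thm:CAL_construct} is legitimate since the lower bound there holds for all CALs, not just weight-two ones. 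Your route is actually closer in spirit to the paper's Theorem~\ref{thm:plist_construct} together with Example~\ref{ex:cal}, which establish $\dim(\mathcal{C})\ge\mathrm{rank}(C)$ with equality at the minimal qubit count and record the same rank values $\ell-1$ and $\ell-2$ — though the paper obtains those ranks from the symmetric reduction of Algorithm~\ref{alg:reduce} rather than from circulant polynomial algebra. What your approach buys is a self-contained, purely linear-algebraic argument that sidesteps the anticommuting-set machinery entirely; what the paper's approach buys is finer structural information (exactly which sub-multisets of the CAL multiply to identity, Corollary~\ref{cor:extremal-cal-mult-identity}), which is needed elsewhere and does not fall out of a pure rank count.
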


\begin{lemma}
\label{lem:CAL_identities}
Suppose $\{p_0,p_1,\dots,p_{\ell-1}\}$ is an extremal CAL of length $\ell \geq 3$. Then $\prod_{j=0}^{\ell-1}p_j\propto I$ and, if $\ell$ is even, $\prod_{j=0}^{\ell/2-1}p_{2j}\propto I$ and $\prod_{j=0}^{\ell/2-1}p_{2j+1}\propto I$. Moreover, these are the only products that are proportional to identity.
\end{lemma}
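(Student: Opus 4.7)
My plan is to split the lemma into an existence half and a uniqueness half. For the existence half, I would directly verify that each of the claimed products is proportional to the identity by showing it commutes with every $p_i$ and then invoking Lemma~\ref{lem:CAL_generating}, which says the $p_i$ generate the full $n$-qubit Pauli group (up to phase) so that any Pauli commuting with every $p_i$ must lie in the center and hence be proportional to $I$.

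The commutation checks are short bookkeeping. For a generic product $P=\prod_{j\in S} p_j$ and a generator $p_i$, conjugating $p_i$ past $P$ yields a sign $(-1)^{a_i}$, where $a_i$ is the number of $j\in S\setminus\{i\}$ with $p_j$ anticommuting with $p_i$. By the CAL commutation rule, the only indices that can contribute are $i\pm 1 \bmod \ell$. Taking $S=\{0,1,\dots,\ell-1\}$, both neighbors lie in $S$, so $a_i=2$ for every $i$ and the product commutes with every $p_i$. When $\ell$ is even and $S$ is either the even or the odd index set, for $i\in S$ neither neighbor lies in $S$ (so $a_i=0$), while for $i\notin S$ both neighbors lie in $S$ (so $a_i=2$); in either case the product commutes with $p_i$. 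Hence all three (or just the one, when $\ell$ is odd) specified products are proportional to $I$.

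For the uniqueness half, I would pass to the symplectic picture and count dimensions. The quotient of the Pauli group by its center is $\mathbb{F}_2^{2n}$, and the assignment $S\mapsto \prod_{j\in S}p_j$ (mod phase) defines an $\mathbb{F}_2$-linear map $\varphi:\mathbb{F}_2^\ell\to\mathbb{F}_2^{2n}$. Lemma~\ref{lem:CAL_generating} says $\varphi$ is surjective, so $\dim\ker\varphi=\ell-2n$. Plugging in extremality, Eq.~\eqref{eq:minimal_qubits_CAL}, gives $\ell-2n=1$ when $\ell$ is odd and $\ell-2n=2$ when $\ell$ is even, so $\ker\varphi$ has exactly $2$ or $4$ elements respectively. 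In the odd case the unique nonzero kernel element must be the full product we already exhibited. In the even case the three nonzero subsets $\{0,\dots,\ell-1\}$, $\{0,2,\dots,\ell-2\}$, $\{1,3,\dots,\ell-1\}$ are pairwise distinct in $\mathbb{F}_2^\ell$ and all lie in $\ker\varphi$, so they exhaust the three nonzero kernel elements.

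The only mildly subtle step is the commutation bookkeeping when the index $i$ actually lies in $S$, but this is handled cleanly by the standard trick of splitting $P=A\,p_i\,B$ and moving $p_i$ either right through $A$ or left through $B$ so that the $p_i^2=I$ absorbs and only the anticommutation signs with the other factors survive; after that, the whole argument is just linear algebra over $\mathbb{F}_2$ combined with extremality and Lemma~\ref{lem:CAL_generating}.
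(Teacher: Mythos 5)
Your proof is correct, and it takes a noticeably more direct route than the paper's. The paper derives this lemma (as Corollary~\ref{cor:extremal-cal-mult-identity}) from a general structure theory developed in Appendix~\ref{sec:cal-prop}: it associates to the CAL an anticommuting list $q_i=\prod_{k\le i}p_k$, invokes independence results for anticommuting sets from \cite{sarkar-vandenberg2019}, and works through a case analysis (Theorems~\ref{thm:cal-dim-odd} and \ref{thm:cal-dim-even}) that characterizes $\dim(\mathcal{C})$ for \emph{arbitrary} CALs in terms of which of these products equal identity; extremality then pins down the dimension and forces the identities. You instead (i) verify directly that each claimed product centralizes every generator — the neighbor-counting argument ($a_i\in\{0,2\}$ in all cases) is correct, including the $i\in S$ case you flag — and conclude via Lemma~\ref{lem:CAL_generating} that it is central and hence proportional to $I$; and (ii) for uniqueness, count $\dim\ker\varphi=\ell-2n\in\{1,2\}$ using extremality and surjectivity of $\varphi$, then observe that the exhibited vectors exhaust the nonzero kernel. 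Your argument is shorter and self-contained modulo Lemma~\ref{lem:CAL_generating}, which is a legitimate prior result here (the paper proves it independently as Corollary~\ref{cor:cal-dim-extremal}), so there is no circularity. What the paper's longer route buys is information about non-extremal CALs and the precise dependency structure of their products, which it needs elsewhere; your route buys economy for exactly the statement at hand.
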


These lemmas lead to the following simple corollary.
\begin{corollary}
\label{cor:CAL_commutation}
Let $\mathcal{C} = \{p_0,p_1,\dots,p_{\ell-1}\}$ be an extremal CAL of length $\ell \geq 3$ acting on $n$ qubits. For any Pauli $q$ acting on $n$ qubits, define the (row) vector $C_q\in\mathbb{F}_2^{\ell}$ whose $j^{\text{th}}$ element is $1$ if and only if $q$ and $p_j$ anticommute. If $\ell$ is odd define $M^\top=\left(\begin{smallmatrix}1&1&\dots&1&1\end{smallmatrix}\right) \in \mathbb{F}_2^{1 \times \ell}$ and if $\ell$ is even $M^\top=\left(\begin{smallmatrix}1&1&\dots&1&1\\1&0&\dots&1&0\end{smallmatrix}\right) \in \mathbb{F}_2^{2 \times \ell}$. Then $C_q M=\vec0$. Moreover, if $x \in \mathbb{F}_2^{\ell}$ satisfies $x M=\vec0$, then there exists a (unique, up to phase) Pauli $q$ such that $x=C_q$. Finally, $C_q=\vec0$ if and only if $q\propto I$. 
\end{corollary}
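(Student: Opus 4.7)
The strategy is to recognize that $q \mapsto C_q$ is an $\mathbb{F}_2$-linear map on the $n$-qubit Pauli group modulo global phases, and then to use Lemma~\ref{lem:CAL_identities} and Lemma~\ref{lem:CAL_generating} to identify its image and kernel respectively. For any Paulis $q_1,q_2$, one has $(q_1 q_2) p_j (q_1 q_2)^{-1} = q_1 (q_2 p_j q_2^{-1}) q_1^{-1}$, so the anticommutation bit of $q_1 q_2$ with $p_j$ is the mod-$2$ sum of those of $q_1$ and $q_2$. Thus $C_{q_1 q_2} = C_{q_1} + C_{q_2}$, and global phases do not affect $C_q$, so $C$ descends to a linear map from the $2n$-dimensional quotient of the $n$-qubit Pauli group by its center into $\mathbb{F}_2^{\ell}$.

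The first step is to verify $C_q M = \vec 0$, which uses Lemma~\ref{lem:CAL_identities}: $\prod_{j=0}^{\ell-1} p_j \propto I$ for any extremal CAL, and additionally $\prod_{j=0}^{\ell/2-1} p_{2j} \propto I$ when $\ell$ is even. Since $q$ commutes with any scalar, the number of factors in each such product that anticommute with $q$ must be even. That is precisely $C_q \cdot \vec 1^{\top}=0$ (first column of $M$) and, for even $\ell$, $\sum_{j \text{ even}} (C_q)_j = 0$ (second column of $M$). Hence the image of $C$ lies inside $\ker(x \mapsto xM)$.

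Next I would prove the ``$C_q = \vec 0$ iff $q \propto I$'' clause, which is where Lemma~\ref{lem:CAL_generating} enters: if $q$ commutes with every $p_j$, then because these generate the entire $n$-qubit Pauli group up to phase, $q$ commutes with every Pauli and must be a scalar; the converse is immediate. So $C$ is injective on Paulis modulo phase.

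Finally, existence and uniqueness of $q$ given $x$ with $xM = \vec 0$ follow by a dimension count. By Theorem~\ref{thm:CAL_construct}, extremality forces $2n = \ell - 1$ when $\ell$ is odd and $2n = \ell - 2$ when $\ell$ is even. The matrix $M$ has rank $1$ in the odd case and rank $2$ in the even case (the all-ones and alternating columns are linearly independent for $\ell \geq 4$), so $\ker(x \mapsto xM)$ has dimension $2n$ in either case. Since $C$ is an injection from a $2n$-dimensional space into a kernel of equal dimension, the inclusion is an equality, giving existence; injectivity supplies uniqueness up to phase. The only delicate step is the dimension count, which relies crucially on extremality to make the image of $C$ exactly fill the kernel of $M$; everything else reduces to bookkeeping once linearity and the two CAL lemmas are in place.
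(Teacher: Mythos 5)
Your proof is correct, and it rests on the same two pillars as the paper's: Lemma~\ref{lem:CAL_identities} gives $C_qM=\vec 0$ (a product of Paulis proportional to $I$ must contain an even number of factors anticommuting with any fixed $q$), and Lemma~\ref{lem:CAL_generating} gives $C_q=\vec 0\iff q\propto I$. Where you diverge is the existence step. The paper argues constructively: the first $\ell-1$ (resp.\ $\ell-2$) Paulis of the CAL are independent and generate the full $n$-qubit Pauli group, so one can realize any prescribed commutation pattern against that prefix by a unique Pauli up to phase, and the condition $xM=\vec 0$ then forces the remaining one (two) bits to come out right. You instead observe that $q\mapsto C_q$ is $\mathbb{F}_2$-linear and injective on Paulis modulo phase, and close the argument by rank--nullity: extremality pins $2n$ to $\ell-1$ or $\ell-2$, which matches $\dim\ker(x\mapsto xM)=\ell-\mathrm{rank}(M)$, so the image of $C$ fills the kernel. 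Both arguments use extremality in an essential way (the paper through the generating property of the prefix, you through the exact qubit count from Theorem~\ref{thm:CAL_construct}); yours packages the existence claim as a dimension count and so avoids invoking the standard ``prescribe commutations against independent generators'' fact, at the cost of having to verify that $\mathrm{rank}(M)$ equals $1$ or $2$, which you do correctly. No gaps.
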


\begin{proof}
Lemma~\ref{lem:CAL_identities} implies that the last one (two) Paulis of a CAL are not independent from the others if $\ell$ is odd (even) and that these are the only dependencies. These dependencies result in $C_q M=0$. Lemma~\ref{lem:CAL_generating} implies that the first $\ell-1$ ($\ell-2$) independent Paulis of the CAL generate the entire $n$-qubit Pauli group. Therefore, we are guaranteed the existence of a (unique, up to phase) Pauli $q$ satisfying the commutation relations expressed by the first $\ell-1$ ($\ell-2$) bits of $x$ and, since $x M=\vec 0$, the final one (two) bits of $x$ must be consistent with those commutations. Finally, $C_q=\vec 0$ implies $q$ commutes with the entire Pauli group, which is possible if and only if $q\propto I$. 
\end{proof}

With CALs forming the basis for the logical space of the Majorana code $\mathcal{S}_{\text{even}}$ we can complete our construction of qubit surface codes.
\begin{definition}\label{def:qubit_surface_code}
Suppose we have an embedded graph $G$ described by some rotation system $R=(H,\lambda,\rho,\tau)$ and the vertices of $G$ are at least degree three. At vertex $v$, place $N_v=\lceil(\text{deg}(v)-2)/2\rceil$ qubits, and associate a Hermitian Pauli $q_{[h]_\rho}$ to each sector $[h]_\rho$ around that vertex. These Paulis $\{q_{[h]_\rho},q_{[\tau\rho h]_\rho},\dots,q_{[(\tau\rho)^{\text{deg}(v)-1}h]_\rho}\}$ should form an extremal CAL of length $\text{deg}(v)$ acting on the $N_v$ qubits at $v$. To every face $f$ of the graph we associate a stabilizer, the product of all Paulis associated to sectors $[h]_\rho\subseteq f$.
\end{definition}

By the discussion starting this subsection, this definition ensures the codespaces of the Majorana code associated to $G$ (Definition~\ref{def:majorana_surface_code}) and the qubit surface code associated to $G$ (Definition~\ref{def:qubit_surface_code}) are the same, both encoding $K$ qubits from Theorem~\ref{thm:number_encoded_qubits}, leading to the following corollary, which is formally proved in Appendix~\ref{app:majorana-qubit-code-equivalence}.

\begin{corollary}
\label{cor:number_encoded_qubits}
A qubit surface code encodes
\begin{equation}
K=\left\{\begin{array}{ll}2g&,\text{ orientable}\\g&,\text{ non-orientable}\end{array}\right\}+\left\{\begin{array}{ll}0&,\text{ checkerboardable}\\(M-2)/2&,\text{ not checkerboardable}\end{array}\right\}
\end{equation}
qubits, where conditions in brackets are properties of the rotation system $(H,\lambda,\rho,\tau)$, or equivalently the graph embedding, defining the code.
\end{corollary}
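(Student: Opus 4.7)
The plan is to exhibit the qubit surface code as the Majorana surface code restricted to the simultaneous $+1$ eigenspace of all vertex stabilizers $S_v$, and then count dependencies among face stabilizers to recover $K$ via Euler's formula, mirroring the proof of Theorem~\ref{thm:number_encoded_qubits}.

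First I would verify that at each vertex $v$, the operator $S_v$ acts on its local Majoranas (the $\text{deg}(v)$ half-edge modes plus $\bar\gamma_v$ when $v$ has odd degree) as the stabilizer of a copy of $\mathcal{S}_{\text{even}}$ on $2N_v+2$ Majoranas, and so fixes a $2^{N_v}$-dimensional qubit subspace locally. Summing, the joint vertex-stabilizer codespace has dimension $2^N$ where $N=\sum_v N_v = |E|-|V|+M/2$, matching the total qubit count in Definition~\ref{def:qubit_surface_code}. Next I would check that each sector operator $q_{[h]_\rho}=i\gamma_{[h]_\tau}\gamma_{[\rho h]_\tau}$ commutes with every $S_v$: $q_{[h]_\rho}$ is supported on two Majoranas at a single vertex $v$, and since $S_v$ is a product of an even number of Majoranas at $v$, a short sign count via Eq.~\eqref{eq:majorana-commutation} shows they commute. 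Thus the sector operators descend to Paulis on the local $N_v$ qubits; adjacent sector operators share a Majorana and anticommute while non-adjacent ones commute, so the list $\{q_{[h]_\rho},q_{[\tau\rho h]_\rho},\dots\}$ around $v$ is a CAL of length $\text{deg}(v)$ on $N_v$ qubits, hence extremal by Theorem~\ref{thm:CAL_construct}. By Lemma~\ref{lem:CAL_generating}, both this CAL and the extremal CAL selected in Definition~\ref{def:qubit_surface_code} generate the full $N_v$-qubit Pauli group, yielding a (phase-ambiguous) Pauli isomorphism at each vertex that identifies the two.

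Gluing these local isomorphisms and applying Eq.~\eqref{eq:face_stabilizer_as_sectors}, I would identify the Majorana face stabilizers with the qubit face stabilizers up to overall $\pm 1$ signs. Any dependency $\prod_{f\in F'}S_f \propto I$ in the qubit code then lifts to a relation $\prod_{f\in F'}S_f \in \langle S_v:v\in V\rangle \cdot\{\pm I\}$ in the Majorana code. By Lemma~\ref{lem:stabilizer_dependence}(a,b,d,e), the only such relations are $\prod_{f\in F}S_f = I$ (always present) and, when $G$ is checkerboardable, $\prod_{f\in F'}S_f = \pm\prod_{v\in V}S_v$ for a unique complementary pair $F'$ (which on the qubit side becomes a second independent relation, since vertex stabilizers act as $I$ there). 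This gives $\beta=1$ in the non-checkerboardable case and $\beta=2$ in the checkerboardable case, whence $K=N-|F|+\beta = M/2-\chi+\beta$ reproduces the formula of Theorem~\ref{thm:number_encoded_qubits}.

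The main obstacle is that the local CAL-induced isomorphisms are defined only up to a global phase per vertex, so I must argue that these phase ambiguities do not spuriously create or destroy dependencies among face stabilizers. The resolution is that both stabilizer groups are Hermitian and omit $-I$ — for the Majorana side this is Lemma~\ref{lem:stabilizer_dependence}(g), and on the qubit side it follows because Definition~\ref{def:qubit_surface_code} uses Hermitian Paulis — which forces every phase to lie in $\{\pm 1\}$ and makes the correspondence between signed relations on the two sides exact, so the dependency count above is unaffected by the choice of local CAL.
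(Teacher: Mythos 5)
Your proposal is correct and follows essentially the same route as the paper's own proof (Appendix~\ref{app:majorana-qubit-code-equivalence}): descend the Majorana code to the joint $+1$ eigenspace of the vertex stabilizers, identify the sector operators at each vertex with the chosen extremal CAL, and then count face-stabilizer dependencies via Lemma~\ref{lem:stabilizer_dependence} exactly as in Theorem~\ref{thm:number_encoded_qubits}. The one point to tighten is the per-vertex identification step: Lemma~\ref{lem:CAL_generating} (generation of the full Pauli group) alone does not give a well-defined element-by-element isomorphism between the two extremal CALs --- you also need that they satisfy the same multiplicative relations, which is Corollary~\ref{cor:extremal-cal-mult-identity} and is precisely what the paper's appendix verifies when it checks that the map $\mu_v$ is well defined on the null space of $\Xi_v$.
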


Fig.~\ref{fig:Qub_code_example}, the qubit versions of the Majorana codes from Fig.~\ref{fig:Maj_code_example}, should help in understanding the correspondence of Majorana and qubit surface codes. Further examples of qubit surface codes fitting our definition and appearing in prior literature are shown in Fig.~\ref{fig:stellated_codes}.

Notice that while a Majorana surface code has stabilizers associated to both vertices and faces, a qubit surface code only has stabilizers associated to faces. This is because we introduced just the right number of qubits at each vertex to automatically enforce the vertex stabilizers of the Majorana surface code. Alternatively, we introduced just enough physical qubits to span the codespaces of the Majorana codes $\mathcal{S}_{\text{even}}$ that exist at each vertex. This fact can also be illustrated with a simple qubit count -- with one qubit for every two Majoranas (e.g.~using the Jordan-Wigner transformation, Eq.~\eqref{eq:JW}) we would have $(2|E|+M)/2=|E|+M/2$ qubits (recall $M$ is the number of odd-degree vertices), but instead we have used
\begin{equation}
N=\sum_{v\in V}N_v=M/2+\sum_{v\in V}\left(\text{deg}(v)-2\right)/2=|E|-|V|+M/2
\end{equation}
qubits. Each of the $|V|$ ``missing" qubits is a degree of freedom eliminated by enforcing a stabilizer $S_v$.

For stabilizer code with stabilizer group $\mathcal{S}$, the \emph{logical operators} are the elements of the centralizer $\mathcal{C}(\mathcal{S})$, i.e.~the group of all Pauli operators that commute with all elements of $\mathcal{S}$. The centralizer includes the stabilizer group itself, and so we refer to elements of $\mathcal{S}$ as trivial logical operators as they apply the logical identity to encoded qubits. The code distance $D$ of a stabilizer code is the minimum weight of an element of $\mathcal{C}(\mathcal{S})\setminus\langle\{iI\}\cup\mathcal{S}\rangle$. We use the notation $\llbracket N,K,D\rrbracket$ to concisely present the code parameters of a stabilizer code.

\begin{figure}[t]
    \centering
    \includegraphics[width=\textwidth]{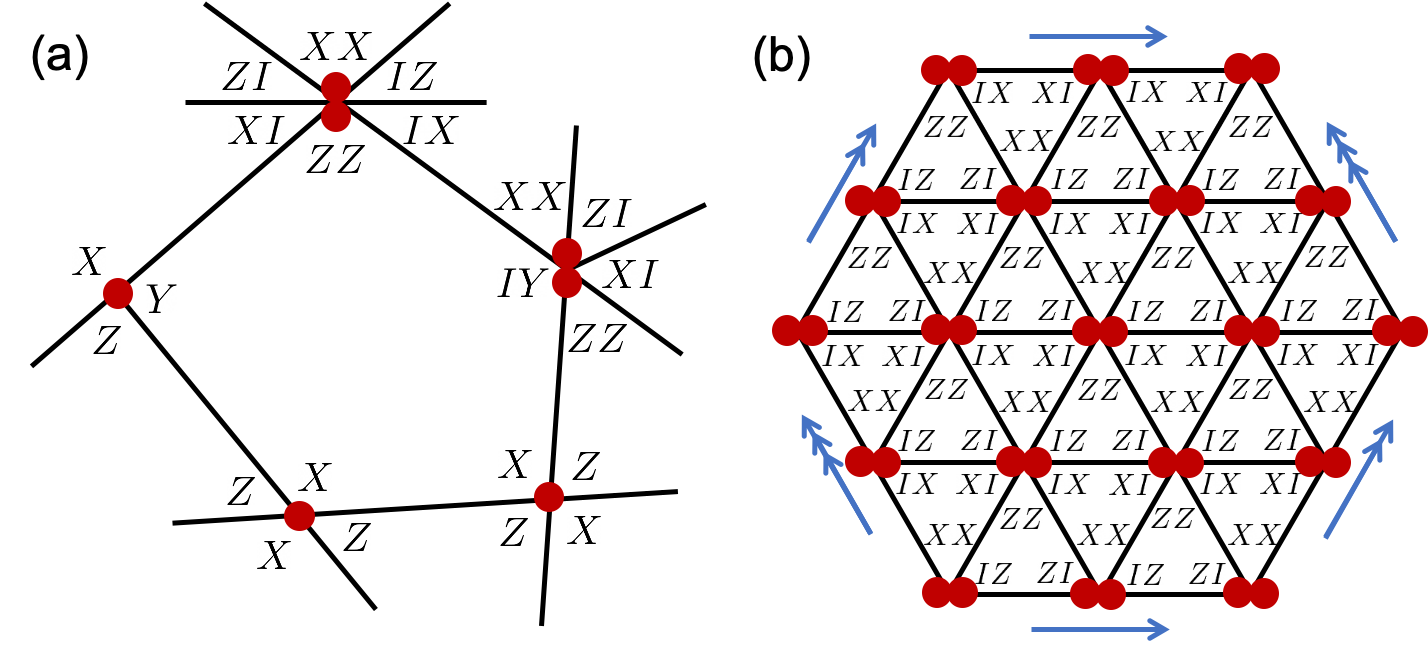}
    \caption{Qubit versions of the Majorana codes in Fig.~\ref{fig:Maj_code_example}. Place a single qubit on vertices with degrees three or four and a pair of qubits on vertices with degrees five or six. Around each vertex write a cyclically anticommuting list of Paulis (acting on qubits at that vertex). Each face represents a stabilizer defined as the product of all Paulis written within that face. Note that (b) depicts a toric code with $X^{\otimes4}$ and $Z^{\otimes4}$ stabilizers but on a lattice different from Kitaev's square lattice \cite{kitaev2003fault}. In this case, the code is $\llbracket24,2,4\rrbracket$. The number of encoded qubits is clear by Corollary~\ref{cor:number_encoded_qubits} as this graph embedding is checkerboardable. 
    }
    \label{fig:Qub_code_example}
\end{figure}

\begin{figure}[t!]
    \centering
    \includegraphics[width=\textwidth]{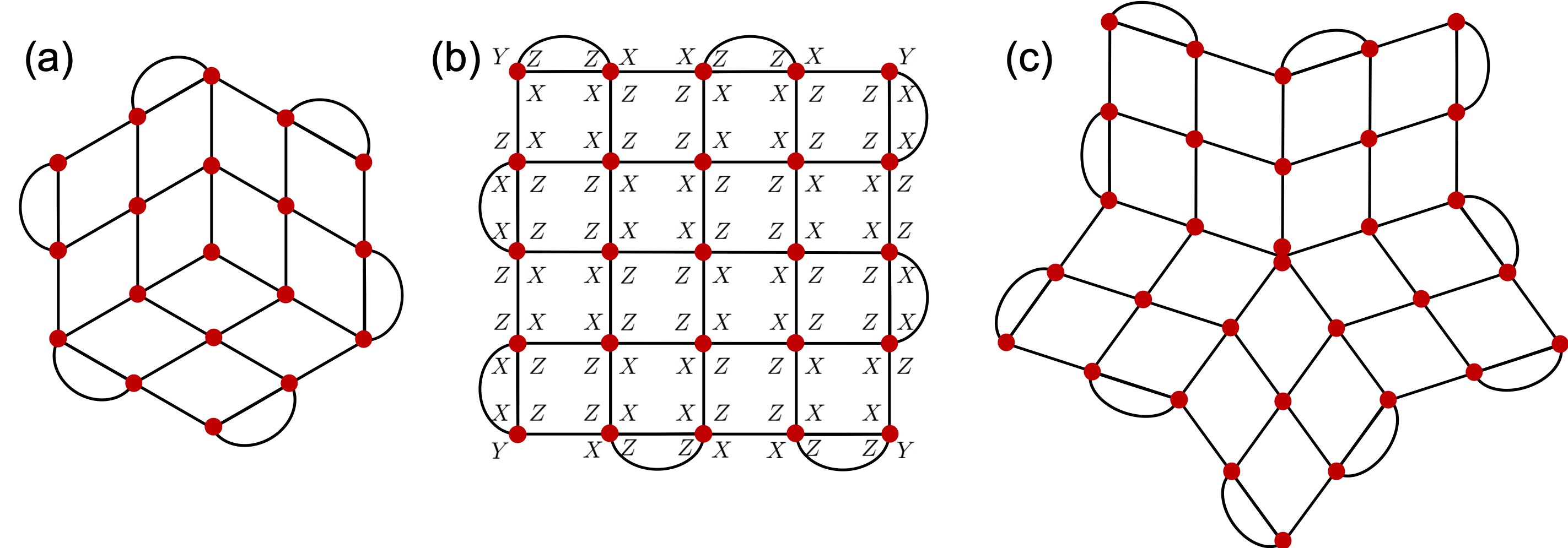}
    \caption{(a) The triangular surface code \cite{yoder2017surface}, (b) the rotated surface code \cite{bombin2007optimal} and (c) a stellated surface code \cite{kesselring2018boundaries} with even greater symmetry. Each code fits in our framework -- in this case, by Definition~\ref{def:qubit_surface_code} applied to these planar graphs. In (b), we show explicitly the assignment of CALs that gives the familiar rotated surface code. Notice that the outer face also gets assigned a stabilizer. Because of Lemma~\ref{lem:stabilizer_dependence}(e), the product of stabilizers on all non-outer faces is proportional to the stabilizer on the outer face.}
    \label{fig:stellated_codes}
\end{figure}


To further demonstrate Definition~\ref{def:qubit_surface_code}, in Fig.~\ref{fig:5-qubit_family} we show a family of topological codes that includes the well-known $\llbracket5,1,3\rrbracket$ code as its smallest member. Also included as a subset are the cyclic codes given in Example 11 and Figure 3 of \cite{kovalev2011low}. The general construction gives cyclic codes defined on the torus. Consider the typical fundamental square for the torus, i.e.~a unit square with opposite sides identified, and draw the lines $y=bx/a$ and $y=-ax/b$, where we assume $b > a\ge 1$ are integers and $\text{gcd}(a,b)=1$. These two lines intersect at $N=a^2+b^2$ points within the fundamental square, including the single point at $(0,0)=(0,1)=(1,0)=(1,1)$. Interpret these intersections as vertices of a 4-regular graph with edges the line segments between them. With correct choice of CALs, so that each face stabilizer has two $X$s and two $Z$s (see Fig.~\ref{fig:5-qubit_family} for the convention we use), and with qubits labeled in a sequence along the line $y=bx/a$ (moving outwards from the origin in the direction of increasing $x$), the surface code defined by this graph and Definition~\ref{def:qubit_surface_code} is cyclic --- an overcomplete generating set of the stabilizer group consists of $Z\otimes X\otimes I^{\otimes s}\otimes X\otimes Z\otimes I^{N-s-4}$ and its cyclic permutations, where $s=\lceil Nt/b\rceil-2$ 
and $t$ is the minimal positive 
integer such that $(ta+1)/b$ 
is an integer.\footnote{The extended Euclidean algorithm on $(a,b)$ can be used to determine $t$. A B\'{e}zout pair $(x,y)$ of $(a,b)$ is a pair of integers such that $xa+yb=\text{gcd}(a,b)$. In particular, $(x,y)=(-t,(ta+1)/b)$ is the unique B\'{e}zout pair of $(a,b)$ in which $-b\le x\le0$.} We have the following parameters for these codes.

\begin{figure}[t!]
    \centering
    \includegraphics[width=\textwidth]{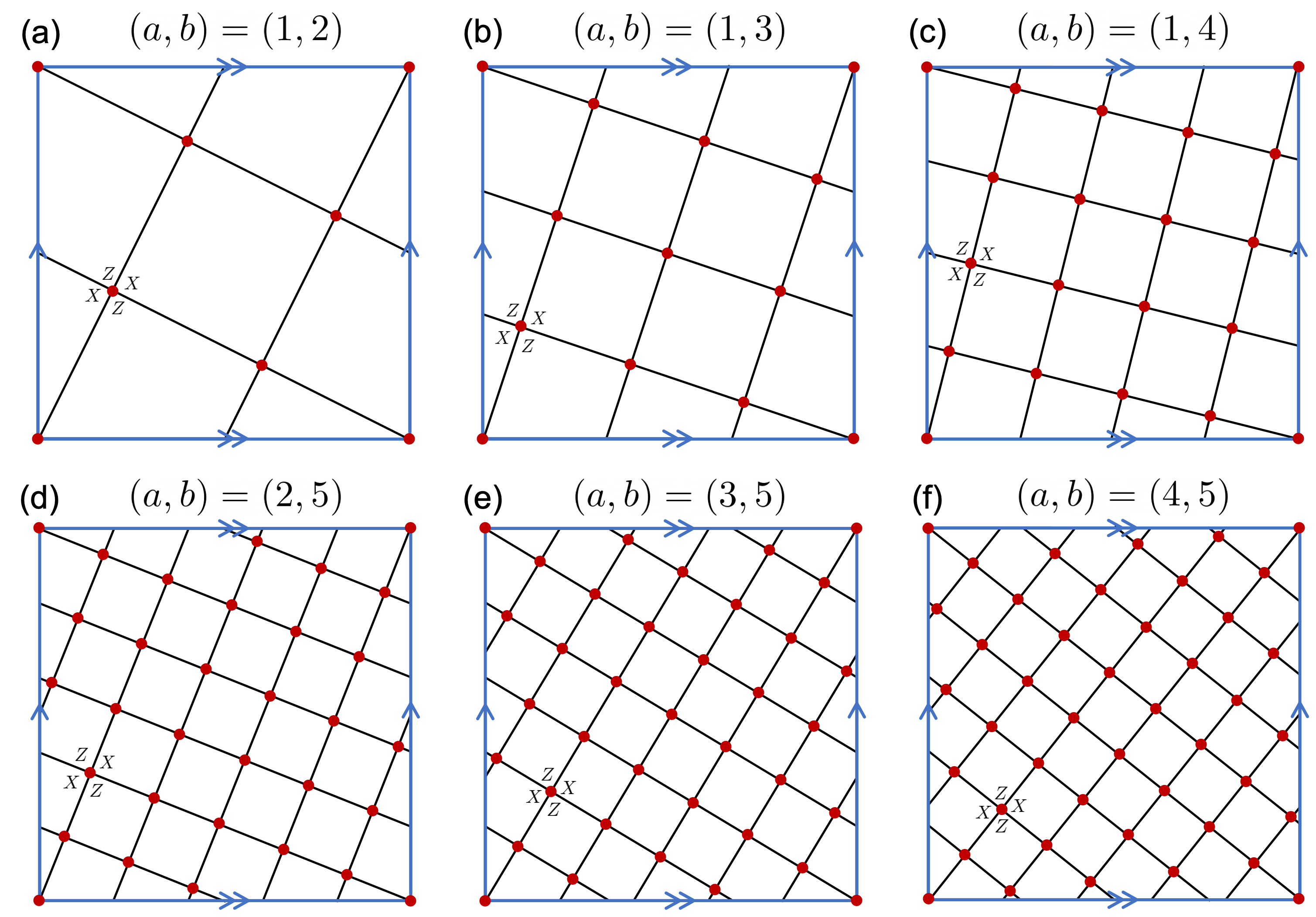}
    \caption{(a) The smallest error-correcting quantum code, the $\llbracket5,1,3\rrbracket$ code, is a surface code defined by an embedding of the 5-vertex complete graph $K_5$ in the torus. There are different embeddings of $K_5$ in the torus, but no other gives a 5-qubit code with distance more than two. Parts (b-f) show members of the cyclic toric code family with more qubits. See Theorem~\ref{thm:cyclic_toric_codes} and the paragraph above it for the code definition and parameters. A demonstrative CAL is written around one vertex in each graph. The codes are cyclic when this same CAL is orientated similarly around each vertex.}
    \label{fig:5-qubit_family}
\end{figure}

\begin{minipage}{\textwidth}
\begin{theorem}
\label{thm:cyclic_toric_codes}
The cyclic toric code with integer parameters $a$ and $b$, where without loss of generality $\text{gcd}(a,b)=1$ and $b > a\ge1$, is a $\llbracket N=a^2+b^2,K,D\rrbracket$ code, with
\begin{enumerate}[(a)]
\item If $N$ is odd, $K=1$ and $D=a+b$,
\item If $N$ is even, $K=2$ and $D=b$.
\end{enumerate}
\end{theorem}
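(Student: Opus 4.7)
The plan is to compute $N$ and $K$ combinatorially using the framework of Sections~\ref{sec:rotation-systems}--\ref{subsec:qub_surface_codes}, and then bound the distance $D$ in two stages. First, I would record the basic graph data: $G$ is $4$-regular with $|V|=N=a^2+b^2$ vertices, so $|E|=2N$, and, by Euler's formula on the torus ($\chi=0$), $|F|=N$ faces, each a parallelogram whose four sides alternate between the two geodesic families.

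Next, I would apply Corollary~\ref{cor:number_encoded_qubits}. On the orientable torus $2g=2$, and since $G$ is $4$-regular, $M=0$, so everything comes down to whether $G$ is checkerboardable. I claim it is iff $N$ is even, which, since $\gcd(a,b)=1$, is iff both $a$ and $b$ are odd. For the ``only if'' direction I would traverse the closed geodesic $y=bx/a$, which passes through all $N$ vertices; on either side of this walk, successive faces share the geodesic-2 edge emanating from the intervening vertex, giving a closed walk of length $N$ in the dual graph $\overline{G}$. Bipartiteness of $\overline{G}$ (Definition~\ref{def:checkerboardability}) then forces $N$ to be even. For the ``if'' direction I would exhibit an explicit two-coloring when $a,b$ are both odd, for instance via the parity of an integer coordinate in the skew $\{(a,b),(b,-a)\}$ basis of the vertex lattice, and verify using Lemma~\ref{lem:checkerboardability_rotation_system} that the partition is compatible with $\tau$. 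This yields $K=2$ when $N$ is even and $K=2g+(M-2)/2=1$ when $N$ is odd.

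For distance, the upper bounds come from exhibiting logical operators supported along straight lines on the torus. A line with winding $(1,0)$ crosses $G$ in $b$ geodesic-$1$ edges and $a$ geodesic-$2$ edges; reading off the Paulis at the adjacent sectors (according to the chosen CAL at each vertex) turns this curve into a nontrivial logical operator of weight $a+b$, which handles the odd case. In the even case the checkerboard structure lets one push the line so that it crosses only one edge family; the two resulting families of operators behave like independent $X$- and $Z$-type logicals of a CSS-style code, both of weight $b$.

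The matching lower bounds are the main obstacle. I would pursue two complementary approaches. The more direct route exploits cyclicity: write a general Pauli as a pair $(\alpha,\beta)\in(\mathbb{F}_2[T]/(T^N-1))^2$, encode commutation with all shifts of $Z_0 X_1 X_{s+2} Z_{s+3}$ as a single polynomial identity, factor $T^N-1$ over $\mathbb{F}_2$, and apply a BCH-type minimum-weight bound to coset representatives of the stabilizer inside the centralizer. The more geometric (and likely cleaner) route is to identify logicals with homologically nontrivial cycles in the decoding graph of Section~\ref{sec:locating_logical_operators}: in the even case this graph embeds on the same torus and a direct systole calculation returns $b$, while in the odd case the decoding graph only embeds in the doubled cover $G^2$, and the jump from $b$ to $a+b$ is exactly the jump in shortest noncontractible cycle caused by passing to that double cover. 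The hard part in either approach is handling the non-CSS nature of the stabilizers in the odd (non-checkerboardable) case, where $X$-, $Y$-, and $Z$-errors are coupled and cannot be optimized independently.
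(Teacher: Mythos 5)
Your treatment of $N$ and $K$ is sound and matches the paper's: with $L_1=(a,b)$, $L_2=(-b,a)$ one has $\|L_1\|_1=\|L_2\|_1=a+b$, so the lattice checkerboarding descends to the quotient exactly when $a$ and $b$ are both odd, i.e.\ when $N$ is even, and Corollary~\ref{cor:number_encoded_qubits} gives $K=2$ or $K=1$ accordingly (your odd-closed-walk argument in $\overline{G}$ is a fine substitute for the paper's parity argument). The distance \emph{upper} bounds via explicit straight-line logical operators, and the even-case lower bound $D=\min_{(m_1,m_2)\neq(0,0)}\|m_1L_1+m_2L_2\|_\infty=b$ via the homological description of the two checkerboard components, are also consistent with what the paper does.

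The genuine gap is the lower bound $D\ge a+b$ in the odd case, which you defer to ``a direct systole calculation'' in the doubled decoding graph. That calculation does not suffice: Theorem~\ref{thm:distance_bounds_from_doubled_graph} only gives $\frac14\hsys{G^2_{\text{dec}}}\le D\le\frac12\hsys{G^2_{\text{dec}}}$, a factor-of-two window, precisely because two distinct edges of the covering decoding graph can act on the \emph{same} physical qubit (one contributing $X$, the other $Z$). The paper closes this window by passing to decoding-graph coordinates $L_i^{\text{dec}}$ and analyzing the two nontrivial homology classes separately: for the class of $2L_1^{\text{dec}}$ the shortest cycle has length $2b$ but its Pauli weight drops to $a+b$ because exactly $b-a$ qubits are hit twice, while for the class of $L_1^{\text{dec}}+L_2^{\text{dec}}=(a,b)$ the naive bound is only $\|(a,b)\|_\infty=b$ and one must prove (the three cases $|t_x|>a$, $t_xt_y<0$, $|t_y|>b$ ranging over all $t$ in the identification lattice) that no minimum-length path can contain two edges acting on the same qubit, so the weight is the full $\|(a,b)\|_1=a+b$. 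You correctly flag the ``non-CSS coupling'' as the hard part, but you do not supply this argument, and without it you only obtain $D\ge b$ in the odd case. The alternative polynomial/BCH route you sketch is likewise not carried out (and is not how the paper establishes the distance; its cyclic-code appendix computes only $K$ by such methods).
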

\end{minipage}

The odd and even cases differ in the number of encoded qubits because of checkerboardability (the latter case being checkerboardable and the former not). Note that a consequence of the theorem is that the code distance $D$ is always odd. The codes in Example 11 and Figure 3 of \cite{kovalev2011low} are cyclic toric codes with $a=b-1$.

We delay the proof of this theorem, in particular the code distances, until Section~\ref{sec:rotated_toric_codes}, after we have developed sufficient tools in Section~\ref{sec:locating_logical_operators}. We point out that in two situations, the cyclic toric codes achieve $N=\frac12KD^2+O(D)$ -- when $a=b-1=(D-1)/2$ and when $b=D$ is odd with $a=1$. Moreover, $1/2$ is the best constant achievable for the cyclic toric codes.

To conclude this section, we point out a corollary on the structure of qubit surface codes that follows from the proof of Theorem~\ref{thm:CAL_construct}, and in particular the vertex decomposition demonstrated in Fig.~\ref{fig:spinal_expansion}. We use this corollary later to justify considering only graphs of low degree.

\begin{corollary}\label{cor:simplify_graph}
For any graph $G$, there is a choice of CALs such that the surface code associated with $G$ is the same as the code associated with another graph $G'$ that only has vertices with degrees three and four.
\end{corollary}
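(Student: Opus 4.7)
The plan is to invert the CAL composition operation $\circ$ from the proof of Theorem~\ref{thm:CAL_construct}: each iteration splits a vertex $v$ of degree $d\ge 5$ into two vertices $v_1, v_2$ of strictly smaller degrees joined by a single new edge $e$, where the CALs at $v_1$ and $v_2$ are chosen so that their composition $\mathcal{C}_{v_1}\circ\mathcal{C}_{v_2}$ equals the CAL at $v$ in $G$. Since the maximum degree strictly drops after each split, and neither of the two new vertices has degree less than three, finitely many iterations produce a graph $G'$ whose vertices all have degree three or four.

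In more detail, fix a vertex $v$ of degree $d\ge 5$ and a cyclic labeling of its incident edges $e_1,\dots,e_d$. Split $v$ by placing $v_1$ adjacent to the first three edges $e_1, e_2, e_3$ and $v_2$ adjacent to the remaining $d-3$ edges, with $v_1$ and $v_2$ connected by a new edge $e$. This gives $\deg(v_1)=4$ and $\deg(v_2)=d-2\ge 3$, and the qubit counts match, $N_v=\lceil(d-2)/2\rceil=1+\lceil(d-4)/2\rceil=N_{v_1}+N_{v_2}$, so the qubits at $v$ can be repartitioned into disjoint blocks for $v_1$ and $v_2$. We place $e$ within a small neighborhood of $v$ so that its two sides lie in two distinct existing faces of the embedding, which keeps $|F|$ (and hence the Euler characteristic of the supporting manifold) unchanged. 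Rotating the cyclic labeling if necessary, this is possible whenever $v$'s sectors are not all contained in a single face.

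To match the stabilizer groups, choose $\mathcal{C}_{v_1}$ to be any extremal CAL of length $4$ on the qubits at $v_1$ and $\mathcal{C}_{v_2}$ any extremal CAL of length $d-2$ on the qubits at $v_2$ (using Theorem~\ref{thm:CAL_construct}), and declare the CAL at $v$ in $G$ to be $\mathcal{C}_v := \mathcal{C}_{v_1}\circ\mathcal{C}_{v_2}$, which is an extremal CAL of length $d$ on $N_v$ qubits. Inspecting the formula for $\circ$, the sector Paulis of $\mathcal{C}_v$ at the $d-2$ sectors of $v$ not adjacent to $e$ agree identically with the corresponding sector Paulis of $\mathcal{C}_{v_1}$ or $\mathcal{C}_{v_2}$, which are left in place by the split. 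Each of the two sectors of $v$ that are split by $e$ lies wholly in one of the two faces on either side of $e$; its sector Pauli in $\mathcal{C}_v$ is a composed Pauli of the form $p^{(1)}_{\ell/2-1}p^{(2)}_0$ or $p^{(2)}_{\ell'-1}p^{(1)}_{\ell/2}$, and in $G'$ this exact product appears in the same face stabilizer as the product of two separate sector Paulis at $v_1$ and $v_2$ lying in that face. Face stabilizers not touching $v$ are clearly unchanged. Therefore every face stabilizer of $G$ equals the corresponding face stabilizer of $G'$, so the two codes coincide.

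The chief subtlety is guaranteeing that the new edge $e$ has its two sides in distinct faces; if instead both sides lay in a single face, $e$ would divide that face and break the equality $\chi(G)=\chi(G')$, potentially changing the number of encoded qubits. This obstruction only arises at a $v$ whose every sector belongs to the same face, i.e., a highly degenerate embedding such as a one-vertex, one-face ``bouquet'' embedding of a higher-genus surface; for generic graphs it never appears. Granting this genericity, induction on $\sum_{v:\deg(v)\ge 5}(\deg(v)-4)$, which strictly decreases at each split, completes the argument.
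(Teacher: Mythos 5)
Your approach is the same one the paper intends: the corollary is stated as a direct consequence of the composition operation $\circ$ from the proof of Theorem~\ref{thm:CAL_construct} and the vertex decomposition of Fig.~\ref{fig:spinal_expansion}, and your splitting of a degree-$d$ vertex into a degree-$4$ and a degree-$(d-2)$ vertex, with the qubit count $N_v=N_{v_1}+N_{v_2}$ and the sector-Pauli bookkeeping, is exactly that argument spelled out. The induction on $\sum_{v:\deg(v)\ge5}(\deg(v)-4)$ is fine.

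However, the ``chief subtlety'' you flag is not a real obstruction, and your resolution of it (restricting to ``generic'' embeddings) leaves the corollary unproven for the degenerate graphs you exclude, even though the statement is for \emph{any} $G$. The new edge $e$ is drawn entirely inside a small disk neighborhood of $v$; its two sides lie in the faces containing the sectors $(e_3,e_4)$ and $(e_d,e_1)$ of the original vertex. If those two faces happen to coincide, the edge does \emph{not} cut across the interior of that face and does not divide it: the face's boundary walk is merely rerouted to traverse $e$ once from each side, so $e$ becomes an edge adjacent to a single face on both sides (precisely the case $\Phi_{fe}=0$ discussed in Section~\ref{subsec:checkerboardability}). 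Counting, $|V|$ and $|E|$ each increase by one and $|F|$ is unchanged, so $\chi$ and the genus are preserved in every case --- vertex splitting is the inverse of contracting a non-loop edge, which never changes the surface. Moreover the code still matches: the two new sectors flanking $e$ on a given side of $e$ lie in the same face as the original sector they replace, and their sector Paulis multiply to the composed entry $p_{\ell/2-1}p'_0$ (resp.\ $p'_{\ell'-1}p_{\ell/2}$) of $\mathcal{C}_{v_1}\circ\mathcal{C}_{v_2}$, so each face stabilizer of $G'$ equals the corresponding face stabilizer of $G$ whether or not the two sides of $e$ see distinct faces. Dropping the genericity caveat, your argument proves the corollary for all graphs, including one-vertex bouquet embeddings.
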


\subsection{Relation to homological surface codes}
\label{subsec:relate_to_homological}
Finally, we want to point out the relation of Definition~\ref{def:qubit_surface_code} with the well-known homological construction of surface codes \cite{kitaev2003fault,bravyi1998quantum,freedman2001projective}. In the homological construction, a single qubit is placed on each edge of a graph embedding $G=(V,E,F)$. Stabilizers correspond to each vertex $v\in V$ and face $f\in F$:
\begin{equation}
S'_v=\prod_{e\in v}X_e,\quad S'_f=\prod_{e\in f}Z_e,
\end{equation}
where $e\in v$ and $e\in f$ are shorthand for saying edge $e$ is adjacent to the vertex $v$ or face $f$, and of course $X_e$, $Z_e$ are Paulis on the qubit on that edge.

The \emph{medial graph} $\widetilde G$ of an embedded graph $G=(V,E,F)$ has a vertex for each edge of $G$ and connects them with edges such that vertices and faces of $G$ form faces of $\widetilde G$. See some examples in Fig.~\ref{fig:medial_examples}. It should be clear that the homological surface code defined on graph $G$ is exactly the same code as our surface code, Definition~\ref{def:qubit_surface_code}, defined on the medial graph $\widetilde G$. Thus, Definition~\ref{def:qubit_surface_code} includes all homological codes.

We can formally define the medial graph using rotation systems.
\begin{definition}\label{def:medial_graph}
The \emph{medial rotation system} $\widetilde R=(\widetilde H,\widetilde \lambda,\widetilde \rho, \widetilde \tau)$ of a rotation system $R=(H,\lambda,\rho,\tau)$ is defined by setting $\widetilde H=H\times\{1,-1\}$ and for $(h,j)\in\widetilde H$ letting $\widetilde\tau(h,j)=(h,-j)$, $\widetilde\lambda(h,j)=(\rho(h),j)$, and
\begin{equation}
\widetilde\rho(h,j)=\bigg\{\begin{array}{ll}(\lambda(h),j),&\text{if }j=-1\\
(\tau(h),j),&\text{if }j=1
\end{array}.
\end{equation}
\end{definition}
Some facts about medial graphs become apparent (we prove these in Appendix~\ref{app:medial-graphs-and-trisection}). First, the medial graph embeds into the same manifold as the original graph. Second, medial graphs are also always 4-valent (every vertex is degree four). Third, medial graphs are always checkerboardable -- e.g.~a natural way to color faces in $\widetilde G$ is to color black faces that correspond to vertices of $G$ and color white faces corresponding to faces of $G$. In fact, an embedded graph is the medial graph of another embedded graph if and only if it is 4-valent and checkerboardable. As a result, there are codes resulting from Definition~\ref{def:qubit_surface_code} that cannot be described as homological codes.

The reader may also refer to \cite{bombin2007optimal} where checkerboardable, 4-valent graphs (i.e.~medial graphs) are used to define homological surface codes directly and to \cite{anderson2013homological} where this connection between homological codes and medial graphs is also pointed out.

\begin{figure}
    \centering
    \includegraphics[width=\textwidth]{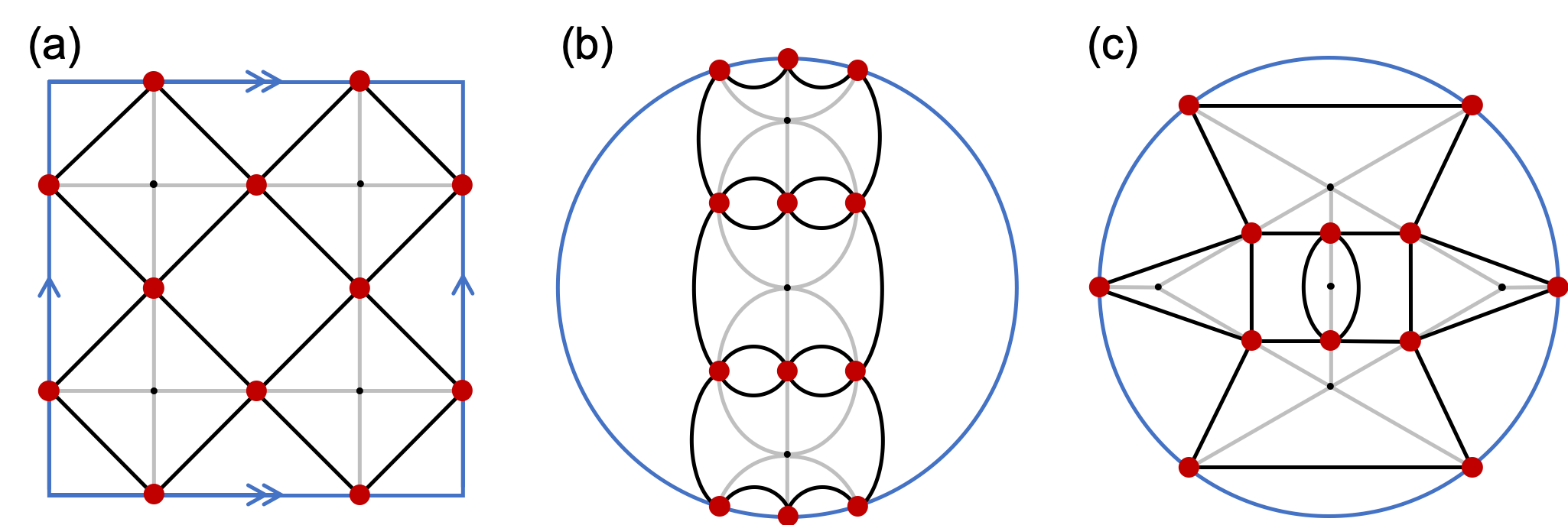}
    \caption{Graphs (gray edges with small, black vertices) on the torus (a) and projective plane, (b) and (c), are drawn with their medial graphs (black edges and large, red vertices). In (a), we get a $\llbracket8,2,2\rrbracket$ toric code. In (b), we get Shor's $\llbracket9,1,3\rrbracket$ code \cite{shor1995scheme} with the gray graph coming from \cite{freedman2001projective}. In (c), we show another $\llbracket9,1,3\rrbracket$ code from \cite{freedman2001projective}. }
    \label{fig:medial_examples}
\end{figure}

Of course, forays have already been made beyond the strict homological surface code construction. We have already seen some examples in the forms of the triangular, rotated, and stellated codes in Fig.~\ref{fig:stellated_codes}. Our goal however is a more rigorous framework that generalizes these specific instances.

One way in which the homological surface code construction can be rigorously generalized to produce more general planar codes is described by Freedman and Meyer \cite{freedman2001projective}. Let us describe this construction using the language of medial graphs. Start with a homological surface code defined on an improper embedding of a graph in orientable surface of genus $g$. By an improper embedding, we mean one face is not an open disk but instead a $g$-punctured disk, e.g.~if $g=1$ the $g$-punctured face is a cylinder. Now, form the medial graph of this improperly embedded graph\footnote{We defined the medial graph of an embedded graph. In this case, the graph is not properly embedded. When we construct the medial graph here, we do so by drawing proper, disk-like faces around each vertex.}. The medial graph will also be improperly embedded, but it is still 4-valent and checkerboardable. Because of the redundancy in stabilizers (see proof of Theorem~\ref{thm:number_encoded_qubits}), we can delete, i.e.~remove from the stabilizer group, any single black as well any single white face without changing the code. Thus, delete the face that is $g$-punctured (say it is colored white), a black face sharing an adjacent edge, and any edges shared between them. The result is a planar graph with some degree-3 vertices (because of the deletion of edges). Examples producing the rotated surface code and the Bravyi-Kitaev planar code \cite{bravyi1998quantum} are shown in Fig.~\ref{fig:improper_embeddings}. 

\begin{figure}
    \centering
    \includegraphics[width=0.9\textwidth]{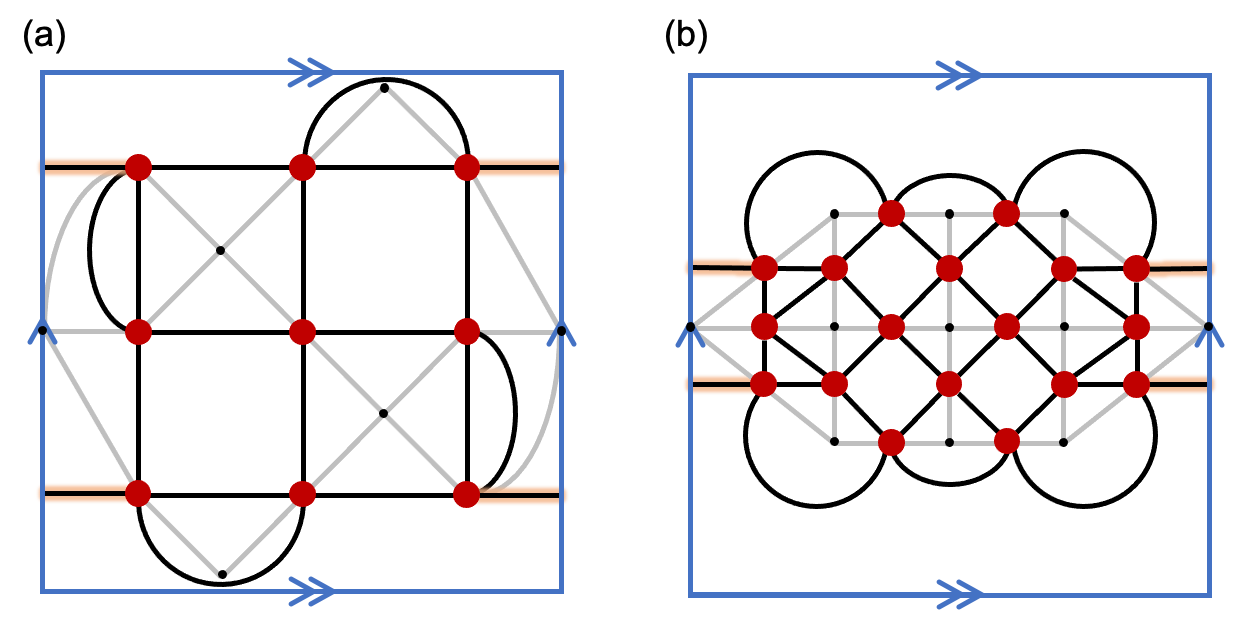}
    \caption{Improper embeddings of graphs (gray), and their medial graphs (black), on the torus. In both pictures, the face (of the medial graph embedding) intersecting the top and bottom frames is not an open disk but instead a cylinder. Deleting this face, the face intersecting the left and right frames, and their shared adjacent edges (highlighted) results in familiar planar codes. In (a) we find the $\llbracket9,1,3\rrbracket$ rotated surface code and in (b) Bravyi and Kitaev's planar code \cite{bravyi1998quantum}, in this instance $\llbracket18,1,3 \rrbracket$.}
    \label{fig:improper_embeddings}
\end{figure}

Notably, this improper embedding procedure does not produce non-checkerboardable graph embeddings in surface of larger genus $g>0$, so it does not produce the cyclic toric codes in Fig.~\ref{fig:5-qubit_family}(a,c,d,f) for instance. We have also not found a way to produce stellated codes with odd symmetry, e.g.~parts (a) and (c) of Fig.~\ref{fig:stellated_codes}, using the improper embedding technique, the apparent obstacle being the presence of an odd degree vertex in the center of the planar graph, i.e.~not adjacent to the outer face. This appears an obstacle because the outer face and the odd degree vertices are effectively created by deleting faces and their shared adjacent edges from the improper embedding, with the result being that any odd degree vertices are adjacent to the outer face. Hence, we find this construction involving improper embeddings and face deletions to be more complicated than our construction while also producing fewer codes.
\section{Locating logical operators}
\label{sec:locating_logical_operators}

In this section, we discuss the structure of logical operators of the qubit surface codes of Definition~\ref{def:qubit_surface_code}. One of our main findings is that logical operators are exactly the homologically non-trivial cycles in a related graph called the decoding graph. This is probably not a huge surprise to someone familiar with homological codes, but there are some technical hurdles to overcome in proving it in our generalized setting. This includes the fact that the decoding graph for non-checkerboardable codes is not guaranteed to embed in the same manifold $\mathcal{M}$ as the original graph. Without a graph embedding, there is no notion of homology, so finding an appropriate embedding of the decoding graph is essential to the proof.

In Section~\ref{sec:checker_w_defects} we generalize checkerboardability to include \textit{defects}, useful in the subsequent sections. In Section~\ref{sec:dec_graph} we define the decoding graph, show that cycles in the decoding graph correspond to logical operators of the code (both trivial and non-trivial), and prove that it embeds in $\mathcal{M}$ in the checkerboardable case. In Section~\ref{sec:doubled_graphs}, we define a manifold which is generally of higher genus than $\mathcal{M}$, in which we can embed the decoding graph, and prove that homological non-triviality in this new manifold is equivalent to logical non-triviality. Finally, in Section~\ref{sec:canon_paths}, we discuss how some logical operators, including all trivial ones, can be represented by paths in the original graph.

\subsection{Checkerboardability with defects}\label{sec:checker_w_defects}
In Definition~\ref{def:checkerboardability} we defined what it means for a graph embedding to be checkerboardable. In this section, we generalize this notion to include defects. Informally put, a defect is a set of edges that can be deleted from the graph embedding to make it checkerboardable. Defects have a few uses in finding or identifying logical operators in surface codes which are covered in the subsequent subsections.

To define checkerboardability with defects more rigorously, we use the matrix $\Phi\in\mathbb{F}_2^{|F|\times|E|}$ from Section~\ref{subsec:checkerboardability}, which encodes the edges belonging to each face of the embedding.
\begin{definition}\label{def:checkerboardable_w_defect}
We say a graph embedding $G=(V,E,F)$ is checkerboardable with defect $\delta\in\mathbb{F}_2^{|E|}$ if there is a vector $x\in\mathbb{F}_2^{|F|}$ such that $x\Phi=\vec 1+\delta$.
\end{definition}
Clearly, being checkerboardable according to Definition~\ref{def:checkerboardability} is equivalent to being checkerboardable with the trivial defect $\delta=\vec0$. The analogous lemma to Lemma~\ref{lem:checkerboardability_rotation_system} is as follows.
\begin{lemma}\label{lem:checkerboardable_w_defect_rs}
$G$, described by rotation system $R=(H,\lambda,\rho,\tau)$, is checkerboardable with a defect if and only if we can partition $H$ into two sets $H_w$ and $H_b$ such that $\lambda$ and $\rho$ map both sets to themselves, while $\tau$ maps elements of either set to the other set, \emph{except} when those flags are part of an edge in the defect.
\end{lemma}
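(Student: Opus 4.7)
The plan is to mimic the proof of Lemma~\ref{lem:checkerboardability_rotation_system} (the no-defect case) and to simply track how the defect edges enter the case analysis. Both directions of the equivalence hinge on the same identification between a vector $x \in \mathbb{F}_2^{|F|}$ and a two-coloring of the faces, which in turn induces a coloring of $H$ because every face is an orbit of $\langle \lambda,\rho\rangle$.

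For the forward direction, I would start with an $x$ satisfying $x\Phi = \vec 1 + \delta$ and color face $f$ white if $x_f=0$ and black otherwise. Setting $H_w$ and $H_b$ to be the flags of each color, property (i) (that $\lambda$ and $\rho$ preserve the two sets) is automatic since $\langle\lambda,\rho\rangle$ stabilizes each face. For $\tau$, I would invoke the column structure of $\Phi$ recalled in Section~\ref{subsec:checkerboardability}: for an edge $e$ adjacent to faces $f,f'$, the column of $\Phi$ indexed by $e$ has $\Phi_{fe}=\Phi_{f'e}=1$ as its only nonzero entries when $f\ne f'$, and is identically zero when $f=f'$. So $(x\Phi)_e=x_f+x_{f'}$ when $f\ne f'$ and $(x\Phi)_e=0$ when $f=f'$. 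A two-line case analysis then finishes: if $\delta_e=0$ then $(x\Phi)_e=1$, forcing $f\ne f'$ and $x_f\ne x_{f'}$, so the flags on the two sides of $e$ lie in opposite color classes; if $\delta_e=1$ then $(x\Phi)_e=0$, forcing either $f=f'$ or $x_f=x_{f'}$, so the flags on both sides lie in the same color class. This is exactly the stated behavior of $\tau$.

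For the backward direction, given a partition $H=H_w\sqcup H_b$ with the stated properties, property (i) implies each face lies entirely in one class, so I can set $x_f\in\{0,1\}$ according to that class. Running the same case analysis in reverse, for every edge $e$ with adjacent faces $f,f'$: if $e$ is not in the defect then $\tau$ swaps color classes on $e$, forcing $f\ne f'$ with $x_f+x_{f'}=1$, hence $(x\Phi)_e=1=1+\delta_e$; if $e$ is in the defect then $\tau$ preserves color classes on $e$, so either $f=f'$ or $x_f=x_{f'}$, giving $(x\Phi)_e=0=1+\delta_e$. Thus $x\Phi=\vec 1+\delta$.

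There is no real obstacle here; the only thing to watch is the degenerate case $f=f'$, i.e.\ an edge with both sides in the same face. Such an edge must automatically lie in the defect of any partition satisfying the hypothesis (since $\tau$ cannot move a flag out of its face without $\lambda$ or $\rho$ doing so first, and these preserve the color classes), and it contributes a zero column to $\Phi$, consistent with $\delta_e=1$. Beyond this bookkeeping, no new ideas beyond those in the proof of Lemma~\ref{lem:checkerboardability_rotation_system} are needed.
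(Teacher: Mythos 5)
Your proof is correct and is essentially the argument the paper has in mind: it omits the proof as ``analogous to Lemma~\ref{lem:checkerboardability_rotation_system},'' and your case analysis on the columns of $\Phi$ (including the degenerate $f=f'$ edges, which must lie in the defect) is exactly the right way to carry that analogy over to Definition~\ref{def:checkerboardable_w_defect}. No gaps.
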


A given embedded graph does not have a unique defect. However, different choices of defect are related by the addition of rows of $\Phi$. That is, by simple application of the definition, the following is true.
\begin{lemma}
Suppose a graph embedding $G=(V,E,F)$ is checkerboardable with defect $\delta_1$. Then, $G$ is checkerboardable with defect $\delta_2$ as well if and only if there exists $x$ such that $x\Phi=\delta_1+\delta_2$.
\end{lemma}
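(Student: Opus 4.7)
The plan is a direct application of Definition~\ref{def:checkerboardable_w_defect} combined with elementary linear algebra over $\mathbb{F}_2$. Since the condition ``checkerboardable with defect $\delta$'' is simply an affine condition on $x$, namely $x\Phi = \vec 1 + \delta$, the statement to be proved should fall out by comparing the two affine equations corresponding to $\delta_1$ and $\delta_2$.

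For the forward direction, I would start by assuming $G$ is checkerboardable with both defects $\delta_1$ and $\delta_2$. By definition there exist $x_1, x_2 \in \mathbb{F}_2^{|F|}$ with $x_1 \Phi = \vec 1 + \delta_1$ and $x_2 \Phi = \vec 1 + \delta_2$. Adding these equations over $\mathbb{F}_2$ kills the $\vec 1$ contributions, giving $(x_1 + x_2)\Phi = \delta_1 + \delta_2$, so $x := x_1 + x_2$ witnesses the desired identity.

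For the converse, assume we are given the hypothesis that $G$ is checkerboardable with defect $\delta_1$, so there exists $x_1$ with $x_1 \Phi = \vec 1 + \delta_1$, and additionally that some $x$ satisfies $x\Phi = \delta_1 + \delta_2$. Then $(x_1 + x)\Phi = \vec 1 + \delta_1 + \delta_1 + \delta_2 = \vec 1 + \delta_2$, exhibiting a vector certifying that $G$ is checkerboardable with defect $\delta_2$.

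There is essentially no obstacle here; the only thing to be mindful of is that everything is done over $\mathbb{F}_2$ so that $\delta_1 + \delta_1 = \vec 0$, which is exactly the cancellation that makes both directions work symmetrically. The statement is really just recording that the solution set of $x\Phi = \vec 1 + \delta$ (as $\delta$ varies over achievable defects) is a coset of the row space of $\Phi$ shifted by $\vec 1$, and two defects are both achievable precisely when their difference lies in the row space of $\Phi$.
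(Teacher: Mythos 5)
Your proof is correct and is exactly the argument the paper has in mind: the paper states this lemma with only the remark that it follows ``by simple application of the definition,'' and your two-line $\mathbb{F}_2$ cancellation in each direction is that application. Nothing further is needed.
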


Moreover, it is an interesting fact that a defect cannot be just any subset of edges. In fact, a defect must be a sum of certain \emph{trails} in the graph. Recall that $\mathcal{T}(G)\cong\mathbb{F}_2^{|E|}$ is the group of all trails in the graph. A subgroup of $\mathcal{T}(G)$ is generated by just the closed trails and trails whose endpoints are odd-degree vertices. We call this subgroup $\mathcal{T}_0(G)$. 

\begin{lemma}\label{lem:defect_decomp}
Suppose the graph embedding $G=(V,E,F)$ is checkerboardable with defect $\delta\in\mathbb{F}_2^{|E|}$. Then, $\delta\in\mathcal{T}_0(G)$. Let $a_v=\sum_{e\ni v}\delta_e$ evaluated over $\mathbb{F}_2$ indicate the parity of the number of edges that are adjacent to $v\in V$ and in the defect. Then, $a_v=1$ if and only if $\deg(v)$ is odd.
\end{lemma}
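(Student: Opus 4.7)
The plan is to handle the vertex-parity claim first and then derive the trail decomposition from it. Throughout I will use Lemma~\ref{lem:checkerboardable_w_defect_rs} to fix a partition $H=H_w\sqcup H_b$ of the flags such that $\lambda$ and $\rho$ preserve the partition while $\tau$ swaps colors on every non-defect edge and preserves them on defect edges.

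For the second claim, fix a vertex $v\in V$ and consider the $\langle\rho,\tau\rangle$-orbit of any flag at $v$. This orbit is a cycle of length $2\deg(v)$ traversed by alternating $\rho$ and $\tau$: $\deg(v)$ applications of each. Each $\rho$ step preserves color; each $\tau$ step flips color unless the underlying edge lies in the defect. For the coloring to be globally well-defined, the number of color flips around the cycle must be even. Now each non-loop edge at $v$ accounts for exactly one $\tau$-crossing of the cycle, while each loop at $v$ accounts for exactly two (since both its half-edges are at $v$). Thus, modulo $2$, the number of color flips equals the number of non-loop edges at $v$ not in the defect. Setting this to zero and using $\deg(v)\equiv(\text{number of non-loop edges at }v)\pmod 2$ (loops contribute two to the degree), I conclude that the parity of the number of non-loop defect edges at $v$ equals the parity of $\deg(v)$. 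Reading $a_v=\sum_{e\ni v}\delta_e$ with the convention that loops count twice (the same convention used for $\deg(v)$), loops in the defect contribute $0\pmod 2$ to $a_v$, and one obtains $a_v\equiv\deg(v)\pmod 2$, i.e.~$a_v=1$ iff $\deg(v)$ is odd.

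For the first claim, view $\delta$ as the edge set of a spanning subgraph $G_\delta\subseteq G$. By the vertex-parity formula just established, the odd-degree vertices of $G_\delta$ are exactly the odd-degree vertices of $G$. I then apply Euler's theorem component by component: each connected component of $G_\delta$ either (i) has all vertices of even degree, in which case it admits an Eulerian closed trail covering its edges, or (ii) has exactly $2k\ge 2$ odd-degree vertices, in which case it decomposes into $k$ open trails whose endpoints are these $2k$ vertices. Summing these trails in $\mathbb{F}_2^{|E|}$ recovers $\delta$; each summand is either a closed trail or a trail with both endpoints odd-degree in $G$, so $\delta\in\mathcal{T}_0(G)$ by definition.

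The main technical care point is the handling of loops, particularly the degenerate case where both sides of a loop $e$ lie in the same face. For such an $e$ the $e$-column of $\Phi$ vanishes, forcing $\delta_e=1$ in any defect, and without the loops-counted-twice convention the clean equality $a_v\equiv\deg(v)\pmod 2$ would fail. Once that bookkeeping is settled, the rest is routine: parity around a vertex is controlled entirely by the $\tau$-crossings (since $\rho$ is inert for color), and the classical Euler-trail decomposition converts the vertex-parity data into a decomposition of $\delta$ into the generators of $\mathcal{T}_0(G)$.
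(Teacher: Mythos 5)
Your proof is correct and follows essentially the same route as the paper: a local colorability (parity-of-color-flips) argument around each vertex to establish $a_v\equiv\deg(v)\pmod 2$, followed by the standard Euler trail decomposition to place $\delta$ in $\mathcal{T}_0(G)$. Your version is somewhat more explicit than the paper's -- it phrases the local argument via the $\langle\rho,\tau\rangle$-orbit of flags rather than informally via faces, spells out the Euler decomposition that the paper only invokes implicitly, and handles loops (which the paper's proof silently ignores) with the correct multiplicity convention.
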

\begin{proof}
The way that $\delta$ can fail to be in $\mathcal{T}_0(G)$ is if $a_v=1$ for some even-degree vertex $v$. However, if this were the case, it would imply that $G$ is not checkerboardable with defect $\delta$ because it fails to be so locally around vertex $v$. We would need to two-color faces around $v$ such that faces on opposite sides of an edge $e\ni v$ are different colors if $\delta_e=0$ and the same color if $\delta_e=1$. This is impossible for an even-degree vertex with $a_v=1$. This shows $\delta\in\mathcal{T}_0(G)$. The remaining claim, that $a_v=1$ for odd-degree vertices $v$ follows from a similar argument of colorability in the local neighborhood of $v$.
\end{proof}

Finally, we point out that for any graph embedding $G$ it is efficient (polynomial time in graph size) to perform two related tasks: (1) determine some $\delta\in\mathbb{F}_2^{|E|}$ such that $G$ is checkerboardable with defect $\delta$ and (2) given a candidate $\delta\in\mathbb{F}_2^{|E|}$ determine whether $G$ is checkerboardable with defect $\delta$. We present an algorithm in Algorithm~\ref{alg:checkerboard} that can perform both tasks via a greedy strategy. The inputs are, first, a representation of the faces in the graph embedding (to be concrete, we use the binary vectors $\phi_i\in\mathbb{F}_2^{|E|}$, $i=1,2\dots,|F|$, that are the rows of $\Phi$) and, second, a candidate defect $\delta$. The outputs are a list of faces colored black, a list of those colored white, and a defect. The algorithm starts by arbitrarily coloring one face and continues by coloring faces that neighbor colored faces appropriately (the same color if they neighbor across a defect edge and differently otherwise). If ever a face cannot be colored either color without contradiction, edges are added or removed from the defect to make it work.

It is clear this algorithm solves task (1). To solve (2), we note the algorithm guarantees that the output defect $\gamma$ is the same as the input $\delta$ if and only if $\delta$ is a defect. This solves (2). In the course of the algorithm, each edge is considered at most once in the while loop, the body of which takes $O(\max_i|\phi_i|)$ time using appropriate data structures,\footnote{In particular, one needs adjacency oracles that take a part of the graph embedding (e.g.~a face) and return other adjacent parts (e.g.~the adjacent edges). These oracles or similar are standard in the field of computational graph theory -- for instance, the ``doubly-connected edge list" data structure used in \cite{berg1997computational}.} leading to a total time complexity of just $O(|E|\max_i|\phi_i|)$ for both tasks (1) and (2).

\begin{algorithm}
\caption{Returns checkerboard colors for each face and a list of edges in the defect}\label{alg:checkerboard}
\begin{algorithmic}[1]
    \Procedure{Checkerboard}{$\{\phi_1,\phi_2,\dots,\phi_{|F|}\}\in\left(\mathbb{F}_2^{|E|}\right)^{|F|}$, $\delta\in\mathbb{F}_2^{|E|}$}
        \State Set $B\leftarrow\{1\}$, $W\leftarrow\emptyset$ \quad \# Sets of faces colored black and white
        \State Set $\gamma\leftarrow \delta$ \quad \# This is a record of the defect edges (vector in $\mathbb{F}_2^{|E|}$)
        \State Set $b\leftarrow\phi_1$ \quad \# This is a list of ``boundary" edges of the faces already colored (vector in $\mathbb{F}_2^{|E|}$)
        \State For any edge $e$ that is internal to a face (same face on both sides), set $\gamma_e\leftarrow1$.
        \State While $b$ is not $\vec 0$:
        \State \quad Let $e$ be the first edge such that $b_e=1$
        \State \quad If $e$ is adjacent to a face $f\not\in B\cup W$ and a face $g\in B\cup W$:
        \State \quad \quad If $g\in W$ and $\gamma_e=0$, append $f$ to $B$.
        \State \quad \quad If $g\in B$ and $\gamma_e=1$, append $f$ to $B$.
        \State \quad \quad If $f\not\in B$, append $f$ to $W$.
        \State \quad \quad Set $b_{e'}=1$ for any $e'$ that borders both $f$ and a face $f'\not\in B\cup W$.
        \State \quad else:
        \State \quad \quad If the two faces adjacent to $e$ are both in $B$ or both in $W$, set $\gamma_e\leftarrow 1$. Else, set $\gamma_e\leftarrow 0$.
        \State \quad Set $b_e=0$
    \State Return $B,W,\gamma$
    \EndProcedure
\end{algorithmic}
\end{algorithm}



\subsection{Characterizing logical operators as cycles in the decoding graph}\label{sec:dec_graph}
Recall that $\mathcal{C}(\mathcal{S})$ is the group of logical operators of the stabilizer code with stabilizer $\mathcal{S}$. We characterize $\mathcal{C}(\mathcal{S})$ for our surface codes as the cycles in a related graph, referred to here as the decoding graph.
\begin{definition}\label{def:decoding_graph}
Given an embedded graph $G=(V,E,F)$, an associated decoding graph $G_{\text{dec}}$ is a (unembedded) graph with vertices $V_{\text{dec}}$ of three types: (1) a vertex $u_f$ associated to each face of $G$, (2) a vertex $w^{(0)}_v$ associated to each odd-degree vertex of $G$, and (3) two vertices $w^{(1)}_v$ and $w^{(-1)}_v$ associated to each even-degree vertex of $G$. For each sector $[h]_{\rho}$ of $G$, there is an edge $e_{[h]_{\rho}}$ in $G_{\text{dec}}$. Choose $v\in V$ and $f\in F$ such that $h\in v$ and $h\in f$. If $v$ is odd-degree, $e_{[h]_{\rho}}$ connects vertices $w^{(0)}_v$ and $u_f$. If $v$ is even-degree, $e_{[h]_{\rho}}$ connects vertices $u_f$ to $w^{(j)}_v$ where $j$ is such that the edges associated to the adjacent sectors, i.e.~$e_{[\tau h]_{\rho}}$ and $e_{[\tau\rho h]_{\rho}}$, are incident to $w^{(-j)}_v$. All decoding graphs of $G$ are equivalent up to relabeling vertices $w^{(1)}_v$ and $w^{(-1)}_v$, so we just speak of \emph{the} decoding graph of $G$.
\end{definition}

In Fig.~\ref{fig:decoding_graph_locally}, we draw the decoding graph locally around vertices of small degree. Our definition is slightly different from the decoding graphs commonly used decode surface codes (e.g.~\cite{dennis2002topological}) in that we introduce vertices at the vertices of $G$ as well as the faces. One reason we do this is so we can make precise statements about embedding the decoding graph or its connected components into various manifolds. A consequence of this decision however is that the decoding graph's systole is doubled compared to a definition not including vertices at the vertices of $G$. The reader knowledgeable in topological quantum error correction should not be concerned by the appearance of a factor of $1/2$ when we calculate the code distance in terms of this systole.

Consider now the problem of relating cycles in $G_{\text{dec}}$ to elements of $\mathcal{C}(\mathcal{S})$. Notice that the decoding graph is bipartite, with vertices associated to faces of $G$ (type 1 in the Definition) on one side and vertices associated to vertices of $G$ (types 2 and 3) on the other. Therefore, any cycle in $G_{\text{dec}}$ is even-length and can be broken into paths of consecutive edges $(e_{[h]_{\rho}},e_{[k]_{\rho}})$, where $h,k$ are flags in the same vertex $v$ but not in the same sector. We let $\mathcal{T}_2(G_{\text{dec}})\le\mathcal{T}(G_{\text{dec}})$ denote the group generated by two-edge paths of $G_{\text{dec}}$. 

Define a map $\sigma:\mathcal{T}_2(G_{\text{dec}})\rightarrow\mathcal{C}(\mathcal{S})$ that translates each two-edge path $(e_{[h]_{\rho}},e_{[k]_{\rho}})$ to a Pauli acting on qubits at the vertex $v$. This Pauli is chosen to anti-commute with only two elements, $q_{[h]_{\rho}}$ and $q_{[k]_{\rho}}$, of the CAL associated with $v$ that defines the surface code on $G$ (see Definition~\ref{def:qubit_surface_code}). This Pauli exists and is unique (up to phase) by Corollary~\ref{cor:CAL_commutation}.

Now we state some consequences of Corollary~\ref{cor:CAL_commutation} for $\sigma$. First, if we ignore phases on the Paulis, $\sigma$ is actually surjective -- every Pauli up to phase is represented by some collection of two-edge paths in $G_{\text{dec}}$ -- and, second, $\sigma$ is a homomorphism -- for all $t_1,t_2\in \mathcal{T}_2(G_{\text{dec}})$, we have $\sigma(t_1+t_2)=\sigma(t_1)\sigma(t_2)$. Finally, $\sigma(t)\propto I$ if and only if $t$ is the empty set.

\begin{figure}
    \centering
    \includegraphics[width=0.85\textwidth]{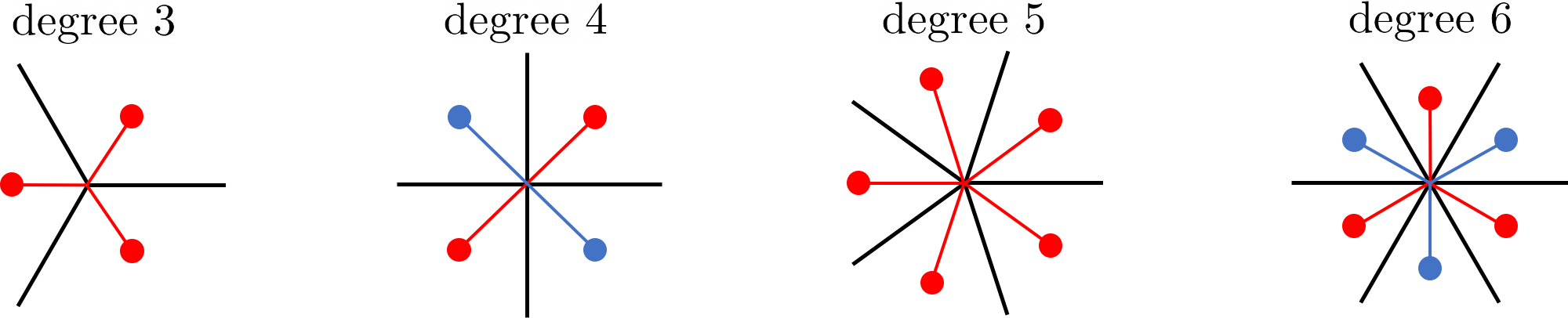}
    \caption{The decoding graph (red/blue) pictured locally around vertices of degrees 3 through 6. In odd-degree cases, all vertices in the surrounding faces connect to the central vertex. In the even-degree cases, there are two distinct central vertices, red and blue (not shown), and faces connect to these in alternating fashion.}
    \label{fig:decoding_graph_locally}
\end{figure}

By surjectivity of $\sigma$, if we have a logical operator $l\in\mathcal{C}(\mathcal{S})$, then it can be represented by a set of edges $E_l\subseteq E_{\text{dec}}$ in the decoding graph. Because the logical operator must commute with all stabilizers $S_f$, the number of edges from $E_l$ incident to any vertex $u_f$ of the decoding graph (i.e.~a face in the original graph) must be even. Therefore, $E_l$ is a sum of cycles in the decoding graph. The converse is also clearly true -- any cycle $c$ in the decoding graph maps to a logical operator $\sigma(c)\in\mathcal{C}(\mathcal{S})$. Thus, we have a characterization of the logical operators of our codes.
\begin{theorem}
Let $G=(V,E,F)$ be an embedded graph. Cycles in the decoding graph of $G$ (Defintion~\ref{def:decoding_graph}) represent all logical operators of the qubit surface code corresponding to $G$ (Definition~\ref{def:qubit_surface_code}) via the map $\sigma$.
\end{theorem}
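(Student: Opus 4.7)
The proof splits into two directions: every cycle in $G_{\text{dec}}$ yields, via $\sigma$, an element of $\mathcal{C}(\mathcal{S})$; and conversely every logical operator in $\mathcal{C}(\mathcal{S})$ arises from a cycle. Both directions hinge on Corollary~\ref{cor:CAL_commutation}, which characterizes Paulis by their anticommutation pattern with a CAL.

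For the first direction, I would start by using that $G_{\text{dec}}$ is bipartite, with $\{u_f : f \in F\}$ on one side and $\{w_v^{(\cdot)} : v \in V\}$ on the other. Any cycle $c$ therefore has even length $2n$ and alternates sides; pairing up consecutive edges so that each pair meets at a $w_v$-type vertex gives a canonical expression $c = t_1 + \cdots + t_n$ in $\mathcal{T}_2(G_{\text{dec}})$. Because $\sigma$ is a homomorphism on $\mathcal{T}_2$, the value $\sigma(c) = \prod_i \sigma(t_i)$ is well-defined. To confirm $\sigma(c) \in \mathcal{C}(\mathcal{S})$, it suffices to check commutation with each face stabilizer $S_f = \pm\prod_{[h]_\rho \subseteq f} q_{[h]_\rho}$. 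By the definition of $\sigma$, the operator $\sigma(c)$ anticommutes with $q_{[h]_\rho}$ precisely when $e_{[h]_\rho}$ is an edge of one of the $t_i$, equivalently when $e_{[h]_\rho}$ is an edge of $c$; so the total count of sector operators in $S_f$ anticommuting with $\sigma(c)$ equals the degree of $u_f$ in $c$, which is even.

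For surjectivity, given $l \in \mathcal{C}(\mathcal{S})$ I would define $E_l \subseteq E_{\text{dec}}$ as the set of edges $e_{[h]_\rho}$ such that $l$ anticommutes with $q_{[h]_\rho}$, and argue $E_l \in \mathcal{Z}(G_{\text{dec}})$ by checking even degree at every vertex. At each $u_f$, the degree of $u_f$ in $E_l$ is the number of sector operators of $S_f$ that anticommute with $l$, which is even since $l$ commutes with $S_f$. At a $w_v$-type vertex, the restriction $l_v$ of $l$ to the qubits at $v$ has anticommutation vector $C_{l_v}$ with the CAL at $v$ satisfying $C_{l_v} M = \vec 0$. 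For odd $\deg(v)$, the single all-ones column of $M$ forces the degree of $w_v^{(0)}$ in $E_l$ to be even; for even $\deg(v)$, the two columns of $M$ jointly force the degrees of $w_v^{(1)}$ and $w_v^{(-1)}$ (which count anticommutations on alternating CAL indices, by Definition~\ref{def:decoding_graph}) to be individually even. Hence $E_l$ is a cycle $c$ (or a sum thereof), and by construction $\sigma(c)$ has the same anticommutation pattern with every sector operator as $l$ does; Corollary~\ref{cor:CAL_commutation} applied vertex-by-vertex then yields $\sigma(c) = l$ up to global phase.

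The main obstacle I expect is making the parity argument at even-degree vertices fully rigorous: one has to verify that Definition~\ref{def:decoding_graph}'s rule for assigning sector edges to $w_v^{(1)}$ versus $w_v^{(-1)}$ matches exactly the parity structure encoded in the second column of $M$ from Corollary~\ref{cor:CAL_commutation}. Once this correspondence is pinned down, everything else is bookkeeping from the two corollaries already in place.
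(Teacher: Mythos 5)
Your proposal is correct and follows essentially the same route as the paper: both directions rest on the bipartiteness of $G_{\text{dec}}$, the homomorphism $\sigma$ built from Corollary~\ref{cor:CAL_commutation}, and the observation that commuting with each $S_f$ forces even degree at every $u_f$. The only difference is presentational --- the paper invokes surjectivity of $\sigma$ on $\mathcal{T}_2(G_{\text{dec}})$ to get the even-degree condition at the $w_v$-type vertices for free, whereas you verify it explicitly from $C_{l_v}M=\vec 0$, which correctly pins down the alternating-parity bookkeeping at even-degree vertices that the paper leaves implicit.
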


\begin{figure}
    \centering
    \includegraphics[width=0.9\textwidth]{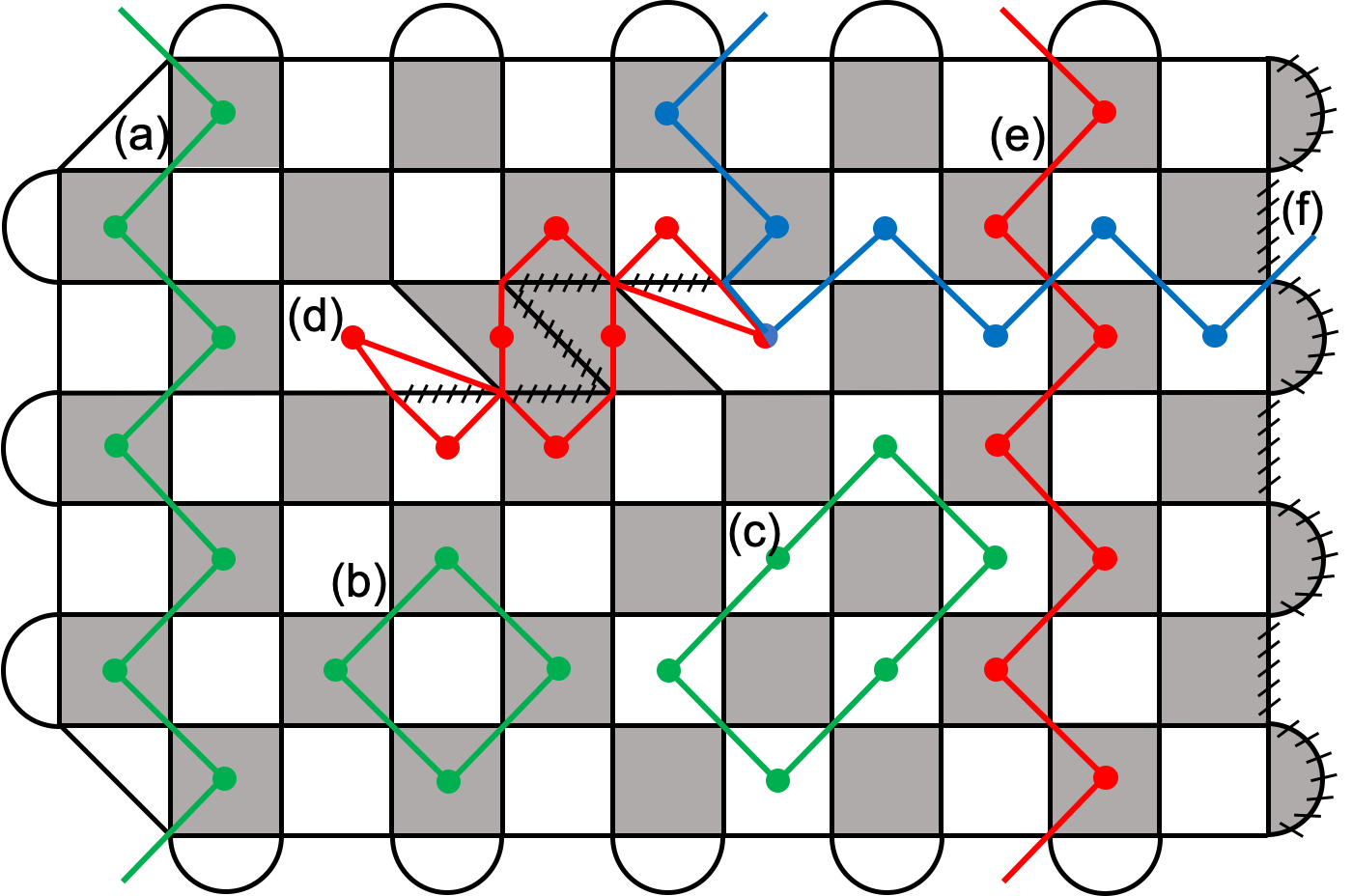}
    \caption{Example cycles in a decoding graph. The original graph has black edges, and stitched edges are part of a (particular choice of) defect. Faces are checkerboardable with this defect (the outer face should be colored black but is not shown). The entire decoding graph is not shown, but cycles labeled (a), (b), (c), (d), (e), and (f) illustrate some of it. Cycles (a), (b), (c) in green represent trivial logical operators (i.e.~stabilizers), while cycles (d) and (e) in red and cycle (f) in blue represent non-trivial logical operators. We know they are non-trivial because (d) and (e) anticommute with (f). Cycles (a), (e), and (f) connect to the vertex in the outer face (not shown) via their dangling edges.}
    \label{fig:cycles_in_dec_graph}
\end{figure}

See Fig.~\ref{fig:cycles_in_dec_graph} for some examples of cycles in a decoding graph and their correspondence with logical operators.

In checkerboardable codes, we can say even more about locating nontrivial logical operators by relating them to the homology of the surface. For this to work, we must embed the decoding graph of the checkerboardable code into the same surface.
\begin{lemma}\label{lem:checkerboardable_decoding_graph_embedding}
Suppose $G$ is a checkerboardable graph embedded in a 2-manifold $\mathcal{M}$. Then the decoding graph $G_{\text{dec}}$ consists of exactly two connected components, each of which can be embedded in the manifold $\mathcal{M}$.
\end{lemma}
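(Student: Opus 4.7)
The plan is to leverage the checkerboard coloring to split $V_{\text{dec}}$ into two natural pieces, show these are exactly the connected components, and then use the combinatorial picture around each vertex/face of $G$ to produce explicit embeddings into $\mathcal{M}$.

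First I would set the stage. Since $G$ is checkerboardable, the discussion after Lemma~\ref{lem:checkerboardability} shows every vertex of $G$ has even degree, so $G_{\text{dec}}$ contains no type~(2) vertices; only $u_f$'s and the pairs $w^{(\pm1)}_v$ appear. Fix a two-coloring of $F$ into $F_w\sqcup F_b$ given by Lemma~\ref{lem:checkerboardability_rotation_system}, and color a sector by the color of the unique face containing it. Because adjacent sectors around a vertex $v$ are separated by a single edge, they lie in oppositely colored faces, so sector colors alternate around $v$. Definition~\ref{def:decoding_graph} makes adjacent sectors at $v$ attach to different $w^{(\pm1)}_v$, so after relabeling I may assume that $w^{(1)}_v$ is incident only to edges $e_{[h]_\rho}$ coming from white sectors and $w^{(-1)}_v$ only to black ones. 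Set
\[
G_w \;=\; \{u_f:f\in F_w\}\cup\{w^{(1)}_v:v\in V\}, \qquad G_b \;=\; \{u_f:f\in F_b\}\cup\{w^{(-1)}_v:v\in V\},
\]
with the induced edges. By construction every edge of $G_{\text{dec}}$ lies entirely in $G_w$ or entirely in $G_b$, so these are unions of connected components.

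Next I would verify that each of $G_w$ and $G_b$ is connected, which is the most delicate combinatorial step. Any two white faces $f,f'$ that share a vertex $v$ are joined in $G_w$ by the length-two path $u_f - w^{(1)}_v - u_{f'}$. For general white faces, checkerboardability says the dual graph $\overline{G}$ is bipartite with parts $F_w,F_b$, and connectedness of $\overline{G}$ gives a path $f=f_0,f_1,\dots,f_k=f'$ alternating between white and black. For each intermediate black face $f_{2i+1}$, consecutive white faces $f_{2i}, f_{2i+2}$ in the boundary walk of $f_{2i+1}$ share a common vertex of $G$ (namely, the vertex separating the two consecutive edges of $\partial f_{2i+1}$ along which they meet $f_{2i+1}$), so they are connected in $G_w$ by the previous observation; chaining these gives a path from $u_f$ to $u_{f'}$. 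Finally, every $w^{(1)}_v$ has degree at least one (since $v$ borders at least one white face because sector colors alternate around an even-degree vertex), so every $w^{(1)}_v$ is connected to some $u_f$. Thus $G_w$ is connected, and the same argument gives $G_b$ connected. This yields the claim of exactly two components.

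For the embedding into $\mathcal{M}$, I would use the geometric picture directly. Place $w^{(1)}_v$ at the point of $\mathcal{M}$ corresponding to $v$ for every $v\in V$, place $u_f$ at an interior point of the open disk $f$ for each $f\in F_w$, and for each white sector $[h]_\rho\subseteq v\cap f$ draw $e_{[h]_\rho}$ as a simple arc that starts at $v$, enters the wedge of $f$ at $v$ corresponding to $[h]_\rho$, stays inside $f\cup\{v\}$, and ends at $u_f$. The white sectors at $v$ are pairwise disjoint wedges meeting only at $v$, so these arcs do not cross near any vertex of $G$; inside a fixed white face $f$, the arcs from distinct boundary vertices can be routed to the common point $u_f$ without crossings because $f$ is an open disk (by the $2$-cell embedding property). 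This produces an embedding of $G_w$ in $\mathcal{M}$, and an entirely analogous construction with black sectors embeds $G_b$, completing the proof. The main obstacle I anticipate is the connectedness step: one must be careful that the path in $\overline{G}$ really yields consecutive same-colored faces sharing a vertex of $G$, which is where the $2$-cell structure of the embedding (so that boundary walks of faces make sense) is essential.
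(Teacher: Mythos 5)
Your proof is essentially correct, but it takes a different route from the paper's. The paper does not draw anything in $\mathcal{M}$: it writes down explicit rotation systems $R_{\text{dec},w}=(H_w\times\{\pm1\},\lambda_{\text{dec}},\rho_{\text{dec}},\tau_{\text{dec}})$ and $R_{\text{dec},b}$ for the two components, and then deduces that they embed in $\mathcal{M}$ by computing $|V_{\text{dec}}|-|E_{\text{dec}}|+|F_{\text{dec}}|=\chi(\mathcal{M})$ and checking that orientability is preserved. You instead identify the components via the sector-coloring (which matches the paper's splitting by $H_w\sqcup H_b$) and then produce a direct topological embedding by placing $w^{(1)}_v$ at $v$, $u_f$ inside each white face, and routing arcs through disjoint sector wedges inside the open disks $f$. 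Your construction is more elementary and immediately yields an embedding into $\mathcal{M}$ itself, whereas the paper's Euler-characteristic argument only pins down the homeomorphism type of the target surface; on the other hand, the paper's rotation-system description is exactly the combinatorial data it needs later (Theorem~\ref{thm:checkerboardable_nontriviality} and Fig.~\ref{fig:face_correspondence} use the explicit correspondence between faces of $G_{\text{dec},w}$ and faces of $G$), so your geometric embedding would still have to be supplemented by that face identification for the downstream results. One small imprecision in your connectivity step: $f_{2i}$ and $f_{2i+2}$ are both adjacent to the black face $f_{2i+1}$, but the two boundary edges along which they meet it need not be consecutive in the boundary walk of $f_{2i+1}$, so they need not share a vertex directly; the fix is to walk along $\partial f_{2i+1}$ between those two edges, collecting the white faces across each intermediate boundary edge (all white by checkerboardability), consecutive ones of which do share a vertex. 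With that repair the argument goes through.
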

\begin{proof}
Suppose $G$ is described by the rotation system $(H,\lambda,\rho,\tau)$ and, because it is checkerboardable, there is a partition $H=H_w\sqcup H_b$ as in Lemma~\ref{lem:checkerboardability_rotation_system}. Of course, each vertex of $G$ is necessarily even-degree. It should be clear from Fig.~\ref{fig:decoding_graph_locally} that the decoding graph $G_{\text{dec}}$ has two connected components, one corresponding to faces that are subsets of $H_w$ and another corresponding to faces that are subsets of $H_b$.

Let $G_{\text{dec},w}$ and $G_{\text{dec},b}$ be the connected components of $G_{\text{dec}}$. These are described by rotation systems $R_{\text{dec},w}=(H_w\times\{\pm1\},\lambda_{\text{dec}},\rho_{\text{dec}},\tau_{\text{dec}})$ and $R_{\text{dec},b}=(H_b\times\{\pm1\},\lambda_{\text{dec}},\rho_{\text{dec}},\tau_{\text{dec}})$, where
\begin{align}\label{eq:dec_lambda}
\lambda_{\text{dec}}(h,j) &= (h,-j),\\\label{eq:dec_rho}
\rho_{\text{dec}}(h,j) &= \bigg\{\begin{array}{ll}(\tau\rho\tau(h),j),&j=-1\\(\lambda(h),j),&j=1\end{array},\\\label{eq:dec_tau}
\tau_{\text{dec}}(h,j)&=(\rho(h),j).
\end{align}
This provides an explicit embedding of the components of $G_{\text{dec}}$. 

We do still need to show that they are indeed embedded in the manifold $\mathcal{M}$ and not some other one. Suppose $G$ has sets of vertices, edges, faces $V$, $E$, $F=F_w\sqcup F_b$ (with this partition due to checkerboardability) and $G_{\text{dec},w}$ has sets $V_{\text{dec}}$, $E_{\text{dec}}$, and $F_{\text{dec}}$. One can check from the rotation system, Eqs.~(\ref{eq:dec_lambda}-\ref{eq:dec_tau}), that $|V_{\text{dec}}|=|F_w|+|V|$, $|E_{\text{dec}}|=\frac12\sum_{v\in V}\text{deg}(v)=|E|$, and $|F_{\text{dec}}|=|F_b|$. This implies the Euler characteristics of $R$ and $R_{\text{dec},w}$ are the same.

We also need to show that $R_{\text{dec},w}$ is orientable if and only if $R$ is. Note $R$ being orientable means one can partition $H$ into two sets $H_{\pm1}$ such that $\lambda$, $\rho$, $\tau$ applied to an element of one set maps it to an element of the other set. Define $H'_{\pm1}\subseteq H':=H_w\times\{\pm1\}$ such that $(h,j)\in H_{\text{dec},k}$ if and only if $h\in H_{jk}$. These sets clearly partition $H'$ and $\lambda_{\text{dec}}$, $\rho_{\text{dec}}$, and $\tau_{\text{dec}}$ all map either set to the other. So $R_{\text{dec}}$ is orientable. The other direction -- if $R_{\text{dec}}$ is orientable, then $R$ is -- follows a very similar argument that we leave for the reader.
\end{proof}

Here we comment on the case where a checkerboardable graph $G$ has only degree-4 vertices. In this case, the decoding graph has degree-2 vertices associated to the vertices of $G$. Suppose we remove these vertices, joining the edges incident to them into one edge. It is not hard to see that this vertex removal makes the connected components of $G_{\text{dec}}$ into graphs $Q$ and $\overline{Q}$ (duals of one another) where the medial graph $\widetilde{Q}$ is equal to $G$. Therefore, in this checkerboardable and 4-valent case, the operation of building the decoding graph essentially inverts the operation of building the medial graph. Then $Q$ is a graph that defines a homological code (see Section~\ref{subsec:relate_to_homological}) equivalent to the code we have defined on $G$. \edit{For an illustration, refer back to Fig.~\ref{fig:medial_examples}: in each part of that figure, the gray edged graph is one connected component of the decoding graph of the black edged graph.}

With the decoding graph embedded, we can relate homological and logical nontriviality in the checkerboardable case. 

\begin{theorem}\label{thm:checkerboardable_nontriviality}
For checkerboardable graph $G$, a cycle $c$ in $G_{\text{dec}}$ is homologically nontrivial if and only if $\sigma(c)$ is a nontrivial logical operator of the surface code associated with $G$.
\end{theorem}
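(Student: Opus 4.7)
The plan is to show that $\sigma$ descends to an isomorphism $\bar\sigma : H_1(G_{\text{dec}}) \to \mathcal{C}(\mathcal{S})/\mathcal{S}$, so that homological triviality of a cycle $c$ is equivalent to $\sigma(c)$ being a stabilizer. The key ingredient is a correspondence between face boundaries of the embedded decoding graph and face stabilizers of the surface code.

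First I would establish: for each face $\tilde f$ of $G_{\text{dec}}$ in its embedding into $\mathcal{M}$ (from Lemma~\ref{lem:checkerboardable_decoding_graph_embedding}), there is a unique face $f$ of $G$, of color opposite to the component of $\tilde f$, such that $\sigma(\partial\tilde f)=S_f$ up to phase. I would prove this by tracing a face orbit of $\langle\lambda_{\text{dec}},\rho_{\text{dec}}\rangle$ in the decoding rotation system $R_{\text{dec},w}=(H_w\times\{\pm1\},\lambda_{\text{dec}},\rho_{\text{dec}},\tau_{\text{dec}})$ given in the proof of Lemma~\ref{lem:checkerboardable_decoding_graph_embedding}. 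Starting from $(h,1)$, the orbit visits edges corresponding to the sectors $[h]_\rho, [\tau\rho\tau h]_\rho, [\lambda\tau\rho\tau h]_\rho, [\tau\rho\tau\lambda\tau\rho\tau h]_\rho, \ldots$ of $G$, alternating between $\tau\rho\tau$-steps (which stay at the same vertex $v$ and skip the intermediate opposite-color sector at $v$) and $\lambda$-steps (which cross a half-edge, remaining in the same face of $G$). Decomposing this boundary into two-edge paths meeting at the $w_v$ vertices of $G_{\text{dec}}$ and applying $\sigma$, Corollary~\ref{cor:CAL_commutation} forces the Pauli at $v$ to be $q_{[\rho\tau h]_\rho}$, the CAL element for the skipped sector at $v$. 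Multiplying these contributions as the walk circumnavigates the opposite-color face $f$ of $G$ yields $\sigma(\partial\tilde f)=\prod_{[k]_\rho\subseteq f}q_{[k]_\rho}=S_f$ up to phase. A count using $|F_{\text{dec},w}|=|F_b|$ and the analogous $|F_{\text{dec},b}|=|F_w|$ confirms that every face stabilizer arises exactly once.

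Given this dictionary, $\mathcal{B}(G_{\text{dec}})$ maps into $\mathcal{S}$, so $\sigma$ descends to a well-defined homomorphism $\bar\sigma : H_1(G_{\text{dec}}) \to \mathcal{C}(\mathcal{S})/\mathcal{S}$. Surjectivity of $\bar\sigma$ follows from the earlier-established surjectivity of $\sigma$ on cycles. I would then finish by dimension counting. Both components of $G_{\text{dec}}$ are cellularly embedded in $\mathcal{M}$ by Lemma~\ref{lem:checkerboardable_decoding_graph_embedding}, so $\dim_{\mathbb{F}_2}H_1(G_{\text{dec}}) = 2\dim_{\mathbb{F}_2}H_1(\mathcal{M})$, giving $4g$ in the orientable case and $2g$ in the non-orientable case. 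On the other side, $\dim_{\mathbb{F}_2}\mathcal{C}(\mathcal{S})/\mathcal{S}=2K$, and Corollary~\ref{cor:number_encoded_qubits} gives $K=2g$ or $K=g$ in the respective checkerboardable cases. The dimensions match, so the surjective map $\bar\sigma$ is an isomorphism, and hence $c$ is homologically trivial if and only if $\sigma(c)\in\mathcal{S}$.

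The hardest part will be the first step: carefully translating the combinatorial face orbit from the decoding rotation system into a concrete walk around a face of $G$, and checking vertex by vertex that the CAL-commutation structure of Corollary~\ref{cor:CAL_commutation} forces the right CAL elements to appear so that the product over the whole boundary is exactly the expected face stabilizer. Once this face-boundary-to-stabilizer dictionary is in hand, the descent to $H_1$ is formal and the dimension count follows directly from Lemma~\ref{lem:checkerboardable_decoding_graph_embedding} and Corollary~\ref{cor:number_encoded_qubits}.
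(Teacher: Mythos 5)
Your proposal is correct, and it shares the paper's key lemma but finishes differently. The first step---that boundaries of faces of the embedded decoding graph map under $\sigma$ to exactly the face stabilizers of $G$---is precisely the fact the paper's proof rests on; the paper establishes it by inspection of Fig.~\ref{fig:face_correspondence} rather than by your explicit orbit computation in $R_{\text{dec},w}$, so your combinatorial derivation is a more rigorous rendering of the same claim, and your identification of the ``skipped'' sector operator $q_{[\tau h]_\rho}$ as the unique CAL element anticommuting with exactly $q_{[h]_\rho}$ and $q_{[\tau\rho\tau h]_\rho}$ (via Corollary~\ref{cor:CAL_commutation}) checks out against the cyclic order of sectors around a vertex. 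Where you diverge is the converse direction. The paper argues it directly: a stabilizer is a product of face stabilizers, hence equals $\sigma(c')$ for a sum of facial cycles $c'$, and since $\sigma(t)\propto I$ only for $t=\emptyset$, $\sigma(c)=\sigma(c')$ forces $c=c'$, which is homologically trivial. You instead pass to the induced surjection $\bar\sigma:H_1(G_{\text{dec}})\to\mathcal{C}(\mathcal{S})/\mathcal{S}$ and close with the dimension count $2\dim_{\mathbb{F}_2}H_1(\mathcal{M})=2K$ from Lemma~\ref{lem:checkerboardable_decoding_graph_embedding} and Corollary~\ref{cor:number_encoded_qubits}. Both routes are valid: the paper's is self-contained and never needs the encoded-qubit count, while yours imports Corollary~\ref{cor:number_encoded_qubits} (proved independently via the Majorana picture, so there is no circularity) and in exchange gets injectivity for free. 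Two small points to make explicit if you write this up: the quotient $\mathcal{C}(\mathcal{S})/\mathcal{S}$ must be read modulo phases, i.e.\ as $\mathcal{C}(\mathcal{S})/\langle iI,\mathcal{S}\rangle$, for its $\mathbb{F}_2$-dimension to equal $2K$; and the homology of the disconnected graph $G_{\text{dec}}$ must be taken componentwise with respect to the two embeddings of Lemma~\ref{lem:checkerboardable_decoding_graph_embedding}, so that $\dim H_1(G_{\text{dec}})=2\dim H_1(\mathcal{M})$ as you claim.
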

\begin{proof}
The key fact to use is that a Pauli $p$ is a stabilizer generator associated to a face in $G$ if and only if $p\propto\sigma(c)$ for some facial cycle $c$ in $G_{\text{dec}}$. In Fig.~\ref{fig:face_correspondence}, we show this correspondence explicitly. We use this fact to complete the proof.

\begin{figure}
    \centering
    \includegraphics[width=0.8\textwidth]{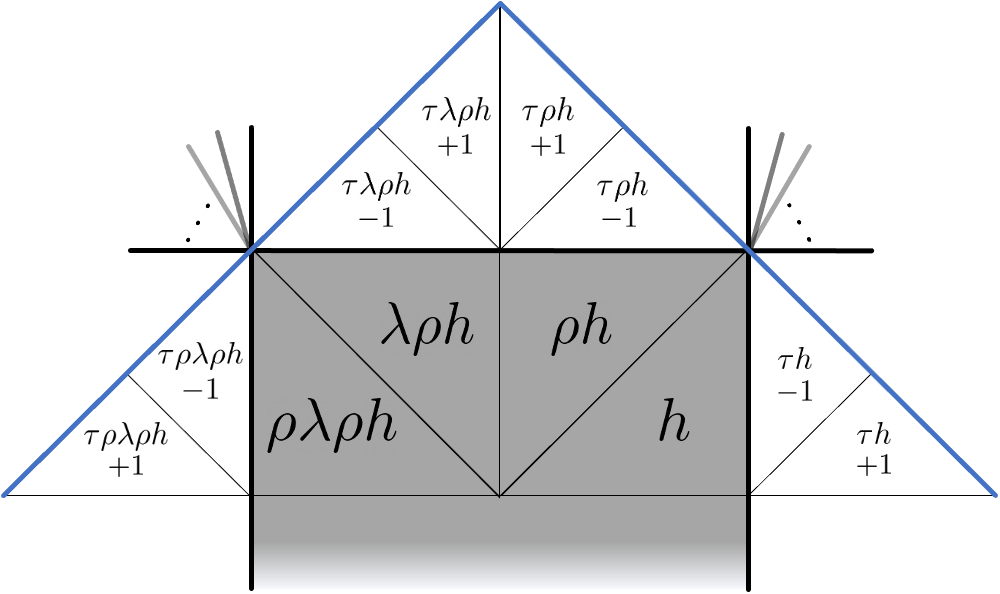}
\caption{Picturing the correspondence between a face $f$ of checkerboardable graph $G$ (thick, black edges) and a face $f'$ of a connected component of its decoding graph $G_{\text{dec}}$ (blue edges). Notation comes from the proof of Lemma~\ref{lem:checkerboardable_decoding_graph_embedding}. Flags of $G$ and $G_{\text{dec}}$ are outlined by thin lines. Flags of $G$ making up $f$ include $h$, $\rho h$, $\lambda\rho h$, and $\rho\lambda\rho h$. These each relate to two flags in the corresponding face $f'$. For instance, $h\in f$ becomes $(\tau h,\pm1)\in f'$. If $f$ is colored black in $G$, then $f'$ is a face of $G_{\text{dec},w}$.}
\label{fig:face_correspondence}
\end{figure}

Let $c$ be a homologically trivial cycle in $G_{\text{dec}}$. We need to show $\sigma(c)$ is a stabilizer. Since $G_{\text{dec}}$ has two connected components, $c=c_w+c_b$ where $c_w$ is a cycle in $G_{\text{dec},w}$ and $c_b$ a cycle in $G_{\text{dec},b}$. Since $\sigma(c)=\sigma(c_w)\sigma(c_b)$, we just show $\sigma(c_w)$ is a stabilizer with an analogous argument for $\sigma(c_b)$. Since $c_w$ is homologically trivial it is a sum of facial cycles $c_w=c_1+c_2+\dots+c_l$ and $\sigma(c_w)=\sigma(c_1)\sigma(c_2)\dots\sigma(c_l)$. By the fact in the previous paragraph, $\sigma(c)$ is indeed a stabilizer.

Likewise, if Pauli $p$ is a stabilizer, then it is the product of stabilizer generators associated to faces of $G$, or $p=p_1p_2\dots p_l$. By fact (1), there are facial cycles $c_1,c_2,\dots,c_l$ in $G_{\text{dec}}$ such that $\sigma(c_j)=p_j$ and therefore a homologically trivial cycle $c=c_1+c_2+\dots+c_l$ so that $\sigma(c)=p$.
\end{proof}

In a slight abuse of notation, we let $\hsys{G_{\text{dec}}}$ be the smaller of the homological systoles of the two connected components of $G_{\text{dec}}$ embedded as in Lemma~\ref{lem:checkerboardable_decoding_graph_embedding}. 

\begin{corollary}\label{cor:checkerboardable_distance}
Suppose $G$ is a checkerboardable, embedded graph and $D$ is the code distance of the surface code defined by $G$. Then, $\frac12\hsys{G_{\text{dec}}}\le D$ and if $G$ is 4-valent, $\frac12\hsys{G_{\text{dec}}}=D$.
\end{corollary}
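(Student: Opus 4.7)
The plan is to use the bijection $\sigma$ between elements of $\mathcal{T}_2(G_{\text{dec}})$ and $n$-qubit Paulis (up to phase) together with the two-component decomposition of $G_{\text{dec}}$ from Lemma~\ref{lem:checkerboardable_decoding_graph_embedding}. For any non-trivial logical $l$ with associated cycle $c_l = \sigma^{-1}(l)$, I would write $c_l = c_l^{(w)} + c_l^{(b)}$, where $c_l^{(w)}$ is a cycle in $G_{\text{dec},w}$ and $c_l^{(b)}$ is a cycle in $G_{\text{dec},b}$. Setting $l_w := \sigma(c_l^{(w)})$ and $l_b := \sigma(c_l^{(b)})$, we have $l \propto l_w l_b$. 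By Theorem~\ref{thm:checkerboardable_nontriviality} applied to each component, at least one of $c_l^{(w)}, c_l^{(b)}$ is homologically non-trivial (otherwise both $l_w$ and $l_b$ would be stabilizers, forcing $l$ itself to be a stabilizer); without loss of generality $c_l^{(w)}$ is non-trivial, so $l_w$ is a non-trivial logical.

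For every vertex $v$, let $l_v, l_{w,v}, l_{b,v}$ denote the Pauli restrictions of $l, l_w, l_b$ to the qubits at $v$. The key step is to establish two local inequalities: $|l_{w,v}| \le |l_v|$ and $|C_{l_{w,v}}| \le 2|l_{w,v}|$. The former uses the CAL structure: the white-side sector-edges at $v$ correspond to odd-positioned CAL elements, so $l_{w,v}$ commutes with all even-positioned CAL elements (and $l_{b,v}$ with the odd-positioned ones). Structural properties of extremal CALs (Lemma~\ref{lem:CAL_generating} together with the composition construction in the proof of Theorem~\ref{thm:CAL_construct}) allow one to present the respective commutants as $\mathcal{A}_v = \prod_i \{I, A_i\}$ and $\mathcal{B}_v = \prod_i \{I, B_i\}$, with $A_i, B_i$ distinct non-identity single-qubit Paulis on the $i$-th qubit at $v$; consequently on each qubit, $l_{w,v} \cdot l_{b,v}$ is non-identity whenever $l_{w,v}$ is, yielding $|l_{w,v}| \le |l_v|$. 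For the second inequality, each generator $A_i$ anticommutes only with odd-positioned CAL elements containing qubit $i$, of which there are at most two (since CAL elements have weight at most two by Theorem~\ref{thm:CAL_construct}), so $|C_{A_i}| \le 2$ and by linearity $|C_p| \le 2|p|$ for every $p \in \mathcal{A}_v$.

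Chaining these inequalities gives $\hsys{G_{\text{dec}}} \le |c_l^{(w)}| = \sum_v |C_{l_{w,v}}| \le 2\sum_v |l_{w,v}| = 2|l_w| \le 2|l|$; applied to the minimum-weight non-trivial logical $l$ of weight $D$, this yields $\tfrac12 \hsys{G_{\text{dec}}} \le D$. For the 4-valent equality, $N_v = 1$ at every vertex and the commutants simplify to $\mathcal{A}_v = \{I, A\}$, $\mathcal{B}_v = \{I, B\}$ with $|C_A| = |C_B| = 2$ exactly, so the second local inequality becomes an equality and $|c_l^{(w)}| = 2|l_w|$, $|c_l^{(b)}| = 2|l_b|$. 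Setting $D_w, D_b$ to be the minimum weights of non-trivial pure-white and pure-black logicals, $\hsys{G_{\text{dec}}} = \min(2D_w, 2D_b)$. The first local inequality gives $|l| \ge |l_w|$ and $|l| \ge |l_b|$, and since at least one of $l_w, l_b$ is non-trivial, $|l| \ge \min(D_w, D_b)$ for every non-trivial $l$, so $D \ge \min(D_w, D_b)$; the reverse inequality $D \le \min(D_w, D_b)$ is immediate since pure-white and pure-black logicals are themselves logicals. Thus $\tfrac12 \hsys{G_{\text{dec}}} = D$.

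The main obstacle I anticipate is rigorously justifying the product-form presentation of $\mathcal{A}_v$ and $\mathcal{B}_v$ for all extremal CALs, not merely for the explicit composition-form construction in the proof of Theorem~\ref{thm:CAL_construct}. I expect this will require an induction on $N_v$ tracking how commutants of odd- versus even-positioned sublists evolve under the composition operation $\circ$, or alternatively a local-Clifford equivalence argument leveraging the CAL classification results of Appendix~\ref{sec:cal-prop}. Once that structural fact is in hand, the rest of the argument is essentially a bookkeeping exercise summing local bounds across vertices.
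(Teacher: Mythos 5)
Your high-level route is the same as the paper's (split the representing cycle into its white and black components, use Theorem~\ref{thm:checkerboardable_nontriviality} to extract a homologically non-trivial component, and compare its length to the Pauli weight), but the quantitative step has a genuine gap, and it sits exactly where you flagged it. The structural claim you need --- that the commutant of the even- (resp.\ odd-) positioned sublist of an extremal CAL has the product form $\prod_i\{I,A_i\}$ with single-qubit generators --- is false for general extremal CALs, and with it your local inequality $|l_{w,v}|\le|l_v|$ fails. Concretely, $\{XX,\ IY,\ ZZ,\ XI,\ YY,\ XY\}$ is an extremal length-six CAL on two qubits: all adjacency/non-adjacency commutation relations check, and $\mathcal{C}_{\text{even}}=\{XX,ZZ,YY\}$ and $\mathcal{C}_{\text{odd}}=\{IY,XI,XY\}$ each multiply to the identity, as Corollary~\ref{cor:extremal-cal-mult-identity} requires. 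Here the commutant of $\mathcal{C}_{\text{even}}$ is $\{I,XX,ZZ,YY\}$, which is a maximal abelian group \emph{not} generated by single-qubit Paulis, and taking $l_{w,v}=XX$ and $l_{b,v}=XI$ (both locally realizable: $C_{XX}$ is supported on two odd positions and $C_{XI}$ on two even positions) gives $l_v=XX\cdot XI\propto I\otimes X$ of weight one, so $|l_{w,v}|=2>1=|l_v|$. Since Definition~\ref{def:qubit_surface_code} allows any extremal CAL, your chain $\sum_v|C_{l_{w,v}}|\le2\sum_v|l_{w,v}|\le2\sum_v|l_v|$ does not go through; restricting to the composition-built CALs of Theorem~\ref{thm:CAL_construct} (or reducing to 4-valent graphs via Corollary~\ref{cor:simplify_graph}) would rescue it, but that proves a weaker statement than the corollary as written. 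A second, smaller issue: you bound $|c_l^{(w)}|$ directly rather than passing to a simple sub-cycle, so you would also need control of $|C_{l_{w,v}}|$ when the cycle revisits a vertex, which your inequalities do not supply at higher-degree vertices.

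The paper's proof avoids all CAL structure theory. For the lower bound it uses only that if the homologically non-trivial component visits a vertex of $G_{\text{dec}}$ associated to $v\in V$, then $l_v$ anticommutes with at least one CAL element and hence $l_v\neq I$; therefore the number $n$ of such $G$-vertices visited satisfies $n\le D$. It then passes to a \emph{simple} homologically non-trivial sub-cycle, which meets each of its $G$-vertices in exactly two edges and so has length at most $2n$, giving $\hsys{G_{\text{dec}}}\le2n\le2D$. This counts vertices rather than qubits and needs nothing about the commutants beyond Corollary~\ref{cor:CAL_commutation}. Your treatment of the 4-valent equality is essentially correct and coincides with the paper's: there the CAL is forced to be $\{A,B,A,B\}$ with $A\neq B$ single-qubit Paulis, your product-form claim holds trivially, and each two-edge sub-path of the shortest homologically non-trivial cycle contributes exactly one qubit of support.
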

\begin{proof}
For the proof of the bound, note that any non-trivial logical operator with weight $D$ is represented by a homologically non-trivial cycle $c$ confined to a single connected component of the decoding graph $G_{\text{dec}}$. Suppose $c$ visits $n\le D$ vertices of $G_{\text{dec}}$ that are  associated to vertices of $G$. There is a homologically non-trivial sub-cycle of $c$, call it $c'$, that visits each of the $n$ vertices at most once and so has at most $2n$ edges. Since $\hsys{G_{\text{dec}}}$ lower bounds the length of $c'$, $\hsys{G_{\text{dec}}}\le2n\le2D$.

Now suppose $G$ is 4-valent. The shortest homologically non-trivial cycle $c$ in $G_{\text{dec}}$ is confined to just one connected component. By 4-valency, there is only one qubit at each vertex and each two-edge sub-path in $c$ is mapped to single-qubit Pauli by $\sigma$. Since $c$ has length $\hsys{G_{\text{dec}}}$, the weight of the non-trivial logical Pauli it represents is at most $\frac12\hsys{G_{\text{dec}}}$, which therefore is an upper bound on $D$. 
\end{proof}

The equality statement in this corollary does not hold for graphs with higher degree vertices because, in that case, two-edge paths in the decoding graph may represent Paulis of weight greater than one. However, if one uses the CAL construction in Theorem~\ref{thm:CAL_construct} to assign Paulis to sectors, one can effectively decompose these higher-degree vertices into degree-4 vertices (see Corollary~\ref{cor:simplify_graph}) and apply Corollary \ref{cor:checkerboardable_distance} to the resulting 4-valent graph.

\subsection{Doubled graphs}\label{sec:doubled_graphs}

Our goal in this section is to find some manifold to embed $G_{\text{dec}}$ into when the original graph $G$ is not checkerboardable. In general, this is a manifold with genus larger than the original surface. In particular, for any non-checkerboardable graph $G$ embedded on a manifold $\mathcal{M}$, we show that there is a checkerboardable graph $G^2$, referred to as a \emph{doubled} graph, embedded on manifold $\mathcal{M}^2$ such that $G_{\text{dec}}$ is isomorphic to any one of the two connected components of $G^2_{\text{dec}}$. A similar doubled construction is given by Barkeshli and Freedman \cite{barkeshli2016modular} in the context of topological phases.

\begin{figure}
    \centering
    \includegraphics[width=0.9\textwidth]{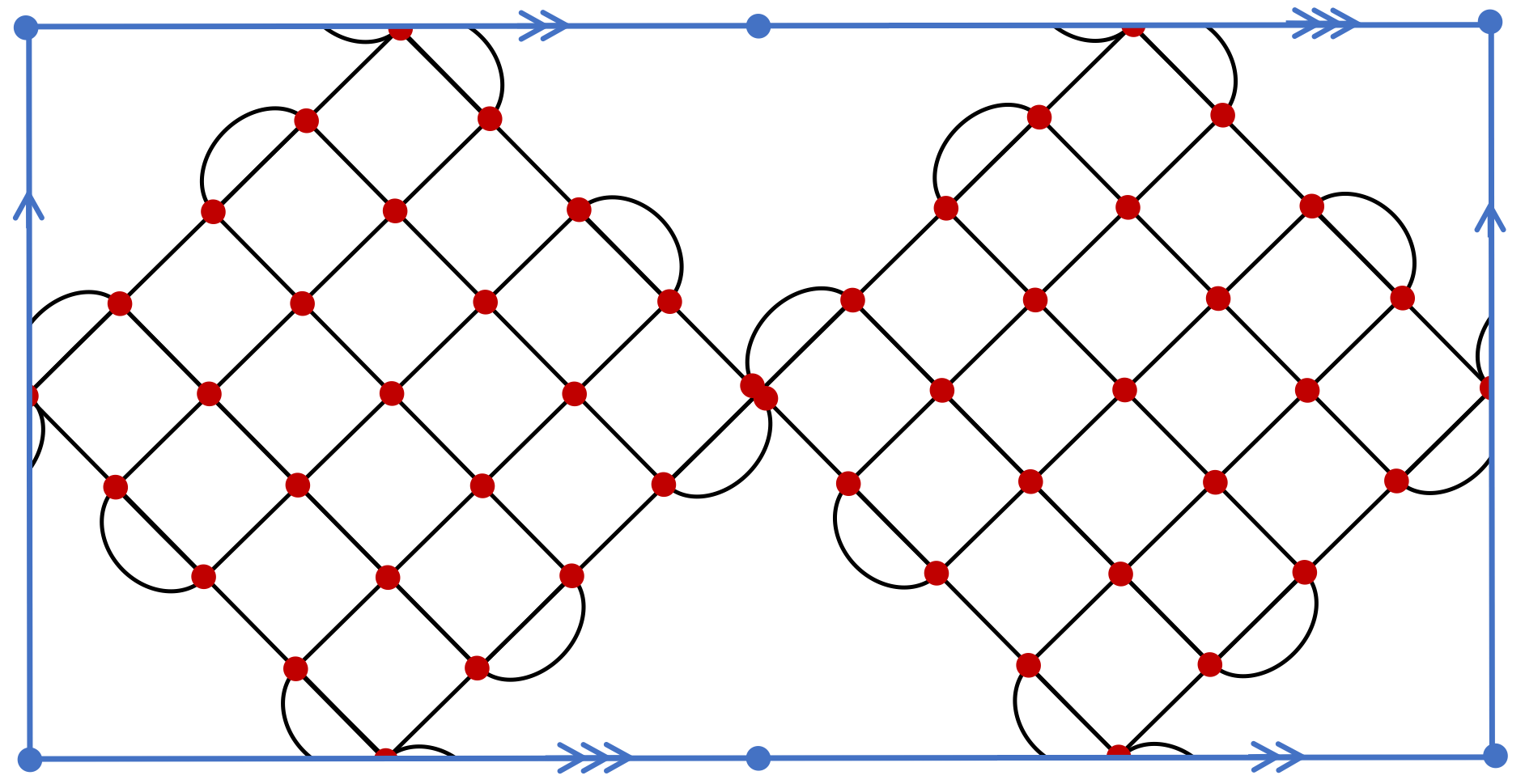}
    \caption{A doubled graph for the rotated surface code of Fig.~\ref{fig:stellated_codes}(b) is embedded in the torus (six-sided blue border).}
    \label{fig:doubled_surface_code}
\end{figure}

\begin{figure}
    \centering
    \includegraphics[width=0.75\textwidth]{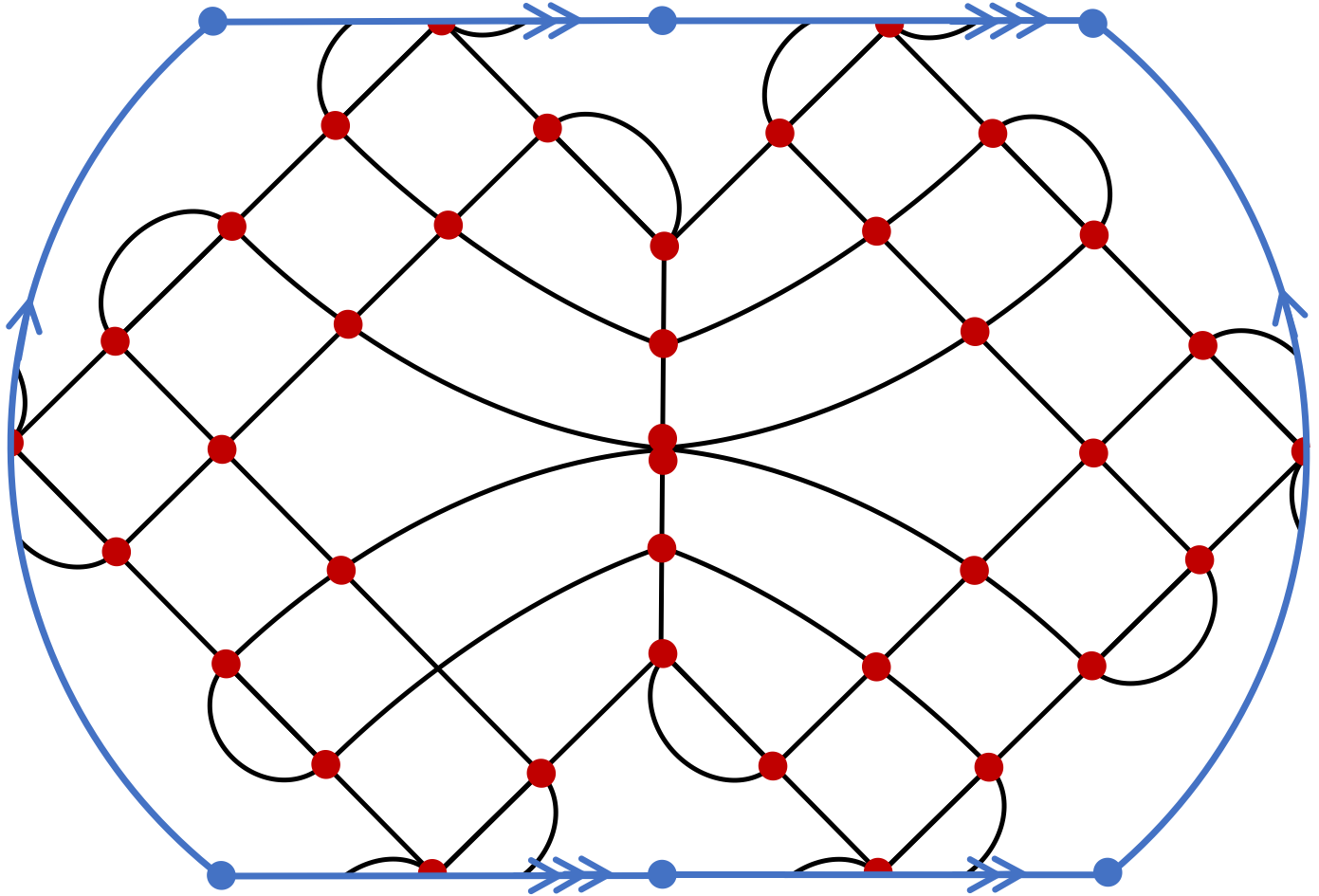}
    \caption{A doubled graph for the triangle code of Fig.~\ref{fig:stellated_codes}(a) is embedded in the torus (six-sided blue border).}
    \label{fig:doubled_triangle_code}
\end{figure}

Informally, the doubled manifold $\mathcal{M}^2$ is created by taking two copies of $\mathcal{M}$, cutting along the edges of a defect of the embedded graph $G$, and gluing the two copies together. This gluing is done in such a way that in crossing from one copy of $\mathcal{M}$ to the other, a path ends up at the same place in the destination manifold as it would have in the original manifold if it were still intact. This means, for instance, that a defect trail between two odd degree vertices is cut open into a disk and glued with its corresponding disk on the other copy. In contrast, cycles in the defect that are homologically non-trivial result in the handles or cross-caps of the surface being cut apart before they are glued with their copies. We provide a formal construction of the doubled graph in terms of rotation systems and a construction of the doubled manifold subsequently. We comment that the doubled graph $G^2$ depends on the choice of defect, but the topology of $\mathcal{M}^2$ does not.

\begin{definition}\label{def:doubled_rotation_system}
The doubled rotation system $R^2=(H',\lambda',\rho',\tau')$ of a rotation system $R=(H,\lambda,\rho,\tau)$ with defect $\delta$ is defined by setting $H'=H\times\{1,-1\}$ and for $(h,j)\in H'$ letting $\lambda'(h,j)=(\lambda(h),j)$, $\rho'(h,j)=(\rho(h),j)$, and
\begin{equation}
\tau'(h,j)=\bigg\{\begin{array}{ll}(\tau(h),-j),&\exists e\in\delta\text{ s.t.~}h\in e\\
(\tau(h),j),&\text{otherwise}\end{array}.
\end{equation}
\end{definition}

One can study orbits of $\langle\lambda',\rho'\rangle$ to relate faces of $G^2$ with faces of $G$. Since neither $\lambda'$ nor $\rho'$ acting on $(h,j)\in H'$ flips the sign of $j$, there are two faces in $G^2$ for every face of $G$, one made up of flags with $j=1$ and one with flags $j=-1$. It turns out we can also checkerboard $G^2$ such that these two faces are oppositely colored.

\begin{lemma}\label{lem:Gsq_is_checkerboardable}
For any embedded graph $G$, the doubled graph $G^2$ is checkerboardable.
\end{lemma}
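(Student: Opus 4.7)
The plan is to build a 2-coloring of the flags of $R^2 = (H', \lambda', \rho', \tau')$ by combining the sign component of $H' = H \times \{1,-1\}$ with a defect-based coloring of $H$, and then verify the criterion of Lemma~\ref{lem:checkerboardability_rotation_system}.

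First I would invoke the results of Section~\ref{sec:checker_w_defects}: every embedded graph $G$ admits some defect $\delta \in \mathbb{F}_2^{|E|}$ making it checkerboardable with that defect (this is task (1) guaranteed by Algorithm~\ref{alg:checkerboard}). Applying Lemma~\ref{lem:checkerboard_w_defect_rs} to the defect $\delta$ appearing in Definition~\ref{def:doubled_rotation_system}, I obtain a partition $H = H_w \sqcup H_b$ such that $\lambda$ and $\rho$ each map $H_w \to H_w$ and $H_b \to H_b$, while $\tau$ swaps $H_w$ and $H_b$ precisely on flags whose underlying edge is \emph{not} in $\delta$, and preserves colors on flags whose underlying edge lies in $\delta$.

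Next I would define the candidate partition of $H'$ by
\begin{equation}
H'_w = (H_w \times \{1\}) \cup (H_b \times \{-1\}), \qquad H'_b = (H_b \times \{1\}) \cup (H_w \times \{-1\}),
\end{equation}
so that the color of $(h,j) \in H'$ flips whenever we flip either $h$'s $H$-color or the sign $j$. The key verification, which is the only real content of the proof, is to check the three properties in Lemma~\ref{lem:checkerboardability_rotation_system}. For $\lambda'$ and $\rho'$, the second coordinate is untouched and the first coordinate's color is preserved by $\lambda$ and $\rho$, so both maps send each of $H'_w, H'_b$ to itself. For $\tau'$, I would split into two cases according to whether the underlying edge of $h$ is in the defect: if $h \notin \delta$ then $\tau'(h,j) = (\tau(h),j)$ and $\tau$ flips the $H$-color of $h$, so exactly one of the two coordinates' colors flips and $(h,j)$ is sent to the opposite $H'$-class; if $h \in \delta$ then $\tau'(h,j) = (\tau(h),-j)$ and $\tau$ preserves the $H$-color of $h$, so again exactly one coordinate's color flips and the $H'$-class is swapped. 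Hence $\tau'$ maps elements of either $H'_w, H'_b$ to the other.

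Putting these together, the partition $H' = H'_w \sqcup H'_b$ satisfies the hypotheses of Lemma~\ref{lem:checkerboardability_rotation_system}, proving that $G^2$ is checkerboardable. The main (and only mild) obstacle is getting the case analysis for $\tau'$ straight; the two-cases-one-flip structure is exactly what the definition of a defect was designed to produce, so the verification is essentially bookkeeping once the coloring $H'_w$ is chosen correctly. No properties beyond those already established in Section~\ref{sec:checker_w_defects} and Lemma~\ref{lem:checkerboardability_rotation_system} are needed.
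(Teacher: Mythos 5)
Your proof is correct and follows essentially the same route as the paper: take the partition $H = H_w \sqcup H_b$ furnished by the defect $\delta$ used to build $G^2$, define $H'_w = (H_w\times\{1\})\cup(H_b\times\{-1\})$ and $H'_b = (H_b\times\{1\})\cup(H_w\times\{-1\})$, and check the criterion of Lemma~\ref{lem:checkerboardability_rotation_system}. Your case analysis for $\tau'$ (exactly one of the sign and the $H$-color flips, in either case) correctly spells out the verification the paper leaves to the reader.
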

\begin{proof}
Let $\delta$ be the defect used to define $G^2$ from $G$, and let $R=(H,\lambda,\rho,\tau)$ be the rotation system of $G$. As $G$ is checkerboardable with defect $\delta$, there is a partition of $H=H_w\sqcup H_b$, where $\lambda$ and $\rho$ map either set to itself and $\tau$ maps an element $h$ from either set to the other \emph{except} when $h$ is in an edge belonging to the defect (Lemma~\ref{lem:checkerboardable_w_defect_rs}). 

Now we can partition $H'=H\times\{1,-1\}$, the set of flags for the doubled rotation system $R^2=(H',\lambda',\rho',\tau')$, into two sets
\begin{align}
H'_w&=\{(h,j):h\in H_w,j=1\text{ or }h\in H_b,j=-1\},\\
H'_b&=\{(h,j):h\in H_b,j=1\text{ or }h\in H_w,j=-1\}.
\end{align}
One can check that $\lambda'$ and $\rho'$ map either set to itself and $\tau'$ maps between the sets. Thus, $G^2$ is checkerboardable.
\end{proof}

Just as there are two faces in $G^2$ for every face in $G$, it follows from Definition~\ref{def:doubled_rotation_system} that $G^2$ has two edges for every edge of $G$, two degree-$\text{deg}(v)$ vertices for every even-degree vertex $v$ of $G$, and one degree-$2\text{deg}(v)$ vertex for every odd-degree vertex $v$ of $G$. This counting allows us to establish the genus of $G^2$.

\begin{theorem}
Let $G$ be a non-checkerboardable graph containing $M$ odd-degree vertices. 
\setmargins
\begin{enumerate}[(a)]
\item If $G$ embeds into a genus $g$, orientable manifold, then $G^2$ embeds into a genus $2g+(M-2)/2$, orientable manifold.
\item Suppose $G$ embeds into a genus $g$, non-orientable manifold.
\begin{enumerate}[(i)]
\item If $G^2$ embeds into an orientable manifold, its genus is $g+(M-2)/2$.
\item If $G^2$ embeds into a non-orientable manifold, its genus is $2\left(g+(M-2)/2\right)$.
\end{enumerate}
\end{enumerate}
\end{theorem}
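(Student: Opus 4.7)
The strategy is to compute $\chi(G^2)$ from $\chi(G)$ by a direct orbit count in the doubled rotation system, and then apply Euler's formula~\eqref{eq:Eulers_formula} in each case, supplementing Case~1 with an explicit orientability witness for $G^2$. Throughout, let $R=(H,\lambda,\rho,\tau)$ describe $G$ with defect $\delta$ and let $R^2=(H',\lambda',\rho',\tau')$ with $H'=H\times\{1,-1\}$ be as in Definition~\ref{def:doubled_rotation_system}, and denote the vertex, edge, and face sets of $G^2$ by $V', E', F'$.

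The first step is the orbit count. Because $\lambda'$ and $\rho'$ each preserve the second coordinate of $(h,j)\in H'$, every face of $G$ lifts to two faces of $G^2$, so $|F'|=2|F|$. For edges, a short check shows that each $\langle\lambda',\tau'\rangle$-orbit has size four regardless of whether the underlying edge is in $\delta$ (in the defect case the orbit of $(h,1)$ is $\{(h,1),(\lambda h,1),(\tau h,-1),(\lambda\tau h,-1)\}$, still of size four), whence $|E'|=2|E|$. The delicate count is the vertex one. Fix $v\in V$ with $\text{deg}(v)=d$ and trace the $\langle\rho',\tau'\rangle$-orbit of some $(h,j)$ with $h\in v$: a single loop around $v$ applies $\tau'$ exactly $d$ times, once per edge incident to $v$, and by Lemma~\ref{lem:defect_decomp} the number of those $\tau'$-applications that fall on defect edges has the same parity as $\text{deg}(v)$. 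Hence at an even-degree vertex the sign coordinate returns after one loop and yields two orbits of size $2d$, whereas at an odd-degree vertex it flips and two loops are needed, yielding a single orbit of size $4d$. Summing gives $|V'|=2(|V|-M)+M=2|V|-M$.

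Assembling the counts, $\chi(G^2)=(2|V|-M)-2|E|+2|F|=2\chi(G)-M$. In Case~1, substitute $\chi(G)=2-2g$ and (once orientability of $G^2$ is established) $\chi(G^2)=2-2g'$ to obtain $g'=2g+(M-2)/2$; in the two subcases of Case~2, substitute $\chi(G)=2-g$ together with either $\chi(G^2)=2-2g'$ or $\chi(G^2)=2-g'$, yielding $g'=g+(M-2)/2$ or $g'=2(g+(M-2)/2)$ respectively. To close Case~1, take the orientation partition $H=H_+\sqcup H_-$ of $R$ and set $H'_\pm=H_\pm\times\{1,-1\}$: the invariance of this partition under $\lambda'$ and $\rho'$ reduces to the invariance of $H_\pm$ under $\lambda$ and $\rho$, while $\tau'$ swaps $H'_+$ and $H'_-$ in both the defect and non-defect subcases because membership in $H'_\pm$ does not depend on the second coordinate and $\tau$ swaps $H_\pm$.

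The main obstacle is the vertex orbit analysis, where Lemma~\ref{lem:defect_decomp} is essential for synchronizing the parity of defect crossings around $v$ with the parity of $\text{deg}(v)$; without this synchronization, the $4d$-versus-$2d$ dichotomy collapses and the count $|V'|=2|V|-M$ fails. A minor pitfall in the orientability argument for Case~1 is that the naive partition $H\times\{+1\}$ versus $H\times\{-1\}$ is not $\tau'$-invariant on non-defect edges, so one must use the twisted partition keyed to $H_\pm$ rather than to the sign coordinate $j$.
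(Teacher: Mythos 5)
Your proof is correct and follows essentially the same route as the paper's: count $|V'|=2|V|-M$, $|E'|=2|E|$, $|F'|=2|F|$, apply Euler's formula \eqref{eq:Eulers_formula}, and certify orientability in Case~1 via the partition of $H'$ keyed to the first coordinate. Your explicit parity argument for the vertex orbit count (invoking Lemma~\ref{lem:defect_decomp} to distinguish the one-loop closure at even-degree vertices from the two-loop closure at odd-degree vertices) supplies a detail the paper leaves implicit in the paragraph preceding the theorem.
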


\begin{proof}
Part of this theorem that is implicit is that if $G$ is orientable, then $G^2$ is orientable. Represent $G$ and $G^2$ with rotation systems as in Definition~\ref{def:doubled_rotation_system}. Then, $G$ being orientable means there is a partition of $H=H_{+1}\sqcup H_{-1}$ such that $\lambda$, $\rho$, and $\tau$ all map elements of either set $H_{\pm1}$ to elements of the other. Clearly, $H'=H'_{+1}\sqcup H'_{-1}$ can also be partitioned so that $(h,j)\in H'_{k}$ if and only if $h\in H_k$. As required for orientability, $\lambda'$, $\rho'$, and $\tau'$ all map elements of either set $H'_{\pm1}$ to elements of the other.

The genus of $G^2$ can be obtained by counting. As discussed previously, the doubled graph has $|F'|=2|F|$ faces, $|E'|=2|E|$ edges, and $|V'|=2|V|-M$ vertices where $|F|$, $|E|$, and $|V|$ are the face, edge, and vertex counts of $G$. The conclusions follow.
\end{proof}

Because $G^2$ is checkerboardable, its decoding graph $G^2_{\text{dec}}$ has two connected components, each embedding in the same manifold as $G^2$. 

\begin{theorem}\label{thm:distance_bounds_from_doubled_graph}
Either connected component of $G^2_{\text{dec}}$ is isomorphic to $G_{\text{dec}}$. Moreover, homologically non-trivial cycles in $G^2_{\text{dec}}$ correspond to non-trivial logical operators of the surface code associated to $G$. If that code has distance $D$, then $\frac14\hsys{G^2_{\text{dec}}}\le D\le\frac12\hsys{G^2_{\text{dec}}}$.
\end{theorem}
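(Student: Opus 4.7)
To prove the theorem I would proceed in three stages, one per claim. For the isomorphism, use the projection of rotation systems $\pi : R^2 \to R$ defined by forgetting the sign, $(h,j) \mapsto h$. This map is two-to-one on the flags, sectors, half-edges, edges, and faces of $G^2$, two-to-one on vertices of $G^2$ coming from even-degree vertices of $G$, and one-to-one on vertices of $G^2$ coming from odd-degree vertices of $G$. Using the checkerboard partition $H' = H'_w \sqcup H'_b$ from the proof of Lemma~\ref{lem:Gsq_is_checkerboardable}, the two lifts of each sector, each face-vertex, and each vertex-vertex of $G_{\text{dec}}$ lie in opposite components of $G^2_{\text{dec}}$; hence the restriction of $\pi$ to either component is a bijection on vertices and edges, and adjacency is preserved by the explicit rotation-system formulas in Definitions~\ref{def:decoding_graph} and \ref{def:doubled_rotation_system}.

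For the non-triviality correspondence, apply Theorem~\ref{thm:checkerboardable_nontriviality} to the checkerboardable graph $G^2$ (Lemma~\ref{lem:Gsq_is_checkerboardable}): a cycle $c'$ in a component of $G^2_{\text{dec}}$ is homologically trivial iff it is a sum of facial cycles of that component. The key local claim is that a facial cycle surrounding a face $f^2$ of $G^2$ projects under $\pi$ to a cycle in $G_{\text{dec}}$ whose $\sigma_G$-image equals the face stabilizer $S_f$ of $G$ for $f = \pi(f^2)$. This works because $\sigma$ at each vertex is determined by the local CAL, which is shared between corresponding vertices of $G$ and $G^2$, so the product of sector operators around $f^2$ in $G^2$ matches the product around $f$ in $G$. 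Combined with the injectivity of $\sigma_G$ on cycles (Corollary~\ref{cor:CAL_commutation}), this shows $c'$ is homologically trivial iff $\sigma_G(\pi(c'))$ is a product of $G$-face stabilizers, i.e., a trivial logical; taking contrapositives gives the claim.

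For the distance bounds I would first invoke Corollary~\ref{cor:simplify_graph} to reduce to graphs with only degree-$3$ and degree-$4$ vertices, so that $N_v = 1$ everywhere. The upper bound $D \le \tfrac12 \hsys{G^2_{\text{dec}}}$ is then immediate: a shortest homologically non-trivial cycle $c'$ in $G^2_{\text{dec}}$ decomposes by bipartiteness into $\hsys{G^2_{\text{dec}}}/2$ two-edge sub-paths at vertex-vertices, and each contributes a single-qubit Pauli to the non-trivial logical $\sigma_G(\pi(c'))$. For the lower bound $D \ge \tfrac14 \hsys{G^2_{\text{dec}}}$, given a non-trivial logical $l$ of weight $D$, I would build a representative cycle $c$ in $G_{\text{dec}}$ of length at most $4D$ by realizing each local Pauli $l_v$ with a minimal local configuration: a single two-edge sub-path ($2$ edges) at a degree-$3$ vertex, and at most two two-edge sub-paths ($4$ edges, one at each of $w^{(\pm1)}_v$, as is forced for example by $l_v = Y$ in the CAL $\{X,Z,X,Z\}$) at a degree-$4$ vertex. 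The main obstacle is verifying that these local choices actually concatenate into a closed cycle of the claimed length; this can be handled by starting from any cycle representative of $l$ and iteratively multiplying by $G$-face stabilizers, which by Part~2 correspond to facial cycles in $G^2_{\text{dec}}$ and hence leave the homology class of the lift $c'$ unchanged, until the local structure at each affected vertex reaches its minimal form.
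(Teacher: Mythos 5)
Your proposal is correct and follows essentially the same route as the paper's own proof: the component-wise isomorphism via the sign-forgetting projection, the identification of the facial cycles of a component of $G^2_{\text{dec}}$ with \emph{all} face stabilizers of $G$ so that Theorem~\ref{thm:checkerboardable_nontriviality} transfers, and the same per-vertex edge counting for the two distance bounds (the paper, like you, implicitly relies on the degree-$3$/$4$ reduction of Corollary~\ref{cor:simplify_graph} for these). The one superfluous step is the closing ``iterative multiplication by face stabilizers'': by Corollary~\ref{cor:CAL_commutation} the map $\sigma$ is injective on edge sets, so the representative of a fixed Pauli $l$ is unique (hence already locally minimal), and it is automatically a sum of cycles because $l$ commutes with every face stabilizer, so the concatenation obstacle you flag does not actually arise.
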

\begin{proof}
We describe the isomorphism by describing how vertices of $G_{\text{dec}}$ and $G^2_{\text{dec},w}$ (one of the two connected components of $G^2_{\text{dec}}$) are mapped to each other. There are three maps that should be composed in the right way, the map from $G$ to $G_{\text{dec}}$, the map from $G$ to $G^2$, and the map from $G^2$ to $G^2_{\text{dec},w}$.

First, note $G^2_{\text{dec},w}$ has a single vertex for each white face of $G^2$ (which is checkerboardable). Also, $G^2_{\text{dec},w}$ has a single vertex for each vertex of $G^2$. Next, recall $G_{\text{dec}}$ has a vertex for each face of $G$, two vertices for each even-degree vertex of $G$, and one vertex for each odd-degree vertex of $G$. Finally, $G^2$ has two (differently colored) faces for each face of $G$, two vertices for each even-degree vertex of $G$, and one vertex for each odd-degree vertex of $G$. With the vertices of $G_{\text{dec}}$ and $G^2_{\text{dec},w}$ associated to one another by the implied map, vertex adjacency is also preserved, showing the isomorphism.



We argued in Lemma~\ref{lem:Gsq_is_checkerboardable} that every face of $G$ corresponds to two oppositely colored faces of $G^2$. Since $G^2_{\text{dec},w}$ has faces corresponding to black faces of $G^2$, these actually represent all the faces of $G$. Therefore, a cycle in $G^2_{\text{dec},w}$ is homologically non-trivial if and only if it is a non-trivial logical operator of the code. The upper bound on the code distance follows immediately.

The lower bound is a result of two vertices in $G^2$ representing each even-degree vertex of $G$. If a homologically non-trivial cycle in $G^2_{\text{dec},w}$ crosses both of these vertices, it may represent only one qubit of support in the corresponding logical operator. Thus, we have an additional factor of $1/2$ compared to the upper bound.
\end{proof}

With the theory above, we have seen how the doubled manifold effectively describes the homology of nontrivial logical operators in a higher genus surface. One consequence for surface codes is that for every non-checkerboardable code defined by graph $G$ with parameters $\llbracket N,K,D\rrbracket$, there is a checkerboardable code defined by graph $G^2$ with parameters $\llbracket2N,2K,D'\rrbracket$ where $D\le D'\le 2D$. The checkerboardable code has the same rate and at least as good a code distance, but with potentially higher weight stabilizers and larger genus. Thus, non-checkerboardable codes can offer improvements over checkerboardable codes given that stabilizer weight and genus are relevant parameters for practical implementations.

\subsection{Face-width as a lower-bound on code distance}
\label{sec:face-width}
The \emph{face-width} $\text{fw}(G)$ of an embedded graph $G$ (with genus $g>0$) is the minimum number of times any non-contractible cycle drawn on the manifold intersects the graph \cite{cabello2007finding}. This can also be defined as the length of the shortest non-contractible cycle in a related graph, the face-vertex graph $G_{\text{fv}}$.

\begin{definition}
The face-vertex graph $G_{\text{fv}}$ of an embedded graph $G$ is an embedded graph possessing a vertex $w_v$ for each vertex $v$ of $G$ and a vertex $u_f$ for each face $f$ of $G$. An edge is drawn between $w_v$ and $u_f$ if and only if $v$ and $f$ are adjacent in $G$. To specify the embedding, if $G$ has rotation system $(H,\lambda,\rho,\tau)$, the face-vertex graph has rotation system $(H_{\text{fv}},\lambda_{\text{fv}},\rho_{\text{fv}},\tau_{\text{fv}})$, where $H_{\text{fv}}=H\times\{1,-1\}$ and
\begin{align}
\lambda_{\text{fv}}(h,j)=(h,-j),\quad\quad
\rho_{\text{fv}}(h,j)=\bigg\{\begin{array}{ll}(\tau(h),j),&j=-1\\(\lambda(h),j),&j=1\end{array},\quad\quad
\tau_{\text{fv}}(h,j)=(\rho(h),j).
\end{align}
\end{definition}

We point out some easily-proved facts about the face-vertex graph. First, it is bipartite since vertices of type $w_v$ are only ever connected to vertices of type $u_f$. Second, the face-vertex graph is the dual of the medial graph (Definition~\ref{def:medial_graph}), $\overline{\widetilde{G}}=G_{\text{fv}}$. Third, the face-vertex graph is embedded in the same manifold as $G$. Fourth, if we recall that a graph $G$ and its dual $\overline{G}$ have isomorphic medial graphs, we also find that $G$ and $\overline{G}$ have isomorphic face-vertex graphs. Finally, the face-vertex graph is the decoding graph (see Def.~\ref{def:decoding_graph}) with the vertices $w^{(\pm1)}_v$ merged at each even-degree vertex $v$.

Because $G_{\text{fv}}$ is bipartite, $\sys{G_{\text{fv}}}$ is even. Any cycle drawn on the manifold is homeomorphic to a cycle in the face-vertex graph. Hence, it is easy to see that $\text{fw}(G)=\frac12\sys{G_{\text{fv}}}$. This is the most convenient definition of face-width for our purposes.

The rest of this subsection is devoted to proving the following.
\begin{theorem}\label{thm:face_width}
Suppose $G$ is an embedded graph with only even-degree vertices and genus $g>0$. Let $D$ be the code distance of the surface code associated to $G$. Then $\text{fw}(G)\le D$.
\end{theorem}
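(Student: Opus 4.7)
The strategy is to produce, starting from a non-trivial logical operator $l$ of weight $D$, a non-contractible simple cycle in the face-vertex graph $G_{\text{fv}}$ of length at most $2D$. Since $\text{fw}(G) = \tfrac12\sys{G_{\text{fv}}}$, this will immediately yield $\text{fw}(G)\le D$. Let $V_l = \{v\in V : l_v\ne I\}$ denote the set of vertices of $G$ on which $l$ acts non-trivially; since each $l_v\ne I$ contributes at least one to $|l|$, we have $|V_l|\le D$.

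By the map $\sigma$ of Section~\ref{sec:dec_graph}, $l$ is represented by a cycle $c$ in the decoding graph $G_{\text{dec}}$ whose edges incident to $w^{(\pm 1)}_v$ are exactly those corresponding to sectors $[h]_\rho$ at $v$ with which $l_v$ anticommutes. In particular $c$ visits $w^{(\pm 1)}_v$ only when $v\in V_l$, and the parity requirements from Corollary~\ref{cor:CAL_commutation} together with the fact that $l$ commutes with every face stabilizer $S_f$ ensure that $c$ has even degree at every vertex of $G_{\text{dec}}$. Using the observation from Section~\ref{sec:face-width} that $G_{\text{fv}}$ is obtained from $G_{\text{dec}}$ by identifying the two vertices $w^{(1)}_v, w^{(-1)}_v$ at every even-degree $v$ without altering the edge set, I project $c$ to a closed walk $c'\in\mathcal{Z}(G_{\text{fv}})$ that touches $w_v$ only for $v\in V_l$.

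The key topological step is showing that $c'$, viewed as a curve on $\mathcal{M}$ via the embedding of $G_{\text{fv}}$, is non-contractible. In the checkerboardable case this is immediate: by Lemma~\ref{lem:checkerboardable_decoding_graph_embedding} $G_{\text{dec}}$ already embeds in $\mathcal{M}$, by Theorem~\ref{thm:checkerboardable_nontriviality} the cycle $c$ is homologically non-trivial there and hence non-contractible on $\mathcal{M}$, and the identification $w^{(\pm 1)}_v\mapsto w_v$ is a local deformation inside a small disk around each $v$ that does not change the homotopy class of the curve. In the non-checkerboardable case I invoke the doubled construction of Section~\ref{sec:doubled_graphs}: because $G$ has only even-degree vertices the defect is a sum of cycles in $G$, and non-checkerboardability forces it to be non-trivial modulo $\mathcal{B}(G)$, so $\mathcal{M}^2$ can be realized as a connected unbranched double cover $\pi:\mathcal{M}^2\to\mathcal{M}$. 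By Theorem~\ref{thm:distance_bounds_from_doubled_graph} the logical $l$ lifts to a homologically non-trivial, hence non-contractible, cycle $\tilde c$ in a connected component of $G^2_{\text{dec}}\subset\mathcal{M}^2$; since the induced map $\pi_*$ on fundamental groups is injective for any covering, $\pi\circ\tilde c$ is non-contractible on $\mathcal{M}$, and after the local merging of the doubled $w^{(\pm 1)}_v$ vertices it coincides with $c'$.

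Finally, from the non-contractible closed walk $c'$ I extract a simple non-contractible sub-cycle $\gamma\subseteq c'$ by iterative splitting: whenever $c'$ revisits a vertex, split it there into two shorter closed sub-walks $c_1,c_2$ with $[c']=[c_1][c_2]$ in $\pi_1(\mathcal{M})$ and keep the non-contractible half, which must exist because $[c']\ne e$. Each step strictly reduces length, so the procedure terminates with a simple cycle $\gamma$ that is still non-contractible and whose $w$-vertices are a subset of those of $c'$. Since $G_{\text{fv}}$ is bipartite with sides $\{w_v\}_{v\in V}$ and $\{u_f\}_{f\in F}$, the length of $\gamma$ equals twice the number of its $w$-vertices, which is at most $|V_l|\le D$. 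Hence $\sys{G_{\text{fv}}}\le 2D$ and $\text{fw}(G)\le D$. The main obstacle I anticipate is the non-checkerboardable case, where one must carefully verify that $\mathcal{M}^2$ is an unbranched double cover of $\mathcal{M}$ (using the closed, homologically non-trivial nature of the defect) and that the vertex merging used to pass from the projected lift to $c'$ truly preserves the homotopy class of the curve in $\mathcal{M}$.
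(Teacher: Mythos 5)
Your proof is correct and follows essentially the same route as the paper's: split into the checkerboardable and non-checkerboardable cases, realize $\mathcal{M}^2$ as a double cover of $\mathcal{M}$ (the paper's Lemma~\ref{lem:double_cover}, where the even-degree hypothesis enters exactly as you anticipate), push the non-contractible cycle down to the face-vertex graph, and extract a simple non-contractible sub-cycle whose length is at most twice the Pauli weight. Your appeal to $\pi_1$-injectivity of covering maps is the same fact the paper establishes explicitly by lifting to the universal cover, so the two arguments coincide.
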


To prove this theorem, we need a simple lemma, which we prove separately.
\begin{lemma}\label{lem:double_cover}
Suppose $G$ is a non-checkerboardable graph with only even-degree vertices embedded in manifold $\mathcal{M}$ with genus $g>0$. Then the doubled graph $G^2$ is embedded in a manifold $\mathcal{M}^2$ that is a double cover of $\mathcal{M}$.
\end{lemma}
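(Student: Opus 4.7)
The strategy is to exhibit an explicit degree-$2$ covering map $\pi \colon \mathcal{M}^2 \to \mathcal{M}$ induced by the forgetful projection $(h, j) \mapsto h$ on flags, and then to verify that $\mathcal{M}^2$ is connected under the non-checkerboardability hypothesis. Realizing $\mathcal{M}$ and $\mathcal{M}^2$ (up to homeomorphism) as the cell complexes associated with $R = (H, \lambda, \rho, \tau)$ and $R^2 = (H \times \{\pm 1\}, \lambda', \rho', \tau')$ from Definitions~\ref{def:rotation_system} and~\ref{def:doubled_rotation_system}, I would define $\pi$ cell-by-cell: each vertex, edge, or face of $G^2$ is mapped homeomorphically onto the unique cell of $G$ whose flags are its image under projection. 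This is well defined because $\lambda'$ and $\rho'$ preserve the sign coordinate, so every $\langle \rho', \tau' \rangle$-orbit (vertex), $\langle \lambda', \tau' \rangle$-orbit (edge), and $\langle \lambda', \rho' \rangle$-orbit (face) of $R^2$ projects flag-injectively onto a single orbit of the corresponding type in $R$.

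Next I would verify that $\pi$ is a local homeomorphism with fiber size $2$ everywhere. Interior points of faces and of non-defect edges are obviously evenly covered. Interior points of defect edges are too, because the two lifted edges (the $\langle \lambda', \tau' \rangle$-orbits of $(h, +1)$ and $(h, -1)$) still have disjoint flag sets and each sweeps once across a strip neighborhood of $e$, albeit with swapped signs on adjacent faces. The vertex check is where the even-degree hypothesis enters: by Lemma~\ref{lem:defect_decomp}, every vertex $v$ of $G$ is adjacent to an even number of defect edges, so walking around $v$ in $G^2$ by alternating $\rho'$ and $\tau'$ produces an even total number of sign flips and closes up at $(h, +1)$ after $\deg(v)$ steps. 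Hence the $2\deg(v)$ flags over $v$ split into exactly two $\langle \rho', \tau' \rangle$-orbits (i.e.\ two distinct vertices of $G^2$), giving two disjoint disk neighborhoods in $\mathcal{M}^2$ that project homeomorphically onto a disk neighborhood of $v$.

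The main obstacle will be proving $\mathcal{M}^2$ is connected, since without this $\pi$ is not a cover in the sense of Section~\ref{subsec:covers}. I would argue by contrapositive: supposing $\mathcal{M}^2$ is disconnected, I derive that $G$ is itself checkerboardable. A disconnected degree-$2$ cover of a connected base must be the trivial double cover, so $\mathcal{M}^2 = \mathcal{M}^2_A \sqcup \mathcal{M}^2_B$ with $\pi$ restricting to a homeomorphism on each piece. Define $c \colon H \to \{\pm 1\}$ by $c(h) = +1$ if $(h, +1) \in \mathcal{M}^2_A$ and $c(h) = -1$ otherwise. Invariance of the two components under $\lambda', \rho', \tau'$ forces $c(\lambda h) = c(\rho h) = c(h)$, while $c(\tau h) = c(h)$ whenever $h$'s edge lies outside $\delta$ and $c(\tau h) = -c(h)$ whenever it lies in $\delta$. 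Because $c$ is constant on $\langle \lambda, \rho \rangle$-orbits, it descends to a face coloring $\bar c \colon F \to \{\pm 1\}$; let $x^1 \in \mathbb{F}_2^{|F|}$ be the indicator vector of $\{f : \bar c(f) = -1\}$.

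A short case analysis on edges then establishes $x^1 \Phi = \delta$. When $e$ separates two distinct faces $f_1, f_2$, the entry $(x^1 \Phi)_e = x^1_{f_1} + x^1_{f_2}$ equals $1$ iff $\bar c(f_1) \ne \bar c(f_2)$, which by the $\tau$-condition above holds iff $e \in \delta$. When $e$ has the same face on both sides, $\Phi_{fe} = 0$ for every $f$ by definition, and the $\tau$-condition forces $\delta_e = 0$ since the two flags of $e$ across the edge share the same face color. Since $G$ is checkerboardable with defect $\delta$ by construction of $G^2$, Definition~\ref{def:checkerboardable_w_defect} supplies some $x^0$ with $x^0 \Phi = \vec 1 + \delta$; adding yields $(x^0 + x^1)\Phi = \vec 1$, which by Lemma~\ref{lem:checkerboardability}(a) says $G$ is checkerboardable, contradicting the hypothesis. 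Therefore $\mathcal{M}^2$ is connected, and $\pi$ is a connected double cover of $\mathcal{M}$.
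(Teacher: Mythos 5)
Your proof is correct and its core --- the flag-forgetting projection $\pi$, the cell-by-cell local homeomorphism check, and the use of Lemma~\ref{lem:defect_decomp} at vertices (the only place the even-degree hypothesis enters) --- is essentially the paper's argument. The one genuine addition is your explicit connectivity argument for $\mathcal{M}^2$: the paper verifies the evenly-covered property for all open disks but never checks that $\mathcal{M}^2$ is connected, even though its definition of ``cover'' in Section~\ref{subsec:covers} requires the covering space to be connected (equivalently, that the monodromy group of $R^2$ acts transitively on $H\times\{\pm1\}$, which is not automatic from Definition~\ref{def:doubled_rotation_system}). Your contrapositive argument --- a disconnection forces a sign function $c$ constant on faces with $x^1\Phi=\delta$, which added to $x^0\Phi=\vec1+\delta$ yields $\vec1$ and hence checkerboardability of $G$ via Lemma~\ref{lem:checkerboardability}(a) --- cleanly closes this gap and correctly isolates non-checkerboardability (rather than even-degree) as the hypothesis responsible for connectivity. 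The only blemish is a harmless miscount: the vertex $v$ contributes $2\deg(v)$ flags to $H$, so its preimage in $H\times\{\pm1\}$ has $4\deg(v)$ flags splitting into two $\langle\rho',\tau'\rangle$-orbits of size $2\deg(v)$ each, not ``$2\deg(v)$ flags'' total.
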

\begin{proof}
To create the manifold $\mathcal{M}^2$ from the rotation system description in Def.~\ref{def:doubled_rotation_system} (whose notation we import here), we glue the flags together according to $\lambda'$, $\rho'$, $\tau'$. Suppose we map flags in $H'=H\times\{1,-1\}$ to flags in $H$, i.e.~$\Pi(h,j)=h$ for $h\in H$ and $j=\pm1$. This map extends to a map $\pi:\mathcal{M}^2\rightarrow\mathcal{M}$ by simply mapping points in the flag $(h,j)$ to points in flag $h$ in homeomorphic fashion, i.e.~if both $(h,j)$ and $h$ are viewed as identically-sized triangular patches of surface with boundaries associated to the $\lambda$, $\rho$, and $\tau$ involutions, then map the former to latter point-by-point. Note that $\Pi\circ\lambda'=\lambda\circ\Pi$, $\Pi\circ\rho'=\rho\circ\Pi$, and $\Pi\circ\tau'=\tau\circ\Pi$. This establishes that if two flags are glued together in $\mathcal{M}^2$, their images are glued together in $\mathcal{M}$.

We should check that for all open disks $U$, $\pi^{-1}(U)$ is the union of two open disks, thus proving it is a double cover. This is clearly true for open disks contained entirely within a single flag $h$. We also look at open disks that straddle two flags (across a flag's edge) and open disks that straddle more than two flags (because they contain a flag's corner).

Suppose an open disk straddles two flags $h,\lambda h\in H$. Applying $\Pi^{-1}$ to this pair of flags, we get either $(h,1)$ and $\lambda'(h,1)=(\lambda h,1)$ or $(h,-1)$ and $\lambda'(h,-1)=(\lambda h,-1)$. Thus, $\pi^{-1}$ gives two open disks, each straddling one of these pairs of flags in $\mathcal{M}^2$.

The same argument holds for flags $h,\rho h$. Slightly more interesting is the case of flags $h,\tau h$. In this case, if the edge of the graph $G$ containing $h$ is in the defect, $\Pi^{-1}$ gives either $(h,1)$ and $(\tau h,-1)$ or $(h,-1)$ and $(\tau h,1)$. If the edge is not in the defect, then the result is $(h,1)$ and $(\tau h,1)$ or $(h,-1)$ and $(\tau h,-1)$.

Finally, corners of the flags correspond to vertices, edges, and faces of the graphs. An open disk at the corner $\{h,\rho h,\lambda h, \rho\lambda h\}$, a face of the graph, is the most straightforward case since no defect is involved. Here $\Pi^{-1}$ maps the face of $G$ to two faces of $G^2$, namely $\{(h,1),(\rho h,1),(\lambda h,1), (\rho\lambda h,1)\}$ and $\{(h,-1),(\rho h,-1),(\lambda h,-1), (\rho\lambda h,-1)\}$. Likewise, it is not hard to see that an edge (containing $h$) in $G$ maps to two edges (the one containing $(h,1)$ and the one containing $(h,-1)$) in $G^2$, whether or not that edge is part of the defect. For the case of a vertex $v$ in $G$, one must recall the fact that an even number of edges incident to $v$ are part of the defect, Lemma~\ref{lem:defect_decomp}. This is the only place where we are using that $G$ has only even-degree vertices. Suppose $h\in v$. The vertices in $G^2$ that map to $v$ under $\Pi$ are the one containing $(h,1)$ and the one containing $(h,-1)$, and these are distinct because of the aforementioned fact.
\end{proof}

\begin{proof}[Proof of Theorem~\ref{thm:face_width}]
If $G$ is checkerboardable, then notice that $\text{fw}(G)=\frac12\text{sys}(G_{\text{fv}})\le\frac12\text{sys}(G_{\text{dec}})\le\frac12\text{hsys}(G_{\text{dec}})=D$, with the final equality following from Corollary~\ref{cor:checkerboardable_distance}.

Thus, we are left with the non-checkerboardable case. Lemma~\ref{lem:double_cover} says the doubled manifold $\mathcal{M}^2$ is a double cover of the manifold $\mathcal{M}$ on which $G$ is embedded. This means that the universal cover of $\mathcal{M}$ (which is the plane $\mathcal{U}=\mathbb{R}^2$) is also the universal cover of $\mathcal{M}^2$. This implies the existence of three covering maps, the projections $\pi_{21}:\mathcal{M}^2\rightarrow\mathcal{M}$, $\pi_{U2}:\mathcal{U}\rightarrow\mathcal{M}^2$, and $\pi_{U1}:\mathcal{U}\rightarrow\mathcal{M}$ such that $\pi_{U1}=\pi_{21}\pi_{U2}$.

Now, choose a logical operator represented by some homologically non-trivial cycle $c_2$ (in the decoding graph $G^2_{\text{dec}}$) drawn on $\mathcal{M}^2$. Any lift $\pi_{U2}^{-1}(c_2)$ is necessarily a non-closed curve in $\mathcal{U}$. We also see that $c_1=\pi_{21}(c_2)$, which is easily seen to be homeomorphic to a cycle in $G_{\text{fv}}$, also lifts to $\pi_{U2}^{-1}(c_2)=\pi_{U1}^{-1}\pi_{21}(c_2)$, and therefore $c_1$ is non-contractible. 

Finally, there must be some simple sub-cycle of $c_1$, call it $c_1'$, that is also non-contractible. It visits exactly $\frac12\text{len}(c_1')$ vertices and so this is at least the Pauli weight $w$ of the logical operator it represents, $w\ge\frac12\text{len}(c_1')\ge\text{fw}(G)$. Since our arguments apply to all choices of $c_2$ (representing all non-trivial logical operators), the distance of the code is also lower bounded by $\text{fw}(G)$.
\end{proof}

\subsection{Logical operators from trails in the original graph}\label{sec:canon_paths}
In this section, we provide another upper bound on the code distance of surface codes, complementing the bounds in Theorem~\ref{thm:distance_bounds_from_doubled_graph}. This bound comes from identifying logical operators with certain trails in the graph $G$ that defines the surface code. Conveniently, we do not have to work with any derived graph, such as the decoding graph, to obtain these logical operators.

Recall that $\mathcal{T}_0(G)$ is the subgroup of $\mathcal{T}(G)$ that is generated by closed trails and open trails whose endpoints are odd-degree vertices. To each trail in $\mathcal{T}_0(G)$ we assign a logical operator, or equivalently an element of $\mathcal{C}(\mathcal{S})$. We define a homomorphism $\omega:\mathcal{T}_0(G)\rightarrow\mathcal{C}(\mathcal{S})$. It is easy to define $\omega$ using the Majorana code picture from Section~\ref{subsec:maj_codes_on_graphs}. Let $t\in\mathcal{T}_0(G)$. If $t$ is a closed trail, then we define $\omega(t)$ to be the product of all Majoranas on the edges in $t$. If $t$ is an open trail, its endpoints are at odd degree vertices by definition, and we define $\omega(t)$ to be the product of all Majoranas on the edges in $t$ and the two Majoranas at the odd-degree endpoints. By the arguments in Section~\ref{subsec:qub_surface_codes}, these products of Majoranas are equivalent to Paulis in the qubit surface code.  We do not concern ourselves with the sign of the Pauli, and so the products can be taken in arbitrary order. Moreover, the Pauli $\omega(t)$ commutes with all stabilizers by construction and so is in $\mathcal{C}(\mathcal{S})$.

What is the weight of the Pauli $\omega(t)$? If $t$ does not visit a vertex, clearly $\omega(t)$ is not supported on qubits at that vertex. Alternatively, if $t$ traverses all edges adjacent to a vertex, then $\omega(t)$ is also not supported on qubits at that vertex. These are the only vertices lacking support -- $\omega(t)$ is supported on qubits at all other vertices. If the underlying graph has only degree 3 or 4 vertices (higher degree vertices may be decomposed with Corollary~\ref{cor:simplify_graph}), then the weight of $\omega(t)$ is equal to the number of vertices visited exactly once by $t$. Paths by definition visit vertices at most once, so that for a path $p\in\mathcal{T}_0(G)$, the weight of $\omega(p)$ is exactly the number of vertices visited by $p$. From this discussion, we can find the kernel of $\omega$ (for our usual assumption of a connected graph $G$).
\begin{equation}\label{eq:ker_tau}
\ker\omega=\{b\in\mathcal{T}_0(G):\omega(b)=I\}=\{\vec0,\vec1\}.
\end{equation}
This means the map $\omega$ is 2-to-1, ignoring phases on Paulis in $\mathcal{C}(\mathcal{S})$.

When is $\omega(t)$ a non-trivial logical operator? It turns out we can efficiently check this using the notion of checkerboardability and Algorithm~\ref{alg:checkerboard}. With some abuse of notation, we call $t$ non-trivial if $\omega(t)$ is non-trivial.
\begin{theorem}\label{thm:tau(t)_trivial}
Let $t\in\mathcal{T}_0(G)$. Then $\omega(t)\in\mathcal{C}(\mathcal{S})$ is a stabilizer if and only if $G$ is checkerboardable with defect $t$ or with defect $\vec 1+t$.
\end{theorem}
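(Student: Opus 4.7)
My plan is to work in the Majorana picture of Section~\ref{subsec:maj_codes_on_graphs}, where an operator is a qubit stabilizer if and only if it admits a Majorana representative in $\mathcal{S}(R)=\langle S_v,S_f\rangle$. First I would unpack $\omega(t)$ explicitly: viewing $t\in\mathbb{F}_2^{|E|}$ and setting $a_v=\sum_{e\ni v}t_e\in\mathbb{F}_2$, linearity of $\omega$ together with its definition on the closed and open generators of $\mathcal{T}_0(G)$ yields $\omega(t)=\pm\prod_{e:t_e=1}\gamma_e\cdot\prod_{v:a_v=1}\bar\gamma_v$, where $\gamma_e$ denotes the product of the two half-edge Majoranas on edge $e$. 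Since $t\in\mathcal{T}_0(G)$, one has $a_v=0$ at every even-degree vertex, so the $\bar\gamma_v$ factor is really indexed by odd-degree vertices only. The key computation is then to catalogue the Majorana support of a general product $\prod_{v\in V'}S_v\cdot\prod_{f\in F'}S_f$ by the edge-pairing argument used in the proof of Lemma~\ref{lem:stabilizer_dependence}: writing $z\in\mathbb{F}_2^{|V|}$ and $x\in\mathbb{F}_2^{|F|}$ for the indicators of $V'$ and $F'$, a half-edge Majorana sitting at vertex $v$ on edge $e$ appears iff $z_v+(x\Phi)_e=1$, while $\bar\gamma_v$ appears iff $v\in V'$ (automatically with $v$ odd-degree). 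Matching this against the monomial for $\omega(t)$ reduces the theorem to a simple check.

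For the ``if'' direction, suppose first that $G$ is checkerboardable with defect $t$, so $x\Phi=\vec 1+t$ for some $x$. I would take $V'=V$, in which case the half-edge condition becomes $1+(\vec 1+t)_e=t_e$, and the $\bar\gamma_v$ side demands $a_v=1$ at every odd-degree vertex, which is exactly Lemma~\ref{lem:defect_decomp} applied to $t$. If instead $\vec 1+t$ is a defect, so $x\Phi=t$, I would take $V'=\emptyset$: half-edge matching is immediate, and the $\bar\gamma_v$ side demands $a_v=0$ for every $v$, which is Lemma~\ref{lem:defect_decomp} applied to $\vec 1+t$. In either subcase the Majorana supports of $\omega(t)$ and $\prod_{V'}S_v\prod_{F'}S_f$ agree, so $\omega(t)$ is a stabilizer up to sign.

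For the ``only if'' direction, assume $\omega(t)\in\mathcal{S}(R)$ and write $\omega(t)=\pm\prod_{v\in V'}S_v\prod_{f\in F'}S_f$. On any non-loop edge $e$ with endpoints $v,v'$, equating the appearance conditions at the two half-edges gives $z_v+(x\Phi)_e=t_e=z_{v'}+(x\Phi)_e$, hence $z_v=z_{v'}$; connectedness of $G$ then forces $V'\in\{\emptyset,V\}$. The case $V'=\emptyset$ gives $x\Phi=t$, so $\vec 1+t$ is a defect, and the case $V'=V$ gives $x\Phi=\vec 1+t$, so $t$ is a defect; the $\bar\gamma_v$ bookkeeping is automatic in each subcase by Lemma~\ref{lem:defect_decomp}. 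I expect the main obstacle to be the Majorana support computation in the first paragraph — producing a clean ledger of how many times each half-edge and vertex Majorana appears in $\prod_{V'}S_v\prod_{F'}S_f$ — so that everything downstream reduces to one linear identity per edge. A minor subtlety is that we are working up to global sign, both because $-I\notin\mathcal{S}(R)$ by Lemma~\ref{lem:stabilizer_dependence}(g) and because $\omega$ is declared to be defined only up to sign, but this does not affect the support-level bookkeeping.
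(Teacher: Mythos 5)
Your proof is correct, but it is assembled differently from the paper's. The paper stays entirely in the qubit picture, where the stabilizer group is generated by face stabilizers alone, and the whole argument is four lines: since $\omega$ is a homomorphism with $\omega(\Phi_i)=S_i$ and $\omega(\vec1)=I$, a stabilizer is exactly $\omega(x\Phi)$ for some $x$, so $\omega(t)$ is a stabilizer iff $t+x\Phi\in\ker\omega=\{\vec0,\vec1\}$ (Eq.~\eqref{eq:ker_tau}), which is precisely the two defect conditions. You instead descend to the Majorana picture, where the stabilizer group is $\langle S_v,S_f\rangle$, and run an explicit support ledger: the per-half-edge condition $z_v+(x\Phi)_e=t_e$, the identity $z_v=z_{v'}$ across each edge, and connectedness forcing $V'\in\{\emptyset,V\}$. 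This is a from-scratch re-derivation of the same $\{\vec0,\vec1\}$ ambiguity (indeed $\prod_vS_v=\pm\,\omega(\vec1)$, so your $V'=V$ branch is exactly the paper's $\ker\omega\ni\vec1$), at the cost of having to track the vertex stabilizers and the $\bar\gamma_v$ bookkeeping via Lemma~\ref{lem:defect_decomp}, and of implicitly relying on the Majorana-to-qubit dictionary of Appendix~\ref{app:majorana-qubit-code-equivalence} to translate ``qubit stabilizer'' into ``Majorana monomial in $\mathcal{S}(R)$.'' What your version buys is independence from Eq.~\eqref{eq:ker_tau} and a very concrete picture of which stabilizer products realize $\omega(t)$; what the paper's buys is brevity, since all the combinatorics were already packaged into the homomorphism facts stated just before the theorem. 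Both are sound.
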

\begin{proof}
We start with the reverse direction. We are assured the existence of $x\in\mathbb{F}_2^{|F|}$ such that $x\Phi=\vec1+t$ or such that $x\Phi=t$. Let $\Phi_i$ denote the $i^{\text{th}}$ row of $\Phi$. Because $\omega$ is a homomorphism and $\omega(\vec1)=I$, we have
\begin{align}\label{eq:tau_and_checkerboard}
\omega(t)=\omega(x\Phi)=\omega\left(\sum_{i:x_i=1}\Phi_i\right)=\prod_{i:x_i=1}\omega\left(\Phi_i\right),
\end{align}
which is a stabilizer because each $\omega(\Phi_i)$ is the stabilizer $S_i$ in Eq.~\eqref{eq:s_f} associated to the $i^{\text{th}}$ face.

For the forward direction of the theorem, we start with $\omega(t)=\prod_{i:x_i=1}\omega(\Phi_i)$ for some $x\in\mathbb{F}_2^{|F|}$, since the faces generate the stabilizer group by definition. By Eq.~\eqref{eq:tau_and_checkerboard} once again, we get $\omega(t)=\omega(x\Phi)$. This implies $t=x\Phi+b$ where $b\in\ker\omega$. By Eq.~\eqref{eq:ker_tau}, we have $\vec1+t=x\Phi$ or $t=x\Phi$ or, equivalently, $G$ is checkerboardable with defect $t$ or with defect $\vec1+t$.
\end{proof}

To summarize the above, for any graph $G$ embedded in a 2-manifold defining surface code with stabilizer $\mathcal{S}$, we have shown
\begin{equation}
\mathcal{S}\le\omega(\mathcal{T}_0(G))\le\mathcal{C}(\mathcal{S}).
\end{equation}
We now point out a strengthening of this result when $G$ is planar (i.e.~$g=0$).
\begin{theorem}
Let $G$ be a planar graph and $\mathcal{S}$ be the stabilizer of the surface code defined on $G$ according to Definition~\ref{def:qubit_surface_code}. Then $\mathcal{S}\le\omega(\mathcal{T}_0(G))=\mathcal{C}(\mathcal{S})$ (ignoring Pauli's phases).
\end{theorem}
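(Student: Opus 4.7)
The plan is to upgrade the inclusion $\omega(\mathcal{T}_0(G))\le\mathcal{C}(\mathcal{S})$, already established in general, to an equality via a cardinality argument: I will show that modulo Pauli phases the two subgroups have the same order, so equality of the containment follows automatically.

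First I would dispatch the checkerboardable case: when $g=0$ Corollary~\ref{cor:number_encoded_qubits} gives $K=0$, so $\mathcal{C}(\mathcal{S})=\mathcal{S}$ and the chain $\mathcal{S}\le\omega(\mathcal{T}_0(G))\le\mathcal{C}(\mathcal{S})$ collapses immediately. In the non-checkerboardable planar case, the observation at the end of Section~\ref{subsec:checkerboardability} that a planar graph with only even-degree vertices is automatically checkerboardable forces the number $M$ of odd-degree vertices to satisfy $M\ge 2$. Then Corollary~\ref{cor:number_encoded_qubits} gives $K=(M-2)/2$, and summing $N_v$ over $v\in V$ in Definition~\ref{def:qubit_surface_code} yields $N=|E|-|V|+M/2$. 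Consequently, modulo phases $|\mathcal{C}(\mathcal{S})|=2^{N+K}=2^{|E|-|V|+M-1}$. On the other side, $\ker\omega=\{\vec 0,\vec 1\}$ by Eq.~\eqref{eq:ker_tau} gives $|\omega(\mathcal{T}_0(G))|=|\mathcal{T}_0(G)|/2$ modulo phases, so the proof reduces to establishing $\dim_{\mathbb{F}_2}\mathcal{T}_0(G)=|E|-|V|+M$.

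For this dimension computation I would identify $\mathcal{T}_0(G)$ with $\ker A$, where $A:\mathbb{F}_2^{|E|}\to\mathbb{F}_2^{|V|-M}$ sends a trail $t$ to the vector of parities $(a_v(t))$ at even-degree vertices. The inclusion $\mathcal{T}_0(G)\subseteq\ker A$ is immediate from the generating set for $\mathcal{T}_0(G)$, and the reverse is a standard Eulerian decomposition: if $t\in\ker A$ then the subgraph spanned by the edges in $t$ has odd degree only at odd-degree vertices of $G$ (an even number of them, by the handshake lemma), hence it can be written as a union of closed trails together with open trails pairing up those odd-incidence vertices, each of which lies in $\mathcal{T}_0(G)$ by definition. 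To see that $\mathrm{rank}(A)=|V|-M$, I would argue $\ker A^T=0$: any element of this kernel is constant on each connected component of the subgraph induced on $V_{\text{even}}$ and must vanish on any such component containing an even-degree vertex adjacent to an odd-degree vertex; since $G$ is connected and $M\ge 1$, every such component admits such a neighbor (otherwise it would be an entire connected component of $G$ with no odd-degree vertex), forcing the element to be $0$. Combining, $\dim\mathcal{T}_0(G)=|E|-|V|+M$, matching the centralizer count exactly.

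The main obstacle I expect is the identification $\mathcal{T}_0(G)=\ker A$ via Eulerian decomposition together with the rank computation for $A$; the rest of the proof is bookkeeping with the code parameters already established in Corollary~\ref{cor:number_encoded_qubits} and Definition~\ref{def:qubit_surface_code}. It is also worth noting why this argument is specific to genus zero: for $g>0$ the same count yields $|\mathcal{C}(\mathcal{S})|=2^{|E|-|V|+M-1+c}$ with $c\in\{g,2g\}$ strictly positive, so trails in $\mathcal{T}_0(G)$ cannot capture the additional logical operators coming from handles or cross-caps.
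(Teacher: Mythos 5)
Your argument is correct, but it proves the equality by a route genuinely different from the paper's. The paper's proof is constructive: it fixes a spanning tree rooted at an odd-degree vertex $v$, takes the $M-1$ tree-paths $p_w$ to the other odd-degree vertices, observes that the $\omega(p_w)$ pairwise anticommute (they share the Majorana $\bar\gamma_v$ plus an even number of edge Majoranas), and then invokes the bound that an anticommuting set of $K$-qubit Paulis has size at most $2K+1$ together with $K=(M-2)/2$ to conclude these operators, with $\mathcal{S}$, exhaust $\mathcal{C}(\mathcal{S})$. You instead match cardinalities: using $\ker\omega=\{\vec0,\vec1\}$ from Eq.~\eqref{eq:ker_tau} and the identification of $\mathcal{T}_0(G)$ with the kernel of the even-vertex parity map $A$, you get $\dim\omega(\mathcal{T}_0(G))=|E|-|V|+M-1=N+K$, which equals $\dim\mathcal{C}(\mathcal{S})$ modulo phases, so the known inclusion is an equality. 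Both proofs are sound and both lean on Theorem~\ref{thm:number_encoded_qubits}; the paper's version additionally hands you an explicit anticommuting generating set of logical operators (useful elsewhere in Section 5), while yours isolates exactly where planarity enters (the missing $2g$ or $g$ contribution to $K$) and yields the formula $\dim_{\mathbb{F}_2}\mathcal{T}_0(G)=|E|-|V|+M$ as a byproduct. Two small points to tighten: in defining $A$ you should count a loop as contributing its two endpoints (so it contributes $0$ mod $2$), since the Eulerian decomposition and the membership of loops in $\mathcal{T}_0(G)$ require degree parity rather than adjacency parity; and the surjectivity of $A$ needs $M\ge1$ exactly as you note, which in the planar non-checkerboardable case is guaranteed by the converse statement at the end of Section~\ref{subsec:checkerboardability}.
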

\begin{proof}
If $G$ contains no odd-degree vertices, then $K=0$ by Theorem~\ref{thm:number_encoded_qubits} and so $\mathcal{S}=\mathcal{C}(\mathcal{S})=\tau(\mathcal{T}_0(\mathcal{S}))$.

Otherwise, $G$ contains an even number of odd-degree vertices $M\ge2$. Construct a spanning tree of $G$ rooted at one of these odd-degree vertices $v$. There is a path $p_w$ from $v$ to any other odd-degree vertex $w$ contained within the spanning tree, and $\omega(p_w)\in\mathcal{C}(\mathcal{S})$. Notice that for $w\neq w'$, $\omega(p_w)$ anticommutes with $\omega(p_{w'})$, because, written as products of Majoranas, $p_w$ and $p_{w'}$ have odd overlap -- they share edges, two Majoranas each, as well as the Majorana located at odd-degree vertex $v$. Thus, we have a set of $M-1$ logical Paulis $\mathcal{B}=\{\omega(p_w):\text{odd-degree vertex }w\neq v \}$ that mutually anticommute. This implies, because the maximum size of an anticommuting set of $K$-qubit Paulis is $2K+1$ \cite{sarkar-vandenberg2019}, that $K\ge(M-2)/2$. However, $K=(M-2)/2$ is exactly the number of encoded qubits by Theorem~\ref{thm:number_encoded_qubits}, meaning $\mathcal{B}$ is a basis of all logical operators. In turn, this implies $\mathcal{C}(\mathcal{S})=\mathcal{S}\mathcal{B}\le\omega(\mathcal{T}_0(G))$, which completes the proof.
\end{proof}

The rest of this section concerns the actual calculation of an element of $\mathcal{T}_0(G)$ that visits the fewest number of vertices. First, note that there are bases of $\mathcal{T}_0(G)$ that consist only of trails that do not visit any vertex more than once, i.e.~bases of paths. Let the length of a path be the number of vertices it visits. The total length of a path basis is the sum of lengths of all trails in the basis. This implies the existence of a path basis that minimizes the total length of the basis, called a minimum path basis. Such a minimum path basis must also contain the shortest non-trivial path (with non-triviality determined by Theorem~\ref{thm:tau(t)_trivial}). If it did not, we could replace some element of the basis with this shortest non-trivial path, obtaining another path basis with lower total weight.

This discussion of minimum path bases is mirrored for bases of $\mathcal{Z}(G)$, the cycle space of the graph. Likewise, there is a minimum simple cycle basis that consists of simple cycles (cycles without repeated vertices) and minimizes their combined length. Polynomial time algorithms for finding a minimum simple cycle basis have been known since Horton's algorithm \cite{horton1987polynomial}, which has time complexity $O(|E|^3|V|)$ for a general graph $G=(V,E)$. Better algorithms now exist, including a $O(|E|^\Omega)$ time Monte-Carlo algorithm \cite{amaldi2009breaking} (where $\Omega$ is the exponent of matrix multiplication) and $O\left(|E|^2|V|\right)$ \cite{kavitha2004faster} and $O\left(|E|^2|V|/\log|V|+|E||V|^2\right)$ \cite{mehlhorn2009minimum} deterministic algorithms. We can use these algorithms to also find a minimum path basis and efficiently place an upper bound on the code distance.

\begin{theorem}\label{thm:dist_ub}
For embedded graph $G$, let $J$ be the number of vertices in a non-trivial path $p\in\mathcal{T}_0(G)$ visiting the fewest vertices. Then the surface code defined by graph $G$ has distance $D\le J$ and $J$ can be calculated in polynomial time in the size of graph $G$.
\end{theorem}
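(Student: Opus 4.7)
The plan splits into two independent parts: the upper bound $D \le J$, and the polynomial-time computation of $J$.

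For $D \le J$, I would first invoke Corollary~\ref{cor:simplify_graph} to assume without loss of generality that every vertex of $G$ has degree three or four; the induced code (and hence $D$) is unaffected by this decomposition. Now let $p \in \mathcal{T}_0(G)$ be a non-trivial path visiting exactly $J$ vertices. As spelled out in the paragraph preceding the theorem, the weight of $\omega(p)$ equals the number of vertices visited exactly once by $p$, and since $p$ is a path this is precisely $J$. By Theorem~\ref{thm:tau(t)_trivial}, non-triviality of $p$ means $\omega(p) \notin \mathcal{S}$, while $\omega(p) \in \mathcal{C}(\mathcal{S})$ by construction. Hence $\omega(p)$ is a non-trivial logical operator of weight $J$, which forces $D \le J$.

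For the polynomial-time claim, I would follow the recipe sketched immediately before the theorem. First, compute a minimum path basis $\mathcal{P}$ of $\mathcal{T}_0(G)$ by reducing to a minimum cycle basis problem: form an augmented graph $G^+$ by arbitrarily pairing up the odd-degree vertices of $G$ and adding weight-zero auxiliary edges between the pairs, so that $G^+$ has all vertices of even degree. Closed trails of $G^+$ that traverse each auxiliary edge at most once correspond, after removing those auxiliary edges, to elements of $\mathcal{T}_0(G)$ that are paths; a minimum cycle basis of $G^+$ computed by Horton's algorithm \cite{horton1987polynomial} or its successors \cite{amaldi2009breaking,kavitha2004faster,mehlhorn2009minimum} can then be converted in polynomial time into a minimum path basis $\mathcal{P}$ of $\mathcal{T}_0(G)$. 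By the matroid-exchange argument given just before the theorem, a shortest non-trivial path $p^*$ necessarily lies in $\mathcal{P}$. Next, for each $p \in \mathcal{P}$, run Algorithm~\ref{alg:checkerboard} with candidate defects $p$ and $\vec 1 + p$ respectively; by Theorem~\ref{thm:tau(t)_trivial}, $p$ is non-trivial iff both runs fail to validate their candidate. The output $J$ is the minimum length of a non-trivial element of $\mathcal{P}$, and all steps run in polynomial time in the size of $G$.

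The main obstacle will be the minimum-path-basis step: minimum cycle basis algorithms are well-established, but the extension to minimum path bases of $\mathcal{T}_0(G)$ via the auxiliary graph $G^+$ requires care in assigning weights and in extracting paths (rather than arbitrary closed trails) from the cycle-basis output. The matroid exchange step ensuring $p^* \in \mathcal{P}$ relies on the following observation: if $p^* \notin \mathcal{P}$ then $p^* = \sum_{q \in S} q$ for some non-empty $S \subseteq \mathcal{P}$, and at least one $q \in S$ must itself be non-trivial (else $p^*$ would be a sum of stabilizers and hence trivial); such a $q$ satisfies $\mathrm{len}(q) \ge \mathrm{len}(p^*)$, so swapping $q$ for $p^*$ produces a path basis of no greater total length containing $p^*$, forcing $p^* \in \mathcal{P}$ under the minimality assumption.
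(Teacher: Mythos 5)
Your argument for the bound $D\le J$ is essentially the paper's (the paper likewise relies on the weight of $\omega(p)$ equaling the number of vertices $p$ visits, which is valid after the degree-three/four reduction you invoke via Corollary~\ref{cor:simplify_graph}), and your exchange argument showing that a minimum path basis must contain a shortest non-trivial path matches the paper's. The gap is in the reduction to a minimum cycle basis. You augment $G$ with a perfect matching on the odd-degree vertices, i.e.~$M/2$ auxiliary edges, but then the cycle space of $G^+$ has dimension $|E|-|V|+1+M/2$, whereas $\mathcal{T}_0(G)$ has dimension $|E|-|V|+M$ (its cosets over $\mathcal{Z}(G)$ are indexed by the $2^{M-1}$ even-size subsets of the odd-degree vertices). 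For $M\ge4$ these do not agree: a path joining two odd-degree vertices that are not matched to each other has odd incidence at exactly those two vertices, and no combination of your auxiliary edges can close it into a cycle of $G^+$ without creating odd incidence at their matched partners. Hence the set $\mathcal{P}$ you extract spans only a proper subgroup of $\mathcal{T}_0(G)$, is not a basis, and the shortest non-trivial path (e.g.~one joining an unmatched pair of twists, which is exactly the situation in the circle-packing codes with $K\ge2$) can be missed entirely. The paper instead adds an auxiliary edge between \emph{every} pair of odd-degree vertices, $M(M-1)/2$ of them, so that every odd-to-odd path closes into a cycle of the augmented graph; this is the step your construction needs.

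A secondary issue: assigning the auxiliary edges weight zero breaks the identity ``length of the cycle equals number of vertices visited by the corresponding path'' (a cycle with $\ell-1$ real edges plus one unit-weight auxiliary edge has length $\ell$ and yields a path visiting $\ell$ vertices), and it biases the minimum cycle basis toward cycles traversing many auxiliary edges, which correspond to sums of several paths rather than to single paths. With unit weights on the auxiliary edges and the all-pairs augmentation, both problems disappear and the remainder of your argument goes through as in the paper.
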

\begin{proof}
Since $\omega(p)$ is a non-trivial logical operator with weight $J$, clearly $J$ upper bounds the code distance.

It remains to show $J$ is efficient to calculate. Construct a graph $G'$ from $G$ by adding edges between all pairs of odd-degree vertices, a total of $M(M-1)/2$ new edges if there are $M$ odd-degree vertices. Call the set of new edges $E_o$. Find the minimum cycle basis $B$ of $G'$ using one of the efficient algorithms referenced above, e.g.~\cite{horton1987polynomial}. Each cycle $b\in B$ that traverses an edge outside $E_o$ corresponds to a path $q\in\mathcal{T}_0(G)$. In particular, $q$ is formed from $b$ by removing all edges from $E_o$. The length of $b$ is the number of vertices visited by $q$. Thus, by replacing $b$ with $q$ we can turn the minimum simple cycle basis $B$ into a minimum path basis $B_0$. To calculate $J$, we just need to determine whether each path in $B_0$ is non-trivial using Theorem~\ref{thm:tau(t)_trivial} and take the one visiting the fewest vertices.
\end{proof}

Finally, we point out that the upper bound from Theorem~\ref{thm:dist_ub} and the bounds from Theorem~\ref{thm:distance_bounds_from_doubled_graph} are not saturated in general. An example is shown in Fig.~\ref{fig:obstruction}.

\begin{figure}
    \centering
    \includegraphics[width=0.72\textwidth]{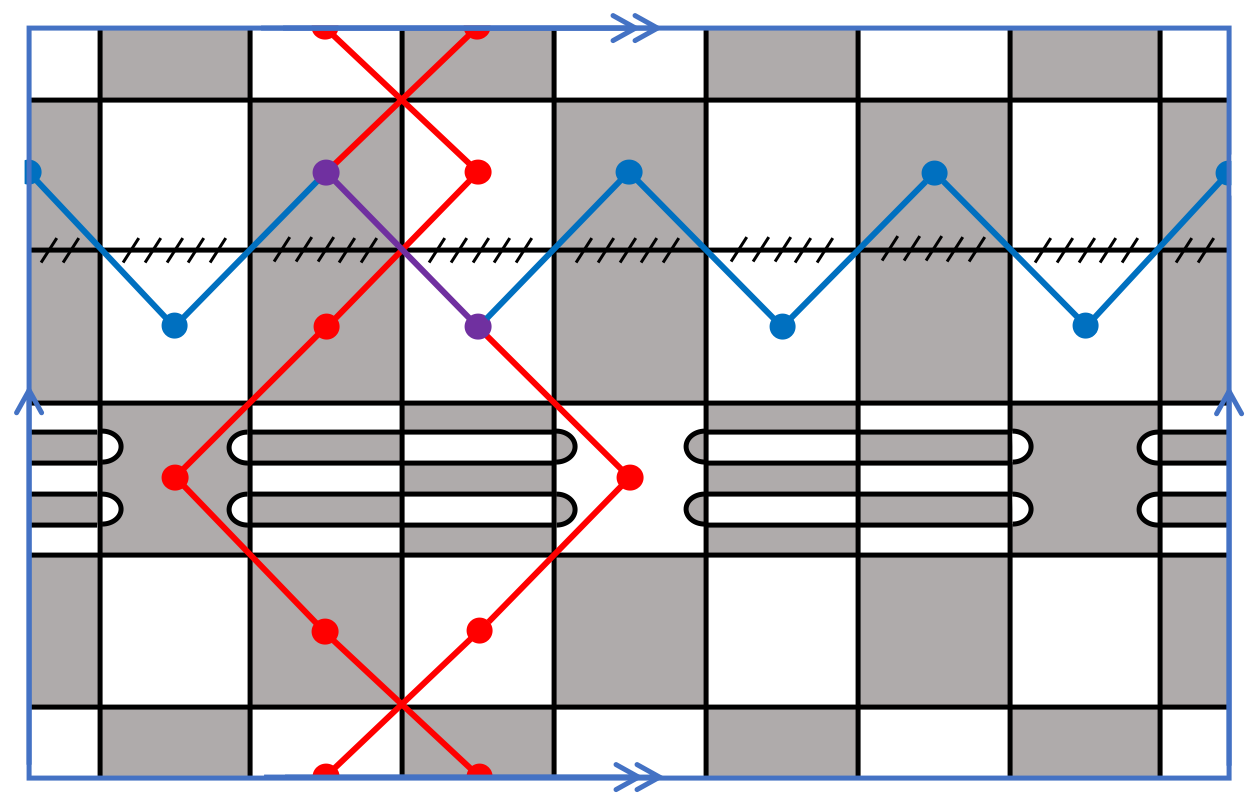}
    \caption{An example graph $G$ in which the bounds in Theorems~\ref{thm:distance_bounds_from_doubled_graph} and \ref{thm:dist_ub} are not saturated. The graph is largely a square lattice on the torus, but with obstructions along one row that force minimum weight logical operators to go around. We checkerboard the graph with stitched edges indicating the defect. The code encodes one logical qubit. The shortest non-trivial cycles in the decoding graph $G_{\text{dec}}$ have length $L=8$; an example is shown in blue (and purple where it overlaps the red cycle). The shortest non-trivial cycles in the original graph $G$ also visit $J=8$ vertices. Thus, Theorem~\ref{thm:dist_ub} proves $4\le D\le 8$. However, the distance of the code actually satisfies $D\le7$, as demonstrated by the red cycle in the decoding graph (which has length $10$). Also, $D\ge5$ follows from Lemma~\ref{lem:disjoint_set_lb} and arguments similar to those in Section~\ref{sec:square_lattice_toric_codes}.
    }
    \label{fig:obstruction}
\end{figure}

\section{Code examples}\label{sec:code_examples}

In this section, we present five example code families. In the first three subsections we consider square lattice toric codes, rotated toric codes (with a subfamily of cylic toric codes first mentioned in Section~\ref{subsec:qub_surface_codes}), and hyperbolic codes. For these codes we make some rigorous arguments about code parameters (especially the code distance) using the theorems from Section~\ref{sec:locating_logical_operators}. In the final subsection, we present two more code families that generalize stellated codes \cite{kesselring2018boundaries}.

\subsection{Rotated toric codes}\label{sec:square_lattice_toric_codes}
As the name suggests, this family of codes is created from simple square lattices on the torus. Up to local Cliffords, these are the codes defined by Wen \cite{wen2003quantum}. Because of checkerboardability properties, the dimensions of the lattice are relevant to $K$ as well as $D$. We let the lattice be $m\times n$, i.e.~it takes $m$ steps to traverse the torus in one direction and $n$ to traverse the other direction. See Fig.~\ref{fig:square_lattice_examples}, for two examples.

\begin{figure}[t]
    \centering
    \includegraphics[width=0.8\textwidth]{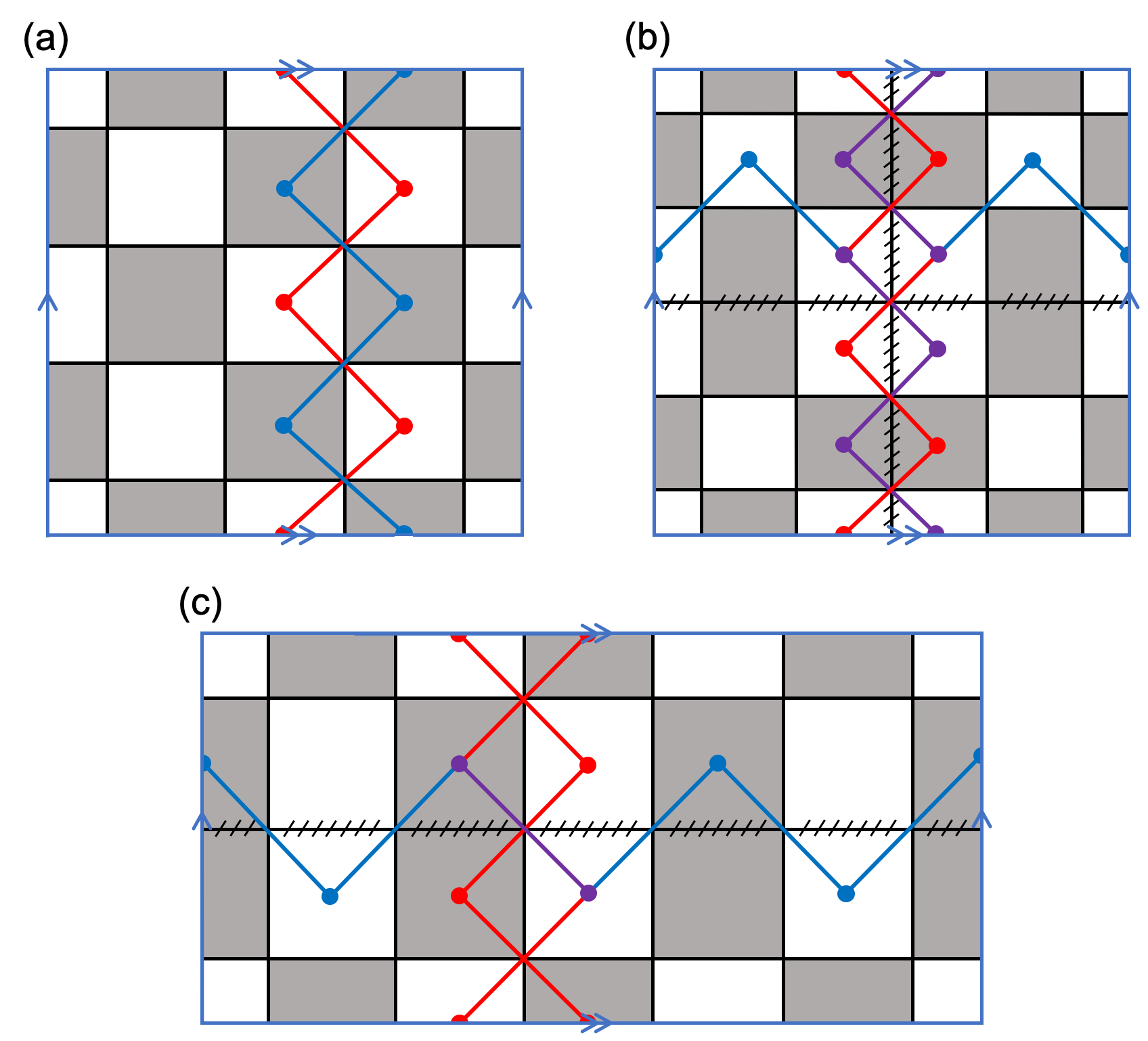}
    \caption{Three examples of \edit{rotated} toric codes. (a) An even$\times$even code can be checkerboarded and encodes two logical qubits. We show two cycles, red and blue, in the decoding graph representing two commuting, non-trivial logical operators, e.g.~$\bar X_1$ and $\bar Z_2$. (b) An odd$\times$odd code can be checkerboarded with a defect (stitched edges) and encodes one logical qubit. We show two cycles, red and blue (with purple edges and vertices where they overlap), representing two anti-commuting logical operators, e.g.~$\bar X$ and $\bar Z$. (c) An odd$\times$even code is also checkerboardable with a defect and encodes one logical qubit.}
    \label{fig:square_lattice_examples}
\end{figure}

Note that a \edit{rotated} toric code is only checkerboardable if both $m$ and $n$ are even. Therefore, by Corollary~\ref{cor:number_encoded_qubits}
\begin{equation}
K=\bigg\{\begin{array}{ll}
2,&m,n\text{ are even},\\
1,&\text{otherwise}.
\end{array}
\end{equation}
Since they are checkerboardable and four-valent, the $K=2$ codes can also be described by the homological code model (see Section~\ref{subsec:relate_to_homological}), while the $K=1$ codes cannot.

What is the code distance of a $m\times n$ \edit{rotated} toric code? It will surprise few to learn that it is
\begin{equation}
D=\min(m,n).
\end{equation}
However, we would like to prove this. We divide the proof into cases -- even$\times$even, odd$\times$even, odd$\times$odd -- though the proof is similar for each case. The general idea is to assume that there is a logical operator $l$ that has weight less than $\min(m,n)$, then show that $l$ commutes with all logical operators of the code, so it must be trivial (i.e.~in the center of the group, i.e.~a stabilizer). Thus, $D$ is at least $\min(m,n)$. Separately, we can show there is some non-trivial logical operator with that weight.

The odd$\times$even case is slightly simpler so we start there. Say $m$ (number of rows) is odd and $n$ (number of columns) is even. There is one logical qubit. Any column $c$ is a cycle, so $\omega(c)$ is a logical operator. Theorem~\ref{thm:tau(t)_trivial} tells us $\omega(c)$ is a non-trivial logical operator, but we can also see this a different way. Let $r$ be any row and $r'$ be any cycle in the decoding graph such that the support of the logical operator $\sigma(r')$ is only within row $r$. Then $\sigma(r')$ and $\omega(c)$ anticommute and so are both non-trivial logical operators. Also, $\omega(c)$, $\sigma(r')$, and the stabilizers generate the entire group of logical operators, since there is only one encoded qubit. Fig.~\ref{fig:square_lattice_examples}(c) shows an example of these two logical operators for $m=3$, $n=6$. Note, we can find such anticommuting pairs for any column $c$ and row $r$. Therefore, if logical operator $l$ has weight less than $\min(m,n)$, there is both a column $c$ and a row $r$ where it is not supported. Thus, it commutes with the entire logical group and must be trivial.

In the even$\times$even case, the only complication is that there are two logical qubits. Non-trivial logical operators can still be confined to a single column $c$ and a single row $r$. There are two different cycles $c'$ and $c''$ in the decoding graph such that support of $\sigma(c')$ and $\sigma(c'')$ lie within column $c$. Likewise, there are two different cycles $r'$ and $r''$. See Fig.~\ref{fig:square_lattice_examples}(a). By their commutation relations, we note that $\sigma(c'), \sigma(c''), \sigma(r'), \sigma(r''),$ and stabilizers generate the entire group of logical operators. Thus, as before, a logical operator $l$ with weight less than $\min(m,n)$ will not have support in one column and row, thus commutes with all logical operators, and so must be trivial.

Finally, in the odd$\times$odd case, there is one logical qubit. Any column $c_1$ is a closed trail and $\omega(c_1)$ a non-trivial logical operator. However, the complementary logical operator is not confined to a single row. Instead, given any row $r$ \emph{and} column $c_2$, there is a cycle $z$ in the decoding graph such that $\sigma(z)$ has support only in $r$ and $c_2$. Moreover, $\omega(c_1)$ and $\sigma(z)$ anticommute no matter the row and columns chosen. Set $c_1=c_2=c$. See Fig.~\ref{fig:square_lattice_examples}(b). Now the same argument from before can be applied: if logical operator $l$ has weight less than $\min(m,n)$, then it is not supported in some column $c$ and some row $r$, but then it commutes with all logical operators and is trivial.

The preceding proof technique can be summarized in a lemma that is potentially useful for lower bounding the code distance of any stabilizer code. If $l$ is a logical Pauli operator of stabilizer code with stabilizer group $\mathcal{S}$, we let $D(l)$ be the minimum weight of an element of the coset $l\mathcal{S}$.
\begin{lemma}\label{lem:disjoint_set_lb}
Let $l$ be a logical Pauli operator for stabilizer code $\mathcal{S}$ on $N$ qubits and suppose there exists positive integer $\mu$ and collections $C_1,C_2,\dots,C_{\mu}\subseteq P(\{0,1,\dots,N-1\})$ (where $P(S)$ denotes the power set of $S$) such that
\begin{enumerate}
    \item For all $C_i$ and all distinct $c,c'\in C_i$, $c\cap c'=\emptyset$,
    \item For all choices of $c_i\in C_i$, there exists $l_0\in l\mathcal{S}$, with $\supp{l_0}\subseteq\bigcup_{i=1}^{\mu}c_i$.
\end{enumerate}
Then, for all logical Paulis $l'$ that anticommute with $l$, $D(l')\ge \min_i|C_i|$.
\end{lemma}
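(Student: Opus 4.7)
The plan is to proceed by contradiction: assume there is a logical Pauli $l'$ anticommuting with $l$ satisfying $D(l') < \min_i |C_i|$, and derive a violation of anticommutation using condition 2. Concretely, let $l'_0 \in l'\mathcal{S}$ be a minimum-weight representative of the coset, so $|\supp{l'_0}| = D(l') < \min_i|C_i|$. I will produce an $l_0 \in l\mathcal{S}$ whose support is disjoint from $\supp{l'_0}$, which will force $l_0$ and $l'_0$ to commute; since stabilizers commute with everything in $\mathcal{C}(\mathcal{S})$, the commutation of $l_0$ and $l'_0$ determines the commutation of $l$ and $l'$, giving the contradiction.

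The key step is a simple pigeonhole-style counting argument using condition 1. Fix $i \in \{1,\dots,\mu\}$. Since the elements of $C_i$ are pairwise disjoint, every qubit index $j \in \supp{l'_0}$ belongs to at most one $c \in C_i$. Therefore the number of sets $c \in C_i$ that meet $\supp{l'_0}$ is at most $|\supp{l'_0}| = D(l') < |C_i|$, so there exists at least one $c_i \in C_i$ with $c_i \cap \supp{l'_0} = \emptyset$. Performing this choice independently for each $i$ yields a tuple $(c_1,\dots,c_\mu)$ with $\bigl(\bigcup_i c_i\bigr) \cap \supp{l'_0} = \emptyset$.

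Now invoke condition 2 on this tuple to obtain $l_0 \in l\mathcal{S}$ with $\supp{l_0} \subseteq \bigcup_i c_i$, so $\supp{l_0} \cap \supp{l'_0} = \emptyset$ and hence $l_0$ commutes with $l'_0$. Writing $l_0 = l s$ and $l'_0 = l' s'$ for $s,s' \in \mathcal{S}$, and using that every stabilizer commutes with every element of $\mathcal{C}(\mathcal{S})$, the commutator of $l_0$ and $l'_0$ equals that of $l$ and $l'$. This contradicts the assumption that $l$ and $l'$ anticommute, completing the proof. The only mildly subtle point is making sure the choice of $c_i$ can be made simultaneously for all $i$, but this is immediate because the selection in each collection is independent; there is no real obstacle beyond the pigeonhole bookkeeping.
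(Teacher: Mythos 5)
Your proof is correct and follows essentially the same route as the paper's: a contradiction argument where the disjointness of each $C_i$ plus the weight bound forces some $c_i$ to miss $\supp{l'_0}$, and condition 2 then produces an $l_0\in l\mathcal{S}$ with support disjoint from $l'_0$, contradicting the anticommutation of the cosets. The only difference is presentational — you spell out the pigeonhole count and the reduction of the commutator of $l_0,l'_0$ to that of $l,l'$, which the paper states more tersely.
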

\begin{proof}
Since logical operators $l$ and $l'$ anticommute, every element of $l\mathcal{S}$ anticommutes with every element of $l'\mathcal{S}$. Suppose, by way of contradiction, there is $l'_0\in l'S$ with weight $|l'_0|<\min_i|C_i|$. For all $i$, because the sets in each $C_i$ are disjoint, there must be a set $c_i\in C_i$ such that $\supp{l'_0}\cap c_i=\emptyset$. However, some $l_0\in l\mathcal{S}$ exists such that $\supp{l_0}\subseteq\bigcup_{i=1}^{\mu}c_i$. Therefore, $\supp{l_0}\cap\supp{l'_0}=\emptyset$. So $l_0$ and $l'_0$ do not anticommute, a contradiction.
\end{proof}

The total distance of the code is $D=\min_{l\in\mathcal{C}(\mathcal{S})\setminus\mathcal{S}}D(l)$. To lower bound the code distance it is sufficient to fix a basis of logical operators for a code, e.g.~$\{\bar X_i,\bar Z_i\}$, and find appropriate collections satisfying the conditions of the lemma for each element of the basis. 

For any \edit{rotated toric} code, we need only two collections, $C_1$ the set of all columns and $C_2$ the set of all rows. With these collections, \emph{every} non-trivial logical operator of the code satisfies the second condition of the lemma (see the constructions of these operators in Fig.~\ref{fig:square_lattice_examples}), and we immediately get the lower bound $D\ge\min(m,n)$.

One thing to note when applying the lemma, is that because of the conditions on the collections $C_i$, $N$ must be at least $D^2/\mu$. Thus, the lemma is unlikely to be useful for codes with $N$ scaling better than a constant times $D^2$. Luckily, our codes must scale this way due to the Bravyi-Poulin-Terhal bound \cite{bravyi2010tradeoffs}. For instance, symmetric \edit{rotated toric codes} with $m=n=D$ satisfy $N=KD^2$ if $D$ is odd and $N=\frac12KD^2$ if $D$ is even. Because of the factor of $1/2$, the latter family has twice the code rate of the rotated surface code, Fig.~\ref{fig:stellated_codes}(b).

\subsection{General rotated toric codes}\label{sec:rotated_toric_codes}
In this subsection we focus on a generalization of toric codes introduced by Kovalev and Pryadko \cite{kovalev2012improved} and prove Theorem~\ref{thm:cyclic_toric_codes}. These codes are defined by two vectors, $L_1=(a_1,b_1)$ and $L_2=(a_2,b_2)$, where $a_i,b_i$ are integers. Consider the infinite square lattice, and equate points $x,y\in\mathbb{R}^2$ if
\begin{equation}\label{eq:cyclic_toric_cover}
x-y\in S(L_1,L_2):=\{m_1L_1+m_2L_2:m_1,m_2\in\mathbb{Z}\}.
\end{equation}
The vertices and edges of the infinite square lattice now map, using this equivalence, to vertices and edges of a finite graph $G(L_1,L_2)$ embedded on the torus. With single qubits at each vertex and stabilizers associated to faces, $G(L_1,L_2)$ defines a surface code via Def.~\ref{def:qubit_surface_code}. For instance, if $b_1=0$ and $a_2=0$, these codes are just the square lattice toric codes discussed in the previous section. Referring back to Section~\ref{subsec:covers}, the infinite square lattice is acting as the universal cover of the graph $G(L_1,L_2)$ on the torus, with the covering map described by equating points as in \edit{Eq.~\eqref{eq:cyclic_toric_cover}}.

Some of these rotated toric codes are equivalent despite being defined by different vectors $L_1$ and $L_2$. The sets $S(L_1,L_2)$ and $S(L_1',L_2')$ are equal if and only if $L_i'=g_{i1}L_1+g_{i2}L_2$ for some integer-valued matrix $g$ with $\det(g)=\pm1$ \cite{kovalev2012improved}. In these cases, the codes associated to graphs $G(L_1,L_2)$ and $G(L_1',L_2')$ are also equivalent.

Code parameters $N$ and $K$ are relatively obvious given our general framework. First, $N$ can be calculated as the area of the parallelogram with sides $L_1$ and $L_2$, or \cite{kovalev2012improved}
\begin{equation}
N=|L_1\times L_2|=|a_1b_2-b_1a_2|.
\end{equation}
As also pointed out in \cite{kovalev2012improved}, checkerboardability of the graph $G(L_1,L_2)$ depends on the parities of $\|L_1\|_1=a_1+b_1$ and $\|L_2\|_1=a_2+b_2$. A checkerboard coloring of the infinite square lattice can be mapped to a checkerboard coloring of the finite graph $G(L_1,L_2)$ if and only if, for all points $P\in\mathbb{R}^2$, the faces at points $P$, $P+L_1$, and $P+L_2$ are colored the same. Moreover, those points are colored the same if and only if $\|L_1\|_1$ and $\|L_2\|_1$ are both even integers. Thus, we have
\begin{equation}
K=\bigg\{\begin{array}{ll}
1,&\|L_1\|_1\text{ or }\|L_2\|_1\text{ is odd,}\\
2,&\|L_1\|_1\text{ and }\|L_2\|_1\text{ are even.}
\end{array}
\end{equation}

With regards to distance $D$, we first relate the (non)-triviality of logical operators to the topological (non)-triviality of cycles in the decoding graph. To do this for non-checkerboardable codes, we need to use the doubled graph of Section~\ref{sec:doubled_graphs}. Even the doubled graph does not resolve the issue that cycle length in the non-checkerboardable code's decoding graph is not necessarily the Pauli weight (reflected in the gap in the upper and lower bounds of Theorem~\ref{thm:distance_bounds_from_doubled_graph}), and the next step is to resolve this.

Consider first the checkerboardable rotated toric codes as also considered in \cite{kovalev2012improved}. In this case, the faces of $G(L_1,L_2)$ are two-colorable, black and white. The decoding graph resolves into two connected components which are duals of one another: the black component has vertices associated to only black faces and the white component has vertices associated to only white faces. As both these connected components embed naturally onto the torus, with qubits associated to edges and stabilizers to vertices and faces, the homological description of toric codes applies \cite{kitaev2003fault}. We conclude that a cycle in the decoding graph is non-trivial if and only if it is homologically non-trivial, or equivalently, when mapped back to the infinite square lattice, it connects a point $P$ with the point $P+m_1L_1+m_2L_2$ where at least one of $m_1$ or $m_2$ is odd. Edges in the decoding graph follow the vectors $(1,1)$, $(1,-1)$, $(-1,1)$, or $(-1,-1)$. Therefore, to get from $P$ to $P+m_1L_1+m_2L_2$ takes $\|m_1L_1+m_2L_2\|_{\infty}$ steps in the decoding graph. We conclude the code distance is \cite{kovalev2012improved}
\begin{equation}\label{eq:cyc_ckr_D}
D=\min_{\substack{m_1,m_2\in\mathbb{Z}\\(m_1,m_2)\neq(0,0)}}\|m_1L_1+m_2L_2\|_{\infty}.
\end{equation}

We divide the non-checkerboardable case into two sub-cases. For now, assume $\|L_1\|_1$ is odd but $\|L_2\|_2$ is even. We claim that the decoding graph of $G(L_1,L_2)$, call it $G_{\text{dec}}$ is isomorphic to either connected component of the decoding graph of the doubled graph $G(2L_1,L_2)$, a checkerboardable code. If one believes that $G(2L_1,L_2)$ is legitimately a doubled graph of $G(L_1,L_2)$, then this isomorphism follows from Theorem~\ref{thm:distance_bounds_from_doubled_graph}. However, we can be more explicit in the argument for this example.

Therefore, let us show that $G_{\text{dec}}$ and the black component $G'_{\text{dec}}$ of the decoding graph of $G(2L_1,L_2)$ are isomorphic. Showing it for the white component is analogous. To establish some coordinates, we imagine that graph $G(L_1,L_2)$ lies within the parallelogram $P$ with corners $(0,0)$, $L_1$, $L_1+L_2$, and $L_2$. Similarly, the graph $G(2L_1,L_2)$ lies within the parallelogram $P'$ with corners $(0,0)$, $2L_1$, $2L_1+L_2$, and $L_2$. See Fig.~\ref{fig:odd_rotated_codes}(a).

\begin{figure}[t]
    \centering
    \includegraphics[width=\textwidth]{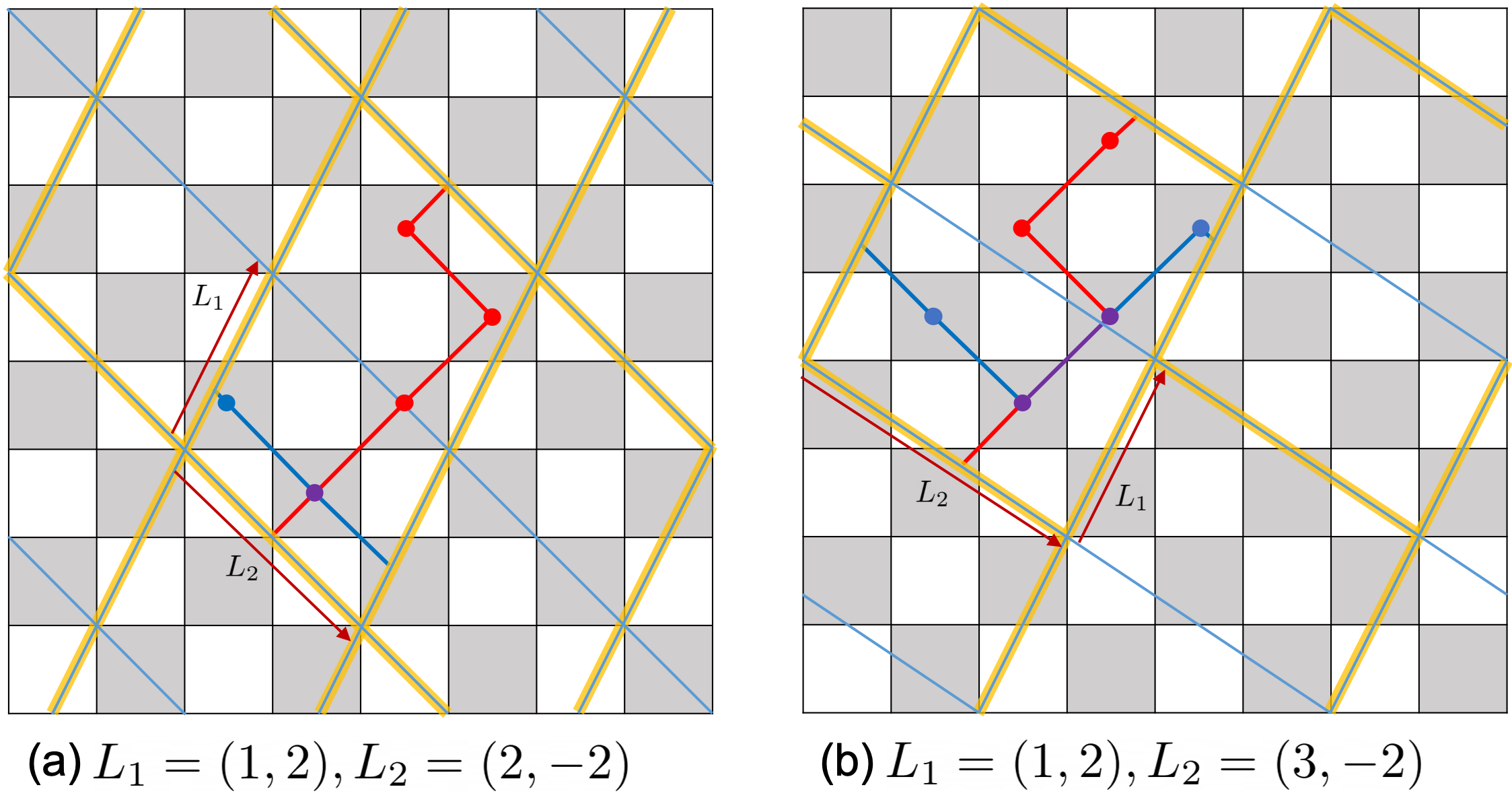}
    \caption{Regular tessellations of the infinite square lattice. Inside a single cell of a tessellation, the subgraph of the infinite square lattice defines a toric code. In fact, both images show two tessellations, one outlined in blue and one outlined in yellow. The blue tessellations have periodicity vectors $L_1$ and $L_2$ and define non-checkerboardable rotated toric codes. As argued in the text, the decoding graphs of these non-checkerboardable codes are isomorphic to a connected component of the decoding graphs of the corresponding checkerboardable codes defined by the yellow tessellations. Some paths (blue and red, purple where they overlap) in the yellow codes' decoding graphs are shown, which, when mapped to a single cell of the blue tessellation, are anticommuting logical operators.}
    \label{fig:odd_rotated_codes}
\end{figure}

Consider the graphs $G_{\text{dec}}$ and  $G'_{\text{dec}}$ as a sets of points making up their edges and vertices. The idea is that any point $p$ in $G'_{\text{dec}}$ that is also in the parallelogram $P'\setminus P$ should be mapped to $p-L_1$. This places it within $P$. It is easily seen that this map takes the points of $G'_{\text{dec}}$ to the points of $G_{\text{dec}}$ in one-to-one fashion and is a graph isomorphism.

The other crucial property of the map is that any cycle in $G_{\text{dec}}$ that corresponds to single face stabilizer maps to a cycle in $G'_{\text{dec}}$ corresponding to a single face stabilizer, and vice versa. In this way, we know that all trivial (both topologically and logically) cycles in $G'_{\text{dec}}$ map to (logically) trivial cycles in $G_{\text{dec}}$. The cycle space of $G'_{\text{dec}}$ is generated by these trivial cycles as well as two non-trivial cycles (because it is embedded on a torus). When mapped to $G_{\text{dec}}$, these two non-trivial cycles must be anticommuting logical operators of the code.

Now, we start the other sub-case of non-checkerboardable codes. Say both $\|L_1\|_1$ and $\|L_2\|_1$ are odd. To come up with a related checkerboardable code, we have to alter the rotated toric code construction slightly. Let $G_E(L_1,L_2)$ be the graph defined by equating points $x,y\in\mathbb{R}^2$ in the infinite square lattice such that
\begin{equation}
x-y\in S_E(L_1,L_2):=\{m_1L_1+m_2L_2:m_1,m_2\in\mathbb{Z},m_1+m_2\text{ is even}\}.
\end{equation}
This corresponds to tiling the plane with offset parallelograms, or hexagons with opposite sides identified, see Fig.~\ref{fig:odd_rotated_codes}(b). Clearly, $G_E(L'_1,L'_2)$ is always a checkerboardable code for any $L'_1$ and $L'_2$. We claim that either connected component of the decoding graph of $G_E(L_1,L_2)$ is isomorphic to the decoding graph of $G(L_1,L_2)$.

Again, let $G_{\text{dec}}$ be the decoding graph of $G(L_1,L_2)$ and $G'_{\text{dec}}$ be the black connected component of the decoding graph of $G_E(L_1,L_2)$. As before, we can imagine $G(L_1,L_2)$ lying within the parallelogram $P$ and $G_E(L_1,L_2)$ lying within the parallelogram $P'$. The isomorphism is also the same map. Each point $p$ in $P'\setminus P$ and in $G'_{\text{dec}}$ gets mapped to a point $p-L_1$ in $P$. Once again this maps trivial cycles to trivial cycles. The two remaining generators of the cycle space of $G'_{\text{dec}}$, when mapped to $G_{\text{dec}}$, are the anticommuting logical operators of the non-checkerboardable code.

With this setup, we finish the proof of Theorem~\ref{thm:cyclic_toric_codes}.
\begin{proof}[Proof of Theorem~\ref{thm:cyclic_toric_codes}]
The nontrivial part of Theorem~\ref{thm:cyclic_toric_codes} that remains is the proof of the distance. Here, in the special case of cyclic codes $L_1=(a,b)$ and $L_2=(-b,a)$ with $b>a>0$ and $\text{gcd}(a,b)=1$. For checkerboardable codes, those in part (b) of the theorem, Eq.~\eqref{eq:cyc_ckr_D} applies and simplifies to $D=\max(a,b)=b$, as claimed.

The non-checkerboardable case, in which both $\|L_1\|_1$ and $\|L_2\|_1$ are odd, makes up the bulk of this proof. Above, we characterized the nontrivial logical operators as topologically nontrivial cycles in the decoding graph of a related checkerboardable code. We will take this a step further and change coordinates to align with the decoding graph itself. This amounts to rotating and re-scaling $L_1$ and $L_2$.
\begin{align}
L_1^{\text{dec}}&=\frac12\left(\begin{array}{cc}1&1\\-1&1\end{array}\right)L_1=\left(\frac12(b+a),\frac12(b-a)\right),\\
L_2^{\text{dec}}&=\frac12\left(\begin{array}{cc}1&1\\-1&1\end{array}\right)L_2=\left(-\frac12(b-a),\frac12(b+a)\right).
\end{align}
Again, points $x,y\in\mathbb{R}^2$ are equated if $x-y\in S_E(L_1^{\text{dec}},L_2^{\text{dec}})$. It is now the decoding graph, rather than the doubled graph, that is represented by applying this equivalence to the square lattice.

We look at topologically nontrivial paths in the decoding graph. By symmetry, we just consider two cases: (1) paths from the origin $O=(0,0)$ to $2L_1^{\text{dec}}$ and (2) paths from $O$ to $L_1^{\text{dec}}+L_2^{\text{dec}}$. The shortest path from point $A\in\mathbb{R}^2$ to $B\in\mathbb{R}^2$ clearly has length equal to $\|B-A\|_1$. The complication is that, due to the graph isomorphism between non-checkerboardable and checkerboardable decoding graphs, edges located at points $e_1,e_2\in\mathbb{R}^2$ act on the same qubit (one acts with Pauli $X$ say, and the other with Pauli $Z$) if
\begin{equation}\label{eq:XZ_edge_correspondence}
e_2-e_1\in T(L_1^\text{dec},L_2^{\text{dec}}):=\{m_1L_1^{\text{dec}}+m_2L_2^{\text{dec}}:m_1,m_2\in\mathbb{Z},m_1+m_2\text{ is odd}\}.
\end{equation}
Thus, the Pauli weight of a path is at most $\|B-A\|_1$ but may be less. Eq.~\eqref{eq:XZ_edge_correspondence} implies that if $e_1$ is horizontal then $e_2$ is vertical and vice-versa. So if a path has, say, more horizontal edges than vertical edges, at most each vertical edge may be acting on the same qubit as a horizontal edge, and thus the Pauli weight is at least $\|B-A\|_{\infty}$. In summary, if we let $|P(A,B)|$ denote the Pauli weight of a minimum weight path from $A$ to $B$, we have shown
\begin{equation}\label{eq:path_weight_bounds}
\|B-A\|_{\infty}\le|P(A,B)|\le\|B-A\|_1.
\end{equation}
Moreover, because adding an edge to a path never decreases the Pauli weight, there is a path with minimum Pauli weight that also has minimum length, namely, length $\|B-A\|_1$.

For case (1), we argue that the path with lowest Pauli weight has weight $|P(O,2L_1^{\text{dec}})|=\|2L_1^{\text{dec}}\|_{\infty}=a+b$, matching the lower bound. We need only demonstrate an explicit path from $O$ to $2L_1^{\text{dec}}=(a,b)$ with this weight. We can describe this with a string of symbols $N$ and $E$ representing moving north and east, respectively. So, for instance, $NENE=(NE)^2$ would mean moving north, then east, then north, then east. The path with minimum Pauli weight is then
\begin{equation}\label{eq:NE_sequence}
({\color{red}NE})^{\frac12(b-a-1)}{\color{red}N}E^a({\color{blue}EN})^{\frac12(b-a-1)}{\color{blue}E}E^{a}.
\end{equation}
Edges of opposite orientation act on the same qubit if they are separated in this sequence by $\frac12(b+a-1)$ $E$ edges and $\frac12(b-a-1)$ $N$ edges. For instance, the red edges in the sequence \eqref{eq:NE_sequence} act on the same $b-a$ qubits as the blue edges. Thus, while the sequence's length is $2b$, its Pauli weight is $2b-(b-a)=a+b$, as claimed. See Fig.~\ref{fig:cyc_toric_dec_graph} for an example.

\begin{figure}[t]
    \centering
    \includegraphics[width=0.65\textwidth]{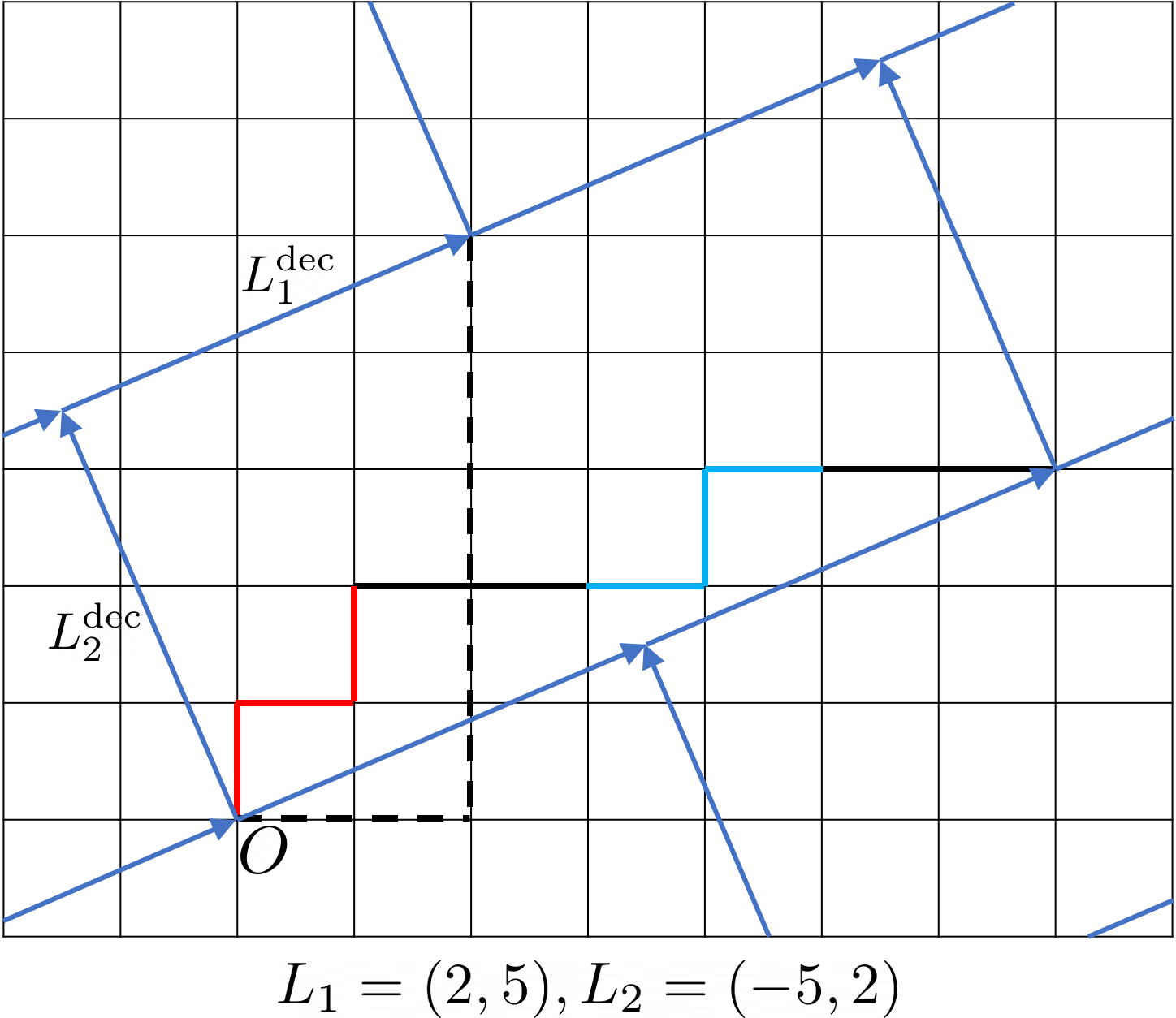}
    \caption{Illustrating the proof of Theorem~\ref{thm:cyclic_toric_codes} with a specific example. We show a path from $O$ to $O+2L_1^{\text{dec}}$ consisting of red, black, and blue edges. Each red edge can be paired with a blue edge that is located a vector $L_1^{\text{dec}}$ away and acts on the same qubit. The dashed lines show a path from $O$ to $O+L_1^{\text{dec}}+L_2^{\text{dec}}$ with minimal Pauli weight in which no two edges act on the same qubit. The logical operators represented by the solid and dashed paths anticommute.}
    \label{fig:cyc_toric_dec_graph}
\end{figure}

For case (2), we argue that the path with lowest Pauli weight has weight $|P(O,L_1^{\text{dec}}+L_2^{\text{dec}})|=\|L_1^{\text{dec}}+L_2^{\text{dec}}\|_1=a+b$, matching the upper bound on path weight. Therefore, we must reason that, in going from $O$ to $L_1^{\text{dec}}+L_2^{\text{dec}}=(a,b)$ one can take no advantage of edges acting on the same qubit. According to the comment just below Eq.~\eqref{eq:path_weight_bounds}, we can restrict our investigation to minimum length paths. We argue that, for any $t\in T(L_1^\text{dec},L_2^{\text{dec}})$ and any edge $e$, no minimum length path contains both $e$ and $e+t$. If we show this claim for $t$, it implies the claim for $-t$. So assume $t=m_1L_1^{\text{dec}}+m_2L_2^{\text{dec}}=(t_x,t_y)$ and that, without loss of generality, $m_1>m_2$. Note, because $t$ associates edges of opposite orientation, both $t_x$ and $t_y$ are half-integers and so necessarily nonzero.

We introduce three easily-verified conditions that preclude any shortest path from containing both edge $e$ and $e+t$. If (A) $|t_x|>a$ or (B) $t_xt_y<0$ or (C) $|t_y|>b$, then the shortest paths are restricted in this way and the shortest path's length equals its Pauli weight. We show all $t\in T(L_1^\text{dec},L_2^{\text{dec}})$ fall into one of these two cases.

First, assume $m_1>0$. Then, because $m_1>m_2$ and $b>a>0$, $t_x>m_1a\ge a$. This falls into case (A). Next, assume $m_2<m_1\le0$. We find $t_y\le -a/2<0$. If $t_x>0$, case (B) applies and we are done. So, suppose $t_x<0$. Then $(m_1+m_2)<-(m_1-m_2)b/a$, which implies $|t_y|/b>(m_1-m_2)(a^2+b^2)/2ab>1$, and we have case (C).
\end{proof}

\subsection{New hyperbolic codes}
\label{sec:improved_hyperbolic_codes}

In this section, we discuss how one can construct hyperbolic quantum codes using Definition~\ref{def:qubit_surface_code}. We restrict to regular tilings of hyperbolic space, and as a result the set of codes we create here is incomparable with the set of regular homological codes created in, for instance, \cite{breuckmann2016constructions,breuckmann2017hyperbolic,breuckmann2017homological}. Codes in our set that are not in the regular homological set include non-checkerboardable codes, as well as some checkerboardable codes defined on graphs with even vertex degrees larger than four. Codes in the homological set not included in our set include those with face (e.g.~$Z$-type) and vertex (e.g.~$X$-type) stabilizers of different weight. In principle, using our code construction on irregular tilings would reproduce all homological hyperbolic codes and more, but irregular graphs are more difficult to study systematically than regular graphs, and so we do not consider them here.

The automorphism group $\text{Aut}(R)$ of a rotation system $R=(H,\lambda,\rho,\tau)$ is the group of all permutations of $H$ that commute with $\lambda$, $\rho$, and $\tau$ \cite{nedela2001regular}. Because of the transitivity of the monodromy group $M(R)$ on $H$, knowing where $\kappa\in\text{Aut}(R)$ sends a single flag $h\in H$ is enough to fix the action of $\kappa$ on all of $H$. Therefore, the largest the automorphism group can be is $|H|$. If $|\text{Aut}(R)|=|H|$, then the rotation system is \textit{regular}, and it represents a regular graph, i.e.~all vertices have the same degree, say $n$, and all faces have the same number of adjacent edges, say $m$. These graphs are called $(m,n)$-regular. A simple calculation and Eq.~\eqref{eq:Eulers_formula} shows
\begin{equation}\label{eq:regular_Eulers_formula}
\frac12\chi=\left(\frac1m+\frac1n-\frac12\right)|E|.
\end{equation}

All flags in regular rotation systems are equivalent up to automorphism. Therefore, it is sensible that there is a description without an explicit set of flags $H$. This description is realized by first specifying a free group along with identity relations on its generators:
\begin{equation}
\mathcal{F}=\langle\lambda,\rho,\tau|\lambda^2=\rho^2=\tau^2=(\lambda\tau)^2=(\lambda\rho)^m=(\rho\tau)^n=1\rangle.
\end{equation}
However, only if $\chi>0$, or equivalently, $1/m+1/n>1/2$, is the group $\mathcal{F}$ finite. Additional identities are necessary to compactify infinite graphs. Compactified regular graphs are in one-to-one correspondence with normal subgroups of $\mathcal{F}$ with finite index, and moreover these take the form of free groups with additional identities $r_i$ imposed on the generators, e.g.~
\begin{equation}\label{eq:free_group_subgroup}
\mathcal{F}_c=\langle\lambda,\rho,\tau|\lambda^2=\rho^2=\tau^2=(\lambda\tau)^2=(\lambda\rho)^m=(\rho\tau)^n=1\text{ and }r_i(\lambda,\rho,\tau)=1,\forall i\rangle.
\end{equation}

To find normal subgroups of $\mathcal{F}_c$, one can use a computer algebra package such as GAP \cite{GAP4,GAP_LINS}. Appendix~\ref{app:GAP_example} gives an example of using GAP for this purpose. We tabulate some small examples of non-checkerboardable hyperbolic codes in Table~\ref{tab:noncheckerboardable_hyperbolic_codes} and draw some explicit examples in Fig.~\ref{fig:hyperbolic_codes}.

\begin{table}[]
    \centering
    \begin{tabular}{|c|c|c|c|c|c|}
         \hline
         Tiling & $N$ & $K$ & $D$ & Generators & Orientable? \\ \hline
         $(5,4)$ & 20 & 5 & 4 & $(\rho\tau\rho\lambda)^4, (\rho\lambda\tau)^5$ & No \\ \hline
         \multirow{4}{*}{\edit{$(4,6)$}} & 6 & 3 & 2 & $(\rho\lambda\tau)^3$ & No \\ \cline{2-6}
         & 15 & 6 & 2 & $(\rho\tau\rho\lambda)^3, (\rho\lambda\tau)^5$ & No \\ \cline{2-6}
         & 24 & 9 & 3 & $(\rho\lambda\tau\rho\lambda)^3, (\rho\tau\rho\lambda)^4,(\rho\lambda\tau)^6$ & No \\ \cline{2-6}
         & 30 & 11 & 3 & $(\rho\tau\rho\lambda)^3$ & Yes \\ \hline
    \end{tabular}
    \caption{Some very small hyperbolic codes on non-checkerboardable graphs. The $(5,4)$ example in particular uses fewer qubits than any distance four hyperbolic code in \cite{breuckmann2017homological}.}
    \label{tab:noncheckerboardable_hyperbolic_codes}
\end{table}

\begin{figure}
    \centering
    \includegraphics[width=0.6\textwidth]{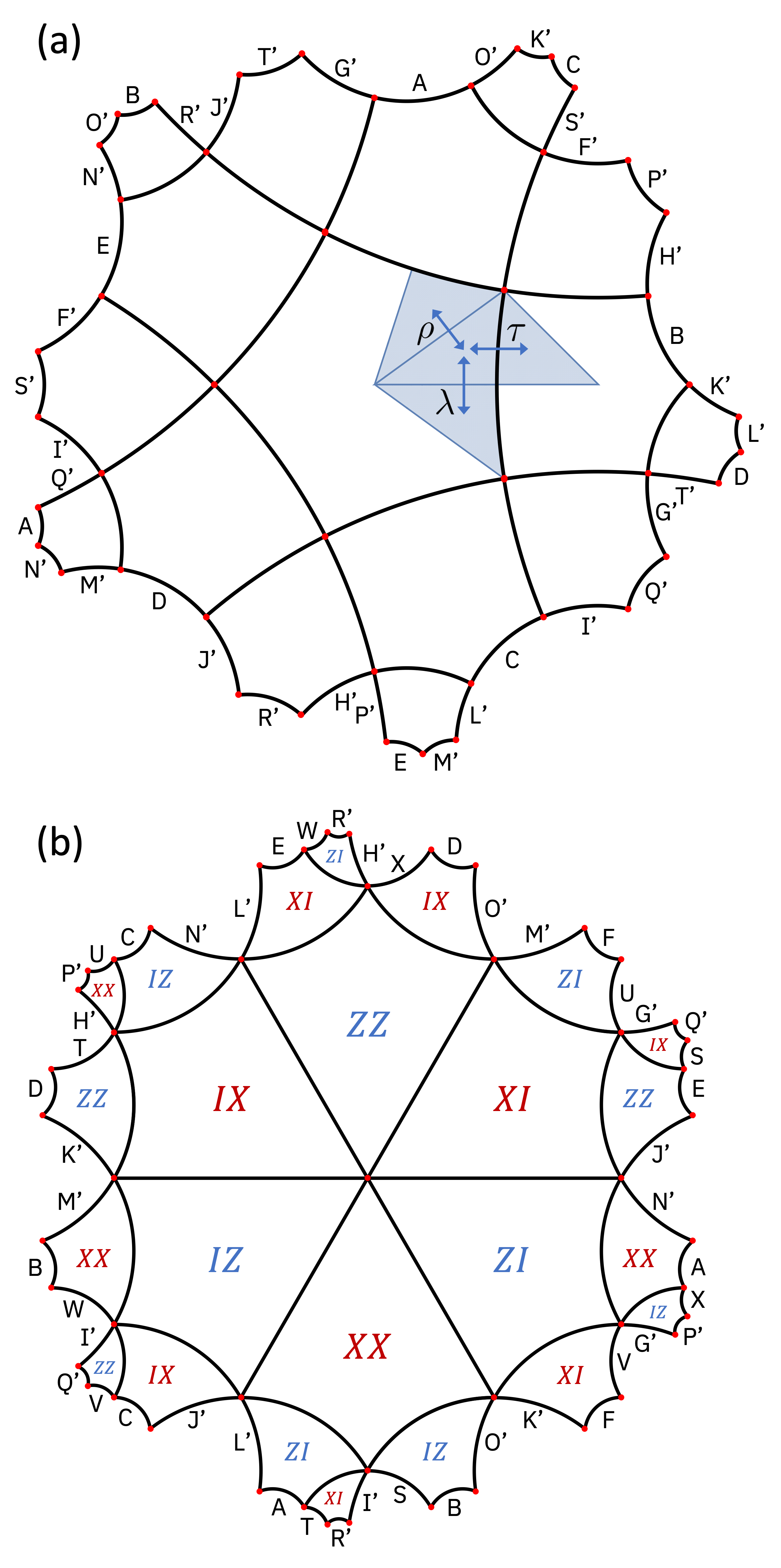}
    \caption{(a) A $\llbracket20,5,4\rrbracket$ hyperbolic code from a $(5,4)$-regular tiling of a genus six non-orientable manifold. (b) A $\llbracket32,10,4\rrbracket$ hyperbolic code from a $(4,6)$-regular tiling of a genus ten non-orientable manifold using the relations $(\lambda\tau\rho\tau\rho)^3=(\tau\rho\lambda\rho)^4=(\rho\lambda\tau)^6=1$. As indicated by letters, edges are identified to compactify the tilings. Non-primed lettered edges are matched orientably, i.e.~one edge is directed clockwise, the other counterclockwise. Primed lettered edges are matched non-orientably, i.e.~both edges directed clockwise. In (a), example actions of the generator permutations are shown as well. In (b), because the code distance depends on the choice and orientation of the CALs, we label each face with the Pauli that should act on each vertex of the face.}
    \label{fig:hyperbolic_codes}
\end{figure}

Of course, non-checkerboardable codes like those in Table~\ref{tab:noncheckerboardable_hyperbolic_codes} and Fig.~\ref{fig:hyperbolic_codes}(a) are not equivalent to homological hyperbolic codes. What is less obvious is that even some checkerboardable codes, like the $\llbracket32,10,4\rrbracket$ code shown in Fig.~\ref{fig:hyperbolic_codes}(b), are not equivalent to homological hyperbolic codes on \emph{regular} tilings (though they are equivalent to a homological code on some irregular tiling). To see this, suppose we define a surface code in our framework on a $(m,n)$-regular tiling with $m>4$ being even. Since $m>4$ this is not a medial graph of any other embedded graph. However, we should also show that even after we decompose vertices into degree-four vertices using Fig.~\ref{fig:spinal_expansion} the graph is still not a medial graph. Suppose it is, so that after the vertex decomposition we have a $(m',4)$-regular tiling. If before the decomposition we have $V$ vertices and $F$ faces, after the decomposition there are $V'=V+(n-4)V/2=(n-2)V/2$ vertices and $F$ faces. By regularity of the tilings, $F=nV/m=4V'/m'$. This implies $m'=(2n-4)m/n$, which must be an integer. When $(2n-4)m/n$ is not an integer, it is not possible to end up with a medial graph after the vertex decomposition. An example is when $n=4$ and $m=6$, as for the code in Fig.~\ref{fig:hyperbolic_codes}(b).

Finally, we point out that the face-width lower bound on code distance, Theorem~\ref{thm:face_width}, is applicable to the hyperbolic codes in this section. Using it and explicitly finding logical operators with that weight, we can verify the distances of the non-checkerboardable codes in Table~\ref{tab:noncheckerboardable_hyperbolic_codes}. The $\llbracket32,10,4\rrbracket$ code in Fig.~\ref{fig:hyperbolic_codes}(b), being checkerboardable, has distance lower bounded by the shortest homologically non-trivial cycle in its decoding graph, Corollary~\ref{cor:checkerboardable_distance}. This gives $3\le D$, or, in other words, any non-trivial logical operator acts on qubits at three different vertices. However, we argue it must act on both qubits at at least one of these vertices, and thus be at least weight four. Focus on the $X$-type logical operators; the $Z$-type are analogous. Note each vertex is surrounded by three $Z$-type faces, $IZ$, $ZI$, and $ZZ$. Thus, the homologically non-trivial cycle with length three must visit a face of each type. At the vertex where it passes from the $IZ$ to the $ZI$ type face, it is supported on both qubits, namely as $XX$. Therefore, the minimum weight logical operator is at least weight four, and it is easy enough to find one with this weight.

\subsection{Two more ways to generalize the triangle code}\label{sec:generalized_triangle}
In \cite{yoder2017surface}, the authors presented the family of triangle codes; the distance 5 version is pictured here in Fig.~\ref{fig:stellated_codes}(a). This family has been generalized by Kesselring et al.~to the stellated codes \cite{kesselring2018boundaries}. Stellated codes feature an improved constant in the relation $N=cKD^2$, namely $c=s/(2s-2)$ for odd integers $s$ specifying the degree of symmetry ($s=3$ for the triangle code, $s=5$ for the code in Fig.~\ref{fig:stellated_codes}(c), etc.), and so $c$ approaches $1/2$ as $s\rightarrow\infty$.

One curiosity in the stellated codes is that the qubit density in the center becomes large as $s$ grows. One might wonder if there is a family of planar codes for which $c$ approaches $1/2$ that does not have this singularity. There is indeed such a family of codes, provided one accepts a conjecture on its code distance.

We begin the construction of this family by imagining ``gluing" square surface code patches, like that pictured in Fig.~\ref{fig:stellated_codes}(b), together. For instance, one can glue three distance $D=3$ square surface code patches together to obtain the $D=5$ triangle code, Fig.~\ref{fig:stellated_codes}(a), and five distance $D=3$ patches to obtain the code in Fig.~\ref{fig:stellated_codes}(c). To ensure that the qubit density is bounded by a constant throughout the resulting graph, we should not glue too many patches together around a single vertex. We glue at most four patches around a vertex, and the maximum vertex degree in the graph is six, leading to at most two qubits at a vertex.

Twist defects, the odd-degree vertices in the resulting graph, should be spaced apart by graph distance $D-1$ to get a distance $D$ code. Therefore, we imagine each square patch is $(D+1)/2$-by-$(D+1)/2$ qubits in size, and that the only odd-degree vertices are located at two corners of the patch, diagonally opposite one another. If we glue the patches together accordingly, we get codes as in Fig.~\ref{fig:circle_packing_code}. We refer to this family as circle-packing codes, because of the way in which odd-degree vertices are closely packed. Because the addition of a square patch of roughly $D^2/4$ qubits adds one additional twist, and two additional twists are required to add a logical qubit, we have $N\rightarrow KD^2/2$ for these codes as $K$ gets large.

\begin{figure}
    \centering
    \includegraphics[width=0.8\textwidth]{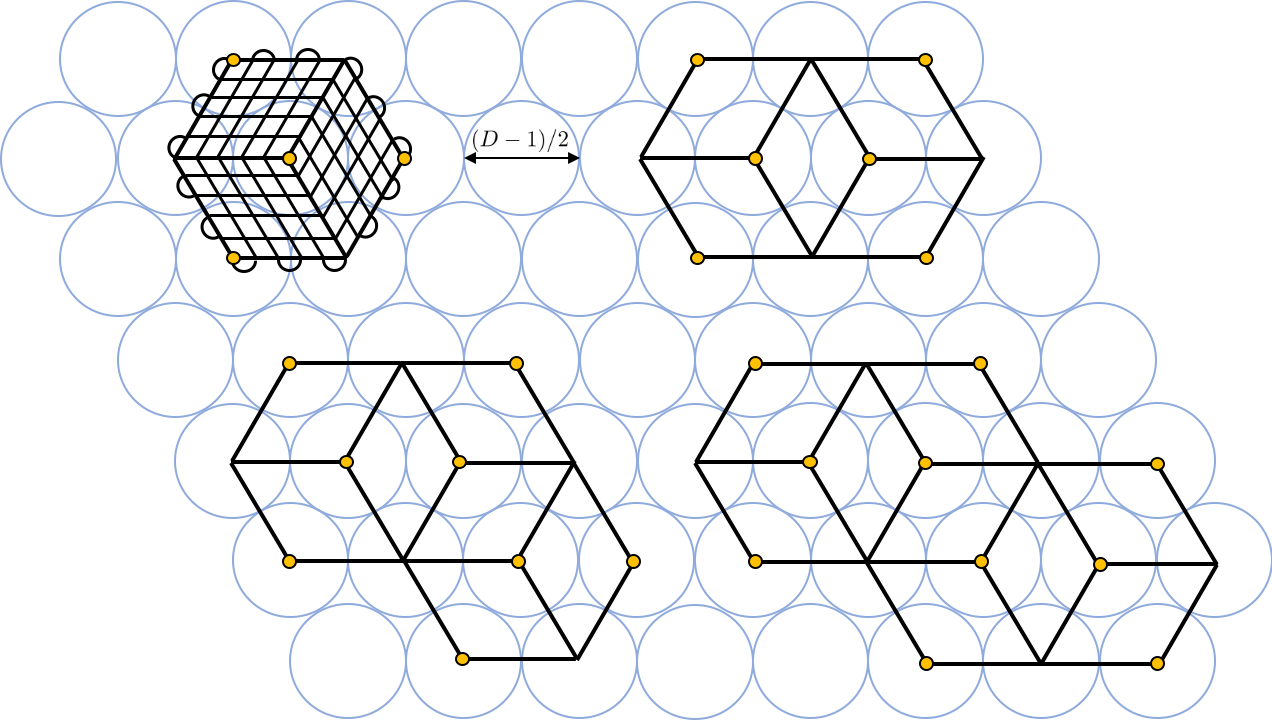}
    \caption{Circle-packing codes encoding $K=1,2,3,$ and $4$ qubits in order of size. The packed circles in the background act as a coordinate system, and are not part of the graphs or codes. In the top left, we show the triangle code and the entire square lattices of qubits making it up. In other drawings, the lattice interiors are abstracted away. Odd-degree vertices, or twist defects, are shown as yellow circles. One can continue attaching square patches, following the pattern established in these first four examples, to grow the code down and to the right and obtain a family where $N=\frac14(2K+1)D^2$. Therefore, $c=(2K+1)/4K$ approaches $1/2$ as $K$ increases.}
    \label{fig:circle_packing_code}
\end{figure}

A seemingly strange feature of the circle-packing codes is how they do not fill a 2-D area but are rather grown outward in one direction. However, one can easily see that filling a 2-D area with the same pattern of square patches results in a worse constant $c$. Suppose we have glued $F$ square patches together, filling a large 2-D area, so that the perimeter is much smaller than the volume. Then, because we can neglect the perimeter, there are $2F/3$ odd-degree vertices, implying $K\approx 2F/6$ and $N\approx F(D/2)^2=3KD^2/4$.

We note explicitly that we did not prove that the distance of the circle-packing codes is actually $D$, unlike the more exaggerated proofs in Secs.~\ref{sec:square_lattice_toric_codes} and \ref{sec:rotated_toric_codes}. We expect this is the code distance due to comparison with the similar codes in \cite{yoder2017surface} and \cite{kesselring2018boundaries} and the confidence with which those authors state the code distances (though one might say they are also lacking rigorous proofs).

A second way we can generalize the stellated codes is by embedding them in higher genus ($g>0$) surfaces. By Corollary~\ref{cor:number_encoded_qubits}, a larger genus can lead to more encoded qubits, assuming any reduction in the number of odd-degree vertices does not outweigh the effect. Our higher genus embeddings double the number of encoded qubits with small reductions in both $N$ and $D$.

\begin{figure}
    \centering
    \includegraphics[width=0.85\textwidth]{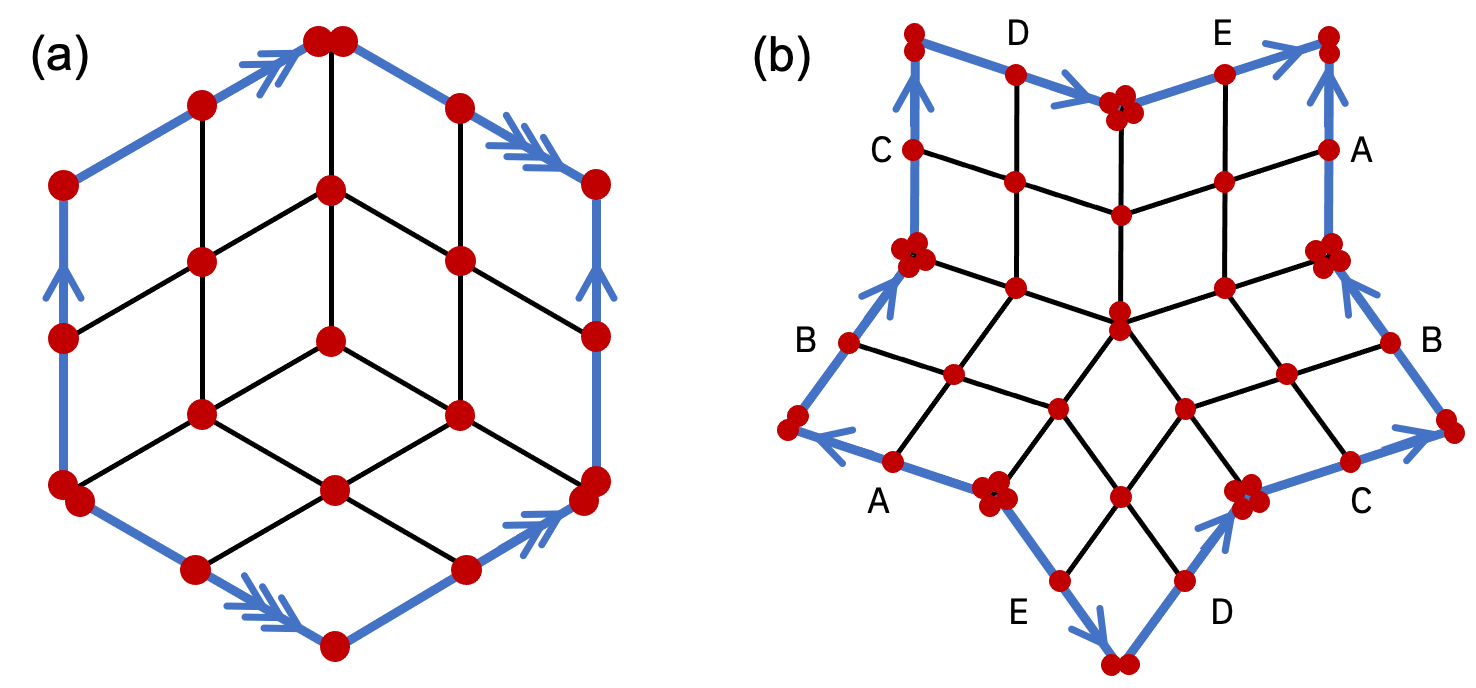}
    \caption{Two examples of stellated codes embedded in higher genus surfaces, (a) a $\llbracket13,2,4\rrbracket$ code with $s=3$ and $t=2$ and (b) a $\llbracket23,4,4\rrbracket$ code with $s=5$ and $t=2$. They are drawn within a $2s$-gon with opposite sides identified. The $\llbracket13,2,4\rrbracket$ code actually has optimal distance for a quantum code with $N=13$ and $K=2$ \cite{Grassl:codetables}.}
    \label{fig:high_genus_stellated}
\end{figure}

The idea of the higher genus embedding is to draw the stellated code with odd symmetry parameter $s$ into a polygon with $2s$ sides and identify opposite sides of the polygon to make an orientable manifold. In fact, this manifold is a torus with $g=(s-1)/2$ holes. This embedding results in two odd-degree vertices with degree $s$, a single vertex with degree $2s$, and all other vertices having degree four. Suppose the graph distance from the code's center to the boundary is $t$. Then, $N=st^2+s-2$, $K=s-1$, and we conjecture $D=2t$. This leads to $N=\frac14(K+1)D^2+K-1=\frac14KD^2+O(D^2+K)$, so that if both $K$ and $D$ are large, we have $c=1/4$.

We believe the code distance is $2t$ because the code distance of a stellated code in the plane is $2t+1$, the length of some logical operators that cross the code from one boundary to another. In the higher genus case, the qubits on the boundary are identified, reducing the weight of such operators by one. It is easy to see that $2t$ is indeed an upper bound on $D$ by finding a logical operator (one crossing the code, as described) with that weight.
\section{Open problems}\label{sec:open_problems}

In this work, we took a rigorous look at qubit surface codes defined from embedded graphs. We related them to Majorana surface codes, calculated the number of encoded qubits, and bounded the code distance in general with more specific cases worked out exactly.

There remain some interesting questions, however. First, it would be very nice to be able to calculate (efficiently) the code distance of an arbitrary surface code defined by Definition~\ref{def:qubit_surface_code}. We illustrated a prickly example in Fig.~\ref{fig:obstruction} and the codes in Section~\ref{sec:generalized_triangle} lack rigorous distance proofs, for instance. If the distance cannot be efficiently calculated exactly, can a multiplicative approximation be found better than the factor $1/2$ approximation of Theorem~\ref{thm:distance_bounds_from_doubled_graph}? Inapproximability (see e.g.~\cite{trevisan2004inapproximability}) in complexity theory might be applicable if indeed the exact distance calculation is NP-hard. \edit{We note that existing algorithms for exact distance calculation in the literature are limited to CSS surface codes \cite{breuckmann2017hyperbolic} or require exponential time for some of the codes in our formalism \cite{cao2023quantum}.}

A second interesting problem is to create a graph-based formalism for color code twists as described by Kesselring et al. \cite{kesselring2018boundaries}. By this we mean that one should be able to define twisted color codes on arbitrary embedded graphs, where the variety of twists correspond to (local) features of the graph, just as we found that surface code twists arise at odd-degree vertices. For instance, while untwisted color codes are defined from graphs with 3-colorable, even-degree faces, a twist may arise in a more general graph where a face has odd-degree.

A third direction would be to literally add a third direction: can a lattice-based formalism be used to describe the variety of surface codes and twist defects in 3-dimensions and beyond? Here things may in general get messy due to the lack of classification theorems like those for 1D and 2D codes \cite{bombin2014structure,haah2016algebraic}. Still, we do not need to demand that the framework describe every 3D code by a 3D lattice, just that every 3D lattice gives rise to some 3D code. We probably would want some known family, like the toric code, to arise from the construction on the cubic lattice, and that something interesting (better code parameters, fractal logical operators, etc.) happen for other lattices.

Finally, probably the most practical open problem is to find lower bounds on the constant $c$ in the relation $N=cKD^2$ \cite{bravyi2010tradeoffs} that holds for 2-dimensional codes. Note that $c(w,\mathcal{M})$ may depend on both the maximum stabilizer weight $w$ and the manifold $\mathcal{M}$. Therefore, a concrete question is, for instance, limited to stabilizers of at most weight five in the plane $\mathbb{R}^2$, what is the smallest possible value of $c(5,\mathbb{R}^2)$? We believe the best known is $c(5,\mathbb{R}^2)$ approaching $1/2$, see Figure~\ref{fig:circle_packing_code}. If one allows weight-six stabilizers, as in the color codes of \cite{kesselring2018boundaries}, then $c(6,\mathbb{R}^2)$ can approach $1/4$. One can generally concatenate two 2-dimensional codes (e.g.~surface codes) with the $\llbracket4,2,2\rrbracket$ code to get a color code with $N'=4N$, $K'=2K$, $D'=2D$, and $w'=2w$ \cite{criger2016noise}. Therefore $c(2w,\mathcal{M})\le c(w,\mathcal{M})/2$ for any $w$ and $\mathcal{M}$.

\section*{Acknowledgements}
We are grateful to a number of people for helpful discussions, including Sergey Bravyi, Andrew Cross, Carollan Helinski, Tomas Jochym-O'Connor, Alex Kubica, Andrew Landahl, and Guanyu Zhu. We would especially like to thank Leonid Pryadko and Alexey Kovalev for sharing ideas that led to the distance proof for cyclic toric codes. Rahul Sarkar was partially supported by the Schlumberger Innovation Fellowship. Ted Yoder was partially supported by the IBM Research Frontiers Institute.
\bibliographystyle{quantum}
\bibliography{bibliography}

\newpage
\appendix
\appendixpage
\addappheadtotoc
\section{Graph embeddings from oriented rotation systems}
\label{sec:app-gluing-orientable}

The goal of this appendix is to establish an equivalence between graphs embedded in orientable manifolds and a combinatorial object called an oriented rotation system. It is easier to see this equivalence in one direction --- start with an embedded graph in an orientable manifold and develop the oriented rotation system description. We do this first below, and sketch a more formal argument for the converse direction after that. 

\edit{As was already mentioned before in Section~\ref{subsec:embedding-non-orientable}, the oriented rotation system is a special case of the general rotation system. Since the manifold is orientable, one can unambiguously define a ``side'' of the manifold, and thus one does not need to maintain a flag for each half-edge (defined below in Defintion~\ref{def:oriented_rotation_system}), one for each side of the manifold. Thus one can simply work with the half-edges, two for every edge of the graph. As in the case of general rotations systems, the vertices, edges, and faces of the graph embedding will be sets of these half-edges.}

\begin{definition}
\label{def:oriented_rotation_system}
An \textit{oriented rotation \edit{system}} is a triple $R_O=(H_O,\nu,\epsilon)$, where $H_O$ is a finite set of \textit{half-edges} (sometimes called \textit{darts}), and two permutations $\nu:H_O\rightarrow H_O$ and $\epsilon:H_O\rightarrow H_O$ satisfying the additional properties
\setmargins
\begin{enumerate}[(i)]
    \item $\epsilon$ is a fixed-point-free involution, meaning $\epsilon^2 = \text{Id}$, and $\epsilon h \neq h$ for all $h \in H_O$.
    \item The free group $\langle \nu, \epsilon \rangle$ generated by $\nu$ and $\epsilon$, acts transitively on $H_O$. This means that for all $h, h' \in H_O$, one can find $\kappa \in \langle \nu, \epsilon \rangle$ such that $\kappa h = h'$.
\end{enumerate}
\end{definition}
We now define the three sets
\begin{equation}
\label{eq:graph-from-rotation-system}
\begin{split}
    V &= \{v \subseteq H_O: v \text{ is an orbit of the free group } \langle \nu \rangle \}, \\
    E &= \{e \subseteq H_O: e \text{ is an orbit of the free group } \langle \epsilon \rangle \}, \\
    F &= \{f \subseteq H_O: f \text{ is an orbit of the free group } \langle \nu \epsilon \rangle \}.
\end{split}
\end{equation}
We call elements of these sets the \textit{vertices}, \textit{edges}, and \textit{faces} of the oriented rotation system, respectively. So, condition (i) ensures edges are sets of exactly two half-edges, and thus $|H_O|$ is even, while condition (ii) ensures the graph is connected.

Thus given an embedded graph $G(V,E)$ in an oriented manifold $\mathcal{M}$, to get the oriented rotation system $(H_O,\nu,\epsilon)$ corresponding to it, we first create a set of half-edges $H_O$ with each edge in $E$ contributing two half-edges. Let $e \in E$ with corresponding half-edges $h_e, h'_e \in H_O$. The permutation $\epsilon$ is defined so that $\epsilon h_e = h'_e$, and $\epsilon h'_e = h_e$ for every $e \in E$, and note that property (i) of Definition~\ref{def:oriented_rotation_system} is satisfied. If $e$ is adjacent to vertices $v,v' \in V$ ($v=v'$ if $e$ is a loop), we assign $h_e$ and $h'_e$ to $v$ and $v'$ respectively. Thus each vertex $v$ gets assigned exactly $\text{deg}(v)$ half-edges, and conversely each half-edge in $H_O$ is assigned to exactly one vertex in $V$. This partitions $H_O$ into disjoint subsets indexed by the vertices. Now choose a \edit{continuously} varying outward normal on $\mathcal{M}$, which is possible since it is orientable, fix a vertex $v \in V$, and let $(H_O)_v$ be the half-edges assigned to it. We can enumerate the edges adjacent to $v$ in counterclockwise order with respect to the outward normal, where an edge appears twice successively in this enumeration if it is a loop. Then replacing each edge in this ordering by the corresponding half-edge that it contributes to $v$, gives a counterclockwise ordering of $(H_O)_v$. The permutation $\nu$ for $(H_O)_v$ is defined to be the cyclic permutation given by this counterclockwise ordering, and repeating this at every vertex completely specifies $\nu$. It follows easily that property (ii) of Definition~\ref{def:oriented_rotation_system} is satisfied if $G$ is connected. This completes the construction of the oriented rotation system $(H_O,\nu,\epsilon)$. See Fig.~\ref{fig:rot_sys_ex_A} for an example. However, the description we just provided fails to capture graph embeddings in non-orientable manifolds, because in that case we cannot choose a \edit{continuously} varying outward normal.

\begin{figure}[h]
    \centering
    \includegraphics[width=0.4\textwidth]{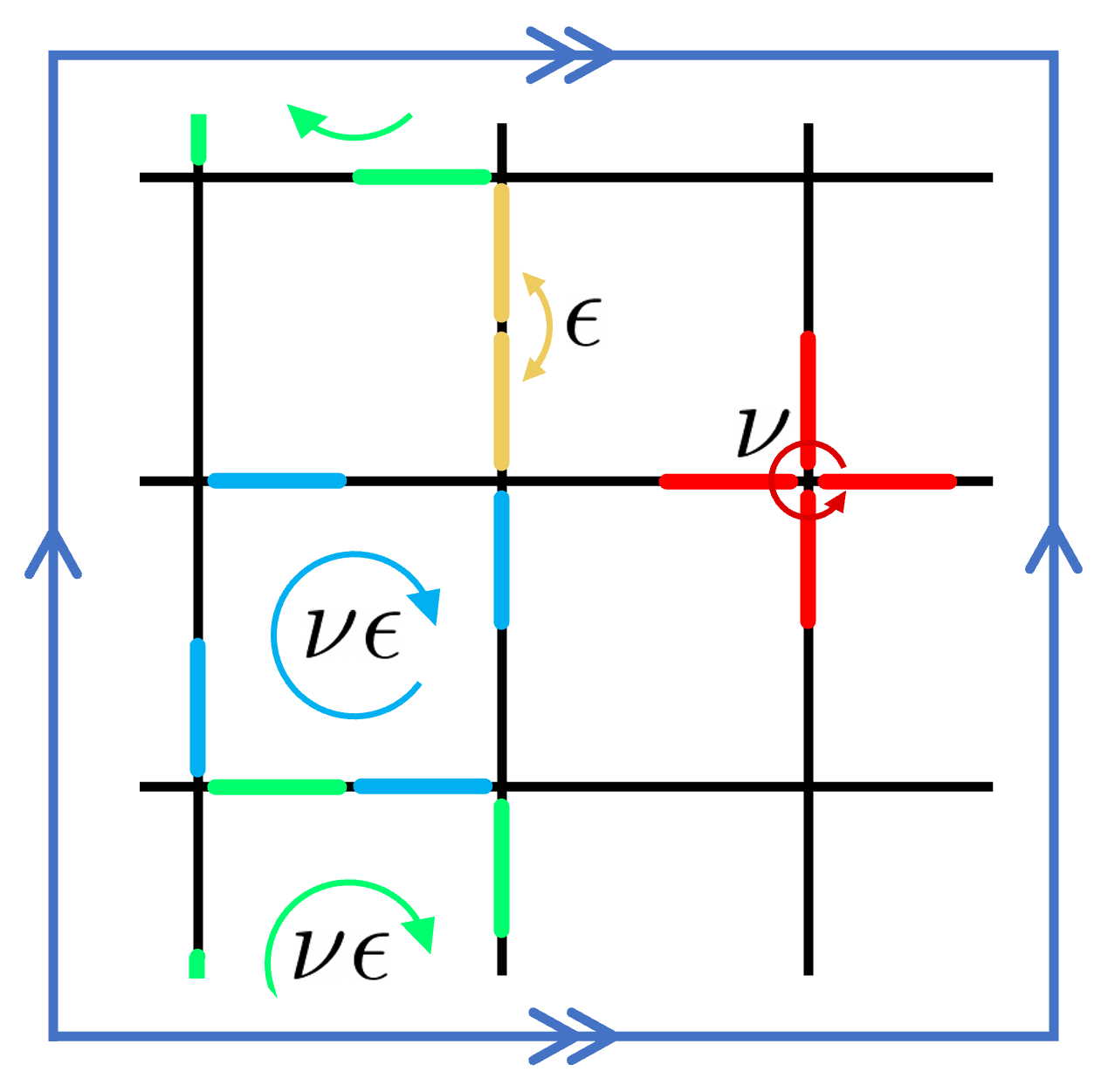}
    \caption{A graph embedded onto the torus and its oriented rotation system. An orbit of $\langle \nu \rangle$, an orbit of $\langle \epsilon \rangle$, and two orbits of $\langle \nu \epsilon\rangle$ are shown. These correspond to an edge, a vertex, and two faces of the graph embedding.}
    \label{fig:rot_sys_ex_A}
\end{figure}

Next, we show how to get a graph embedding on an orientable manifold starting from an oriented rotation system $R_O=(H_O,\nu,\epsilon)$. 
Recall that the rotation system gives rise to a set of vertices $V$, edges $E$, and faces $F$, which are orbits of $\langle \nu \rangle$, $\langle \epsilon \rangle$ and $\langle \nu \epsilon \rangle$ respectively. The graph specified by $R_O$, consists of vertices and edges labeled by the elements of $V$ and $E$ respectively (in a bijective fashion), so we can use the notation $G(V,E)$ to denote the graph without any ambiguity. \edit{A vertex $v \in V$ and edge $e \in E$ are adjacent if $v \cap e \neq \emptyset$.} \edit{One notes that} property (ii) of Definition~\ref{def:oriented_rotation_system} guarantees that $G$ is connected.

In addition to specifying the graph connectivity, $R_O$ also specifies an embedding of $G$, where the faces of the graph embedding are in bijection with the elements of $F$ --- this justifies the use of the notation $G(V,E,F)$ to denote the graph embedding. We will now briefly sketch how to obtain this embedding starting from $R_O$. For this we establish some additional notation: for every half-edge $h\in H_O$, let $[h]_\nu \in V$ be the unique vertex that contains $h$ and, likewise, let $[h]_\epsilon \in E$ be the unique edge that contains $h$. Adjacency of vertices, edges and faces for oriented rotation systems are defined by non-trivial intersection (just as for general rotation systems), so for e.g. a vertex $v$ and a face $f$ are adjacent if and only if $v \cap f \neq \emptyset$.

The construction starts by assigning an open disc $\mathcal{B}_{f}(0,1) \subseteq \mathbb{R}^2$ to each face $f \in F$ of $R_O$. For a face $f$, let us order the half-edges in it as $\{h_0, \dots, h_{|f|-1}\}$, where $h_i = (\nu \epsilon)^{i}h$ for any $h \in f$ chosen arbitrarily. Now for the closed disc $\overline{\mathcal{B}_{f}(0,1)}$, first let $S^1_{f}$ denote the boundary circle parameterized in polar coordinates, and place $|f|$ distinct points $0 = \theta_{0} < \dots < \theta_{|f|-1} < 2\pi$ on $S^1_f$. Let $\omega_{0}, \dots, \omega_{|f|-1}$ denote the open segments that result defined as $\omega_i = (\theta_i, \theta_{i+1})$ for $0 \leq i \leq |f| - 2$, and $\omega_{|f|-1}= (\theta_{|f|-1},2\pi)$, and then define a marking function $M_{f} : \bigcup_{i=0}^{|f|-1} \{\theta_{i}, \omega_{i}\}  \rightarrow V \cup H_O$ as
\begin{equation}
\label{eq:marking-func}
    M_f (x) =
    \begin{cases}
        [h_i]_\nu & \;\; \text{if } x = \theta_i \\
        h_i & \;\; \text{if } x = \omega_i.
    \end{cases}
\end{equation}
The combined effect of the marking functions for all the faces is that each half-edge $h$ gets associated with exactly one open segment of a boundary circle $S^{1}_{f}$ for some face $f$, while each vertex $v$ gets associated with exactly $|v|$ points in different boundary circles. Finally, we take the disjoint union space $\bigsqcup_{f \in F} \overline{\mathcal{B}_{f}(0,1)}$ and define an identification rule $\sim$ for points on the boundary circles as follows:
\begin{enumerate}[(i)]
    \item For every half-edge $h \in H_O$, if $(\theta_{a}, \theta_{b})$ and $(\theta'_{a}, \theta'_{b})$ are the open segments associated with $h$ and  $\epsilon h$ respectively, identify these segments using the map $(\theta_{a}, \theta_{b}) \ni x \mapsto x (\theta'_{b} - \theta'_{a}) / (\theta_{a} - \theta_{b}) + (\theta_{a} \theta'_{a} - \theta_{b} \theta'_{b}) / (\theta_{a} - \theta_{b}) \in (\theta'_{a}, \theta'_{b})$.
    \item If points $\theta, \theta'$ are associated with vertices $v,v'$ respectively, then they are identified if and only if $v = v'$.
\end{enumerate}
The quotient space $\mathcal{M} = \bigsqcup_{f \in F} \overline{\mathcal{B}_{f}(0,1)} / \sim$ that results under this identification is an orientable manifold (we omit the topological details), and $G(V,E)$ is 2-cell embedded in it. The graph embedding map $\Gamma$ is obtained as follows: (a) if a vertex $v \in V$ is associated with points $\{\theta_{1},\dots,\theta_{|v|}\}$, then $\Gamma(v)$ equals the equivalence class of these points under $\sim$, (b) if a half-edge $h$ is associated with an open segment $(\theta_{a}, \theta_{b})$ then for the corresponding edge $e = \{h,\epsilon h\}$, $\Gamma(e)$ equals the closure in $\mathcal{M}$ of the equivalence class of $(\theta_{a}, \theta_{b})$ under $\sim$.

\section{Chain complexes}\label{app:algebraic_topology}

\edit{In this appendix, we translate the discussion of Section~\ref{subsec:homology} into the language of chain complexes. Any graph embedding $G(V,E,F)$ can be associated with a \textit{chain complex}, consisting of $0$-chains, $1$-chains, and $2$-chains. A $2$-chain is a formal linear combination of the faces of the graph embedding with coefficients in $\mathbb{F}_2$, that is an expression of the form
\begin{equation}
    \sum_{f \in F} c_f f, \;\; c_f \in \mathbb{F}_2,
\end{equation}
and the set of all $2$-chains form an $\mathbb{F}_2$-vector space of dimension $|F|$, which we will denote as $C_2(G)$. Similarly, $1$-chains and $0$-chains are defined to be formal linear combinations of edges and vertices of the graph respectively, with coefficients in $\mathbb{F}_2$. The set of all $1$-chains, which we will denote $C_1(G)$, and the set of all $0$-chains, which we will denote $C_0(G)$, form $\mathbb{F}_2$-vector spaces of dimensions $|E|$ and $|V|$ respectively. It is clear that the set of faces, edges, and vertices form a basis for $C_2(G)$, $C_1(G)$, and $C_0(G)$ respectively.

Associated to these vector spaces are two linear maps, called \textit{boundary maps} $\partial_2$ and $\partial_1$. To define these maps, we need the face-edge adjacency matrix $\Phi$ defined in Section~\ref{subsec:checkerboardability}, and also need the edge-vertex adjacency matrix $\widetilde{\Phi} \in \mathbb{F}_2^{|E| \times |V|}$, which we define to be $\widetilde{\Phi} = \frac{1}{2} B A^{\top}$ (recall that $A,B$ are defined in Section~\ref{subsec:embedding-non-orientable}), with the result reduced modulo $2$. Note that $|e \cap v| = 0, 2$, or $4$, with the case $|e \cap v|=4$ corresponding to the case when both the endpoints of $e$ are incident on $v$. Thus given an edge $e$ and a vertex $v$, we have that $\widetilde{\Phi}_{ev} = 0$ if $|e \cap v| = 0$ or $4$, and $\widetilde{\Phi}_{ev} = 1$ if $|e \cap v| = 2$. The maps $\partial_1$ and $\partial_2$ are now defined by their action on the basis elements of their domains as follows, with the convention that an empty sum is the zero vector:
\begin{equation}
\label{eq:def-boundary-maps-homology}
\begin{split}
    &\partial_2: C_2(G) \rightarrow C_1(G), \;\; \partial_2(f) := \sum_{ \{e:  \Phi_{fe}=1 \}} e, \; \forall f \in F, \\
    &\partial_1: C_1(G) \rightarrow C_0(G), \;\; \partial_1(e) := \sum_{ \{v:   \widetilde{\Phi}_{ev}=1\}} v, \;\; \forall e \in E,
\end{split}
\end{equation}
and these maps are extended uniquely by linearity to the whole domain. This also shows that the matrices $\widetilde{\Phi}^{\top}$ and $\Phi^{\top}$ are the matrix representations of the linear maps $\partial_1$ and $\partial_2$ respectively. With these boundary maps defined, we have the chain complex given by the diagram
\begin{equation}
    C_2(G) \xrightarrow[]{\partial_2} C_1(G) \xrightarrow[]{\partial_1} C_0(G).
\end{equation}
The first homology group over $\mathbb{F}_2$ can now be equivalently defined to be the quotient vector space
\begin{equation}
\label{eq:F2-homology-1}
    H_1(G, \mathbb{F}_2) = \frac{\text{kernel}( \partial_1)}{ \text{range}( \partial_2)}.
\end{equation}
This definition makes sense because $\partial_1 \circ \partial_2 = 0$, which can be verified directly by showing $\sum_{e}\Phi_{fe}\widetilde{\Phi}_{ev}=0$ modulo two for all $f\in F$, $v\in V$. Calculate
\begin{align}\label{eq:phiphi_step1}
\sum_{e\in E}\Phi_{fe}\widetilde{\Phi}_{ev}=\sum_{e:\Phi_{fe}=1}\widetilde{\Phi}_{ev}&=\sum_{e:\Phi_{fe}=1}\frac12\sum_{h\in H}B_{eh}A_{vh}\mod{2}\\
\label{eq:phiphi_step2}
&=\sum_{e:\Phi_{fe}=1}\left|\{[h]_\tau\in H/\tau:[h]_\tau\subseteq e,v\}\right|\mod{2}\\
\label{eq:phiphi_step3}
&=\left|\{[h]_\tau\in H/\tau:[h]_\tau\subseteq v,\left|[h]_\tau\cap f\right|=1\}\right|\mod{2}\\\label{eq:phiphi_step4}
&=|\{h\in H:h\in v,f\}|\mod{2}=0.
\end{align}
Going from Eq.~\eqref{eq:phiphi_step1} to Eq.~\eqref{eq:phiphi_step2} involves noticing that if $B_{eh}A_{vh}=1$ then $B_{e,\tau h}A_{v,\tau h}=1$, $B_{eh}=1$, and $A_{vh}=1$. If $A_{vh}=1$, then $[h]_\tau=\{h,\tau h\}\subseteq v$ by definition of vertex-flag adjacency matrix $A$. Likewise, if $B_{eh}=1$, then $[h]_\tau\subseteq e$ by definition of edge-flag adjacency matrix $B$. To show Eq.~\eqref{eq:phiphi_step3} use the definition of $\Phi$ to conclude that any edge $e=\{h,\tau h,\lambda h,\tau\lambda h\}$ for which $\Phi_{fe}=1$ shares exactly one of $h$ and $\tau h$ with $f$. Because we evaluate the set size modulo 2, we can count flags shared between $v$ and $f$ directly as in Eq.~\eqref{eq:phiphi_step4}. We conclude the number of shared flags is even by noting that for every flag $h\in v,f$, there is another flag $\rho h\in v,f$.

One should also note the following. Let us decompose the set of edges $E$ into disjoint sets $E'$ and $E''$, where $E' = \{e \in E: \Phi_{fe}=0 \; \;\; \forall f \in F\}$, and $E'' = E \setminus E'$. Then defining the subspaces $C'_1(G)$ (resp. $C''_1(G)$) to be formal linear combinations of edges in $E'$ (resp. $E''$) with coefficients in $\mathbb{F}_2$, we see that $C_1(G)$ can be written as a direct sum $C_1(G) = C'_1(G) \oplus C''_1(G)$. From the definition of $\partial_2$ in Eq.~\ref{eq:def-boundary-maps-homology}, it is then clear that $\text{range}( \partial_2) \subseteq C''_1(G)$.

Since trails are elements of $\mathbb{F}_2^{|E|}$, we now see that they are $1$-chains. However, not every $1$-chain is a trail. On the other hand, every edge is a trail and an $1$-chain, and thus the group $\mathcal{T}(G)$ can be canonically identified with the $\mathbb{F}_2$-vector space $C_1(G)$, such that trails are identified with their corresponding $1$-chains, and the identification is a group isomorphism. Once this identification is made, we can trace through the definitions of $\mathcal{Z}(G)$ and $\mathcal{B}(G)$ in Eq.~\eqref{eq:BG-ZG}, and that of $\partial_1$ and $\partial_2$ in Eq.~\eqref{eq:def-boundary-maps-homology}, to conclude that $\mathcal{Z}(G)$ is identified with $\text{kernel}(\partial_1)$, and $\mathcal{B}(G)$ is identified with $\text{range}(\partial_2)$. Thus cycles of $G$ are elements of $\text{kernel}(\partial_1)$ and boundaries of faces of $G$ are elements of $\text{range}(\partial_2)$. This proves the equivalence of the two definitions of $H_1(G,\mathbb{F}_2)$ in Eq.~\eqref{eq:F2-homology} and Eq.~\eqref{eq:F2-homology-1}.
}
\section{Paulis with arbitrary commutation patterns and CALs}
\label{sec:cal-prop}

In this appendix we first do a very brief review of the Pauli group and recall some of its basic properties. We then analyze the problem of finding a list of Paulis given a desired commutation pattern. As a special case, we study the case of cyclically anticommuting list of Paulis in some detail, which is a key tool in the construction of the qubit surface codes of Section~\ref{subsec:qub_surface_codes}.

\subsection{Basic definitions}
\label{subsec:pauli-definitions}

A \textit{multiset} $M$ is a set with repeated elements. The \textit{multiplicity} $\nu (m)$ of an element $m \in M$ is defined as the number of occurrences of $m$ in $M$. For a set $A$, if $m \in M$ implies $m \in A$, then we say $M \subseteq A$. If $M'$ is another multiset, then we say $M' \subseteq M$ if and only if the following condition holds: for each $m \in M'$ with multiplicity $\nu'(m)$, it must hold that $m \in M$ and $\nu'(m) \leq \nu(m)$. If $M' \subseteq M$ and $M' \neq M$ then we write $M' \subset M$. An ordered multiset is a \textit{list}.

The Pauli group $\mathcal{P}_1$ on one qubit is the 16 element group consisting of the $2 \times 2$ identity matrix $I_2$, and the Pauli matrices
\begin{equation}
    X = 
    \begin{bmatrix}
    0 & 1 \\
    1 & 0
    \end{bmatrix}, \;\;
    Y = 
    \begin{bmatrix}
    0 & -i \\
    i & 0
    \end{bmatrix}, \;\;
    \text{and } 
    Z = 
    \begin{bmatrix}
    1 & 0 \\
    0 & -1
    \end{bmatrix}, \;\;
\end{equation}
together with all possible phase factors of $\pm 1$ and $\pm i$. The $n$-Pauli group $\mathcal{P}_n$ on $n$ qubits ($n \geq 1$), with qubits numbered $0,\dots,n-1$, is now defined to be the group 
\begin{equation}
\label{eq:pauligrp-def}
    \mathcal{P}_n = \{\eta \; (\sigma_0 \otimes \dots \otimes \sigma_{n-1}) : \eta \in \{\pm 1, \pm i\}, \sigma_j \in \{I_2, X, Y, Z\} \; \forall \; j\},
\end{equation}
where $\otimes$ denotes the Kronecker product of matrices. The \textit{support} of a Pauli $p = \eta \; (\sigma_0 \otimes \dots \otimes \sigma_{n-1}) \in \mathcal{P}_n$, which we define as $\supp{p} := \{0 \leq j \leq n-1 : \sigma_j \neq I_2\}$, indicates all those qubits where it acts by either $X$, $Y$ or $Z$. For brevity we will often simply list the Pauli matrices acting on the support of a Pauli, with a subscript per matrix indicating the qubit where it acts. For example, $p = i X_0 Y_2 Z_3$ indicates that $p$ acts on qubits $0$, $2$ and $3$ with $X$, $Y$ and $Z$ respectively, and with $I_2$ on all remaining qubits. Another equivalent notation for representing $p$ (frequently used in Appendix~\ref{sec:cyclic-code-families}) is to drop the $\otimes$ symbol, which gives a string of length $n$ with a phase factor in front. This is called the string representation of Paulis. In a string representation, since there is never any chance for confusion, $I_2$ is replaced by $I$ to simplify the notation. For example if $n=5$, the Pauli $i X_0 Y_2 Z_3$ has the string representation $i XIYZI$. Moreover one can optionally combine alphabets appearing consecutively in a string representation, for which we give an example: for $n=5$, if $XXIIZ$ is the string representation of a Pauli, then it is also equivalently written in compressed form as $X^{\otimes 2}I^{\otimes 2}Z$. We define the abelian group $\pauligrp{n}$ to be the quotient group $\mathcal{P}_n / \langle iI \rangle$, i.e. if $\hat{p} \in \pauligrp{n}$, then it is an equivalence class of the form $\hat{p} = \{p, -p, ip, -ip\} = \langle iI, p \rangle$ for some $p \in \mathcal{P}_n$, where $I$ is the identity element of $\mathcal{P}_n$, the $2^n \times 2^n$ identity matrix. In this appendix and also in the next, if $p \in \mathcal{P}_n$, then $[p]$ will refer to the equivalence class of $p$ in $\pauligrp{n}$, and the support of $[p]$ is defined to be the support of $p$. The \textit{weight} of any element of $\mathcal{P}_n$ or $\pauligrp{n}$ is defined to be the number of qubits in its support. For any $[p] \in \pauligrp{n}$, where $p = \eta \; (\sigma_0 \otimes \dots \otimes \sigma_{n-1}) \in \mathcal{P}_n$, we define the \textit{restriction} of $p$ (resp. $[p]$) to qubit $j$ for any $0 \leq j \leq n-1$, to be $\sigma_j \in \mathcal{P}_1$ (resp. $[\sigma_j] \in \pauligrp{1}$). An element $p \in \mathcal{P}_n$ (resp. $\hat{p} \in \pauligrp{n}$) is called $X$-type if and only if the restriction of $p$ (resp. $\hat{p}$) to qubit $j$ is either $I_2$ or $X$ (resp. $[I_2]$ or $[X]$), for all $0 \leq j \leq n-1$. We similarly define $Y$-type and $Z$-type elements of $\mathcal{P}_n$ and $\pauligrp{n}$.

$\mathcal{P}_n$ is a non-Abelian group of size $4^{n+1}$, and each element of $\mathcal{P}_n$ is traceless, while we have $|\pauligrp{n}| = 4^n$. Any two elements $p,q \in \mathcal{P}_n$ either commute or anticommute, and we will use the notations $[p,q] = 0$ and $\{p,q\} = 0$, to denote these two cases respectively. Note that $\{p,q\}$ can also refer to the set containing the elements $p$ and $q$, but this difference will either be clear from context, or we will explain the usage whenever there is a chance for confusion. For convenience, we define a function $\text{Com}: \mathcal{P}_n \times \mathcal{P}_n \rightarrow \mathbb{F}_2$, that will be useful in some places, as
\begin{equation}
\com{p}{q} = \frac{1}{2} \left(1-\Tr{p q p^\dag q^\dag}/2^n\right),
\end{equation}
which satisfies $\com{p}{q} = 0$ if and only if $[p,q]=0$. Thus $\text{Com}$ is a symmetric function of its arguments. Two elements $\hat{p},\hat{q} \in \pauligrp{n}$ will be said to commute (resp. anticommute)\footnote{This notion of commutativity of elements of $\pauligrp{n}$ is different from that due to the group operation on $\pauligrp{n}$. Elements of $\pauligrp{n}$ always commute under the group operation on $\pauligrp{n}$.} if and only if for any chosen representatives $p \in \hat{p}$ and $q \in \hat{q}$, $p$ and $q$ commute (resp. anticommute), and we will use the same notation as above to express commutativity. For any non-empty, ordered multiset $\mathcal{H} = \{\hat{p}_1, \dots, \hat{p}_\ell\} \subseteq \pauligrp{n}$ such that $|\mathcal{H}| = \ell$, we define following \cite{sarkar-vandenberg2019}, the \textit{commutativity map} with respect to $\mathcal{H}$, to be the function $\text{Com}_{\mathcal{H}} : \pauligrp{n} \rightarrow \mathbb{F}_2^{\ell}$,
\begin{equation}
\label{eq:comm-map}
    (\text{Com}_{\mathcal{H}} (\hat{q}))_i = \com{q}{p_i}, \text{ where } \hat{q} = [q],\; \hat{p_i} = [p_i], \;\;\; \forall\; 1 \leq i \leq \ell,
\end{equation}
and we say that $\hat{q} \in \pauligrp{n}$ generates the \textit{commutativity pattern} $\text{Com}_{\mathcal{H}} (\hat{q})$ with respect to $\mathcal{H}$.

Commutativity of Paulis can also be represented in yet another equivalent form. Since the phase factors of two Paulis $p,q \in \mathcal{P}_n$ do not influence their commutativity, it is often convenient to represent a Pauli $p = \eta \; (\sigma_0 \otimes \dots \otimes \sigma_{n-1})$ in symplectic notation \cite{gottesman1997phd} ignoring the phase factor $\eta$, that is as a binary (row) vector $v_p\in\mathbb{F}_2^{2n}$ such that for $0 \leq i \leq n-1$, $(v_p)_i=1$ if and only if $\sigma_i\in\{X,Y\}$, and for $n \leq i \leq 2n-1$, $(v_p)_{i}=1$ if and only if $\sigma_{i-n}\in\{Y,Z\}$. As binary vectors, Paulis $p$ and $q$ commute if and only if $v_p \Lambda v_q^\top=0$ where $\Lambda=\left(\begin{smallmatrix}0&I_n\\I_n&0\end{smallmatrix}\right)\in\mathbb{F}_2^{2n\times 2n}$ written using $n\times n$ blocks \cite{gottesman1997phd}, $I_n$ is the $n \times n$ identity matrix, and all operations are performed over $\mathbb{F}_2$.

A list $\mathcal{H} = \{p_0, \dots, p_{\ell-1}\} \subseteq \mathcal{P}_n$ is called \textit{anticommuting} (resp. \textit{commuting}) if and only if $\{p_i,p_j\} = 0$, for all $i \neq j$ (resp. $[p_i,p_j] = 0$ for all $i,j$). Recall from Definition~\ref{def:CAL} that a CAL on $n$ qubits is a list $\mathcal{C} = \{p_0, \dots, p_{\ell-1}\} \subseteq \mathcal{P}_n$, with the property that for distinct $i$ and $j$, $\{p_i, p_j\} = 0$ if and only if $j = (i \pm 1) \mod \ell$. Singleton lists and the empty list are considered to be anticommuting lists and CALs by convention. Similarly we also say that a list $\mathcal{H} \subseteq \pauligrp{n}$ is anticommuting (resp. commuting) if the resulting list of $\mathcal{P}_n$ obtained by replacing each element in $\mathcal{H}$ by exactly one of its representatives (which can be chosen arbitrarily) is anticommuting (resp. commuting). We define a CAL of $\pauligrp{n}$ analogously to be a list $\mathcal{C} \subseteq \pauligrp{n}$, such that the list obtained by replacing each element of $\mathcal{C}$ by a representative is a CAL in $\mathcal{P}_n$, and it is called extremal if it has length $\ell \geq 1$, and there is no CAL of same length on fewer qubits.

We will define the product of all elements in a list $\mathcal{H} \subseteq \mathcal{P}_n$ to be $\prod\mathcal{H} := \prod_{p \in \mathcal{H}} p$, (this product is ordered). Product of multisets of $\pauligrp{n}$ is defined in a similar fashion, except the ordering does not matter as all elements of $\pauligrp{n}$ commute under the group operation. By convention the product of the empty set is defined to be the identity element $I \in \mathcal{P}_n$ (resp. $[I] \in \pauligrp{n}$). Given a multiset $\mathcal{H} = \{p_0, \dots, p_{\ell-1}\} \subseteq \mathcal{P}_n$, the symplectic representation of Paulis allows us to define its \textit{dimension}. Writing each Pauli in the symplectic notation gives a matrix $H \in \mathbb{F}_2^{\ell \times 2n}$ whose each row is a Pauli in $\mathcal{H}$, and we define $\dim (\mathcal{H}) := \rank{H}$. In words, $\dim (\mathcal{H})$ is the size of the smallest subset of $\mathcal{H}$ that generates all its elements up to phase factors --- thus it satisfies $\dim (\mathcal{H}) = \min \{ |\mathcal{H}'| : \mathcal{H}' \subseteq \mathcal{H}, \langle \mathcal{H}', iI \rangle = \langle \mathcal{H}, iI \rangle\}$. We say $\mathcal{H}$ is \textit{independent} if and only if $\dim (\mathcal{H}) = |\mathcal{H}|$. Dimension and independence of multisets of $\pauligrp{n}$ are now analogously defined in terms of the dimension and independence of the multiset of $\mathcal{P}_n$, obtained by choosing representatives. Finally, if $\hat{p} \in \pauligrp{n}$, and $\hat{q} \in \pauligrp{m}$, the Kronecker product of $\hat{p}$ and $\hat{q}$, is defined as $\hat{p} \otimes \hat{q} = [p \otimes q] \in \pauligrp{n+m}$, where $p \in \hat{p}$, and $q \in \hat{q}$.

\subsection{Constructing a list of Paulis given desired commutation relations}
\label{subsec:pauli-list-arbitrary-pattern}

Suppose $C\in\mathbb{F}_2^{\ell \times \ell}$ is a matrix specifying the commutation relations for an unknown list of Paulis $\{p_0,\dots,p_{\ell-1}\} \subseteq \mathcal{P}_n$ (not necessarily all distinct), that is $C_{ij}=\com{p_i}{p_j}$. Hence, $C$ must be symmetric and must have zero diagonal (i.e.~$C_{ii}=0$ for all $i$). In this subsection, we answer the following questions: 
\begin{enumerate}[(i)]
    \item From $C$ alone, can we find Paulis $p_i$ that obey the specified commutation relations?
    \item What is the minimum number of qubits $n$ required to achieve $C$?
\end{enumerate}

The key idea to answering these questions is to simultaneously row and column reduce the matrix $C$. We start by analyzing a symmetric rank-revealing factorization algorithm, which is important in the sequel. We make use of the elementary row operations $E^{(h,k)}\in\mathbb{F}_2^{\ell \times \ell}$, defined by $E^{(h,k)}_{ij}=1$ if and only if $i=j$, or $(i,j) = (h,k)$. If $h \neq k$, when a matrix is multiplied by $E^{(h,k)}$ on the left, its $k^{\text{th}}$ row is added to its $h^{\text{th}}$ row, while when multiplied by $\left(E^{(h,k)}\right)^\top$ on the right, its $k^{\text{th}}$ column is added to its $h^{\text{th}}$ column. We will refer to the latter as elementary column operations. Note also that $E^{(h,k)}$ is self-inverse. If $h=k$, $E^{(h,h)}$ equals the identity matrix. Now consider Algorithm~\ref{alg:reduce}, where all operations occur over the field $\mathbb{F}_2$ (note that Algorithm~\ref{alg:reduce} may be of independent interest elsewhere).

\begin{algorithm}
\caption{}
\label{alg:reduce}
\begin{algorithmic}[1]
    \Procedure{LBLdecompose}{$C$} \algorithmiccomment{$C \in \mathbb{F}_2^{\ell \times \ell}$ is symmetric with zero diagonal}
        \State Set $B\leftarrow C$ and $L\leftarrow I$
        \State For $i=1$ to $\ell -1$:
        \State \quad Find the smallest index $j_i$ such that $B_{ij_i}=1$. If none exists, set $j_i\leftarrow i$.
        \State \quad While there exists $k>i$ such that $B_{kj_i}=1$:
        \State \quad\quad $B\leftarrow E^{(k,i)} B\left(E^{(k,i)}\right)^\top$ \algorithmiccomment{Set $B_{kj_i}=B_{j_i k}=0$}
        \State \quad\quad $L\leftarrow LE^{(k,i)}$
        \State Return $B$ and $L$
    \EndProcedure
\end{algorithmic}
\end{algorithm}

\begin{lemma}
\label{lem:LBL}
Let $C\in\mathbb{F}_2^{\ell \times \ell}$ with $\ell \geq 1$, such that $C=C^\top$ and $C_{ii}=0$ for all $i$, and consider Algorithm~\ref{alg:reduce}. Let $B^{(0)}=C$, and for $i \geq 1$, let $B^{(i)}$ denote the matrix $B$ after the end of iteration $i$ of the for-loop. For $i \geq 0$, let $B^{(i,k)}$ denote the upper-left $k \times k$ block of $B^{(i)}$, i.e. all rows and columns numbered $1$ to $k$. Then for all $1 \leq i \leq \ell - 1$
\begin{enumerate}[(a)]
    \item $B^{(i)}$ is symmetric, has zero diagonal, and $B^{(i,i+1)} = B^{(r,i+1)}$ for all $r \geq i$. Moreover the upper-left $i \times i$ block of $B^{(i-1)}$ is unchanged during for-loop iteration $i$.
    \item If $j_i \neq i$, then $B^{(r)}_{i s} = B^{(r)}_{s i} = 0$ and $B^{(r)}_{i j_i} = B^{(r)}_{j_i i} = 1$ for all $r \geq i-1$ and $s < j_i$.  If $j_i = i$, then  $B^{(r)}_{i s} = B^{(r)}_{s i} = 0$ for all $r \geq i-1$ and $1 \leq s \leq \ell$.
    \item If $j_i \neq i$, then $B^{(r)}_{s j_i} = B^{(r)}_{j_i s} = 0$, for all $r \geq i$ and $s > i$.
    \item If $B^{(i,i)}_{rs} = 1$, then it is the only non-zero element of the $r^{\text{th}}$ and $s^{\text{th}}$ rows and columns of $B^{(i)}$.
    \item Each row and column of $B^{(i,i+1)}$ contains at most a single $1$.
    \item $j_{j_i} = i$.
\end{enumerate}
The output of the algorithm $L, B \in\mathbb{F}_2^{\ell \times \ell}$ satisfy $C = LBL^\top$, where $L$ is an invertible lower-triangular matrix, $B$ is symmetric and has zero diagonal, and each row and column of $B$ contains at most a single $1$. The algorithm runs in $O(\ell^3)$ time.
\end{lemma}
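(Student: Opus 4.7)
The proof proceeds by joint induction on $i$ to establish all six invariants (a)--(f) at the end of each for-loop iteration; the global statements about $C = LBL^\top$, the structure of $L$ and $B$, and the runtime then follow readily. The basic observations driving the argument are that each $E^{(k,i)}$ is lower triangular and self-inverse, the congruence $E^{(k,i)} B (E^{(k,i)})^\top$ modifies only row $k$ and column $k$ of $B$ (so symmetry is automatic), and the zero diagonal is preserved since $(E^{(k,i)} B (E^{(k,i)})^\top)_{kk} = B_{kk} + B_{ki} + B_{ik} + B_{ii} = 0$ in $\mathbb{F}_2$. Commutativity of elementaries sharing the same $i$ also means the order of while-loop iterations inside a single outer iteration is immaterial, which simplifies bookkeeping.

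The local invariants (a), (b), (c) follow fairly directly. Part (a) is a consequence of the setup above, combined with the freezing observation that every elementary used in iteration $s$ has $k > s$, so the upper-left $s \times s$ block is untouched during iteration $s$ and the upper-left $(s+1) \times (s+1)$ block is frozen in every iteration $> s$. Part (b) at $r = i - 1$ is the defining property of $j_i$, and for $r \geq i$ one checks that row $i$ is not modified once iteration $i$ begins: the elementaries $E^{(k,s)}$ with $k > s \geq i$ do not have $k = i$, and in iteration $i$ itself row $i$ is used only as the source of an addition. Part (c) is exactly what the while loop achieves at iteration $i$; its preservation in iterations $s > i$ requires showing $B^{(s-1)}_{s, j_i} = 0$, which follows from the inductive form of (c) combined with (f), so (c) and (f) must be proved jointly.

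The main obstacle, and the place where the proof needs the most care, is the combined claim of (d), (e), and (f), since these global statements about the pairing pattern of the algorithm are mutually dependent. I would strengthen the induction by maintaining the following structural invariant at the end of iteration $i$: the indices $\{1,\ldots,\ell\}$ partition into \emph{closed pairs} $\{p, q\}$ with $p < q \leq i$, $j_p = q$, and $j_q = p$, for which rows and columns $p, q$ of the full matrix $B^{(i)}$ are zero except at $(p, q)$ and $(q, p)$; \emph{open pivots} $p \leq i$ with $j_p = q > i$, where $B^{(i)}_{p, q} = 1$, row $p$ and column $p$ vanish in indices below $q$, and row $q$ and column $q$ vanish in indices above $p$; \emph{dead rows} $p \leq i$ with $j_p = p$, for which row and column $p$ are identically zero; and unprocessed indices $> i$. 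From this invariant, (d) and (e) read off by inspecting the upper-left $i \times i$ block (the only 1's in the block come from closed pairs), and (f) holds because each iteration either opens a new pivot ($j_i > i$), closes an existing one ($j_i < i$, forcing $j_{j_i} = i$ from the invariant), or adds a dead row ($j_i = i$). Preservation of the structural invariant across an iteration is verified by case analysis: opening a pivot simply records $(i, j_i)$ as open, while closing a pivot at iteration $q$ processes a pivot opened at iteration $p = j_q$, and the while loop clears the remaining 1's in row and column $p$ using the existing 1 at $(q, p)$.

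For the global conclusions, applying (a) and (e) at $i = \ell - 1$ gives that $B = B^{(\ell-1)}$ is symmetric, has zero diagonal, and has at most one 1 per row and per column. Writing $E_1, E_2, \ldots$ for the elementaries in the order they are applied, we have $B = (\cdots E_2 E_1) C (E_1^\top E_2^\top \cdots)$, while the algorithm accumulates $L = E_1 E_2 \cdots$; since every $E_j$ is self-inverse, $L^{-1} = \cdots E_2 E_1$, so $C = L B L^\top$, with $L$ lower triangular and invertible as a product of lower triangular self-inverse matrices. For the runtime, the outer loop runs $O(\ell)$ times, the while loop runs at most $O(\ell)$ times per outer iteration, and each congruence update touches only one row and one column of $B$ plus one column of $L$, giving $O(\ell)$ work per elementary and $O(\ell^3)$ overall.
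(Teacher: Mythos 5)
Your proposal is correct and follows essentially the same strategy as the paper's proof: parts (a) and (b) are handled directly from the structure of the elementary congruences (lower-triangular, self-inverse $E^{(k,i)}$ with $k>i$ touching only row/column $k$, hence freezing the upper-left block), and (c)--(f) are established by a strengthened joint induction, where your classification of processed indices into closed pairs, open pivots, and dead rows is an equivalent repackaging of the three inductive invariants the paper maintains in its internal claim. The derivation of $C=LBL^\top$ from the self-inverse elementaries and the $O(\ell^3)$ runtime count also match the paper's argument.
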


\begin{proof}
Note that by construction, throughout the algorithm we have that $L$ is lower triangular and invertible (since the $E^{(k,i)}$ matrix inside the while loop is lower triangular and invertible), and also that $C= L B L^\top$. The claims about the output $B$ of the algorithm follow as $B = B^{(\ell - 1)}$ and from (a) and (e). Also it is clear that the worst case run time of the algorithm is $O(\ell^3)$, where we must use the sparsity of $E^{(k,i)}$ while performing the matrix multiplication inside the while loop (otherwise we will get a $O(\ell^5)$ algorithm). Note also that the symmetry of $B^{(i-1)}$ dictates that in the case $j_i=i$, the while loop will exit immediately. 

We now prove (a)-(f). 
\begin{enumerate}[(a)]
    \item Let $\tilde L$ denote the matrix $L$ after the end of for-loop iteration $i$ of Algorithm~\ref{alg:reduce}. Then $C = \tilde L B^{(i)} \tilde L^\top$ and so $B^{(i)}$ is symmetric, for all $1 \leq i \leq \ell - 1$. $B^{(i)}$ has zero diagonal because for all $1 \leq r \leq \ell$
    \begin{equation}
    B^{(i)}_{rr}=\sum_{xy} \tilde{L}^{-1}_{rx} C_{xy} \tilde{L}^{-1}_{ry} = \sum_{x} \tilde{L}^{-1}_{rx} C_{xx} + \sum_{x<y} \tilde{L}^{-1}_{rx} \tilde{L}^{-1}_{ry} (C_{xy} + C_{yx}) = 0.
    \end{equation}
    Now note that during any for-loop iteration $i$, in each iteration of the while loop, multiplying $E^{(k,i)}$ on the left does not change any row of $B$ with index $1$ to $i$, while multiplying by $\left(E^{(k,i)}\right)^\top$ on the right does not change any column of $E^{(k,i)} B$ with index $1$ to $i$. It immediately follows that the upper-left $i \times i$ block of $B^{(i-1)}$ is unchanged during for-loop iteration $i$. We also easily deduce from this that $B^{(i,i+1)} = B^{(r,i+1)}$ for all $r \geq i$.
    \item First notice that from the beginning of for-loop iteration $i$, all rows numbered $1$ to $i$ are unaffected by elementary row operations alone. If $j_i = i$, it means that at the start of for-loop iteration $i$, the $i^{\text{th}}$ row of $B^{(i-1)}$ is zero. Any elementary column operations will not change this row subsequently. If instead $j_i \neq i$, then at the start of for-loop iteration $i$ we have $B^{(i-1)}_{i s} = 0$ for all $s < j_i$, and $B^{(i-1)}_{i j_i} = 1$. In this case, any subsequent elementary column operations will not change columns $1$ to $j_i$ of the $i^{\text{th}}$ row. The conclusion now follows by symmetry using (a).
\end{enumerate}

For parts (c)-(f), we will first prove a claim. We claim that for all $1 \leq i \leq \ell - 1$, Algorithm~\ref{alg:reduce} maintains the following invariants after for-loop iteration $i$:
\begin{enumerate}[(i)]
    \item For all $1 \leq k \leq i$, $B^{(i)}_{s j_k} = B^{(i)}_{j_k s} = 0$, for all $s > k$.
    \item For every $1 \leq r,s \leq i$ such that $B^{(i)}_{rs} = 1$, the only non-zero element of the $r^{\text{th}}$ and $s^{\text{th}}$ rows and columns of $B^{(i)}$ are $B^{(i)}_{rs} = B^{(i)}_{sr} = 1$. If in addition $r = i$, then the $i^{\text{th}}$ row and column of $B^{(i-1)}$ do not change during the iteration, while if $1 \leq r,s < i$, then the $r^{\text{th}}$ row and the $s^{\text{th}}$ column of $B^{(i-1)}$ do not change during iteration $i$.
    \item If $j_i \leq i$, then $j_{j_i} = i$.
\end{enumerate}

\vspace{-0.2cm}
\textit{Proof of claim.} It is helpful to make an observation about what happens during for-loop iteration $i$, assuming $j_i \neq i$. One then easily checks that in an iteration of the while loop, the effect of the update $B \leftarrow E^{(k,i)} B \left(E^{(k,i)}\right)^\top$ on the $j_i^{\text{th}}$ column of $B$ is to set the $(k,j_i)$ entry to zero without affecting any other entries, for some $k > i$. By symmetry the only effect on the $j_i^{\text{th}}$ row of $B$ is to set the $(j_i, k)$ entry to zero. Thus at the end of for-loop iteration $i$, the $j_i^{\text{th}}$ column of $B^{(i)}$ satisfies $B^{(i)}_{s j_i} = 0$ for all $s > i$, and $B^{(i)}_{i j_i} = 1$.

Returning to the proof of the claim, notice that for $i=1$ (i)-(iii) are clearly true: if $j_1 \neq 1$ then (i) holds by the observation in the previous paragraph, and if $j_1 = 1$ then it is true by (b), (ii) is vacuous as $B^{(1,1)}_{11} = 0$ by (a), and for (iii) note that $j_1 \leq 1$ implies $j_1 = 1$, thus $j_{j_1} = j_1 = 1$. Now for the inductive hypothesis, assume that the invariants are true after for-loop iteration $i-1$, for some $1 < i \leq \ell - 1$, and we want to show that they are also true after iteration $i$. We do this separately in three steps.

\vspace{0.2cm}
\textit{Step 1:} Let us first show that (iii) holds. If $j_{i} = i$ then it clearly holds, so assume $j_{i} \leq i-1$. Now at the start of for-loop iteration $i$ we have by symmetry $B^{(i-1)}_{j_i i} = 1$. Assume for contradiction that $j_{j_i} \neq i$, and consider the situation during for-loop iteration $j_i$. If $j_{j_i} > i$, then by (b) we would get $B^{(i-1)}_{j_i i} = 0$, so it must be the case that $j_{j_i} < i$. But then we get $B^{(i-1)}_{j_i j_{j_i}} = 1$, again by (b). Thus the $j_i^{\text{th}}$ row of $B^{(i-1)}$ contains two non-zero elements, which contradicts (ii) of the inductive hypothesis.

\vspace{0.2cm}
\textit{Step 2:} We now show that (ii) holds after for-loop iteration $i$, first in the limited setting $1 \leq r,s < i$. By symmetry, it suffices to prove that whenever $B^{(i)}_{rs} = 1$, it is the only non-zero element of the $r^{\text{th}}$ row of $B^{(i-1)}$ which does not change during the iteration, since one can repeat the same argument for the element $B^{(i)}_{sr} = 1$ in the $s^{\text{th}}$ row. 
Assuming such an element $B^{(i)}_{rs} = 1$ exists, we have from (a) that $B^{(i-1)}_{rs} = 1$, and by (ii) of the inductive hypothesis it is the only non-zero element of the $r^{\text{th}}$ row of $B^{(i-1)}$. Now for every while loop iteration within for-loop iteration $i$, the $r^{\text{th}}$ row of $B^{(i-1)}$ is not modified by elementary row operations alone, by the same argument as in the proof of (b), or by elementary column operations as $B^{(i-1)}_{ri} = 0$ (this continues to be true by (a) during the iteration). Thus the $r^{\text{th}}$ row of $B^{(i-1)}$ is unchanged proving this case. 

It remains to prove that (ii) also holds for any non-zero element in the $i^{\text{th}}$ row or column of $B^{(i,i)}$. For this consider the $i^{\text{th}}$ row of $B^{(i)}$. If $j_i \geq i$, then we are again done using (b) as the $i^{\text{th}}$ row and column of $B^{(i,i)}$ is zero, so suppose $j_i < i$. We will show that in this case the only non-zero element of the $i^{\text{th}}$ row and $j_i^{\text{th}}$ column of $B^{(i)}$ is $B^{(i)}_{i j_i} = 1$, and the $i^{\text{th}}$ row of $B^{(i-1)}$ is unchanged during the for-loop, and then we are done by symmetry.
We now have a series of consequences. By invariant (iii), which we have already proved above to hold after for-loop iteration $i$, we have $j_{j_i} = i$. Applying (i) of the inductive hypothesis then gives $B^{(i-1)}_{s i} = 0$ for all $s > j_i$, and using symmetry of $B^{(i-1)}$ we conclude that the only non-zero element of the $i^{\text{th}}$ row of $B^{(i-1)}$ is $B^{(i-1)}_{i j_i} = 1$. Next using exactly the same argument as in the last paragraph (for the case $1 \leq r,s < i$), we conclude that the $i^{\text{th}}$ row of $B^{(i-1)}$ is unchanged during for-loop iteration $i$. Now consider the $j_i^{\text{th}}$ column of $B^{(i-1)}$. By (ii) of the inductive hypothesis and (a) we first have $B^{(i-1)}_{s j_i} = B^{(i)}_{s j_i} = 0$ for all $s < i$, and by the observation stated above we also have $B^{(i)}_{s j_i} = 0$ for all $s > i$. This proves that the only non-zero element in the $j_i^{\text{th}}$ column of $B^{(i)}$ is $B^{(i)}_{i j_i} = 1$.

\vspace{0.2cm}
\textit{Step 3:} Finally we show that (i) holds. If $k = i$, then the case $j_i \neq i$ is already true by the observation made above, while the case $j_i = i$ is true by (b). So fix some $1 \leq k < i$. If $j_k \leq i$ then by (b) we have $B^{(i)}_{k, j_k} = 1$, so we are done because invariant (ii) holds after iteration $i$. If $j_i = i$, then $B^{(i-1)} =  B^{(i)}$ and we are done using (i) of the inductive hypothesis. So the interesting case is when $j_i \neq i$ and $j_k > i$. Now there are two subcases $j_i > k$ and $j_i < k$ which need to be handled separately (note that the case $j_i = k$ is ruled out by (b)). First suppose $j_i > k$, and consider the $j_k^{\text{th}}$ column of $B^{(i-1)}$. Since by (i) of the inductive hypothesis we have $B^{(i-1)}_{i j_k} = B^{(i-1)}_{j_i j_k}= 0$, elementary row operations in the while loops during for-loop iteration $i$ do not change this column (this uses $B^{(i-1)}_{i j_k}= 0$), while elementary column operations never act on the column as $B^{(i-1)}_{j_k j_i}= 0$ (this uses symmetry and $B^{(i-1)}_{j_i j_k}= 0$). Thus the $j_k^{\text{th}}$ columns are the same for both $B^{(i-1)}$ and $B^{(i)}$, and this case is proved. 
Finally consider the case $j_i < k$. Notice that since $j_i < k < i$ and we have proved that (iii) holds after for-loop iteration $i$, we have $j_{j_i} = i$, and then by (i) of the induction hypothesis we get $B^{(i-1)}_{si} = 0$ for all $s > j_i$. Moreover since we also proved that (ii) holds, we know that the $i^{\text{th}}$ column of $B^{(i-1)}$ does not change during the for-loop. Now again consider the $j_k^{\text{th}}$ column of $B^{(i-1)}$, focusing only on the rows numbered $k$ to $\ell$. As before, this part of the column is unchanged by elementary row operations during for-loop iteration $i$, so we only need to argue that elementary column operations also do not affect it. But this immediately follows from the pattern of zeros in the $i^{\text{th}}$ column of $B^{(i-1)}$, which finishes the proof of the claim.

\vspace{0.2cm}
We now finish up the proof of the lemma.
\setmargins
\begin{enumerate}[(a)]
\setcounter{enumi}{2}
    \item This follows directly from (i) of the above claim.
    \item This is a sub-part of invariant (ii) of the claim.
    \item Suppose this is not true, and let $k=i+1$. Then by (ii) of the claim and symmetry of $B^{(i)}$, we must have at least two non-zero elements in the $k^{\text{th}}$ row of $B^{(i,k)}$, one of which must be $B^{(i,k)}_{k j_k}$, and let the other non-zero element be $B^{(i,k)}_{k h}$, and note that $j_k < h < k$ (by definition of $j_k$). But by invariant (iii) applied to the $k^{\text{th}}$ for-loop, we also have $j_{j_k} = k$. Since $j_k \leq i$, we can now apply (i) of the claim, but this gives a contradiction as $B^{(i,k)}_{hk} = 1$ by symmetry.
    \item If $j_i \leq i$, then this is just the statement of (iii) of the claim. So let $j_i > i$, and suppose $j_{j_i} \neq i$. Then after for-loop iteration $j_i$, we have $B^{(j_i)}_{ j_{j_i} j_i} = 1 = B^{(j_i)}_{i j_i}$, by (b) and (i) of the claim. But this contradicts invariant (ii).
\end{enumerate}
\end{proof}

From the properties of $B$ in the above lemma, we deduce the following additional simple lemma.

\begin{lemma}
\label{lem:Bstruct}
Let $B\in\mathbb{F}_2^{\ell \times \ell}$ be a symmetric matrix with zero diagonal and at most a single 1 in each row and column. Let $|B|=\sum_{ij}B_{ij}$ be the total number of ones. Then $|B|$ is even and $\rank{B}=|B|$. In particular $\rank{C} = |B|$ is even.
\end{lemma}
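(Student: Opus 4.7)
The plan is to exploit the very rigid structure forced on $B$ by the three hypotheses (symmetric, zero diagonal, at most one 1 per row and column). First I would observe that because $B$ is symmetric with zero diagonal, every 1 at an off-diagonal position $(i,j)$ is matched by a distinct 1 at $(j,i)$, so the ones come in pairs and $|B|$ is even. Moreover, the "at most one 1 per row/column" condition means that the pairing $i \leftrightarrow j$ is a partial matching on $\{1,\dots,\ell\}$: each index is matched to at most one partner. Hence, up to a simultaneous permutation of rows and columns, $B$ is a block-diagonal matrix whose blocks are either the $2 \times 2$ swap $\bigl(\begin{smallmatrix}0&1\\1&0\end{smallmatrix}\bigr)$ (one per matched pair) or the $1 \times 1$ zero block (one per unmatched index).

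Next I would read off the rank from this normal form. Each swap block has rank $2$ and contributes two 1s to $|B|$, while each zero block contributes nothing. If there are $k$ swap blocks then $|B| = 2k$ and $\mathrm{rank}(B) = 2k$, so $\mathrm{rank}(B) = |B|$. Both statements are unaffected by the simultaneous row/column permutation, since that is an action by an invertible matrix of the form $PBP^\top$.

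For the final claim, I would invoke the identity $C = LBL^\top$ produced by Lemma~\ref{lem:LBL}, in which $L$ is lower triangular and invertible. Multiplying on either side by an invertible matrix preserves rank, so $\mathrm{rank}(C) = \mathrm{rank}(B) = |B|$, which we have already shown to be even. This completes the proof.

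There is essentially no serious obstacle here; the only thing to be careful about is justifying that "at most one 1 per row and column" plus symmetry really does force the block-diagonal form, but this is immediate because each nonzero row pins down both its partner row (via symmetry) and no other nonzero entries can appear in those two rows or in the paired column.
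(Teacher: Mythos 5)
Your proof is correct and rests on the same underlying observations as the paper's: the symmetry/zero-diagonal pairing gives $|B|$ even, the "one 1 per row and column" condition gives $\rank{B}=|B|$ (the paper counts the $|B|$ nonzero rows and notes they are independent, which is exactly what your block-diagonal normal form encodes), and invertibility of $L$ in $C=LBL^\top$ transfers the rank to $C$.
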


\begin{proof}
It is clear that $|B|$ is even due to symmetry of $B$ and it having zero diagonal. Since each row has at most a single $1$, $B$ has exactly $|B|$ non-zero rows, all of which are independent since each column of $B$ has at most a single $1$. This shows $\rank{B} = |B|$. Finally, since $C=LBL^\top$ and $L$ is invertible, $\rank{C}=\rank{B}$.
\end{proof}

Let us apply this linear algebra to our study of Paulis. Let $P\in\mathbb{F}_2^{\ell\times2n}$ be the matrix representing the list of Paulis $\{p_0,p_1,\dots,p_{\ell-1}\}$ (which we are looking for) in symplectic notation. Therefore, $\{p_0,p_1,\dots,p_{\ell-1}\}$ satisfies the commutation relations specified by $C$ if and only if $P\Lambda P^\top=C$. If $C = 0$, it suffices to just use $1$ qubit and we can take $p_0 = \dots = p_{\ell-1} = I_2$. So let us assume that $C \neq 0$. Let us suppose that we decompose $C=LBL^\top$ as in Lemma~\ref{lem:LBL}. As we now explain, we can easily create a list of Paulis $\{q_0,q_1,\dots,q_{\ell-1}\}$ with commutation relations given by $B$, or equivalently a matrix $Q \in \mathbb{F}_2^{\ell \times 2n}$ such that $Q\Lambda Q^\top=B$. For each pair $(i,j)$ such that $i<j$ and $B_{ij}=1$, we introduce a qubit (call it qubit $(i,j)$), and let $q_i=X_{(i,j)}$ and $q_j=Z_{(i,j)}$ act on that qubit. For all zero rows $i$ of $B$, we set $q_i=I_2$. We have thus used $n=\rank{B}/2=\rank{C}/2$ qubits to make the Paulis $\{q_0,q_1,\dots,q_{\ell-1}\}$, and one easily verifies that this list of Paulis satisfy the commutation relations specified by $B$. Now $Q\Lambda Q^\top=B$ implies $(LQ)\Lambda(LQ)^\top=LBL^\top=C$, so $P=LQ$ represents a list of Paulis satisfying the commutation relations specified by $C$. It is also easy to check that $\rank{P}=\rank{LQ}=\rank{Q} = 2n = \rank{C}$. This discussion leads to the following theorem.

\begin{theorem}
\label{thm:plist_construct}
Let $C\in\mathbb{F}_2^{\ell \times \ell}$ be symmetric with zero diagonal. A list of $n$-qubit Paulis $\mathcal{C}=\{p_0,p_1,\dots,p_{\ell-}\} \subseteq \mathcal{P}_n$ such that $C_{ij}=\com{p_i}{p_j}$ for all $0 \leq i,j \leq \ell-1$, exists if and only if $n \geq \max (1, \rank{C}/2)$. Moreover, when $\mathcal{C}$ exists, $\dim(\mathcal{C}) \geq \rank{C}$, with equality if $n = \rank{C}/2$.
\end{theorem}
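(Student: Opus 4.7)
The plan is to translate the problem into linear algebra over $\mathbb{F}_2$ via the symplectic representation and then apply the factorization tool from Lemma~\ref{lem:LBL}. Recall that a list $\{p_0,\dots,p_{\ell-1}\}\subseteq\mathcal{P}_n$ corresponds to a matrix $P\in\mathbb{F}_2^{\ell\times 2n}$ whose rows are the symplectic vectors $v_{p_i}$, and the commutativity matrix satisfies $P\Lambda P^\top=C$ with $\Lambda=\bigl(\begin{smallmatrix}0&I_n\\I_n&0\end{smallmatrix}\bigr)$. Moreover, $\dim(\mathcal{C})=\rank{P}$. So the theorem becomes: for $C$ symmetric with zero diagonal, there exists $P\in\mathbb{F}_2^{\ell\times 2n}$ with $P\Lambda P^\top=C$ iff $n\geq\max(1,\rank{C}/2)$, and then $\rank{P}\geq\rank{C}$ with equality when $n=\rank{C}/2$.

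For the necessity direction, I would argue that $P\Lambda P^\top=C$ forces $\rank{C}\leq\rank{P}\leq \min(\ell,2n)$ by the standard rank inequality for matrix products (with $\Lambda$ invertible), so $n\geq\rank{C}/2$. The condition $n\geq 1$ is just the requirement that there is at least one qubit to support the Paulis (needed even in the trivial case $C=0$, $\ell\geq 1$). Since $\dim(\mathcal{C})=\rank{P}\geq\rank{C}$, the first part of the ``moreover'' claim follows immediately from the same rank inequality.

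For sufficiency, I would follow the explicit construction already sketched in the paragraph preceding the theorem. Using Lemma~\ref{lem:LBL}, decompose $C=LBL^\top$ with $L$ invertible lower triangular and $B$ symmetric with zero diagonal and at most one $1$ per row and column. By Lemma~\ref{lem:Bstruct}, $|B|=\rank{C}$ is even, so the nonzero entries of $B$ pair up as $\{(i,j),(j,i):i<j\}$. For each such pair, introduce a dedicated qubit and set $q_i=X$, $q_j=Z$ on it; for indices $i$ whose row of $B$ is zero, set $q_i=I_2$. This uses exactly $\rank{C}/2$ qubits (or one qubit if $C=0$), and the resulting matrix $Q$ satisfies $Q\Lambda Q^\top=B$ by direct computation of the commutation pattern of the chosen $X$'s and $Z$'s. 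Then $P:=LQ$ gives $P\Lambda P^\top=LBL^\top=C$, and extra qubits (beyond $\rank{C}/2$) are handled by padding each Pauli with $I_2$'s on the additional qubits. Finally, to see the equality $\dim(\mathcal{C})=\rank{C}$ when $n=\rank{C}/2$, note $\rank{P}\leq 2n=\rank{C}$ combines with the already established inequality $\rank{P}\geq\rank{C}$.

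There is no real obstacle here since both directions are essentially bookkeeping once Lemma~\ref{lem:LBL} is in hand; the only thing that requires a moment of care is verifying that the $X/Z$ assignment on paired qubits really produces commutation pattern $B$ (trivial, since the anticommuting $X,Z$ pair on a fresh qubit contributes exactly the two symmetric $1$'s in $B$ and commutes with every other $q_k$ acting as $I_2$ there). The construction also makes clear that the bound $n\geq\rank{C}/2$ is tight.
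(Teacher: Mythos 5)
Your proposal is correct and follows essentially the same route as the paper: decompose $C=LBL^\top$ via Lemma~\ref{lem:LBL}, realize $B$ with an $X$/$Z$ pair on a fresh qubit for each nonzero pair of entries, set $P=LQ$, and pad with identities for larger $n$. The one place you diverge is the necessity of $n\ge\rank{C}/2$: you get it directly from $\rank{C}=\rank{P\Lambda P^\top}\le\rank{P}\le 2n$, whereas the paper argues by contradiction that $L^{-1}P$ would yield $\rank{C}/2$ anticommuting pairs of Paulis on too few qubits; your rank-inequality argument is valid (and is the same tool the paper itself uses for the ``moreover'' clause), so this is a minor streamlining rather than a different method.
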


\begin{proof}
If $C = 0$, then as argued above $n=1$ is the minimum number of needed qubits to achieve the commutation relations specified by $C$; thus the first statement is true. So assume $C \neq 0$, which implies $\rank{C} \geq 2$ by Lemma~\ref{lem:Bstruct}. The discussion in the previous paragraph shows the construction of $\mathcal{C}$ when $n = \rank{C}/2$, and this clearly suffices as if $n > \rank{C}/2$, then one can simply act by $I_2$ on the extra qubits. We argue the converse, i.e. for the minimality of $n$, by contradiction. Let $P\in\mathbb{F}_2^{\ell \times 2n}$ be the matrix such that $P\Lambda P^\top=C$ with $n<\rank{C}/2$. Let $C=LBL^\top$ from Lemma~\ref{lem:LBL}. Then $L^{-1}P$ represents a list of Paulis with commutation relations given by $B$. But by the structure of $B$, this implies $\rank{B}/2=\rank{C}/2$ anticommuting pairs of Paulis (i.e. the Paulis can be grouped in pairs that anticommute with each other but commute with all other Paulis in the list), which is known to be impossible on fewer than $\rank{C}/2$ qubits.\footnote{Notice that, ignoring phases, the size of the group generated by $n$ anticommuting pairs of Paulis is $4^n$, since every paired Pauli is necessarily independent. Therefore, the Paulis must act on at least $n$ qubits.}

Now suppose $\mathcal{C} \subseteq \mathcal{P}_{n}$ exists satisfying the commutation relations specified by $C$, where we already know $n \geq \max (1, \rank{C}/2)$. Representing the Paulis in $\mathcal{C}$ by a matrix $P \in \mathbb{F}_2^{\ell \times 2n}$ again gives $P\Lambda P^\top=C$. Thus $\rank{C} \leq \rank{P} = \dim(\mathcal{C})$. If in addition $2n = \rank{C}$, then $\rank{P} \leq \min (\ell, \rank{C}) \leq \rank{C}$, which implies $\dim{\mathcal{C}} = \rank{C}$.
\end{proof}

We demonstrate the use of Theorem~\ref{thm:plist_construct} with a couple of examples, the first of which is well known, and the second one concerns CALs, which will be discussed much more thoroughly in the next subsection.

\begin{example}
\label{ex:anticomm}
Suppose
\begin{equation}
C=\left(\begin{array}{ccccc}
0&1&1&&1\\
1&0&1&\dots&1\\
1&1&0&&1\\
&\vdots&&\ddots&\vdots\\
1&1&1&\dots&0
\end{array}\right)\in\mathbb{F}_2^{\ell \times \ell}, \;\; \ell \geq 2,
\end{equation}
so that we are looking for a list of anticommuting Paulis. It is well known that
\begin{equation}
\rank{C}/2=\bigg\{\begin{array}{ll}(\ell-1)/2,& \ell\text{ odd}\\ \ell/2,& \ell\text{ even}\end{array}.
\end{equation}
Then, the construction from Theorem~\ref{thm:plist_construct} with $n = \rank{C}/2$, returns the Jordan-Wigner \cite{wigner1928paulische} list --- i.e.~for $h\in\{0,1,\dots,2\lfloor \ell/2\rfloor-1\}$,
\begin{equation}
p_h=\bigg\{\begin{array}{ll}
X_{h/2}\prod_{i=0}^{h/2-1}Y_i,&h\text{ even}\\
Z_{(h-1)/2}\prod_{i=0}^{(h-3)/2}Y_i,&h\text{ odd}
\end{array},
\end{equation}
and, if $\ell$ is odd, the final Pauli $p_{\ell-1}=\prod_{i=0}^{n-1}Y_i$.
\end{example}

\begin{example}
\label{ex:cal}
Suppose
\begin{equation}
C=\left(\begin{array}{ccccc}
0&1&0&&1\\
1&0&1&\dots&0\\
0&1&0&&0\\
&\vdots&&\ddots&\vdots\\
1&0&0&\dots&0
\end{array}\right)\in\mathbb{F}_2^{\ell \times \ell}, \;\; \ell \geq 3,
\end{equation}
so that we are looking for a cyclically anticommuting list of Paulis. We note that the rank of $C$ satisfies
\begin{equation}
\rank{C}/2=\bigg\{\begin{array}{ll}(\ell-1)/2,& \ell\text{ odd}\\(\ell-2)/2,& \ell \text{ even}\end{array}.
\end{equation}
The construction from Theorem~\ref{thm:plist_construct}, again with $n = \rank{C}/2$ qubits, now gives the list
\begin{equation}
    \begin{split}
        \left\{X_0,\ Z_0,\ X_0  X_1,\ Z_1,\ \dots,\ X_{n-2}X_{n-1},\ Z_{n-1},\ Y_{n-1}\prod_{i=0}^{n-2}Z_i\right\},& \quad \ell \text{ odd},\\
        \left\{X_0,\ Z_0,\ X_0 X_1,\ Z_1,\ \dots,\ X_{n-2}X_{n-1},\ Z_{n-1},\ X_{n-1},\ \prod_{i=0}^{n-1} Z_i\right\},&\quad \ell \text{ even}.
    \end{split}
\end{equation}
\end{example}

By Theorem~\ref{thm:plist_construct}, $\dim(\mathcal{C})=2n$ in the preceding examples, so it ensures that $\langle\mathcal{C}, iI\rangle=\mathcal{P}_n$. However, we can even generalize the construction to include $\dim(\mathcal{C})$ as an input parameter.

\begin{theorem}
Suppose $C\in\mathbb{F}_2^{\ell \times \ell}$ is symmetric and has zero diagonal, and let $k\in\mathbb{Z}$ such that $\rank{C}\le k\le \ell$. A list of Paulis $\mathcal{C}=\{p_0,p_1,\dots,p_{\ell-1}\} \subseteq \mathcal{P}_n$ satisfying (a) $C_{ij}=\com{p_i}{p_j}$ for all $0 \leq i,j \leq \ell - 1$, and (b) $\dim(\mathcal{C})=k$ exists if and only if $n \geq \max (1, k-\rank{C}/2)$.
\end{theorem}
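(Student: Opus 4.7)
The plan is to split into the two directions, mirroring the template of Theorem~\ref{thm:plist_construct}: I would prove sufficiency by an explicit construction that extends the $n = \rank{C}/2$ list from Theorem~\ref{thm:plist_construct} by tensoring on commuting Pauli factors, and necessity by a symplectic-geometry argument on the row space of the list's symplectic representation.

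For sufficiency, I would first invoke Theorem~\ref{thm:plist_construct} to produce a list $\{q_0, \dots, q_{\ell-1}\} \subseteq \mathcal{P}_{\max(1, \rank{C}/2)}$ with commutation matrix $C$ and $\dim\{q_i\} = \rank{C}$ (taking all identities on a single qubit when $C = 0$ and $k = 0$). Choose indices $J \subseteq \{0, \dots, \ell-1\}$ with $|J|=\rank{C}$ such that $\{q_i\}_{i \in J}$ is independent; using $k \le \ell$, pick a disjoint $S \subseteq \{0,\dots,\ell-1\} \setminus J$ of size $k - \rank{C}$. On $k - \rank{C}$ fresh qubits, set $r_i = Z_{\sigma(i)}$ for a bijection $\sigma : S \to \{0, \dots, k - \rank{C} - 1\}$ and $r_i = I$ for $i \notin S$, and define $p_i = q_i \otimes r_i$ on $n = k - \rank{C}/2$ qubits. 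Since the $r_i$'s mutually commute, $\com{p_i}{p_j} = \com{q_i}{q_j} = C_{ij}$. A block-structure argument (the $p_i$'s for $i \in S$ act non-trivially on the new qubits in linearly independent ways, while those for $i \notin S$ act trivially there and reduce to $q_i$, with $J$ witnessing $\dim\{q_i : i \notin S\} = \rank{C}$) then gives $\dim\{p_i\} = (k - \rank{C}) + \rank{C} = k$.

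For necessity, I would suppose such a list exists on $n$ qubits, represented in symplectic notation by $P \in \mathbb{F}_2^{\ell \times 2n}$ with $P \Lambda P^\top = C$ and $\rank{P} = k$. Let $W \subseteq \mathbb{F}_2^{2n}$ be the row space of $P$ and $R = W \cap W^\perp$ its symplectic radical (with $W^\perp$ taken in the ambient symplectic space). Standard linear algebra gives $\rank(\Lambda|_W) = \rank{C}$ (since $C$ is the Gram matrix of $W$ up to multiplication on both sides by matrices of full column rank), so $\dim R = k - \rank{C}$. Pick any complement $W'$ of $R$ in $W$, so $W = R \oplus W'$ and $\dim W' = \rank{C}$. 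The form $\Lambda|_{W'}$ is non-degenerate: any $w' \in W'$ with $\Lambda(w', W') = 0$ would, combined with $\Lambda(w', R) = 0$, give $\Lambda(w', W) = 0$, placing $w' \in W^\perp \cap W' = R \cap W' = \{0\}$. By Lemma~\ref{lem:Bstruct}, $\rank{C}$ is even, so $W'$ is a non-degenerate symplectic subspace of $\mathbb{F}_2^{2n}$ of even dimension; its symplectic complement $(W')^\perp$ has dimension $2n - \rank{C}$ and is itself non-degenerate. Because $R \subseteq W^\perp \subseteq (W')^\perp$ is isotropic, its dimension is at most half that of $(W')^\perp$, i.e.\ $k - \rank{C} \le (2n - \rank{C})/2$, yielding $n \ge k - \rank{C}/2$. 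Combined with the trivial $n \ge 1$, this is the desired bound.

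The necessity direction is the main obstacle. Naive arguments---$\dim W \le 2n$ gives only $n \ge k/2$, and viewing $R$ as isotropic inside the full ambient $\mathbb{F}_2^{2n}$ gives only $n \ge k - \rank{C}$---both fall short of the claimed $n \ge k - \rank{C}/2$. The sharp bound requires the observation that $W$ itself contains a non-degenerate symplectic part $W'$, forcing $R$ into the strictly smaller symplectic complement $(W')^\perp$ of dimension $2n - \rank{C}$, so that the maximal-isotropic bound in this smaller ambient exactly accounts for the missing $\rank{C}/2$.
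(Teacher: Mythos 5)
Your proof is correct, and its necessity direction takes a genuinely different, self-contained route from the paper's. For sufficiency the paper works entirely inside the normal form: it decomposes $C=LBL^\top$ via Lemma~\ref{lem:LBL}, builds hyperbolic pairs $X_{(i,j)},Z_{(i,j)}$ for the nonzero entries of $B$, places single-qubit $Z$'s on $k-\rank{C}$ of the zero rows of $B$, and conjugates back by $L$; you instead tensor the extra $Z$'s directly onto the output of Theorem~\ref{thm:plist_construct}. These are essentially the same construction, and your block-structure count of $\dim(\mathcal{C})$ is sound, but note one edge case: when $C=0$ and $k\ge1$ your base list already occupies a qubit carrying only identities, so your total is $k+1$ rather than the minimal $k$; either reuse that qubit for one of the $r_i$ or treat $C=0$ separately, as the paper does. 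For necessity the paper passes to $Q=L^{-1}P$, observes that the structure of $B$ forces an independent, fully commuting sublist of size $k-\rank{C}/2$ (the zero rows contribute $k-\rank{C}$ independent elements and one member of each of the $\rank{C}/2$ anticommuting pairs contributes the rest), and then cites the external bound that $n$ qubits support at most $n$ independent commuting Paulis. Your argument is coordinate-free: you identify the radical $R=W\cap W^\perp$ of dimension $k-\rank{C}$, split off a non-degenerate $W'$ of dimension $\rank{C}$ (even by Lemma~\ref{lem:Bstruct}), and bound $R$ as an isotropic subspace of the non-degenerate complement $(W')^\perp$ of dimension $2n-\rank{C}$. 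Both arguments ultimately rest on the same maximal-isotropic-subspace fact, but yours avoids the normal-form machinery and the citation at the cost of slightly more symplectic linear algebra, and your closing remark about why the two naive bounds fall short is exactly the right diagnosis of where the $\rank{C}/2$ saving comes from.
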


\begin{proof}
First consider the case $C = 0$, then as argued in Theorem~\ref{thm:plist_construct} we need $n \geq 1$. But it is well known that on $n$ qubits, the maximum size of an independent commuting list of Paulis is $n$, so $k \leq n$, and this also proves the converse for this case. For the rest of the proof assume $C \neq 0$, which implies $\rank{C} \geq 2$ by Lemma~\ref{lem:Bstruct}. Let $[\ell] = \{0,1,\dots, \ell-1\}$. We begin the same way as Theorem~\ref{thm:plist_construct}, decomposing $C=LBL^\top$ and creating a set of Paulis $q_i=X_{(i,j)}$, $q_j=Z_{(i,j)}$ for all pairs $(i,j)$ such that $i<j$ and $B_{ij}=1$. Let $\mathcal{I}\subseteq[\ell]$ be the indices of the rows in $B$ that are all zero, and partition $\mathcal{I}$ into two sets $\mathcal{I}_1$ and $\mathcal{I}_2$ such that $|\mathcal{I}_1|=k-\rank{C}$. For each element $i\in\mathcal{I}_1$ we introduce a qubit (labeled by $i$) and set $q_i=Z_i$. We now have $n=k-\rank{C}/2$ qubits. We form $Q\in\mathbb{F}_2^{\ell \times 2n}$ representing the Pauli list $\{q_0,q_1,\dots,q_{\ell-1}\}$ as symplectic vectors and calculate $P=LQ$ representing the desired list $\mathcal{C}=\{p_0,p_1,\dots,p_{\ell-1}\}$. Note $Q\Lambda Q^\top=B$ by construction and so $P\Lambda P^\top=LBL^\top=C$. Also note that since $\dim(\{q_0,q_1,\dots,q_{\ell-1}\}) = \rank{Q} = k$ by construction and $L$ is invertible, $\dim(\mathcal{C})=\rank{P} = k$. This shows the existence of $\mathcal{C}$ when $n = k - \rank{C}/2$, and also covers the case $n \geq k - \rank{C}/2$, since one can act on the extra qubits by identity.

It remains to prove the converse. This argument by contradiction is analogous to that in Theorem~\ref{thm:plist_construct}. If $P\in\mathbb{F}_2^{\ell \times 2n'}$ represents the list of Paulis on $n'<n$ qubits satisfying $P\Lambda P^\top=C$ and $\rank{P}=k \geq \rank{C}$, then $Q=L^{-1}P$ represents the list of $n'$-qubit Paulis such that $Q\Lambda Q^\top=B$ and $\rank{Q}=k$. However, the structure of $B$ implies that there is an independent and fully commuting subset of $\{q_0,q_1,\dots,q_{\ell-1}\}$ with size $k-\rank{C}/2$ which is again known to be impossible on $n'< k-\rank{C}/2$ qubits \cite{sarkar-vandenberg2019}.
\end{proof}

\subsection{Properties of cyclically anticommuting lists of Paulis}
\label{subsec:cal}

We will work exclusively in this subsection with the group $\pauligrp{n}$, to not have to keep track of the phase factors of the Paulis. CALs of $\mathcal{P}_n$ were introduced in Definition~\ref{def:CAL}, and CALs of $\pauligrp{n}$ were defined in Section~\ref{subsec:pauli-definitions}. Here we present some more properties of CALs of $\pauligrp{n}$, that we know of currently. Translating these results to CALs of $\mathcal{P}_n$ is obvious. We recall the commutation relations that a CAL satisfies: if $\mathcal{C} = \{\hat{p}_0,\hat{p}_1,\dots,\hat{p}_{k-1}\} \subseteq \pauligrp{n}$ is a CAL, then for distinct $i$ and $j$
\begin{equation}
\label{eq:cal-comm}
\begin{cases}
\{\hat{p}_i, \hat{p}_j\} = 0, & \text{if}\;\; i=j \pm 1 \mod k \\
[\hat{p}_i, \hat{p}_j] = 0, & \text{otherwise}.
\end{cases}
\end{equation}

This subsection is organized as follows: Lemma~\ref{lem:cal-repeating-elements} - Lemma~\ref{lem:cal-maxsize} points out the relationship between CALs and anticommuting lists, and shows that all elements of a CAL must be unique if the CAL has length greater than four, Theorem~\ref{thm:cal-dim-odd} - Corollary~\ref{cor:extremal-cal-mult-identity} addresses the question of dimension and independence of a CAL, and finally Lemma~\ref{lem:cal-expand-contract} - Theorem~\ref{thm:cal-noY-even} looks at whether it is possible to create a CAL, where the representative of each element of the CAL is a Kronecker product of only $X$ and $Z$ Pauli matrices, and possibly the $2 \times 2$ identity matrix $I_2$.

Recall from the proof of Theorem~\ref{thm:CAL_construct}, that on $n=1$ qubits, $\mathcal{C} = \{[X], [Z], [X], [Z]\}$ is a CAL. Clearly this has repeating elements. A natural question is whether it is possible to have CALs of longer length with repeating elements. The next lemma completely addresses this question.
\begin{lemma}
\label{lem:cal-repeating-elements}
Let $\mathcal{C} \subseteq \pauligrp{n}$ be a CAL, for $n \geq 1$. If $\mathcal{C}$ has repeating elements, then $\mathcal{C}$ has one of the following forms
\begin{equation}
\label{eq:cal-repeating-elements}
    \mathcal{C} =
    \begin{cases}
        \{\hat{p},\hat{q},\hat{p},\hat{q}\} & \;\;\;\; \text{ with } \{\hat{p},\hat{q}\} = 0, \\
        \{\hat{p},\hat{q},\hat{p},\hat{s}\} & \;\;\;\; \text{ with } \{\hat{p},\hat{q}\} = \{\hat{p},\hat{s}\} = [\hat{q},\hat{s}] = 0, \\
        \{\hat{p},\hat{q},\hat{s},\hat{q}\} & \;\;\;\; \text{ with } \{\hat{p},\hat{q}\} = \{\hat{q},\hat{s}\} = [\hat{p},\hat{s}] = 0,
    \end{cases}
\end{equation}
where $\hat{p},\hat{q},\hat{s} \in \pauligrp{n}$, and distinct. The last two cases can only happen if $n > 1$.
\end{lemma}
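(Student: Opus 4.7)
The plan is to exploit the cyclic anticommutation pattern to force any repetition to occur at positions two apart in the cycle, which will constrain $\ell$ to divide $4$ and then enumerate cases.

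First I would suppose $\hat{p}_i = \hat{p}_j$ for some $0 \leq i < j \leq \ell-1$ and locate the position of $j$ relative to $i$. Because $\hat{p}_{i-1}$ anticommutes with $\hat{p}_i = \hat{p}_j$, the index $i-1$ must be cyclically adjacent to $j$, i.e.~$i-1 \equiv j \pm 1 \pmod{\ell}$. Since $i \neq j$, this forces $i \equiv j+2 \pmod{\ell}$. Applying the same reasoning to $\hat{p}_{i+1}$ gives $i \equiv j - 2 \pmod{\ell}$. Combining these yields $4 \equiv 0 \pmod{\ell}$, so $\ell \in \{1,2,4\}$. A CAL of length $1$ has only one element so cannot repeat, and a CAL of length $2$ consists of two elements that anticommute and hence are distinct. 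Therefore $\ell = 4$.

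Next I would enumerate which positions repeat. The argument above shows that any equality $\hat{p}_i = \hat{p}_j$ with $i \neq j$ must have $j \equiv i+2 \pmod{4}$. Hence the only possible coincidences are $\hat{p}_0 = \hat{p}_2$ and/or $\hat{p}_1 = \hat{p}_3$, giving exactly three subcases corresponding to Case A ($\hat{p}_0 = \hat{p}_2$ and $\hat{p}_1 = \hat{p}_3$), Case B ($\hat{p}_0 = \hat{p}_2$ only), and Case C ($\hat{p}_1 = \hat{p}_3$ only). In each subcase the commutation relations in \eqref{eq:cal-repeating-elements} are just the CAL relations \eqref{eq:cal-comm} restated after renaming.

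Finally I would argue that Cases B and C require $n > 1$. In both cases the list contains three distinct elements, two of which (call them $\hat{q}$ and $\hat{s}$) commute with each other but both anticommute with a third element $\hat{p}$. In $\pauligrp{1}$, the only elements are $[I], [X], [Y], [Z]$, of which $[I]$ commutes with everything while any two distinct elements of $\{[X], [Y], [Z]\}$ anticommute. Since $\hat{q}$ and $\hat{s}$ anticommute with $\hat{p}$ they are both non-identity, and since they commute with each other they must coincide, contradicting $\hat{q} \neq \hat{s}$. The main (and really the only) obstacle is being careful with the cyclic index arithmetic when $\ell$ is small, which is handled by the length-$1$ and length-$2$ inspection above; the rest is a direct unpacking of the definitions.
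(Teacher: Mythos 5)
Your proof is correct and follows essentially the same route as the paper: both arguments use the fact that the neighbors of a repeated entry must anticommute with both copies to force the length to be $4$ with repeats only at positions two apart, then enumerate the three coincidence patterns and rule out $n=1$ for the asymmetric cases. Your congruence bookkeeping ($4\equiv 0 \pmod{\ell}$) is a slightly tidier packaging of the paper's ``for $\ell\ge 5$ find a neighbor of one index that is not a neighbor of the other'' step, and your explicit $\pauligrp{1}$ argument fills in a detail the paper merely asserts.
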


\begin{proof}
First notice that if $\mathcal{C}$ has any of the forms in Eq.~\eqref{eq:cal-repeating-elements}, then the commutation relations imply that $\mathcal{C}$ is indeed a CAL. Now we claim that if $\mathcal{C}$ has at least one repeating element, then $|\mathcal{C}| = 4$. We first prove this claim. If $|\mathcal{C}| \leq 1$, then there can obviously be no repeating element. If $2 \leq |\mathcal{C}| \leq 3$, then $\mathcal{C}$ is an anticommuting list, and so again there can be no repeating element. Now suppose that $|\mathcal{C}| \geq 5$, and $\mathcal{C} = \{\hat{p}_0, \dots, \hat{p}_{k-1}\}$ with repeating elements $\hat{p}_{\ell} = \hat{p}_{\ell'}$, for $0 \leq \ell < \ell' \leq k$. Notice first that $\ell' \notin \{\ell + 1, \; (\ell - 1) \mod k\}$, because that would imply that $\hat{p}_{\ell}$ anticommutes with itself. Then there must exist an element $\hat{p}_m \in \mathcal{C}$, $m \in \{\ell + 1, \; (\ell - 1) \mod k\}$, such that $\{\hat{p}_{\ell}, \hat{p}_m\} = 0$, and $[\hat{p}_{\ell'}, \hat{p}_m] = 0$. The existence of $\hat{p}_m$ follows from the fact that $\mathcal{C}$ is a CAL, and $|\mathcal{C}| \geq 5$. But this is a contradiction because it implies that $\hat{p}_m$ commutes and anticommutes with the same element, and therefore $|\mathcal{C}| < 5$. This proves the claim.

Using the claim, we can suppose that $\mathcal{C} = \{\hat{p},\hat{q},\hat{r},\hat{s}\}$, for $\hat{p},\hat{q},\hat{r},\hat{s} \in \pauligrp{n}$, where all of them may not be distinct, and satisfies the CAL commutation relations in Eq.~\eqref{eq:cal-comm}. We already know that $\hat{p} \neq \hat{q}$, $\hat{q} \neq \hat{r}$, $\hat{r} \neq \hat{s}$, and $\hat{s} \neq \hat{p}$, because otherwise Eq.~\eqref{eq:cal-comm} is violated. Now there are two cases: (i) $\hat{p} = \hat{r}$, and (ii) $\hat{p} \neq \hat{r}$. Assuming the first case $\hat{p} = \hat{r}$, we have two subcases: either $\hat{q} = \hat{s}$ which gives that $\mathcal{C} = \{\hat{p},\hat{q},\hat{p},\hat{q}\}$, or $\hat{q} \neq \hat{s}$ which gives $\mathcal{C} = \{\hat{p},\hat{q},\hat{p},\hat{s}\}$, and both are thus of the forms in Eq.~\eqref{eq:cal-repeating-elements}. Next assume the second case $\hat{p} \neq \hat{r}$, and thus all $\hat{p},\hat{q}$, and $\hat{r}$ are distinct. Since there is at least one repeating element in $\mathcal{C}$, this means that $\hat{s}$ equals either $\hat{p},\hat{q},$ or $\hat{r}$. But we already argued above that $\hat{p} \neq \hat{s}$ and $\hat{r} \neq \hat{s}$, and so in fact $\hat{q} = \hat{s}$. This gives that $\mathcal{C} = \{\hat{p},\hat{q},\hat{r},\hat{q}\}$, which is the third form in Eq.~\eqref{eq:cal-repeating-elements}. The proof is finished by noting that when $n = 1$, only the first case in Eq.~\eqref{eq:cal-repeating-elements} can happen.
\end{proof}

The next lemma concerns the maximum possible length of a CAL on $n$ qubits. The arguments in the proof of parts (b) and (e) of the lemma already appear in the proof of Theorem~\ref{thm:CAL_construct}, when the length of the CAL is at least three. We record the lemma here for completeness, as it covers the case when the length is less than three. Also part (c) shows how to get a CAL from an anticommuting list of $\pauligrp{n}$.

\begin{lemma}
\label{lem:cal-maxsize}
Let $\mathcal{C} = \{\hat{p}_0, \dots, \hat{p}_{\ell-1}\} \subseteq \pauligrp{n}$ be a CAL, and $|\mathcal{C}| \geq 1$.

\setmargins
\begin{enumerate}[(a)]
\item The list $\mathcal{C}' = \{\hat{p}'_0, \dots, \hat{p}'_{\ell-1}\}$ defined as $\hat{p}'_i = \hat{p}_{j}$ with $j = (i+r) \mod \ell$, is also a CAL for all translations $r \in \mathbb{Z}$.
\item If $|\mathcal{C}| = 1$, let $\mathcal{A} = \emptyset$, otherwise let $\mathcal{A} = \{\hat{q}_0, \dots, \hat{q}_{\ell-2}\}$ defined as $\hat{q}_i = \prod_{j=0}^{i} \hat{p}_j$, for all $0 \leq i \leq \ell-2$. Then $\mathcal{A}$ is an anticommuting list. Thus an anticommuting list exists of size $|\mathcal{C}| - 1$.
\item Let $\mathcal{A} = \{\hat{q}_0, \dots, \hat{q}_{k-1}\} \subseteq \pauligrp{n}$ be an anticommuting list, and $|\mathcal{A}| \geq 2$. If we define $\mathcal{D} = \{\hat{d}_0, \dots, \hat{d}_{k}\}$ by $\hat{d}_0 = \hat{q}_0$, $\hat{d}_{k} = \hat{q}_{k-1}$, and $\hat{d}_{i+1} = \hat{q}_i \hat{q}_{i+1}$ for all $0 \leq i \leq k-2$, then $\mathcal{D}$ is a CAL.
\item $|\mathcal{C}| \leq 2n+2$. Thus there exists a CAL on $n$ qubits of every length $0 \leq \ell' \leq 2n+2$.
\item The minimum number $n$ of qubits needed to create a CAL of length $\ell'$ is 
\begin{equation}
\label{eq:cal-min-qubits}
    n = 
    \begin{cases}
    1 & \; \text{ if } \ell' \leq 2, \\
    (\ell'-1)/2 & \; \text{ if } \ell' > 2 \text{ and odd}, \\
    (\ell'-2)/2 & \; \text{ if } \ell' > 2 \text{ and even}. 
    \end{cases}
\end{equation}
\item $\mathcal{C} = \{[I]\}$ if and only if $[I] \in \mathcal{C}$.
\end{enumerate}
\end{lemma}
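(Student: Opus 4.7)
The plan is to dispatch the six parts in order, converting most of the commutation bookkeeping into $\mathbb{F}_2$-bilinear computations in the symplectic representation of $\pauligrp{n}$ from Appendix~\ref{subsec:pauli-definitions}. Part~(a) is immediate because the CAL condition depends only on the cyclic neighbor relation on indices modulo $\ell$, which is preserved by any rotation.

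For part~(b), write $v_k$ for the symplectic vector of $\hat{p}_k$, so $v_{\hat{q}_i}=\sum_{k=0}^i v_k$, and use bilinearity to obtain
\[
\com{\hat{q}_i}{\hat{q}_j}\;=\;\sum_{k=0}^i\sum_{l=0}^j\com{\hat{p}_k}{\hat{p}_l}.
\]
For $0\le i<j\le\ell-2$, I need only count pairs $(k,l)$ with $l\equiv k\pm1\pmod\ell$ and $k\ne l$; because $j\le\ell-2$ the wrap-around pair $(0,\ell-1)$ does not appear, so $k=0$ contributes $1$ and each $0<k\le i$ contributes $2$, totaling the odd number $2i+1$. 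Hence $\hat{q}_i$ and $\hat{q}_j$ anticommute, and a singleton $\mathcal{A}$ handles $|\mathcal{C}|=2$. Part~(c) is analogous: letting $w_i$ be the symplectic vector of $\hat{q}_i$, the definition gives $v_{\hat{d}_0}=w_0$, $v_{\hat{d}_{i+1}}=w_i+w_{i+1}$, and $v_{\hat{d}_k}=w_{k-1}$, and since $\com{\hat{q}_i}{\hat{q}_j}=1-\delta_{ij}$, a short case split on cyclic adjacency in $\mathbb{Z}/(k+1)$ shows that neighbors accumulate an odd count while non-neighbors accumulate an even count; the wrap-around pair reduces to $\com{w_0}{w_{k-1}}=1$ because $k\ge2$.

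Part~(d) is now immediate: (b) converts any length-$\ell$ CAL on $n$ qubits into an anticommuting list of length $\ell-1$, and the maximum anticommuting set size bound from \cite{sarkar-vandenberg2019} forces $\ell-1\le 2n+1$. Existence of CALs of every length $\ell'\in[0,2n+2]$ follows by applying (c) to the Jordan--Wigner anticommuting list of length $\ell'-1$ from Example~\ref{ex:anticomm}, with the empty and singleton cases handled directly. For part~(e), the lower bound combines (b) with the minimum-qubit formula for anticommuting lists recorded in Example~\ref{ex:anticomm}; substituting $k=\ell'-1$ reproduces the stated formula, and the matching upper bound comes from the (c)+Jordan--Wigner construction. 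The case $\ell'\le2$ manifestly needs $n=1$. Part~(f) is a one-liner: if $[I]\in\mathcal{C}$ and $|\mathcal{C}|\ge2$ then $[I]$ must anticommute with its cyclic neighbor, which is impossible since $[I]$ is central.

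The only delicate step is part~(b); an incorrect treatment of the wrap-around for small $\ell$ would silently introduce an even count and invalidate the whole chain, since (c), (d), and (e) all rest on it. The constraint $j\le\ell-2$ is precisely what excludes the $(0,\ell-1)$ contribution, and this is the main obstacle to watch. Everything downstream is routine once the symplectic bilinearity setup is in place.
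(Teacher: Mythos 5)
Your proposal is correct and follows essentially the same route as the paper: the same reductions between CALs and anticommuting lists in parts (b)--(d), the same appeal to the maximum anticommuting-set size and to the Jordan--Wigner construction for (d)--(e), and the same one-line centrality argument for (f). The only difference is cosmetic: where the paper verifies the commutation relations in (b) and (c) by directly moving Pauli factors past one another in the telescoped products, you sum entries of the commutation matrix via symplectic bilinearity, and your parity count (including the check that the wrap-around pair $(0,\ell-1)$ is excluded by $j\le\ell-2$) is accurate.
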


\vspace{0.5cm}
\begin{proof}
\setmargins
\begin{enumerate}[(a)]
\item This is easily checked and follows from the commutation relations of the elements of $\mathcal{C}$.
\item If $|\mathcal{C}| = 1$, $\mathcal{A} = \emptyset$ is anticommuting by convention. Otherwise take any two distinct elements $\hat{q}_i, \hat{q}_j \in \mathcal{A}$, and without loss of generality assume that $i < j$, and thus $\hat{q}_j = \hat{q}_i \prod_{k = i+1}^{j} \hat{p}_{k}$. From the commutation relations of CAL, we have that $\{\hat{p}_i, \hat{p}_{i+1}\} = 0$, $[\hat{p}_i, \hat{p}_{k'}] = 0$ for all $i+1 < k' \leq j$, and $[\hat{p}_k, \hat{p}_{k'}] = 0$ for all $0 \leq k < i$, and $i+1 \leq k' \leq j$. Then $\hat{q}_i \hat{q}_j = \hat{q}_i (\hat{q}_i \prod_{k = i+1}^{j} \hat{p}_{k}) = -(\hat{q}_i \prod_{k = i+1}^{j} \hat{p}_{k}) \hat{q}_i = -\hat{q}_j \hat{q}_i$. Thus $\mathcal{A}$ is an anticommuting list of size $|\mathcal{C}| - 1$.
\item As in (b), one easily checks that the elements of $\mathcal{D}$ satisfy the CAL commutation relations of Eq.~\ref{eq:cal-comm}.
\item For the sake of contradiction, assume that $|\mathcal{C}| \geq 2n+3$. Then by (b) there exists an anticommuting list $\mathcal{A}$ of size at least $2n+2$. But it is a well-known fact that the largest size of an anticommuting list of $\pauligrp{n}$ is $2n+1$ (see for example \cite{sarkar-vandenberg2019} and Proposition 9 in \cite{hrubes2016families}), which is a contradiction. For the last part, if $\ell' \leq 2$, then choose any anticommuting list of size $\ell'$. If $\ell' > 2$, choose an anticommuting list of size $\ell' - 1$, and then apply (c).
\item The case $\ell' \geq 3$ follows from (d) (and has been proved previously). For the case $\ell' \leq 2$, note that the set $\{[X]\}$ is a CAL of length 1, and the set $\{[X], [Y]\}$ is a CAL of length 2.
\item Suppose $[I] \in \mathcal{C}$ and $\mathcal{C} \neq \{[I]\}$. Then $\mathcal{C}$ must contain an element $\hat{p}$ that anticommutes with $[I]$ which is impossible.
\end{enumerate}
\end{proof}

It is clear from the above lemma that extremal CALs of length $\ell \geq 3$ correspond precisely to the last two cases of Eq.~\eqref{eq:cal-min-qubits}, while extremal CALs of length $1 \leq \ell \leq 2$ correspond to the first case. We now turn to the question of the dimension of a CAL. For the next few proofs, it will be useful to have the following result, that is proved in Section~4 of \cite{sarkar-vandenberg2019}. It uses the notion of commutativity maps in parts (e)-(g), that we defined in Section~\ref{subsec:pauli-definitions}.

\begin{lemma}
\label{lem:anticommuting-sets}
Let $\mathcal{A} = \{\hat{p}_0,\ldots,\hat{p}_{\ell-1}\} \subseteq \pauligrp{n}$ be an anticommuting list, and let $\text{Com}_{\mathcal{A}}$ denote the commutativity map with respect to $\mathcal{A}$. Then
\setmargins
\begin{enumerate}[(a)]
\item $\prod \mathcal{A} \neq [I]$ implies $\mathcal{A}$ is an independent generator of the subgroup $\langle \mathcal{A} \rangle$ of order $2^\ell$.
\item For any non-empty list $\mathcal{H} \subset \mathcal{A}$, we have $\prod\mathcal{H} \neq [I]$, so $\mathcal{H}$ is independent.
\item  $\prod\mathcal{A} = [I]$ implies $\ell$ is odd.
\item Suppose $\mathcal{A}$ is non-empty, independent and $\ell$ is even. Then for every $v \in \mathbb{F}_2^\ell$, there exists a unique element $\hat{p} \in \langle \mathcal{A} \rangle$ such that $\commap{\mathcal{A}}{\hat{p}} = v$.
\item Suppose $\mathcal{A}$ is independent, $\mathcal{A} \neq \{[I]\}$, and $\ell$ is odd. Then for every $v \in \mathbb{F}_2^\ell$ containing an odd number of zeros, there exists exactly two elements $\hat{p},\hat{q} \in \langle \mathcal{A} \rangle$ such that $\commap{\mathcal{A}}{\hat{p}} = \commap{\mathcal{A}}{\hat{q}} = v$.
\item Suppose $\mathcal{A}$ is independent, $\mathcal{A} \neq \{[I]\}$, and $\ell$ is odd. Then each coset $\mathcal{H}$ of $\langle \mathcal{A} \rangle$ in $\pauligrp{n}$ has the property that either for all $\hat{q} \in \mathcal{H}$, $\commap{\mathcal{A}}{\hat{q}}$ has an odd number of zeros, or for all $\hat{q} \in \mathcal{H}$, $\commap{\mathcal{A}}{\hat{q}}$ has an even number of zeros. Moreover the number of cosets of each type are equal.
\end{enumerate}
\end{lemma}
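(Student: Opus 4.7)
The plan is to handle parts (a)--(c) by a single parity argument on subset products, and parts (d)--(f) by studying $\commap{\mathcal{A}}$ as an $\mathbb{F}_2$-linear homomorphism whose image can be pinned down explicitly. For (a) and (b), I would suppose a non-trivial dependence $\prod_{i\in S}\hat p_i = [I]$ with non-empty $S\subseteq\{0,\ldots,\ell-1\}$ and derive a contradiction whenever $S\subsetneq\mathcal{A}$. If $\hat p\in\mathcal{A}\setminus S$, then $\hat p$ anticommutes with every element of $S$, so shuffling $\hat p$ past the product picks up a sign $(-1)^{|S|}$, forcing $|S|$ to be even. On the other hand, rewriting the relation as $\prod_{j\in S\setminus\{i_0\}}\hat p_j=\hat p_{i_0}$ for any $i_0\in S$ and shuffling $\hat p_{i_0}$ through (noting it anticommutes with $|S|-1$ others and commutes with itself) forces $|S|$ to be odd. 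These conflict for $|S|\geq 2$; the case $|S|=1$ would require $\hat p_{i_0}=[I]$, ruled out once $|\mathcal{A}|\geq 2$. Part (c) follows from the same shuffling with $S=\mathcal{A}$: after rearranging to $\prod_{j\neq i_0}\hat p_j = \hat p_{i_0}$, commutation forces $\ell-1$ to be even.

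For (d) and (e), the key observation is $\commap{\mathcal{A}}(\hat p_i)=\vec 1+e_i$ (all ones except a zero at position $i$), so the restriction of $\commap{\mathcal{A}}$ to $\mathcal{A}$ corresponds to the $\ell\times\ell$ matrix $J+I$ over $\mathbb{F}_2$, with $J$ the all-ones matrix. The row sum of $J+I$ is $\ell+1\pmod 2$, so this matrix is invertible iff $\ell$ is even. For even $\ell$ (part (d)), $\commap{\mathcal{A}}|_{\langle\mathcal{A}\rangle}$ is thus surjective onto $\mathbb{F}_2^\ell$, and since $|\langle\mathcal{A}\rangle|=2^\ell$ by (a), it is a bijection. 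For odd $\ell$, linearity yields $\commap{\mathcal{A}}(\prod_{i\in S}\hat p_i)_j=|S|\pmod 2$ if $j\notin S$ and $|S|-1\pmod 2$ if $j\in S$. A quick case split on the parity of $|S|$ shows that the image is exactly the set of vectors with an odd number of zeros, and that $S$ and its complement $\bar S$ (distinct because $\ell$ is odd) produce the same vector, proving (e).

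For (f), let $Z(v)$ denote the number of zeros of $v\in\mathbb{F}_2^\ell$. The identity $Z(w+v)\equiv\ell+Z(w)+Z(v)\pmod 2$ (from $|x|+Z(x)=\ell$), together with $\ell$ odd and $Z(v)$ odd for every $v$ in the image of $\commap{\mathcal{A}}|_{\langle\mathcal{A}\rangle}$ (from (e)), gives $Z(w+v)\equiv Z(w)\pmod 2$, so each coset of $\langle\mathcal{A}\rangle$ has commutation vectors of a single parity. Independence of $\mathcal{A}$ and Theorem~\ref{thm:plist_construct} imply $\commap{\mathcal{A}}$ surjects $\pauligrp{n}$ onto $\mathbb{F}_2^\ell$, so some $\hat s\in\pauligrp{n}$ has $Z(\commap{\mathcal{A}}(\hat s))$ even; left multiplication by $\hat s$ bijects odd cosets with even cosets, giving equinumerosity. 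The main subtlety I anticipate is the edge case $|S|=1$ in the parity arguments of (a)--(b), which is handled by invoking $\mathcal{A}\neq\{[I]\}$ to exclude $\hat p_{i_0}=[I]$; past that, everything reduces to bookkeeping in $\mathbb{F}_2$-linear algebra.
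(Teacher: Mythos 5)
Your proof is correct. Note first that the paper itself does not prove Lemma~\ref{lem:anticommuting-sets} at all --- it is imported from Section~4 of \cite{sarkar-vandenberg2019} --- so there is no in-paper argument to compare against; judged on its own, your conjugation/parity argument for (a)--(c) and the linear-algebra treatment of $\commap{\mathcal{A}}$ for (d)--(f) are the natural route and all the main steps check out: the two parity constraints on $|S|$ do conflict for every non-empty proper $S$, the matrix $I+J$ does describe $\commap{\mathcal{A}}$ on the generators, the case split on $|S|$ correctly identifies the image for odd $\ell$ as the odd-number-of-zeros vectors with $S$ and its complement giving the same vector, and the identity $Z(w+v)\equiv \ell+Z(w)+Z(v)\pmod 2$ does make the parity of $Z$ constant on cosets.

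Two spots deserve tightening. First, in (d) the row-sum observation only shows $I+J$ is singular when $\ell$ is odd; for the converse you should add the one-line kernel computation: $(I+J)x=0$ forces $x=(\textstyle\sum_i x_i)\vec 1$, and for even $\ell$ the choice $x=\vec 1$ is inconsistent, so the kernel is trivial (this same computation also gives you, for odd $\ell$, that the kernel is exactly $\{\vec 0,\vec 1\}$, which is what makes the fibers in (e) have size \emph{exactly} two without any extra counting). Second, in (f) the appeal to Theorem~\ref{thm:plist_construct} is not quite the right tool: that theorem builds a fresh list realizing a commutation matrix and does not assert that a \emph{given} independent list $\mathcal{A}$ on the \emph{given} $n$ qubits admits an extension with a prescribed commutation pattern. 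The fact you need --- surjectivity of $\commap{\mathcal{A}}:\pauligrp{n}\to\mathbb{F}_2^\ell$ --- follows directly from the symplectic formulation: writing $\mathcal{A}$ as a full-row-rank matrix $P\in\mathbb{F}_2^{\ell\times 2n}$, the map $v_q\mapsto P\Lambda v_q^\top$ has rank $\rank{P}=\ell$ since $\Lambda$ is invertible, so every pattern, in particular one with an even number of zeros, is realized. With those two repairs the argument is complete.
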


We now give the answer to the question of the dimension of a CAL. The next two theorems cover the cases when the CAL has lengths odd and even respectively, and the corollary after that specializes these results to the case of an extremal CAL. It is important to remember for the next proofs that by Lemma~\ref{lem:cal-maxsize}(f), whenever $[I] \not \in \mathcal{C} \subseteq \pauligrp{n}$, the dimension of a CAL $\mathcal{C}$ is the minimum number of elements of $\mathcal{C}$ that generates $\langle \mathcal{C} \rangle$.

\begin{theorem}
\label{thm:cal-dim-odd}
Let $\mathcal{C} \subseteq \pauligrp{n}$ be a CAL, and suppose $|\mathcal{C}|$ is odd. 

\setmargins
\begin{enumerate}[(a)]
\item The dimension of $\mathcal{C}$ satisfies
\begin{equation}
    \label{eq:cal-dim-odd}
    \dim(\mathcal{C}) = 
    \begin{cases}
        |\mathcal{C}| \;\; & \text{ if } \prod \mathcal{C} \neq [I], \\
        |\mathcal{C}| - 1 \;\; & \text{ if } \prod \mathcal{C} = [I].
    \end{cases}
\end{equation}
\item If $|\mathcal{C}| \geq 3$, it holds that $\mathcal{C} \setminus \{\hat{p}\}$ is independent for every $\hat{p} \in \mathcal{C}$.
\item For all non-empty multisets $\mathcal{H} \subset \mathcal{C}$, we have $\prod \mathcal{H} \neq [I]$.
\end{enumerate}
\end{theorem}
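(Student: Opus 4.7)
The main idea is to reduce to the setting of anticommuting lists by invoking Lemma~\ref{lem:cal-maxsize}(b). Writing $\mathcal{C} = \{\hat{p}_0, \ldots, \hat{p}_{\ell-1}\}$, I would form $\mathcal{A} = \{\hat{q}_0, \ldots, \hat{q}_{\ell-2}\}$ with $\hat{q}_i = \prod_{j=0}^{i} \hat{p}_j$. Since elements of $\pauligrp{n}$ are involutions, $\hat{p}_0 = \hat{q}_0$ and $\hat{p}_i = \hat{q}_{i-1}\hat{q}_i$ for $i \geq 1$, so $\langle \mathcal{A} \rangle = \langle \mathcal{C}' \rangle$ where $\mathcal{C}' := \mathcal{C}\setminus\{\hat{p}_{\ell-1}\}$. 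Now $|\mathcal{A}| = \ell-1$ is even, so the contrapositive of Lemma~\ref{lem:anticommuting-sets}(c) gives $\prod \mathcal{A} \neq [I]$, and Lemma~\ref{lem:anticommuting-sets}(a) then yields $|\langle\mathcal{A}\rangle| = 2^{\ell-1}$ with $\mathcal{A}$ independent. Hence $\dim(\mathcal{C}') = \ell-1$ and $\mathcal{C}'$ is independent. Combined with the cyclic rotation symmetry of Lemma~\ref{lem:cal-maxsize}(a), which lets the deleted element play the role of $\hat{p}_{\ell-1}$ for any $\hat{p} \in \mathcal{C}$, this already proves (b).

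For (a), since $\mathcal{C} = \mathcal{C}' \cup \{\hat{p}_{\ell-1}\}$, we have $\dim(\mathcal{C}) \in \{\ell-1, \ell\}$ according to whether $\hat{p}_{\ell-1} \in \langle \mathcal{A} \rangle$. I would decide this via commutativity patterns. Using bilinearity of $\text{Com}$ and the fact that $\hat{p}_{\ell-1}$ anticommutes with exactly $\hat{p}_0$ and $\hat{p}_{\ell-2}$ inside $\mathcal{C}'$, a short computation gives
\begin{equation*}
\commap{\mathcal{A}}{\hat{p}_{\ell-1}} \;=\; \left( \sum_{j=0}^{i} \com{\hat{p}_{\ell-1}}{\hat{p}_j} \right)_{i=0}^{\ell-2} \;=\; (1,1,\ldots,1,0) \in \mathbb{F}_2^{\ell-1}.
\end{equation*}
This pattern is identical to $\commap{\mathcal{A}}{\hat{q}_{\ell-2}}$, since $\hat{q}_{\ell-2}$ anticommutes with every other element of the anticommuting list $\mathcal{A}$. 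Because $\mathcal{A}$ is independent of even length, Lemma~\ref{lem:anticommuting-sets}(d) pins down a unique element of $\langle\mathcal{A}\rangle$ realizing any given commutativity pattern. Hence $\hat{p}_{\ell-1} \in \langle \mathcal{A}\rangle$ iff $\hat{p}_{\ell-1} = \hat{q}_{\ell-2} = \prod_{j=0}^{\ell-2}\hat{p}_j$; multiplying through by the involution $\hat{p}_{\ell-1}$ rewrites this equality as $\prod\mathcal{C} = [I]$, which is exactly the dichotomy claimed in (a). The degenerate case $\ell=1$ is handled directly, since then $\dim(\{\hat{p}_0\}) = 0$ iff $\hat{p}_0 = [I]$ iff $\prod\mathcal{C} = [I]$.

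Part (c) then follows cleanly from (b). By Lemma~\ref{lem:cal-repeating-elements}, any CAL with repeated elements has length $4$, which is even, so our $\mathcal{C}$ has no repeats. A non-empty multiset $\mathcal{H} \subset \mathcal{C}$ is therefore just a non-empty proper subset of the underlying set, so $\mathcal{H} \subseteq \mathcal{C}\setminus\{\hat{p}_k\}$ for some $k$. By (b), the latter is independent, and independence forbids any non-empty sub-product from equalling $[I]$; hence $\prod\mathcal{H} \neq [I]$.

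The step I expect to be the main obstacle is the commutativity-pattern computation together with the observation $\commap{\mathcal{A}}{\hat{p}_{\ell-1}} = \commap{\mathcal{A}}{\hat{q}_{\ell-2}}$: once this coincidence is in hand, Lemma~\ref{lem:anticommuting-sets}(d) does the structural work of forcing $\hat{p}_{\ell-1} = \hat{q}_{\ell-2}$, and the involution identity in $\pauligrp{n}$ cleanly translates this into the desired equivalence with $\prod\mathcal{C} = [I]$.
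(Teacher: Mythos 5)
Your proof is correct and follows essentially the same route as the paper: reduce to the anticommuting list $\mathcal{A}$ of Lemma~\ref{lem:cal-maxsize}(b), use Lemma~\ref{lem:anticommuting-sets}(a),(c) to get independence of $\mathcal{C}\setminus\{\hat{p}_{\ell-1}\}$ (hence part (b) by cyclic translation), and decide whether $\hat{p}_{\ell-1}\in\langle\mathcal{A}\rangle$ via the uniqueness of commutativity patterns from Lemma~\ref{lem:anticommuting-sets}(d), identifying the pattern of $\hat{p}_{\ell-1}$ with that of $\hat{q}_{\ell-2}$. The only (welcome) refinement is your explicit appeal to Lemma~\ref{lem:cal-repeating-elements} in part (c) to rule out repeated elements, a point the paper leaves implicit.
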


\begin{proof} In this proof let $\mathcal{C} = \{\hat{p}_0,\dots,\hat{p}_{2k}\}$ where $k \geq 0$, so $|\mathcal{C}|=2k+1$. The case $\mathcal{C} = \{[I]\}$ is true for part (a) as $\dim (\{[I]\}) = 0$, and also obvious for parts (b)-(c). So for the rest of the proof assume that $\mathcal{C} \neq \{[I]\}$.

\setmargins
\begin{enumerate}[(a)]
\item Since $\dim(\mathcal{C}) \leq |\mathcal{C}|$, it is easily seen that proving Eq.~\eqref{eq:cal-dim-odd}, is equivalent to proving the following two conditions: 

\begin{enumerate}[(i)]
\item $\dim(\mathcal{C}) \geq |\mathcal{C}| - 1$,
\item $\dim(\mathcal{C}) = |\mathcal{C}| - 1$ if and only if $\prod \mathcal{C} = [I]$.
\end{enumerate}

If $|\mathcal{C}| = 1$, then $\dim(\mathcal{C}) = 1$, and $\prod \mathcal{C}\neq [I]$, and so (i) and (ii) are satisfied. Now assume $k \geq 1$. Construct the anticommuting list $\mathcal{A} = \{\hat{q}_0, \dots, \hat{q}_{2k-1}\}$ similarly as in Lemma~\ref{lem:cal-maxsize}(b). Notice that because $|\mathcal{A}|$ is even, it follows from Lemma~\ref{lem:anticommuting-sets}(c) that $\prod \mathcal{A} \neq [I]$, and so $\mathcal{A}$ is an independent generator of $\langle \mathcal{A} \rangle$ by Lemma~\ref{lem:anticommuting-sets}(a). We also have $\langle \mathcal{C} \setminus \{\hat{p}_{2k}\} \rangle  = \langle \mathcal{A} \rangle$, which then implies that $\mathcal{C} \setminus \{\hat{p}_{2k}\}$ is independent, because $|\mathcal{C} \setminus \{\hat{p}_{2k}\}| = 2k$ and $|\langle \mathcal{C} \setminus \{\hat{p}_{2k}\} \rangle| = |\langle \mathcal{A} \rangle| = 2^{2k}$. It follows that $\dim(\mathcal{C}) \geq |\mathcal{C}| - 1$, finishing the proof of (i).

For the proof of (ii), first assume that $\prod \mathcal{C} = [I]$. But this implies that $\dim(\mathcal{C}) < |\mathcal{C}|$, and hence combining with (i) this gives us that $\dim(\mathcal{C}) = |\mathcal{C}| - 1$. To prove the converse, for contradiction assume that $\prod \mathcal{C} \neq [I]$, which is equivalent to the assumption that $\hat{p}_{2k} \neq \prod (\mathcal{C} \setminus \{\hat{p}_{2k}\}) = \hat{q}_{2k-1}$. We have already proved in the previous paragraph that $\mathcal{A}$ is an independent list, and it can also be checked that $\{\hat{p}_{2k}, \hat{q}_i\} = 0$, for all $0 \leq i \leq 2k-2$, and $[\hat{p}_{2k},\hat{q}_{2k-1}] = 0$. Since $|\mathcal{A}|$ is even, it follows from Lemma~\ref{lem:anticommuting-sets}(d) that $\hat{q}_{2k-1}$ is the unique element in $\langle \mathcal{A} \rangle$ generating the commutativity pattern $\commap{\mathcal{A}}{\hat{p}_{2k}}$ with respect to $\mathcal{A}$. But $\hat{p}_{2k} \neq \hat{q}_{2k-1}$ by assumption, implying that $\hat{p}_{2k} \notin \langle \mathcal{A} \rangle = \langle \mathcal{C} \setminus \{\hat{p}_{2k}\} \rangle$. So $\mathcal{C}$ is an independent list, because we also proved in the previous paragraph that $\mathcal{C} \setminus \{\hat{p}_{2k}\}$ is independent, and it follows that $\dim(\mathcal{C}) = |\mathcal{C}|$, which is a contradiction.
\item We have already proved in (a) that $\mathcal{C} \setminus \{\hat{p}_{2k}\}$ is independent. The fact that $\mathcal{C} \setminus \{\hat{p}_i\}$ is also independent for all $\hat{p}_i \in \mathcal{C}$ then follows by translating the elements of $\mathcal{C}$ to get a relabeling of the CAL $\mathcal{C} = \{\hat{p}'_0,\dots,\hat{p}'_{2k}\}$ by Lemma~\ref{lem:cal-maxsize}(a), where $\hat{p}'_j = \hat{p}_{i+j-2k}$ and indices are evaluated modulo $2k+1$.
\item If $|\mathcal{C}|=1$, there is nothing to prove. So let $\mathcal{H} \subset \mathcal{C}$ be a non-empty multiset, and $|\mathcal{C}| \geq 3$. Then there exists $\mathcal{C} \ni \hat{p} \not\in \mathcal{H}$. So by (b), $\mathcal{C} \setminus \{\hat{p}\}$ is an independent multiset. Since $\mathcal{H} \subseteq \mathcal{C} \setminus \{\hat{p}\}$, $\prod \mathcal{H} \neq [I]$.
\end{enumerate}
\end{proof}

\begin{theorem}
\label{thm:cal-dim-even}
Let $\mathcal{C} = \{\hat{p}_\ell: 0 \leq \ell \leq 2k-1\} \subseteq \pauligrp{n}$ be a non-empty CAL, with $|\mathcal{C}|$ even. 
\setmargins
\begin{enumerate}[(a)]
    \item Define the lists $\mathcal{C}_{\text{odd}} = \{\hat{p}_{2\ell+1}: 0 \leq \ell \leq k-1 \}$, and $\mathcal{C}_{\text{even}} = \{\hat{p}_{2\ell}: 0 \leq \ell \leq k-1 \}$. The dimension of $\mathcal{C}$ satisfies
    \begin{equation}
    \label{eq:cal-dim-even}
        \dim(\mathcal{C}) = 
        \begin{cases}
            |\mathcal{C}| \;\; & \text{ if } \prod \mathcal{C}_{\text{odd}} \neq [I] \neq \prod \mathcal{C}_{\text{even}}, \text{ and } \prod \mathcal{C}_{\text{odd}} \neq \prod \mathcal{C}_{\text{even}}, \\
            |\mathcal{C}| - 1 \;\; & \text{ if } \prod \mathcal{C}_{\text{odd}} \neq [I] \neq \prod \mathcal{C}_{\text{even}}, \text{ and } \prod \mathcal{C}_{\text{odd}} = \prod \mathcal{C}_{\text{even}}, \\
            |\mathcal{C}| - 1 \;\; & \text{ if either } \prod \mathcal{C}_{\text{odd}} = [I] \neq \prod \mathcal{C}_{\text{even}}, \text{ or } \prod \mathcal{C}_{\text{odd}} \neq [I] = \prod \mathcal{C}_{\text{even}}, \\
            |\mathcal{C}| - 2 \;\; & \text{ if } \prod \mathcal{C}_{\text{odd}} = \prod \mathcal{C}_{\text{even}} = [I].
        \end{cases}
    \end{equation}
    In particular, if $|\mathcal{C}| = 2$, then $\dim(\mathcal{C}) = 2$, while if $|\mathcal{C}| = 2n \geq 4$, then $\dim(\mathcal{C}) < 2n$.
    \item For all non-empty multisets $\mathcal{H} \subset \mathcal{C}$ such that $\mathcal{H} \neq \mathcal{C}_{\text{odd}}$ and $\mathcal{H} \neq \mathcal{C}_{\text{even}}$, we have $\prod \mathcal{H} \neq [I]$.
    \item If $|\mathcal{C}| \geq 4$, it holds that $\mathcal{C} \setminus \{\hat{p},\hat{q}\}$ is an independent multiset for all $\hat{p} \in \mathcal{C}_{\text{odd}}$, and $\hat{q} \in \mathcal{C}_{\text{even}}$.
\end{enumerate}
\end{theorem}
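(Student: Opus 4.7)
The key observation is the commutation structure of the CAL: $\hat{p}_i$ and $\hat{p}_j$ anticommute if and only if $|i-j|\equiv 1\pmod{2k}$. In particular, the sublists $\mathcal{C}_{\text{odd}}$ and $\mathcal{C}_{\text{even}}$ are pairwise commuting, and each $\hat{p}_m$ anticommutes with exactly zero or two elements of $\mathcal{C}_{\text{odd}}$ (hence commutes with $\prod \mathcal{C}_{\text{odd}}$), and similarly with $\mathcal{C}_{\text{even}}$ and $\mathcal{C}$. The plan is to prove (b) first, then derive (a) and (c) from it.

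\textbf{Plan for (b).} For an index set $S\subseteq\{0,1,\dots,2k-1\}$, set $\hat{p}_S=\prod_{i\in S}\hat{p}_i$. Then $\hat{p}_S$ commutes with $\hat{p}_m$ iff $|S\cap\{m-1,m+1\}|$ (indices mod $2k$) is even, i.e.\ iff $m{-}1\in S\Leftrightarrow m{+}1\in S$. If $\prod\mathcal{H}=[I]$, then $\hat{p}_S$ commutes with every $\hat{p}_m$, so the indicator $\mathbf{1}_S$ is invariant under the shift $i\mapsto i+2$ on $\mathbb{Z}/2k\mathbb{Z}$. This action has precisely two orbits (even and odd residues), so $S\in\{\emptyset,\,\{0,2,\dots,2k-2\},\,\{1,3,\dots,2k-1\},\,\{0,\dots,2k-1\}\}$, i.e.\ $\mathcal{H}\in\{\emptyset,\mathcal{C}_{\text{even}},\mathcal{C}_{\text{odd}},\mathcal{C}\}$. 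Since $\mathcal{H}\subset\mathcal{C}$ is proper, non-empty, and different from $\mathcal{C}_{\text{odd}},\mathcal{C}_{\text{even}}$ by hypothesis, no such $\mathcal{H}$ remains, contradiction.

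\textbf{Plan for (a).} Let $K\subseteq\mathbb{F}_2^{2k}$ denote the kernel of the $\mathbb{F}_2$-linear map $\mathbf{1}_S\mapsto\hat{p}_S$ into $\pauligrp{n}$. By (b), every non-zero vector in $K$ corresponds to an $S$ in $\{\{0,2,\dots,2k-2\},\{1,3,\dots,2k-1\},\{0,\dots,2k-1\}\}$, so $K\subseteq\mathrm{span}\{\mathbf{1}_{\text{even}},\mathbf{1}_{\text{odd}}\}$. Since $\mathbf{1}_{\text{even}}+\mathbf{1}_{\text{odd}}=\vec{1}$, we have $\prod\mathcal{C}=\prod\mathcal{C}_{\text{odd}}\cdot\prod\mathcal{C}_{\text{even}}$, and $\dim K$ equals the number of independent identities among the three statements $\prod\mathcal{C}_{\text{odd}}=[I]$, $\prod\mathcal{C}_{\text{even}}=[I]$, $\prod\mathcal{C}=[I]$. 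A direct case split on the four scenarios in Eq.~\eqref{eq:cal-dim-even} gives $\dim K\in\{0,1,1,2\}$ respectively, and $\dim\mathcal{C}=2k-\dim K$. For the ``in particular'' clause: when $|\mathcal{C}|=2$ one checks $\prod\mathcal{C}_{\text{odd}}=\hat{p}_1\neq[I]$ and $\prod\mathcal{C}_{\text{even}}=\hat{p}_0\neq[I]$ are distinct, so $\dim\mathcal{C}=2$. When $|\mathcal{C}|=2n\geq 4$, suppose for contradiction that case~1 holds; then $\mathcal{C}$ is independent with $2n$ elements and must generate all of $\pauligrp{n}$. But then $\prod\mathcal{C}_{\text{odd}}$ commutes with every element of $\mathcal{C}$ (by the opening observation) and therefore with all of $\pauligrp{n}$, forcing $\prod\mathcal{C}_{\text{odd}}=[I]$, contradicting case~1.

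\textbf{Plan for (c).} For $\hat{p}\in\mathcal{C}_{\text{odd}}$ and $\hat{q}\in\mathcal{C}_{\text{even}}$, $\mathcal{C}\setminus\{\hat{p},\hat{q}\}$ is a proper sub-multiset of $\mathcal{C}$ whose multiplicity of $\hat{p}$ is strictly less than that in $\mathcal{C}_{\text{odd}}$ and whose multiplicity of $\hat{q}$ is strictly less than that in $\mathcal{C}_{\text{even}}$. Hence no non-empty sub-multiset of $\mathcal{C}\setminus\{\hat{p},\hat{q}\}$ equals $\mathcal{C}_{\text{odd}}$ or $\mathcal{C}_{\text{even}}$, and by (b) none has product $[I]$. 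Thus $\mathcal{C}\setminus\{\hat{p},\hat{q}\}$ is independent.

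The main subtlety is the multiset bookkeeping in (b) and (c), since Lemma~\ref{lem:cal-repeating-elements} allows repeats when $|\mathcal{C}|=4$; the translation between ``index subset $S$'' and ``sub-multiset $\mathcal{H}$'' must track multiplicities, and the degenerate four-element CALs of Eq.~\eqref{eq:cal-repeating-elements} should be verified separately to confirm the argument goes through. The commutation analysis itself is the conceptual heart and is clean.
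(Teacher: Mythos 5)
Your proposal is correct, and it reaches part (a) by a genuinely different route than the paper. Part (b) is essentially the paper's own argument: both proofs observe that $\prod\mathcal{H}=[I]$ forces $\prod\mathcal{H}$ to commute with every $\hat{p}_m$, and then propagate membership around the cycle in steps of two to conclude that the index set of $\mathcal{H}$ must be empty, the evens, the odds, or everything. Part (c) is also derived from (b) in both cases. The difference is in (a): the paper proves (a) \emph{before} (b), via the auxiliary anticommuting list $\mathcal{A}=\{\hat{q}_i=\prod_{j\le i}\hat{p}_j\}$ and a four-way case analysis leaning on the machinery of Lemma~\ref{lem:anticommuting-sets} (independence of anticommuting lists, uniqueness of commutativity patterns, coset counting for the ``$\dim<2n$'' clause). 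You instead prove (b) first and then read off (a) linear-algebraically: the kernel $K$ of the map $x\mapsto\hat{p}_S$ is contained in $\mathrm{span}\{\mathbf{1}_{\text{even}},\mathbf{1}_{\text{odd}}\}$ by (b), so $\dim(\mathcal{C})=2k-\dim K$ reduces to checking which of $\mathbf{1}_{\text{even}},\mathbf{1}_{\text{odd}},\vec1$ lie in $K$, which is exactly the four-case table. Your argument for the ``$|\mathcal{C}|=2n\ge4$ implies $\dim(\mathcal{C})<2n$'' clause — that $\prod\mathcal{C}_{\text{odd}}$ commutes with every element of $\mathcal{C}$, hence would be central in $\pauligrp{n}$ if $\mathcal{C}$ generated it — is shorter and more transparent than the paper's coset argument. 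What your route buys is economy and self-containedness (no appeal to Lemma~\ref{lem:anticommuting-sets}); what the paper's route buys is that its proof of (a) does not presuppose (b), so the two parts are logically independent there.

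Two small points to tidy in a full write-up. First, your opening observation that each $\hat{p}_m$ anticommutes with ``exactly zero or two'' elements of $\mathcal{C}_{\text{odd}}$ (and with exactly two elements of $\mathcal{C}$) requires $k\ge2$ so that the positions $m-1$ and $m+1$ are distinct mod $2k$; for $|\mathcal{C}|=2$ the count is one. This does not affect the proof, since the only place you use the observation quantitatively is the $2n\ge4$ clause, and the $|\mathcal{C}|=2$ case of (a) is checked directly while (b) is vacuous there. Second, as you note, the multiset bookkeeping for the degenerate length-four CALs of Lemma~\ref{lem:cal-repeating-elements} needs one sentence: the product $\prod\mathcal{H}$ depends only on multiplicities, so fixing any index set $S$ representing $\mathcal{H}$ and running the shift-invariance argument on $S$ suffices, and the resulting $S\in\{\emptyset,\text{evens},\text{odds},\text{all}\}$ translates back to $\mathcal{H}\in\{\emptyset,\mathcal{C}_{\text{even}},\mathcal{C}_{\text{odd}},\mathcal{C}\}$ as multisets. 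With those clarifications the argument is complete.
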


\vspace{0.5cm}
\begin{proof}
\setmargins
\begin{enumerate}[(a)]
\item If $|\mathcal{C}| = 2$, then $\mathcal{C}$ is an anticommuting list of two elements, and hence by Lemma~\ref{lem:anticommuting-sets}(a) and (d), we have that $\mathcal{C}$ is independent. Thus $\dim(\mathcal{C}) = 2$, $\prod \mathcal{C}_{\text{odd}} \neq [I] \neq \prod \mathcal{C}_{\text{even}}$, and also $\prod \mathcal{C}_{\text{odd}} \neq \prod \mathcal{C}_{\text{even}}$, which proves the statement for $|\mathcal{C}| = 2$. For the rest of the proof assume $|\mathcal{C}| \geq 4$. Construct the anticommuting list $\mathcal{A} = \{\hat{q}_0, \dots, \hat{q}_{2k-2}\}$ similarly as in Lemma~\ref{lem:cal-maxsize}(b). Then $\mathcal{A}  \setminus \{\hat{q}_{2k-2}\}$ is independent by Lemma~\ref{lem:anticommuting-sets}(c), and since it is easily checked that $\langle \mathcal{A}  \setminus \{\hat{q}_{2k-2}\} \rangle = \langle \mathcal{C} \setminus \{\hat{p}_{2k-2}, \hat{p}_{2k-1}\} \rangle$, we have that $\dim(\mathcal{C}) \geq |\mathcal{C}| - 2$. 

First suppose that $\prod \mathcal{C}_{\text{odd}} = \prod \mathcal{C}_{\text{even}} = [I]$. This implies that $\dim(\mathcal{C}_{\text{odd}}) \leq |\mathcal{C}_{\text{odd}}| - 1$, and $\dim(\mathcal{C}_{\text{even}}) \leq |\mathcal{C}_{\text{even}}| - 1$, and so $\dim(\mathcal{C}) \leq |\mathcal{C}| - 2$. This proves that $\dim(\mathcal{C}) = |\mathcal{C}| - 2$ in this case. 

Next suppose that either $\prod \mathcal{C}_{\text{odd}} = [I] \neq \prod \mathcal{C}_{\text{even}}, \text{ or } \prod \mathcal{C}_{\text{odd}} \neq [I] = \prod \mathcal{C}_{\text{even}}$. In the former case we have that $\dim(\mathcal{C}_{\text{odd}}) \leq |\mathcal{C}_{\text{odd}}| - 1$, while in the latter case we have that $\dim(\mathcal{C}_{\text{even}}) \leq |\mathcal{C}_{\text{even}}| - 1$, and so in both cases $\dim(\mathcal{C}) \leq |\mathcal{C}| - 1$. Let us now assume without loss of generality that $\prod \mathcal{C}_{\text{odd}} = [I] \neq \prod \mathcal{C}_{\text{even}}$, because the other case reduces to this one by translating $\mathcal{C}$ by one element, and relabeling the elements as in the proof of Theorem~\ref{thm:cal-dim-odd}(b). Then a simple computation shows that $\prod \mathcal{A} = \prod \mathcal{C}_{\text{even}}$, and so $\prod \mathcal{A} \neq [I]$ by assumption, which implies that $\mathcal{A}$ is an independent list by Lemma~\ref{lem:anticommuting-sets}(a). But as $\langle \mathcal{A} \rangle = \langle \mathcal{C} \setminus \{\hat{p}_{2k-1}\} \rangle$, this implies that $\dim(\mathcal{C}) \geq |\mathcal{C}| - 1$, and so in fact $\dim(\mathcal{C}) = |\mathcal{C}| - 1$.

Now assume that $\prod \mathcal{C}_{\text{odd}} \neq [I] \neq \prod \mathcal{C}_{\text{even}}$, and $\prod \mathcal{C}_{\text{odd}} = \prod \mathcal{C}_{\text{even}}$. Because $\prod \mathcal{C}_{\text{even}} \neq [I]$, the reasoning in the previous paragraph gives us that $\dim(\mathcal{C}) \geq |\mathcal{C}| - 1$ because $\mathcal{A}$ is independent, while we also have that $\prod \mathcal{C} = (\prod \mathcal{C}_{\text{odd}}) (\prod \mathcal{C}_{\text{even}}) = [I]$ and thus $\dim(\mathcal{C}) \leq |\mathcal{C}| - 1$, which means that $\dim(\mathcal{C}) = |\mathcal{C}| - 1$.

Finally assume that $\prod \mathcal{C}_{\text{odd}} \neq [I] \neq \prod \mathcal{C}_{\text{even}}$, and $\prod \mathcal{C}_{\text{odd}} \neq \prod \mathcal{C}_{\text{even}}$. As in the last paragraph, we still have that $\mathcal{A}$ is an independent anticommuting list. One can now verify that $\hat{p}_{2k-1}$, $\hat{q}_{2k-2}$ and $\prod (\mathcal{A} \setminus \hat{q}_{2k-2})$ generate the same commutativity pattern with respect to $\mathcal{A}$, i.e. if $\hat{p}$ is either $\hat{p}_{2k-1}$, $\hat{q}_{2k-2}$, or $\prod (\mathcal{A} \setminus \hat{q}_{2k-2})$, then $[\hat{p}, \hat{q}_{2k-2}] = 0$, and $\{\hat{p}, \hat{q}_i\} = 0$, for all $0 \leq i \leq 2k-3$. Now notice that $\hat{p}_{2k-1} \hat{q}_{2k-2} = \prod \mathcal{C} \neq [I]$ as $\prod \mathcal{C}_{\text{odd}} \neq \prod \mathcal{C}_{\text{even}}$, and also $\hat{p}_{2k-1} \prod (\mathcal{A} \setminus \hat{q}_{2k-2}) = \prod \mathcal{C}_{\text{odd}} \neq [I]$ by assumption, and so an application of Lemma~\ref{lem:anticommuting-sets}(e) implies that $\hat{p}_{2k-1} \notin \langle \mathcal{A} \rangle$. This implies that in this case $\dim(\mathcal{C}) = |\mathcal{C}|$.

It remains to show that if $|\mathcal{C}| = 2n \geq 4$, then $\dim(\mathcal{C}) < 2n$. From what has already been proved above, it suffices to show that the conditions $\prod \mathcal{C}_{\text{odd}} \neq [I] \neq \prod \mathcal{C}_{\text{even}}$, and $\prod \mathcal{C}_{\text{odd}} \neq \prod \mathcal{C}_{\text{even}}$ cannot both be true. However, for contradiction assume that they in fact both hold. Then the discussion in the last paragraph applies, and so $\hat{p}_{2n-1} \notin \langle \mathcal{A} \rangle$, and $\dim(\mathcal{A}) = 2n-1$. This also implies that there are exactly two cosets of $\langle \mathcal{A} \rangle$ in $\pauligrp{n}$, and so $\hat{p}_{2n-1}$ must be in the coset different from $\langle \mathcal{A} \rangle$. But this is a contradiction by an application of Lemma~\ref{lem:anticommuting-sets}(f), which rules out any element in the other coset, that generate the same commutativity pattern as $\hat{p}_{2k-1}$, with respect to the elements of $\mathcal{A}$.
\item The statement is vacuous for $|\mathcal{C}| =2$ as there are no non-empty proper multisets satisfying the conditions; so assume $|\mathcal{C}| \geq 4$. We will prove this using contradiction. So suppose $\mathcal{H} \subset \mathcal{C}$ is a non-empty multiset, $\mathcal{C}_{\text{even}} \neq \mathcal{H} \neq \mathcal{C}_{\text{odd}}$, and $\prod \mathcal{H} = [I]$. This means $\prod \mathcal{H}$ commutes with all elements of $\mathcal{C}$. Now there are two cases: (i) $\hat{p}_0 \in \mathcal{H}$, and (ii) $\hat{p}_0 \not \in \mathcal{H}$. Since $\prod \mathcal{H}$ commutes with $\hat{p}_1$, it must be that in the first case $\hat{p}_2 \in \mathcal{H}$, and in the second case  $\hat{p}_2 \not\in \mathcal{H}$. Continuing this argument iteratively for $\hat{p}_3, \dots, \hat{p}_{2k-3}$ then shows that (i) and (ii) are equivalent to the conditions $\mathcal{C}_{\text{even}} \subseteq \mathcal{H}$ and $\mathcal{C}_{\text{even}} \cap \mathcal{H} = \emptyset$ respectively. Similarly, since $\prod \mathcal{H}$ commutes with all elements of $\mathcal{C}_{\text{even}}$, one again concludes that exactly one of the two cases is true: either $\mathcal{C}_{\text{odd}} \subseteq \mathcal{H}$ or $\mathcal{C}_{\text{odd}} \cap \mathcal{H} = \emptyset$. Combining we have four cases: (i) $\mathcal{C}_{\text{odd}} \cup \mathcal{C}_{\text{even}} \subseteq \mathcal{H}$, (ii) $(\mathcal{C}_{\text{odd}} \cup \mathcal{C}_{\text{even}}) \cap \mathcal{H} = \emptyset$, (iii) $\mathcal{H} = \mathcal{C}_{\text{odd}}$, and (iv) $\mathcal{H} = \mathcal{C}_{\text{even}}$, all of which are impossible by assumption.
\item For contradiction assume that $\mathcal{C} \setminus \{\hat{p},\hat{q}\}$ is not independent. Then there exists a multiset $\mathcal{H} \subseteq \mathcal{C} \setminus \{\hat{p},\hat{q}\}$ such that $\prod \mathcal{H} = [I]$. But this is a contradiction by (b).
\end{enumerate}
\end{proof}

\begin{corollary}
\label{cor:cal-dim-extremal}
Let $\mathcal{C} \subseteq \pauligrp{n}$ be an extremal CAL of length $|\mathcal{C}| \geq 3$. Then the dimension of $\mathcal{C}$ satisifies
\begin{equation}
    \label{eq:cal-dim-extremal}
    \dim(\mathcal{C}) = 
    \begin{cases}
        |\mathcal{C}| - 1 \;\; & \text{ if } |\mathcal{C}| \text{ odd}, \\
        |\mathcal{C}| - 2 \;\; & \text{ if } |\mathcal{C}| \text{ even}.
    \end{cases}
\end{equation}

Thus $\mathcal{C}$ generates the group $\pauligrp{n}$.
\end{corollary}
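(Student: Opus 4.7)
The proof will be a short deduction combining the extremality condition from Lemma~\ref{lem:cal-maxsize}(e) with the dimension formulas in Theorem~\ref{thm:cal-dim-odd}(a) and Theorem~\ref{thm:cal-dim-even}(a), plus the universal bound $\dim(\mathcal{C}) \leq 2n$ that holds for any multiset of $\pauligrp{n}$ (the maximal size of an independent multiset in $\pauligrp{n}$).

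First I would handle the odd-length case. By Lemma~\ref{lem:cal-maxsize}(e), an extremal CAL $\mathcal{C}$ with $|\mathcal{C}| = \ell \geq 3$ odd lives on exactly $n = (\ell-1)/2$ qubits, so $|\mathcal{C}| = 2n+1$. By Theorem~\ref{thm:cal-dim-odd}(a), $\dim(\mathcal{C}) \in \{|\mathcal{C}|, |\mathcal{C}|-1\} = \{2n+1, 2n\}$. But $\dim(\mathcal{C}) \leq 2n$, so the only possibility is $\dim(\mathcal{C}) = 2n = |\mathcal{C}|-1$. Equivalently (via Theorem~\ref{thm:cal-dim-odd}(a)), this also tells us that $\prod \mathcal{C} = [I]$ in the extremal odd case, a useful byproduct.

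Next the even-length case, where $\ell \geq 4$ and extremality gives $n = (\ell-2)/2$, so $|\mathcal{C}| = 2n+2$. Theorem~\ref{thm:cal-dim-even}(a) produces $\dim(\mathcal{C}) \in \{|\mathcal{C}|, |\mathcal{C}|-1, |\mathcal{C}|-2\} = \{2n+2, 2n+1, 2n\}$, while the same upper bound $\dim(\mathcal{C}) \leq 2n$ forces $\dim(\mathcal{C}) = 2n = |\mathcal{C}|-2$. As a byproduct this forces the last case of Eq.~\eqref{eq:cal-dim-even}, namely $\prod \mathcal{C}_{\text{odd}} = \prod \mathcal{C}_{\text{even}} = [I]$.

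Finally, to conclude that $\mathcal{C}$ generates $\pauligrp{n}$, I would note that in both cases we have shown $\dim(\mathcal{C}) = 2n$, and since the full group $\pauligrp{n}$ has dimension $2n$, the subgroup $\langle \mathcal{C} \rangle$ must equal $\pauligrp{n}$. There is no real obstacle here—everything is immediate once one observes that extremality saturates the inequality $\ell \leq 2n+2$ from Lemma~\ref{lem:cal-maxsize}(d) and then invokes the previously established dimension dichotomies.
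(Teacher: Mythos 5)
Your proof is correct and follows essentially the same route as the paper's: in each parity case you combine the lower bound on $\dim(\mathcal{C})$ coming from Theorem~\ref{thm:cal-dim-odd}(a) or Theorem~\ref{thm:cal-dim-even}(a) with the upper bound $\dim(\mathcal{C})\le 2n$, where extremality (Lemma~\ref{lem:cal-maxsize}(e)) fixes $2n$ to be $|\mathcal{C}|-1$ or $|\mathcal{C}|-2$. The byproducts you note ($\prod\mathcal{C}=[I]$, etc.) are correct but not needed for the statement.
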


\begin{proof}
First assume that $|\mathcal{C}|$ is odd. Then from Theorem~\ref{thm:cal-dim-odd} we have $\dim(\mathcal{C}) \geq |\mathcal{C}| - 1$. But because $\mathcal{C}$ is an extremal CAL we also have that $\dim(\mathcal{C}) \leq |\mathcal{C}| - 1$, because the number of qubits used is $(|\mathcal{C}| - 1) / 2$. Now assume that $|\mathcal{C}|$ is even. In this case Theorem~\ref{thm:cal-dim-even} gives that $\dim(\mathcal{C}) \geq |\mathcal{C}| - 2$, while again as $\mathcal{C}$ is an extremal CAL, it must also be true that $\dim(\mathcal{C}) \leq |\mathcal{C}| - 2$. In both cases $\dim(\mathcal{C}) = n$, and so $\langle \mathcal{C} \rangle = \pauligrp{n}$. This proves the lemma.
\end{proof}

As a consequence of the previous lemmas, we get the next corollary that states which multisets of an extremal CAL of length at least three, can possibly multiply to the identity.

\begin{corollary}
\label{cor:extremal-cal-mult-identity}
Let $\mathcal{C} \subseteq \pauligrp{n}$ be an extremal CAL of length $|\mathcal{C}| \geq 3$. Then the following statements are true.
\setmargins
\begin{enumerate}[(a)]
    \item Let $|\mathcal{C}|$ be odd. If $\mathcal{H} \subseteq \mathcal{C}$ is a non-empty multiset, then $\prod \mathcal{H} = [I]$ if and only if $\mathcal{H} = \mathcal{C}$.
    \item Let $|\mathcal{C}|$ be even. If $\mathcal{H} \subseteq \mathcal{C}$ is a non-empty multiset, then $\prod \mathcal{H} = [I]$ if and only if $\mathcal{H} = \mathcal{C}_{\text{odd}}$ or $\mathcal{H} = \mathcal{C}_{\text{even}}$, where $\mathcal{C}_{\text{odd}}$ and $\mathcal{C}_{\text{even}}$ are defined as in Theorem~\ref{thm:cal-dim-even}.
\end{enumerate}
\end{corollary}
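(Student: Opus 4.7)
The plan is to deduce this corollary almost directly from the dimension computations in Theorem~\ref{thm:cal-dim-odd} and Theorem~\ref{thm:cal-dim-even}, together with the extremality consequence in Corollary~\ref{cor:cal-dim-extremal}. In both parts, the ``if'' direction will be a short forward application of the dimension formulas to identify which multisets must multiply to $[I]$, and the ``only if'' direction will invoke the independence/nontriviality clauses already established in Theorems~\ref{thm:cal-dim-odd} and \ref{thm:cal-dim-even}.

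For part (a), I would first note that since $\mathcal{C}$ is an extremal CAL of odd length $\ell=|\mathcal{C}|\ge3$, Corollary~\ref{cor:cal-dim-extremal} gives $\dim(\mathcal{C})=\ell-1$. Comparing with Eq.~\eqref{eq:cal-dim-odd} of Theorem~\ref{thm:cal-dim-odd}(a), this equality forces $\prod\mathcal{C}=[I]$, establishing the ``if'' direction when $\mathcal{H}=\mathcal{C}$. For the ``only if'' direction, suppose $\emptyset\neq\mathcal{H}\subseteq\mathcal{C}$ with $\prod\mathcal{H}=[I]$. If $\mathcal{H}\subsetneq\mathcal{C}$, then $\mathcal{H}$ is a non-empty proper sub-multiset, and Theorem~\ref{thm:cal-dim-odd}(c) directly yields $\prod\mathcal{H}\neq[I]$, a contradiction. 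Hence $\mathcal{H}=\mathcal{C}$.

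For part (b), extremality and Corollary~\ref{cor:cal-dim-extremal} give $\dim(\mathcal{C})=|\mathcal{C}|-2$. Matching with the four cases of Eq.~\eqref{eq:cal-dim-even} in Theorem~\ref{thm:cal-dim-even}(a), only the last case is consistent with this dimension, so we must have $\prod\mathcal{C}_{\text{odd}}=\prod\mathcal{C}_{\text{even}}=[I]$. This handles the ``if'' direction for $\mathcal{H}=\mathcal{C}_{\text{odd}}$ and $\mathcal{H}=\mathcal{C}_{\text{even}}$ (and incidentally also for $\mathcal{H}=\mathcal{C}$, since then $\prod\mathcal{C}=\prod\mathcal{C}_{\text{odd}}\cdot\prod\mathcal{C}_{\text{even}}=[I]$; I would flag this case so the statement is understood to allow $\mathcal{H}=\mathcal{C}$ as well, or interpret $\mathcal{C}_{\text{odd}},\mathcal{C}_{\text{even}}$ as generating the set of vanishing products via disjoint union). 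For the ``only if'' direction, let $\emptyset\neq\mathcal{H}\subseteq\mathcal{C}$ with $\prod\mathcal{H}=[I]$, and assume $\mathcal{H}\neq\mathcal{C}_{\text{odd}}$ and $\mathcal{H}\neq\mathcal{C}_{\text{even}}$. If $\mathcal{H}\subsetneq\mathcal{C}$, Theorem~\ref{thm:cal-dim-even}(b) immediately gives the contradiction $\prod\mathcal{H}\neq[I]$, so $\mathcal{H}=\mathcal{C}$.

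I do not anticipate any real obstacle: the work is already packaged in the preceding theorems, and the corollary is essentially a translation statement. The only subtle point is the bookkeeping in part (b) where one should be careful about whether the multiset $\mathcal{C}$ itself is counted; this is the one place where I would pause to make sure the statement is consistent with its own hypotheses before writing the final lines.
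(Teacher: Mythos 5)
Your proof is correct and follows essentially the same route as the paper: both parts are read off from Corollary~\ref{cor:cal-dim-extremal} combined with Theorem~\ref{thm:cal-dim-odd}(a),(c) in the odd case and Theorem~\ref{thm:cal-dim-even}(a),(b) in the even case. The one substantive point where you go beyond the paper is the edge case you flag in part (b): the full multiset $\mathcal{H}=\mathcal{C}$ satisfies $\prod\mathcal{H}=\left(\prod\mathcal{C}_{\text{odd}}\right)\left(\prod\mathcal{C}_{\text{even}}\right)=[I]$ while being neither $\mathcal{C}_{\text{odd}}$ nor $\mathcal{C}_{\text{even}}$. This is a genuine defect in the statement as written, not a mere bookkeeping worry: Theorem~\ref{thm:cal-dim-even}(b) only rules out \emph{proper} sub-multisets ($\mathcal{H}\subset\mathcal{C}$ in the paper's notation), so the ``only if'' direction of (b) fails for $\mathcal{H}=\mathcal{C}$, and the paper's one-line proof does not address it. Your proposed repair --- adding $\mathcal{H}=\mathcal{C}$ to the list of vanishing products --- is the correct one, and it is consistent with the main-text Lemma~\ref{lem:CAL_identities}, which explicitly lists $\prod_{j=0}^{\ell-1}p_j\propto I$ alongside the two alternating products in the even case.
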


\vspace{0.5cm}
\begin{proof}
\setmargins
\begin{enumerate}[(a)]
    \item As $\mathcal{C}$ is an extremal CAL, it is not independent by Corollary~\ref{cor:cal-dim-extremal}. The result then follows directly from Theorem~\ref{thm:cal-dim-odd}(a),(c).
    \item As $\mathcal{C}$ is an extremal CAL, $\dim(\mathcal{C}) = |\mathcal{C}| - 2$ by Corollary~\ref{cor:cal-dim-extremal}. The result follows by Theorem~\ref{thm:cal-dim-even}(a),(b). 
\end{enumerate}
\end{proof}

In the final part of this subsection, we turn to the question of whether it is possible to create CALs of $\pauligrp{n}$, such that the representatives of each element in the CAL are Kronecker products of $I_2$ and the Pauli matrices $X$ and $Z$ only, and not of $Y$. To motivate this section, note that on $n=1$ qubit, one has the CALs of lengths $\ell = 1$, $2$, and $4$ given by $\{[X]\}$, $\{[X], [Z]\}$, and $\{[X], [Z], [X], [Z]\}$ respectively, satisfying this property. But there is no CAL of length $\ell = 3$ with this property. Also notice that if the CAL has length $\ell \geq 2$, then it is not possible that all its element representatives are Kronecker products of $I_2$ and just one other Pauli matrix, for example $X$. This is because all elements of the CAL constructed this way will commute with one another. So we definitely need to use at least two of the Pauli matrices, for example $X$ and $Z$. 

For what follows, we need a couple of simple lemmas. The next result is well-known (see for instance Chapter 5 in \cite{warren2013hacker}).
\begin{lemma}
\label{lem:vector-sum-even-ham-weight}
For any $v \in \mathbb{F}_2^p$, let $\ham{v} = \sum_{j}v_j$ denote its Hamming weight. Then if $u,v \in \mathbb{F}_2^p$ have even Hamming weight, so does $w = u + v$ (where the addition is over $\mathbb{F}_2$). Thus any binary sum of even Hamming weight vectors also has even Hamming weight.
\end{lemma}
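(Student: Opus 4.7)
The plan is to prove the two-vector case first by a direct counting argument, then extend to arbitrary finite sums by induction. The core observation is that over the integers,
\[
\ham{u+v} = \ham{u} + \ham{v} - 2\,|\{j : u_j = v_j = 1\}|,
\]
because a coordinate of $w = u+v$ equals $1$ precisely when exactly one of $u_j, v_j$ is $1$, so positions of double overlap contribute to both $\ham{u}$ and $\ham{v}$ but not to $\ham{w}$, and each such overlap position has been counted twice in $\ham{u}+\ham{v}$. First I would verify this identity by partitioning the coordinates $j \in \{1,\dots,p\}$ into the four classes $(u_j,v_j) \in \{(0,0),(0,1),(1,0),(1,1)\}$ and tallying each contribution.

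Reducing this identity modulo $2$ immediately gives $\ham{u+v} \equiv \ham{u} + \ham{v} \pmod{2}$. Hence if $\ham{u}$ and $\ham{v}$ are both even, $\ham{w}$ is even, which is the first claim.

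For the second claim, I would induct on the number of summands $k \geq 2$. The base case $k = 2$ is what was just proved. For the inductive step, suppose the result holds for sums of $k$ vectors, and let $v_1, \dots, v_{k+1}$ all have even Hamming weight. Writing $w = (v_1 + \cdots + v_k) + v_{k+1}$, the inductive hypothesis ensures $v_1 + \cdots + v_k$ has even Hamming weight, and then the two-vector case applied to this partial sum and $v_{k+1}$ completes the step.

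I do not expect any real obstacle here: the entire content of the lemma is the parity identity $\ham{u+v} \equiv \ham{u} + \ham{v} \pmod{2}$, which follows from an elementary case split on coordinates. The only minor care point is to be precise about the distinction between the $\mathbb{F}_2$-sum defining $w$ and the integer sum used to talk about Hamming weight, so that the factor of $2$ in the overlap term is seen to vanish modulo $2$.
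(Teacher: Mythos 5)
Your proof is correct and follows essentially the same route as the paper's: the paper phrases the key identity as $|J_w| = |J_u| + |J_v| - 2|J_u \cap J_v|$ for the supports of $u$, $v$, and $w$, which is exactly your overlap-counting identity, and it likewise finishes the general case by induction. No gaps.
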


\begin{proof}
Let $J_u \subseteq \{0,\dots,p-1\}$ be the set of indices such that $u_j = 1$ if and only if $j \in J_u$, and similarly let $J_v \subseteq \{0,\dots,p-1\}$ be the set of indices such that $v_j = 1$ if and only if $j \in J_v$. Since addition is over $\mathbb{F}_2$, this implies that $w_j = 1$ if and only if $j \in J_w = (J_u \cup J_v) \setminus (J_u \cap J_v)$. But $|J_w| = |J_u| + |J_v| - 2|J_u \cap J_v|$, and we also have that $|J_u| = \ham{u}$, and $|J_v| = \ham{v}$, which are even by assumption. So $|J_w|$ is even, and since $|J_w| = \ham{w}$, this proves the first part of the lemma. The last part now follows by induction.
\end{proof}

We also give a construction that allows us to take a CAL and create a CAL of smaller length on the same number of qubits, or a longer CAL on more qubits, as given in the lemma below. The first part of this construction will be used later.
\begin{lemma}
\label{lem:cal-expand-contract}
Let $\mathcal{C} = \{\hat{p}_0,\dots,\hat{p}_{k-1}\} \subseteq \pauligrp{n}$ be a CAL of length $k \geq 2$, and define the lists $\mathcal{C}_{\text{odd}} = \{\hat{p}_{2\ell+1}: 0 \leq \ell \leq \floor{k/2}-1\}$ and $\mathcal{C}_{\text{even}}=\{\hat{p}_{2\ell}: 0 \leq \ell \leq \ceil{k/2}-1\}$. Consider the following constructions:
\setmargins
\begin{enumerate}[(i)]
    \item CAL expansion --- If $k$ is odd, let $\hat{r} = [Z]$, $\hat{s} = [X]$, and if $k$ is even, let $\hat{r} = [X]$, $\hat{s} = [Z]$. Define the list $\mathcal{C}' = \{\hat{p}'_0,\dots,\hat{p}'_{k+1}\} \subseteq \pauligrp{n+1}$ by $\hat{p}'_\ell = \hat{p}_\ell \otimes [I_2]$ for all $0 \leq \ell \leq k-3$, $\hat{p}'_{k-2} = \hat{p}_{k-2} \otimes \hat{r}$, $\hat{p}'_{k-1} = [I] \otimes \hat{s}$, $\hat{p}'_{k} = [I] \otimes \hat{r}$, and $\hat{p}'_{k+1} = \hat{p}_{k-1} \otimes \hat{s}$, where $[I] \in \pauligrp{n}$ is the identity element of $\pauligrp{n}$. Also define the lists $\mathcal{C}'_{\text{odd}} = \{\hat{p}'_{2\ell+1}: 0 \leq \ell \leq \floor{k/2}\}$ and $\mathcal{C}'_{\text{even}}=\{\hat{p}'_{2\ell}: 0 \leq \ell \leq \ceil{k/2}\}$.
    \item CAL contraction --- Let $\alpha \geq 0$ be any integer such that $\beta := k - 2\alpha \geq 2$. Define the list $\tilde{\mathcal{C}} = \{\hat{q}'_0,\dots,\hat{q}'_{\beta-1}\} \subseteq \pauligrp{n}$ by $\hat{q}'_\ell = \hat{p}_\ell$ for all $0 \leq \ell \leq \beta-3$, $\hat{q}'_{\beta - 2} = \prod_{\ell = 0}^{\alpha} \hat{p}_{\beta - 2 + 2 \ell}$, and $\hat{q}'_{\beta - 1} = \prod_{\ell = 0}^{\alpha} \hat{p}_{\beta - 1 + 2 \ell}$, and also define the lists $\tilde{\mathcal{C}}_{\text{odd}} = \{\hat{q}_{2\ell+1}: 0 \leq \ell \leq \floor{\beta/2}-1\}$ and $\tilde{\mathcal{C}}_{\text{even}}=\{\hat{q}_{2\ell}: 0 \leq \ell \leq \ceil{\beta/2}-1\}$.
\end{enumerate}

Then we have the following:
\setmargins
\begin{enumerate}[(a)]
    \item $\prod \mathcal{C}'_{\text{odd}} = \prod \mathcal{C}_{\text{odd}} \otimes [I_2]$, $\prod \mathcal{C}'_{\text{even}} = \prod \mathcal{C}_{\text{even}} \otimes [I_2]$, and $\prod \mathcal{C}' = \prod \mathcal{C} \otimes [I_2]$. If all elements of $\mathcal{C}_{\text{odd}}$ (resp. $\mathcal{C}_{\text{even}}$) are of $Z$-type (resp. $X$-type), then so are all elements of $\mathcal{C}'_{\text{odd}}$ (resp. $\mathcal{C}'_{\text{even}}$).
    \item If $k \geq 3$, then $\mathcal{C}'$ is a CAL of length $k+2$, and $\dim (\mathcal{C}') = \dim (\mathcal{C})$.
    \item $\prod \tilde{\mathcal{C}}_{\text{odd}} = \prod \mathcal{C}_{\text{odd}}$, $\prod \tilde{\mathcal{C}}_{\text{even}} = \prod \mathcal{C}_{\text{even}}$, and $\prod \tilde{\mathcal{C}} = \prod \mathcal{C}$. If all elements of $\mathcal{C}_{\text{odd}}$ (resp. $\mathcal{C}_{\text{even}}$) are of $Z$-type (resp. $X$-type), then so are all elements of $\tilde{\mathcal{C}}_{\text{odd}}$ (resp. $\tilde{\mathcal{C}}_{\text{even}}$).
    \item If $\beta \geq 3$, then $\tilde{\mathcal{C}}$ is a CAL of length $\beta$, and $\dim (\tilde{\mathcal{C}}) = \dim (\mathcal{C})$.
\end{enumerate}
\end{lemma}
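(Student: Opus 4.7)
The plan is to verify each of the four parts by direct computation: tensor-product algebra for the product identities in (a) and (c), index-based case analysis for the CAL property in (b) and (d), and a group-theoretic comparison for the dimension claims.

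For (a), the key algebraic identities are $\hat{p}'_{k-2}\hat{p}'_k = (\hat{p}_{k-2}\otimes\hat{r})([I]\otimes\hat{r}) = \hat{p}_{k-2}\otimes[I_2]$ and $\hat{p}'_{k-1}\hat{p}'_{k+1} = \hat{p}_{k-1}\otimes[I_2]$. Partitioning $\mathcal{C}'_{\text{odd}}$ and $\mathcal{C}'_{\text{even}}$ according to the parity of $k$, and absorbing these identities into the products of $\hat{p}'_\ell = \hat{p}_\ell \otimes [I_2]$ for $\ell \leq k-3$, yields $\prod \mathcal{C}'_{\text{odd}} = \prod\mathcal{C}_{\text{odd}}\otimes[I_2]$, $\prod \mathcal{C}'_{\text{even}} = \prod\mathcal{C}_{\text{even}}\otimes[I_2]$, and (by multiplication) $\prod\mathcal{C}'=\prod\mathcal{C}\otimes[I_2]$. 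Type preservation is immediate since $\hat{r},\hat{s}\in\{[X],[Z]\}$ are chosen to match the parity convention. For (c), $\hat{q}'_{\beta-2}$ and $\hat{q}'_{\beta-1}$ aggregate precisely the even- and odd-indexed factors of $\mathcal{C}$ truncated from $\tilde{\mathcal{C}}$, so re-expanding the products recovers the original ones, with type preservation trivial.

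For the CAL property in (b), I would split the pair checks into three groups: (i) among $\hat{p}'_0, \dots, \hat{p}'_{k-3}$, commutation is inherited from $\mathcal{C}$ via the trivial second factor; (ii) pairs involving the middle block $\hat{p}'_{k-2}, \dots, \hat{p}'_{k+1}$ have commutation given by the XOR of first- and second-factor commutations, and since $\hat{r},\hat{s},\hat{r},\hat{s}$ is itself a length-$4$ CAL on one qubit, the required pattern falls out; (iii) cyclic closure $\{\hat{p}'_{k+1}, \hat{p}'_0\}$ reduces to the original $\{\hat{p}_{k-1}, \hat{p}_0\}$. For (d), the nontrivial checks concern $\hat{q}'_{\beta-2}$ and $\hat{q}'_{\beta-1}$. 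Since the factors of each $\hat{q}'$ share a single index parity, they pairwise commute in $\mathcal{C}$. Any $\hat{p}_i$ with $i\leq\beta-4$ is at index-distance at least $2$ from every such factor, so $\hat{q}'_{\beta-2}$ commutes with $\hat{q}'_0,\dots,\hat{q}'_{\beta-4}$; the lone factor $\hat{p}_{\beta-2}$ anticommutes with $\hat{p}_{\beta-3}$, yielding anticommutation with $\hat{q}'_{\beta-3}$; a direct count shows exactly $2\alpha+1$ anticommuting factor-pairs between $\hat{q}'_{\beta-2}$ and $\hat{q}'_{\beta-1}$, giving anticommutation; and cyclic closure $\{\hat{q}'_{\beta-1},\hat{q}'_0\}$ uses that only $\hat{p}_{k-1}$ among the factors of $\hat{q}'_{\beta-1}$ anticommutes with $\hat{p}_0$.

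For the dimension claims, the relations $\hat{p}'_{k-2}\hat{p}'_k = \hat{p}_{k-2}\otimes[I_2]$ and $\hat{p}'_{k-1}\hat{p}'_{k+1} = \hat{p}_{k-1}\otimes[I_2]$ from (a) allow each original $\hat{p}_i\otimes[I_2]$ to be recovered as a product of elements of $\mathcal{C}'$, while the ``free'' elements $[I]\otimes\hat{r}, [I]\otimes\hat{s}$ generate the Pauli group on the added qubit; comparing the resulting generating sets and invoking Theorems~\ref{thm:cal-dim-odd} and \ref{thm:cal-dim-even} for the dependency structure of $\mathcal{C}$ is intended to yield the dimension claim in (b). An analogous recovery argument handles (d). The main obstacle is precisely this dimension bookkeeping: the expansion simultaneously introduces two new generators on the added qubit and absorbs two original elements into products involving those generators, and one must verify carefully that these effects combine exactly as claimed (rather than shifting $\dim$ by $\pm 2$). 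I would confirm the cancellation pattern on small cases ($k=3, 4, 5$) before attempting a general symbolic proof via the symplectic-rank formulation.
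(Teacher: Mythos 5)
Your computations for (a) and (c) and your case analyses establishing the CAL property in (b) and (d) are correct and are essentially the paper's route: the paper simply declares the commutation checks ``straightforward to verify,'' and your index bookkeeping (in particular the count of $2\alpha+1$ anticommuting factor-pairs between $\hat{q}'_{\beta-2}$ and $\hat{q}'_{\beta-1}$, and the observation that cyclic closure reduces to $\{\hat{p}_{k-1},\hat{p}_0\}$) is the right way to discharge them.

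The one place you stopped short --- the dimension claim --- is exactly where you should have trusted your suspicion. The equality $\dim(\mathcal{C}')=\dim(\mathcal{C})$ cannot be established by your symplectic-rank recovery argument because it is not literally true: since $[I]\otimes[X]$ and $[I]\otimes[Z]$ both lie in $\mathcal{C}'$ and, as you note, every $\hat{p}_i\otimes[I_2]$ is recoverable as a product, the row space of $\mathcal{C}'$ is the (padded) row space of $\mathcal{C}$ together with the full two-dimensional symplectic space of the added qubit, so in fact $\dim(\mathcal{C}')=\dim(\mathcal{C})+2$. Concretely, $\{[X],[Z],[X],[Z]\}$ has dimension $2$ and expands to an extremal length-$6$ CAL on two qubits of dimension $4$. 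What the constructions genuinely preserve is the number of dependencies $|\mathcal{C}|-\dim(\mathcal{C})$, and that is how the paper argues: part (a) (resp.\ part (c)) shows that whether $\prod\mathcal{C}$, $\prod\mathcal{C}_{\text{odd}}$, $\prod\mathcal{C}_{\text{even}}$ equal $[I]$, and whether the latter two coincide, is unchanged by expansion (resp.\ contraction), and Theorems~\ref{thm:cal-dim-odd} and~\ref{thm:cal-dim-even} express $|\mathcal{C}|-\dim(\mathcal{C})$ purely in terms of these conditions. So replace your planned rank computation and small-case experiments by this short appeal to those two theorems, and record the conclusion as $\dim(\mathcal{C}')=\dim(\mathcal{C})+2$ and $\dim(\tilde{\mathcal{C}})=\dim(\mathcal{C})-2\alpha$, equivalently the invariance of $|\cdot|-\dim(\cdot)$; this is the form in which the lemma is actually used later (preservation of extremality and of the product conditions).
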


\vspace{0.5cm}
\begin{proof}
\setmargins
\begin{enumerate}[(a)]
    \item It is easy to verify that by construction, if all elements of $\mathcal{C}_{\text{odd}}$ are $Z$-type, then the same is true for $\mathcal{C}'_{\text{odd}}$, and if all elements of $\mathcal{C}_{\text{even}}$ are $X$-type, then the same holds for all elements of $\mathcal{C}'_{\text{even}}$. Now $\prod \mathcal{C}'_{\text{odd}} = \prod_{\ell=0}^{\floor{k/2}} \hat{p}'_{2 \ell + 1} = \left( \prod_{\ell=0}^{\floor{k/2}-2} \hat{p}_{2 \ell + 1} \otimes [I_2] \right) \left(\hat{p}_{2 \floor{k/2} - 3} \otimes [Z] \right) \left([I] \otimes [Z]\right) = \left( \prod_{\ell=0}^{\floor{k/2}-1} \hat{p}_{2 \ell + 1} \right) \otimes [I_2]$, which shows $\prod \mathcal{C}'_{\text{odd}} = \prod \mathcal{C}_{\text{odd}} \otimes [I_2]$. A similar calculation also shows that $\prod \mathcal{C}'_{\text{even}} = \prod \mathcal{C}_{\text{even}} \otimes [I_2]$, and from both these we also conclude that $\prod \mathcal{C}' = \prod \mathcal{C} \otimes [I_2]$. 
    \item It is straightforward to verify that when $k \geq 3$, and since $\mathcal{C}$ is a CAL, the elements of $\mathcal{C}'$ satisfy the CAL commutation relations (Eq.~\eqref{eq:cal-comm}). Now by (a), we have that $\prod \mathcal{C}' = [I] \in \pauligrp{n+1}$ if and only if $\prod \mathcal{C} = [I] \in \pauligrp{n}$. So when $k$ is odd, we have from Theorem~\ref{thm:cal-dim-odd}(a) that $\dim (\mathcal{C}') = \dim (\mathcal{C})$. When $k$ is even, it follows similarly using (a) the following: (i) $\prod \mathcal{C}_{\text{odd}} = [I]$ if and only if $\prod \mathcal{C}'_{\text{odd}} = [I]$, (ii) $\prod \mathcal{C}_{\text{even}} = [I]$ if and only if $\prod \mathcal{C}'_{\text{even}} = [I]$, and (iii) $\prod \mathcal{C}_{\text{odd}} = \prod \mathcal{C}_{\text{even}}$ if and only if $\prod \mathcal{C}'_{\text{odd}} = \prod \mathcal{C}'_{\text{even}}$. Theorem~\ref{thm:cal-dim-even} now implies that $\dim (\mathcal{C}') = \dim (\mathcal{C})$.
    \item This also follows by construction: we have $\prod \tilde{\mathcal{C}}_{\text{odd}} = \left( \prod_{\ell=0}^{\floor{\beta/2}-2} \hat{p}_{2\ell + 1} \right) \left( \prod_{\ell = 0}^{\alpha} \hat{p}_{2 \floor{\beta/2} - 1 + 2\ell} \right) = \prod \mathcal{C}_{\text{odd}}$, and $\prod \tilde{\mathcal{C}}_{\text{even}} = \left( \prod_{\ell=0}^{\ceil{\beta/2}-2} \hat{p}_{2\ell} \right) \left( \prod_{\ell = 0}^{\alpha} \hat{p}_{2 \ceil{\beta/2} - 2 + 2\ell} \right) = \prod \mathcal{C}_{\text{even}}$, and multiplying them together gives $\prod \tilde{\mathcal{C}} = \prod \mathcal{C}$. The last part is a simple observation. 
    \item If $\beta \geq 3$, then one easily verifies that $\tilde{\mathcal{C}}$ is a CAL. The proof of the last part is similar to (b), after noting that (c) implies (i) $\prod \mathcal{C}_{\text{odd}} = \prod \tilde{\mathcal{C}}_{\text{odd}}$, and (ii) $\prod \mathcal{C}_{\text{even}} = \prod \tilde{\mathcal{C}}_{\text{even}}$.
\end{enumerate}
\end{proof}

We are now in a position to discuss the case of odd-length CALs, which is more interesting than the even-length case. It turns out that one must always use all the three Pauli matrices to construct an odd-length CAL of length at least three, if the CAL is not an independent set; otherwise it is possible to not use the Pauli matrix $Y$. This is proved in the next theorem.

\begin{theorem}
\label{thm:cal-noY-odd}
For all $n \geq 1$, and $1 \leq k \leq 2n+1$ such that $k$ is odd, we have the following:
\setmargins
\begin{enumerate}[(a)]
    \item If $3 \leq k \leq 2n+1$, then it is not possible to construct a CAL $\mathcal{C} \subseteq \pauligrp{n}$, with $|\mathcal{C}| = k$, and $\prod \mathcal{C} = [I]$, such that the representatives of each element of $\mathcal{C}$ are Kronecker products not involving the Pauli matrix $Y$. If $k=1$, then $\mathcal{C} = \{[I]\}$ is a CAL whose element representatives are Kronecker products not involving any Pauli matrix.
    \item If $1 \leq k \leq 2n-1$, then it is always possible to construct a CAL $\mathcal{C} \subseteq \pauligrp{n}$, with $|\mathcal{C}| = k$, and $\prod \mathcal{C} \neq [I]$, such that the representatives of each element of $\mathcal{C}$ are Kronecker products not involving the Pauli matrix $Y$.
\end{enumerate}
\end{theorem}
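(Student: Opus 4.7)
My plan splits along the two parts of the theorem: part (a) is a parity-counting impossibility argument on the commutation matrix, while part (b) is an inductive construction built from a small base case via the CAL-expansion operation of Lemma~\ref{lem:cal-expand-contract}(i). I expect the main conceptual obstacle to lie in part (a), where the argument must find the right invariant that simultaneously leverages the no-$Y$ structure and the assumption $\prod \mathcal{C} = [I]$.

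For part (a), the case $k=1$ is immediate via $\mathcal{C} = \{[I]\}$, so assume $k \geq 3$ odd and, toward a contradiction, that a no-$Y$ CAL $\{\hat{p}_0, \dots, \hat{p}_{k-1}\} \subseteq \pauligrp{n}$ with $\prod \mathcal{C} = [I]$ exists. For each qubit $j$, set $A_j = \{i : p_i \text{ acts as } X \text{ on qubit } j\}$ and $B_j = \{i : p_i \text{ acts as } Z \text{ on qubit } j\}$; these are disjoint by the no-$Y$ hypothesis. The restriction of $\prod_i p_i$ to qubit $j$ is a product of $|A_j|$ copies of $X$ and $|B_j|$ copies of $Z$, and such a product is proportional to $I_2$ only when both $|A_j|$ and $|B_j|$ are even---otherwise the reduced form contains a surviving $X$, $Z$, or $Y$ factor. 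Next, two no-$Y$ Paulis $p_i, p_{i'}$ anticommute iff the number of qubits $j$ with $(i, i') \in (A_j \times B_j) \cup (B_j \times A_j)$ is odd, so the commutation matrix $C \in \mathbb{F}_2^{k \times k}$ equals $\sum_j M_j$ over $\mathbb{F}_2$, where $M_j$ is the symmetric adjacency matrix of the complete bipartite graph $K_{A_j, B_j}$ on vertex set $\{0, \dots, k-1\}$.

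To close the argument I will count strict upper-triangle ones modulo $2$. The upper triangle of $M_j$ contains exactly $|A_j| |B_j|$ ones, which is even since both factors are even; hence $\sum_j |A_j| |B_j|$ is even. Since $\sum_j |A_j| |B_j| = \sum_e \mathrm{mult}(e)$, where $\mathrm{mult}(e)$ counts how many $M_j$ contain the upper-triangle edge $e$, and since the number of upper-triangle ones of $\sum_j M_j$ equals $|\{e : \mathrm{mult}(e) \text{ odd}\}|$---which has the same parity as $\sum_e \mathrm{mult}(e)$---the strict upper triangle of $C$ must contain an even number of $1$s. But $C$ is the adjacency matrix of the cycle graph on $k$ vertices, whose strict upper triangle has exactly $k$ ones, and $k$ is odd. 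Contradiction.

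For part (b), I handle two base cases and then induct via Lemma~\ref{lem:cal-expand-contract}(i). For $k=1$, take $\mathcal{C} = \{[X \otimes I_2^{\otimes(n-1)}]\}$, which trivially has $\prod \mathcal{C} \neq [I]$. For $k=3$, I propose the $2$-qubit list $\mathcal{C} = \{[X \otimes I_2],\, [Z \otimes X],\, [Z \otimes Z]\}$; pairwise anticommutation is a direct symplectic check, so this is a CAL, and the symplectic-vector sum $(1,0 \mid 0,0) + (0,1 \mid 1,0) + (0,0 \mid 1,1) = (1,1 \mid 0,1)$ corresponds to a Pauli proportional to $X \otimes Y$, giving $\prod \mathcal{C} \neq [I]$. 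For odd $k = 2m+1 \geq 5$, I apply CAL-expansion $m-1$ times starting from this $k=3$ base: the odd-length expansion (Lemma~\ref{lem:cal-expand-contract}(i) with $\hat{r} = [Z]$, $\hat{s} = [X]$) appends only tensor factors drawn from $\{[I_2], [X], [Z]\}$, so the no-$Y$ property is preserved, and parts (a) and (b) of Lemma~\ref{lem:cal-expand-contract} guarantee the output is a CAL of length $k$ on $(k+1)/2$ qubits with $\prod \mathcal{C}' = (\prod \mathcal{C}) \otimes [I_2] \neq [I]$. Tensoring with additional $[I_2]$ factors extends this to any $n \geq (k+1)/2$, which is exactly the range $k \leq 2n-1$ claimed.
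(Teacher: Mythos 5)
Your proposal is correct and follows essentially the same route as the paper: part (a) is the identical parity count of strict-upper-triangle ones in the qubit-by-qubit decomposition of the commutation matrix (the paper phrases your complete-bipartite matrices $M_j$ as per-qubit commutation matrices $M^{(j)}$ whose nonzero counts are multiples of $8$, but the parity conclusion is the same), and part (b) uses the same length-three base CAL on two qubits followed by iterated application of the CAL expansion of Lemma~\ref{lem:cal-expand-contract}(i). No gaps.
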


\vspace{0.5cm}
\begin{proof}
\setmargins
\begin{enumerate}[(a)]
    \item The case $k=1$ is obvious, so fix $n \geq 1$, $3 \leq k \leq 2n+1$, such that $k$ is odd. The proof is by contradiction. Suppose there exists a CAL $\mathcal{C} = \{\hat{p}_0,\dots,\hat{p}_{k-1}\} \subseteq \pauligrp{n}$ satisfying $|\mathcal{C}| = k$, and $\prod \mathcal{C} = [I]$, such that the representatives of each element of $\mathcal{C}$ are Kronecker products not involving $Y$. Let $\hat{p}_\ell^{(j)}$ denote the restriction of $\hat{p}_\ell$ to qubit $j$, and let us also define the lists $\mathcal{C}^{(1)}, \dots \mathcal{C}^{(n)} \subseteq \pauligrp{1}$ by $\mathcal{C}^{(j)} = \{\hat{p}_\ell^{(j)} : 0 \leq \ell \leq k-1\}$, for all $0 \leq j \leq n-1$. We observe that $\prod \mathcal{C} = [I]$ implies $\prod \mathcal{C}^{(j)} = [I]$ for all $j$ (here the latter $[I]$ denotes the identity element of $\pauligrp{1}$). Now by assumption we have that $[Y] \notin \mathcal{C}^{(j)}$, and so $[X]$ and $[Z]$ must occur an even number of times in each $\mathcal{C}^{(j)}$. Define the matrices $M, M^{(0)}, \dots, M^{(n-1)} \in \mathbb{F}_2^{k \times k}$ as 
    
    \begin{equation}
    \label{eq:cal-noY-odd-com-matrix}
        \begin{split}
            M (r,s) &= 
            \begin{cases}
                0 & \;\;\;\; \text{if } [\hat{p}_{r},\hat{p}_{s}] = 0 \\
                1 & \;\;\;\; \text{if } \{\hat{p}_{r},\hat{p}_{s}\} = 0
            \end{cases} \\
            M^{(j)} (r,s) &= 
            \begin{cases}
                0 & \;\;\;\; \text{if } [\hat{p}_{r}^{(j)},\hat{p}_{s}^{(j)}] = 0 \\
                1 & \;\;\;\; \text{if } \{\hat{p}_{r}^{(j)},\hat{p}_{s}^{(j)}\} = 0
            \end{cases}
        \end{split}
    \end{equation}
    for all $0 \leq r,s \leq k-1$, and $0 \leq j \leq n-1$. These matrices have the following properties:
    
    \begin{enumerate}[(i)]
        \item $M, M^{(0)}, \dots, M^{(n-1)}$ are symmetric, and have zeros on the diagonal.
        \item The matrices satisfy the identity $M = \sum_{j=0}^{n-1} M^{(j)}$, where addition is performed over $\mathbb{F}_2$. This is because $\hat{p}_{r}$ and $\hat{p}_{s}$ commute if and only if their restrictions $\hat{p}_{r}^{(j)}$ and $\hat{p}_{s}^{(j)}$ for qubits $0 \leq j \leq n-1$, anticommute an even number of times.
        \item The upper triangular part of $M$ contains exactly $k$ non-zeros. This is simply a consequence of the commutation relations Eq.~\eqref{eq:cal-comm}.
        %
        %
        \item For all $0 \leq j \leq n-1$, since the multiset $\mathcal{C}^{(j)}$ contains an even number of $[X]$ and $[Z]$, it follows from Eq.~\eqref{eq:cal-noY-odd-com-matrix} that the number of non-zeros in $M^{(j)}$ is a multiple of $8$. As $M^{(j)}$ is symmetric with zeros on the diagonal by property (i), we have that the number of non-zeros in the upper triangular part of $M^{(j)}$ is a multiple of $4$.
    \end{enumerate}
    
    By an application of Lemma~\ref{lem:vector-sum-even-ham-weight}, property (iv) above implies that the number of non-zeros in the upper triangular part of $\sum_{j =0}^{n-1} M^{(j)}$ is even, but this is a contradiction to property (iii) which states that the number of non-zeros in the upper triangular part of $M$ is $k$, which is odd by assumption. This gives the desired contradiction.
    \item First notice that if there exists a CAL $\mathcal{C} = \{\hat{p}_\ell : 0 \leq \ell \leq k-1\} \subseteq \pauligrp{n}$ of odd length $k$ and $\prod \mathcal{C} \neq I$, then it must necessarily satisfy $1 \leq k \leq 2n-1$, by Lemma~\ref{lem:cal-maxsize}(d) and Corollary~\ref{cor:extremal-cal-mult-identity}(a), and this also implies the existence of a CAL of length $k$ with the required properties on $n' > n$ qubits, because one can simply form the new CAL $\mathcal{C}' = \{\hat{p}_\ell \otimes [I] : 0 \leq \ell \leq k-1\}$, where $[I] \in \pauligrp{n'-n}$ is the identity element on $n'-n$ qubits. Thus for each odd $k \geq 1$, it suffices to find a CAL $\mathcal{C}_k$ of length $k$, with the required properties on $(k+1)/2$ qubits. If $k=1$, then $\mathcal{C}_1 = \{[X]\}$ is such a CAL. If $k=3$, then $\mathcal{C}_3 = \{[X] \otimes [I_2], [Z] \otimes [Z], [Z] \otimes [X]\}$ is a CAL with the desired properties. One can now iteratively use the CAL expansion construction in Lemma~\ref{lem:cal-expand-contract}, to obtain $\mathcal{C}_{k+2}$ from $\mathcal{C}_{k}$ for all odd $k \geq 3$. The construction ensures that if the restriction of each element of $\mathcal{C}_k$ to qubit $j$ is not $[Y]$, for all $0 \leq j \leq (k-1)/2$, then the same is true for each element of $\mathcal{C}_{k+2}$ for all qubit indices $0 \leq j \leq (k+1)/2$.
\end{enumerate}
\end{proof}

The final theorem of this section shows that for CALs of even-length, it always suffices to not use the Pauli matrix $Y$.

\begin{theorem}
\label{thm:cal-noY-even}
For all $n \geq 1$, and for all integers $k$ satisfying $2 \leq 2k \leq 2n+2$, there exists a CAL $\mathcal{C} \subseteq \pauligrp{n}$, with $|\mathcal{C}| = 2k$, such that the representatives of each element of $\mathcal{C}$ are Kronecker products not involving the Pauli matrix $Y$. Moreover defining $\mathcal{C}_{\text{odd}}$ and $\mathcal{C}_{\text{even}}$ according to Theorem~\ref{thm:cal-dim-even}, one can choose $\mathcal{C}$ to satisfy any of the cases
\setmargins
\begin{enumerate}[(i)]
    \item $\prod \mathcal{C}_{\text{odd}} \neq [I] \neq \prod \mathcal{C}_{\text{even}}$, and  $\prod \mathcal{C}_{\text{odd}} \neq \prod \mathcal{C}_{\text{even}}$
    \item $\prod \mathcal{C}_{\text{odd}} \neq [I] \neq \prod \mathcal{C}_{\text{even}}$ and $\prod \mathcal{C}_{\text{odd}} = \prod \mathcal{C}_{\text{even}}$
    \item $\prod \mathcal{C}_{\text{odd}} = [I] \neq \prod \mathcal{C}_{\text{even}}$
    \item $\prod \mathcal{C}_{\text{odd}} \neq [I] = \prod \mathcal{C}_{\text{even}}$
    \item $\prod \mathcal{C}_{\text{odd}} = \prod \mathcal{C}_{\text{even}} = [I]$
\end{enumerate}
if the case is allowed by Theorem~\ref{thm:cal-dim-even}(a). Except for case (ii), one can also choose all elements of $\mathcal{C}_{\text{odd}}$ to be of $Z$-type, and all elements of $\mathcal{C}_{\text{even}}$ to be of $X$-type.
\end{theorem}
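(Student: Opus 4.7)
The strategy mirrors the approach of Theorem~\ref{thm:cal-noY-odd}(b): construct small base CALs with the required properties, then reach all allowed combinations of $(n, |\mathcal{C}|)$ by iterating the CAL expansion construction of Lemma~\ref{lem:cal-expand-contract} and right-tensoring with $[I_2]$.

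First I would handle length $|\mathcal{C}|=2$, where only case (i) is allowed (neither $|\mathcal{C}_{\text{odd}}|$ nor $|\mathcal{C}_{\text{even}}|$ can be $0$, and a CAL cannot contain $[I]$). For any $n\ge 1$, $\mathcal{C}=\{X_0,Z_0\}$ tensored with $[I_2]^{\otimes(n-1)}$ sits in case (i), has no $[Y]$, and has $\mathcal{C}_{\text{odd}}$ of $Z$-type and $\mathcal{C}_{\text{even}}$ of $X$-type. For length $4$ I would give one explicit base CAL per case at its minimum qubit count, e.g.
\begin{enumerate}[(i)]
\setmargins
\item $\{X_0,\,Z_0Z_1,\,X_1,\,Z_0Z_1Z_2\}$ on $3$ qubits,
\item $\{X_0,\,Z_0,\,X_0X_1,\,Z_0X_1\}$ on $2$ qubits,
\item $\{X_0,\,Z_0,\,X_0X_1,\,Z_0\}$ on $2$ qubits,
\item $\{X_0,\,Z_0,\,X_0,\,Z_0Z_1\}$ on $2$ qubits,
\item $\{X_0,\,Z_0,\,X_0,\,Z_0\}$ on $1$ qubit.
\end{enumerate}
For each, one checks the four cyclic anticommutators and the two non-adjacent commutators to verify the CAL property, computes $\prod\mathcal{C}_{\text{odd}}$ and $\prod\mathcal{C}_{\text{even}}$ to verify the claimed case, and observes that no $[Y]$ factor appears and that $\mathcal{C}_{\text{odd}}$ (resp.\ $\mathcal{C}_{\text{even}}$) consists of $Z$-type (resp.\ $X$-type) elements in all cases except (ii).

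For the inductive step I would invoke CAL expansion: given $\mathcal{C}\subseteq\pauligrp{n}$ of length $k\ge 3$ in case $c$ with no $[Y]$ (and with the type property when $c$ is not (ii)), the construction produces $\mathcal{C}'\subseteq\pauligrp{n+1}$ of length $k+2$. By Lemma~\ref{lem:cal-expand-contract}(a), $\prod\mathcal{C}'_{\text{odd}}=\prod\mathcal{C}_{\text{odd}}\otimes[I_2]$ and $\prod\mathcal{C}'_{\text{even}}=\prod\mathcal{C}_{\text{even}}\otimes[I_2]$, so every product condition distinguishing the five cases is preserved, and so is the $Z$/$X$-type property. Since the expansion only appends $[I_2]$, $[X]$, or $[Z]$ on the new qubit, no $[Y]$ is ever introduced. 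Iterating expansion $k-2$ times on the length-$4$ base in case $c$ produces a length-$2k$ CAL in case $c$ at exactly the minimum qubit count $n_{\min}(c,k)\in\{k-1,\,k,\,k+1\}$ allowed by Theorem~\ref{thm:cal-dim-even}(a). Any larger $n$ is reached by right-tensoring each element with $[I_2]$, which preserves the CAL structure, the no-$Y$ condition, all products (up to an appended $[I_2]$), and the types.

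The main obstacle I anticipate is the case (i) base at length $4$: case (i) requires $|\mathcal{C}|<2n$ together with nonidentity and distinct $\prod\mathcal{C}_{\text{odd}}$ and $\prod\mathcal{C}_{\text{even}}$, which forces at least $3$ qubits and a careful choice of supports so that all four cyclic inner products over $\mathbb{F}_2$ equal $1$ while $a_0\neq a_2$ (writing $\mathcal{C}_{\text{even}}=\{X_{a_0},X_{a_2}\}$) and $b_1\neq b_3$ (writing $\mathcal{C}_{\text{odd}}=\{Z_{b_1},Z_{b_3}\}$); the choice $a_0=(1,0,0),a_2=(0,1,0),b_1=(1,1,0),b_3=(1,1,1)$ realises this. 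Case (ii) is also delicate: the stronger type property provably cannot hold, since otherwise $\prod\mathcal{C}_{\text{odd}}$ would be purely $Z$-type and $\prod\mathcal{C}_{\text{even}}$ purely $X$-type, and their coincidence would force both to equal $[I]$, contradicting case (ii). The remedy is to include a genuinely mixed element such as $Z_0X_1$ in the base, and to observe that CAL expansion preserves its mixed character since it only appends $[I_2]$ on the new qubit for the old elements.
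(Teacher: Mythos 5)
Your proposal is correct and follows essentially the same route as the paper's proof: explicit length-$4$ (and length-$2$) base CALs at the minimum qubit count for each of the five cases, followed by iterated application of the CAL expansion of Lemma~\ref{lem:cal-expand-contract} and padding with $[I_2]$ to reach all allowed $(n,2k)$, with the products and the $X$/$Z$-type properties preserved by Lemma~\ref{lem:cal-expand-contract}(a). Your base examples differ from the paper's only in labeling (and in the particular choice for case (i)), and your observation that the type property is provably impossible in case (ii) matches the paper's argument.
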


\begin{proof}
As in the proof of Theorem~\ref{thm:cal-noY-odd}(b), notice that if we prove this theorem for any fixed values of $k$ and $n$, then we have also proved the theorem for the same fixed $k$ and for all $n' > n$. First consider CALs of length $2k = 2$. Then by Theorem~\ref{thm:cal-dim-even}(a) and its proof, we know that only case (i) is possible. In this case the CAL $\mathcal{C} = \{[X], [Z]\}$ has the required properties for $n=1$, so this proves the case $2k=2$. For the rest of the proof, assume $2k \geq 4$. Also notice that the last statement is true because if it were possible to choose all elements of $\mathcal{C}_{\text{odd}}$ and $\mathcal{C}_{\text{even}}$ to be of $Z$-type and $X$-type respectively, then the only way to have $\prod \mathcal{C}_{\text{odd}} = \prod \mathcal{C}_{\text{even}}$ is to have $\prod \mathcal{C}_{\text{odd}} = \prod \mathcal{C}_{\text{even}} = [I]$, which is a contradiction.

We now adopt the notation $\mathcal{C}_{2k,n} \subseteq \pauligrp{n}$ to denote a CAL of length $2k$. Note that the minimum value of $n$ for which $\mathcal{C}_{2k,n}$ can exist for cases (i)-(v) satisfying the properties of the theorem is given by $n \geq n_{2k}$, where $n_{2k} = k+1$ for case (i), $n_{2k} = k$ for cases (ii)-(iv), and $n_{2k} = k-1$ for case (v), all following from Lemma~\ref{lem:cal-maxsize}(d) and Theorem~\ref{thm:cal-dim-even}(a). We first exhibit examples of CALs $\mathcal{C}_{4,n_2}$ existing for each case --- these are (i) $\mathcal{C}_{4,3} = \{[X_1 X_2], [Z_2], [X_2 X_3], [Z_1 Z_3]\}$, (ii) $\mathcal{C}_{4,2} = \{[X_1], [Z_1], [X_1 X_2], [Z_1 X_2]\}$, (iii) $\mathcal{C}_{4,2} = \{[X_1], [Z_1], [X_1 X_2], [Z_1]\}$, (iv) $\mathcal{C}_{4,2} = \{[X_1], [Z_1], [X_1], [Z_1 Z_2]\}$, and (v) $\mathcal{C}_{4,1} = \{[X], [Z], [X], [Z]\}$, for the cases (i)-(v) respectively. Now for each case we use the CAL expansion construction of Lemma~\ref{lem:cal-expand-contract}(i) to iteratively construct the CALs $\mathcal{C}_{2k,n_k}$ for all $k \geq 2$, starting from $\mathcal{C}_{4,n_2}$. These CALs have the required properties in each case by Lemma~\ref{lem:cal-expand-contract}(a), since $\mathcal{C}_{4,n_2}$ has them as well. The theorem is then proved by the first sentence of the previous paragraph.
\end{proof}
\edit{\section{Majorana surface code lemmas and proofs}
\label{app:majorana-proofs}

This section contains proofs of Lemmas~\ref{lem:majorana-group-props} and \ref{lem:stabilizer_dependence}, which establish properties of Majorana subgroups and Majorana surface codes. We also restate the Lemmas here.

\begin{replemma}{lem:majorana-group-props}
Let $\mathcal{S}$ be a subgroup of $\mathcal{J}_m$, and let $\mathcal{I} \subseteq \mathcal{J}_m$ be non-empty, such that elements of $\mathcal{I}$ commute and are Hermitian. Then $\langle \mathcal{I} \rangle$ is Abelian and Hermitian, and moreover the following holds:

\setmargins
\begin{enumerate}[(a)]
    \item There exists a set $\mathcal{I}'$ formed by multiplying each element of $\mathcal{I}$ by either $1$ or $-1$, such that $-I \not \in \langle \mathcal{I}' \rangle$.
    \item If $-I \not \in \mathcal{S}$, then $\mathcal{S}$ is Abelian and Hermitian. Conversely, if $\mathcal{S}$ is Abelian and Hermitian, then either $-I \not \in \mathcal{S}$, or $\mathcal{S} = \langle \mathcal{S}', -I \rangle$ for some subgroup $\mathcal{S}'$ of $\mathcal{S}$ with $-I \not\in \mathcal{S}'$.
\end{enumerate}
\end{replemma}

\begin{proof}
In this proof $\eta, \eta', \eta'' \in \{\pm 1, \pm i\}$, and $a,a',b \in \mathbb{F}_2^m$. We also note some useful facts for this proof which follow easily from Eqs.~(\ref{eq:Maj_rules},\ref{eq:majorana-commutation}): if $x \in \mathcal{J}_m$ then (i) $x^{-1} = x^{\dag}$, (ii) $x$ is either Hermitian or skew-Hermitian, and (iii) $x^2 \in \{I,-I\}$, with $x^2 = I$ if and only if $x = x^\dag$. Also note that for any non-empty $\mathcal{J} \subseteq \mathcal{J}_m$ whose elements commute and are Hermitian, if $x \in \mathcal{J}$, then by (i) and (iii) $x = x^\dag = x^{-1}$, so all elements in $\langle \mathcal{J} \rangle$ are products of elements of $\mathcal{J}$. That $\langle \mathcal{I} \rangle$ is Abelian is clear since any element of $\langle \mathcal{I} \rangle$ is a product of the elements in $\mathcal{I}$. To see that $\langle \mathcal{I} \rangle$ is Hermitian, note that if $\langle \mathcal{I} \rangle \ni x = \prod_{y \in \mathcal{I}'} y$ for some $\mathcal{I}' \subseteq \mathcal{I}$, then $x^2 = \prod_{y \in \mathcal{I}'} y^2 = I$, and we conclude using (iii).
\setmargins
\begin{enumerate}[(a)]
    \item Let $|\mathcal{I}| = k \geq 1$. We prove the result by induction on $|\mathcal{I}|$. If $\mathcal{I} = \{x\}$ we choose $\mathcal{I}' = \{x\}$ if $x \neq -I$, else $\mathcal{I}' = \{I\}$, proving the base case. Now suppose the result is true for all $\mathcal{I}$ such that $1 \leq |\mathcal{I}| \leq r < k$. If $|\mathcal{I}| = r + 1$, pick any $x \in \mathcal{I}$ and let $\mathcal{I}_1 = \mathcal{I} \setminus \{x\}$. By the inductive hypothesis, there exists $\mathcal{I}'_1$ formed by multiplying every element of $\mathcal{I}_1$ by either $1$ or $-1$, such that $-I \not\in \langle \mathcal{I}'_1 \rangle$. If $x \in \langle \mathcal{I}'_1 \rangle$ (resp. $-x \in \langle \mathcal{I}'_1 \rangle$), choose $\mathcal{I}' = \mathcal{I}'_1 \cup \{x\}$ (resp. $\mathcal{I}' = \mathcal{I}'_1 \cup \{-x\}$) and we are done. If $x,-x \not\in \langle \mathcal{I}'_1 \rangle$, choose $\mathcal{I}' = \mathcal{I}'_1 \cup \{x\}$. Now for contradiction, assume $-I \in \langle \mathcal{I}' \rangle$, so we have $-I = zx$ for some $z \in \langle \mathcal{I}'_1 \rangle$, using properties of $\mathcal{I}$ (commuting and Hermitian) and (iii). Letting $z = \eta \gamma_{a}$, and $x = \eta' \gamma_{a'}$ then gives $-I = \eta \eta' (-1)^{\xi(a,a')} \gamma_{a+a'}$ using Eq.~\eqref{eq:majorana-commutation}, and so $a = a'$. Combining with $z^2 = x^2 = I$ now gives $\eta = - \eta'$, or equivalently $z = -x$ which is a contradiction.
    \item First suppose $-I \not \in \mathcal{S}$. Note that if $x \in \mathcal{S}$ is not Hermitian, then $\mathcal{S} \ni x^2 = -I$ by (iii). Now suppose $\mathcal{S}$ is non-Abelian. Then there exists $\eta \gamma_{a},\; \eta' \gamma_{a'} \in \mathcal{S}$, such that $\gamma_{a} \gamma_{a'} = - \gamma_{a'}\gamma_{a}$. But this implies that $\pm \eta'' \gamma_b \in \mathcal{S}$, where $\eta'' \gamma_b = \eta \eta' \gamma_{a} \gamma_{a'}$, and so both $\pm (\eta'')^2 \gamma_b^2 \in \mathcal{S}$. But one of these must be $-I$, as $\gamma_b^2 =  \pm I$, giving a contradiction. For the converse assume that $-I \in \mathcal{S}$ for Hermitian and Abelian $\mathcal{S}$. Consider the cosets of $\langle -I \rangle$ in $\mathcal{S}$, and let $\mathcal{I}$ be the set formed by picking exactly one element from each coset. By (a), then there exists $\mathcal{I}'$ such that $-I \not \in \langle \mathcal{I}' \rangle$, and $\mathcal{S} = \langle \langle \mathcal{I}' \rangle, -I \rangle$.
\end{enumerate}
\end{proof}

For the next Lemma, refer to the definition of Majorana surface codes, Definition~\ref{def:majorana_surface_code} for the notation.

\begin{replemma}{lem:stabilizer_dependence}
The stabilizers of the Majorana surface code for the embedded graph $G(V,E,F)$ satisfy
\setmargins
\begin{enumerate}[(a)]
\item There is no non-empty subset $V'\subseteq V$ such that $\prod_{v\in V'}S_v= \pm I$.
\item There is no non-empty, proper subset $V'\subset V$ and no subset $F'\subseteq F$ such that $\prod_{v\in V'}S_v= \pm \prod_{f\in F'}S_f$. If there is any odd-degree vertex, then the statement holds for all non-empty subsets $V'\subseteq V$.
\item If $\prod_{v\in V}S_v=\prod_{f\in F'}S_f$ for some subset $F' \subseteq F$, then $G$ is checkerboardable. If $G$ is checkerboardable then $\prod_{v\in V}S_v=\prod_{f\in F'}S_f = \prod_{f\in F \setminus F'}S_f$ for a non-empty proper subset $F' \subset F$, which is determined uniquely up to taking complements.
\item There is no non-empty, proper subset $F'\subset F$ such that $\prod_{f\in F'}S_f= \pm I$.
\item $\prod_{f\in F}S_f=I$.
\item $|F|=1$ if and only if there is $f\in F$ such that $S_f=I$.
\item $-I \not \in \mathcal{S}(G)$.
\end{enumerate}
\end{replemma}

\begin{proof}
We start with an observation that is needed for proofs of (c) and (g). For any subset $F' \subseteq F$ and any Majorana associated to a half-edge $\gamma_{[h]_\tau}$, the product $\prod_{f\in F'}S_f$ either contains both $\gamma_{[h]_\tau}$ and $\gamma_{[\lambda h]_\tau}$, or neither of them. Thus the Majoranas appearing in the product $\prod_{f\in F'}S_f$ can be grouped in pairs labeled by edges, which allows us to identify the product with a vector $y \in \mathbb{F}_2^{|E|}$, with $y_e = 1$ for $e = \{h, \lambda h, \tau h, \lambda \tau h\} \in E$, if and only if both $\gamma_{[h]_\tau}$ and $\gamma_{[\lambda h]_\tau}$ appear in the product. It follows that $y = x \Phi$, where we recall that $\Phi\in\mathbb{F}_2^{|F|\times|E|}$ is the face-edge adjacency matrix of the graph embedding, and $x \in \mathbb{F}_2^{|F|}$ satisfying $x_f = 1$ if and only if $f \in F'$. We now return to the proof.
\setmargins
\begin{enumerate}[(a)]
    \item This follows because all the vertex stabilizers are independent from each other, as they have disjoint supports. 
    \item First assume that $V' \subset V$ is a non-empty proper subset. We find an edge $e \in E$ with one endpoint in $V'$ and one in $V \setminus V'$ (we use connectivity of $G$ here), and notice that since face stabilizers either include or exclude both Majoranas on an edge together, there is no way to take a product of face stabilizers to include only one Majorana out of the two on edge $e$. This proves the first part. Now assume that there is at least one odd-degree vertex $v$, and suppose $\prod_{v\in V'}S_v=\prod_{f\in F'}S_f$ for some non-empty subset $V' \subseteq V$ and some subset $F' \subseteq F$. Then by the first part, $V' = V$, so the Majorana $\bar{\gamma}_v$ is present in the product $\prod_{v\in V'}S_v$, which is a contradiction since no face stabilizer contains $\bar{\gamma}_v$.
    \item Suppose $\prod_{v\in V}S_v = \prod_{f\in F'}S_f$ for some subset $F' \subseteq F$. This cannot happen if there is any odd-degree vertex, as the product $\prod_{f\in F'}S_f$ cannot contain any Majorana associated to an odd-degree vertex. On the other hand, $\prod_{v\in V}S_v$ contains every Majorana associated to a half-edge, and so we can identify $\prod_{f\in F'}S_f$ with $\vec 1 \in \mathbb{F}_2^{|E|}$. Thus we have $x \Phi = \vec 1$ where $x \in \mathbb{F}_2^{|F|}$ satisfying $x_f = 1$ if and only if $f \in F'$, so by Lemma~\ref{lem:checkerboardability}(a) we conclude that $G$ is checkerboardable.
    Now suppose that $G$ is checkerboardable. So $G$ has no odd-degree vertices, and by Lemma~\ref{lem:checkerboardability}(b) there exist exactly two distinct vectors $x, x' \in \mathbb{F}_2^{|F|}$ (different from $\vec 0$ and $\vec 1$) such that $x \Phi = x' \Phi = \vec 1$, and $x + x' = \vec 1$. Taking $F' = \{f \in F : x_f = 1\}$ we obtain a non-empty proper subset for which these properties imply $\prod_{v\in V}S_v = \eta \prod_{f\in F'}S_f = \eta' \prod_{f\in F \setminus F'}S_f$ where $\eta, \eta' \in \{\pm 1, \pm i\}$. Now $\eta, \eta' \neq \pm i$ as $\mathcal{S}(G)$ is Hermitian by Lemma~\ref{lem:majorana-group-props}, so it remains to show that $\eta, \eta' \neq -1$. We defer this remaining part to the proof of (g).
    \item This is similar to (b), but for the dual graph embedding $\overline{G}$. Find a face $f'\in F'$ that shares an adjacent edge $e \in E$ with a face $f\not\in F'$ (this uses connectivity of the dual graph), and notice that the Majoranas on edge $e$ appear just once in the product, which can therefore not be identity.
    \item Note that each Majorana $\gamma_{[h]_{\tau}}$ is included in either two face stabilizers (if $h$ and $\tau h$ are in different faces) or none (if $h$ and $\tau h$ are in the same face). Moreover, Majoranas associated to odd-degree vertices are not included in any face stabilizer. Also note that for any flag $h \in H$, $i \gamma_{[h]_{\tau}} \gamma_{[\lambda h]_{\tau}}$ commutes with all Majoranas different from $\gamma_{[h]_{\tau}}$ and $\gamma_{[\lambda h]_{\tau}}$, and squares to $I$. Using these facts, a straightforward calculation shows that $\prod_{f\in F}S_f=I$.
    \item This follows directly from (d) and (e).
    \item Suppose this is false. Then $-I = \prod_{v\in V}S_v \prod_{f\in F'}S_f$ for some non-empty proper subset $F' \subset F$, as all other cases are ruled out by (a), (b), (d), (e), or equivalently $\prod_{v\in V}S_v = - \prod_{f\in F'}S_f$. By the same identification argument used in (c), we conclude that $G$ is checkerboardable. We will now show that in fact $\prod_{v\in V}S_v = \prod_{f\in F'}S_f$, which will complete the proof. Note that this also finishes the remaining part of the proof of (c), because if either $\eta=-1$ or $\eta' = -1$, then $-I \in \mathcal{S}(G)$. 
    
    Since $G$ is checkerboardable, it has no odd-degree vertices, and each edge in $E$ is adjacent to exactly one face in $F'$. Thus let us identify the Majoranas by their labels: so define $\gamma_j := \gamma_{[h]_{\tau}}$ if $\gamma_{[h]_{\tau}}$ has label $j$. Let $\tilde{v} \in V$ be the vertex that contains the Majoranas $\gamma_0$ and $\gamma_{2|E|-1}$ in its vertex stabilizer. Then for each $v \in V \setminus \tilde{v}$, one has $S_v = \prod_{j \in I_v} (i \gamma_{2j+1} \gamma_{2j+2})$, and $S_{\tilde{v}} = (i \gamma_0) \left( \prod_{j \in I_{\tilde{v}}} i \gamma_{2j+1} \gamma_{2j+2} \right)\gamma_{2|E|-1}$, where for every $v \in V$, $I_v \subseteq \{0 ,\dots, |E| - 2\}$ contains those integers $j$ such that $\gamma_{2j+1}$ is associated to a half-edge that is a subset of $v$. Noticing that $(i \gamma_{2j+1} \gamma_{2j+2})$ commutes with all Majoranas different from $\gamma_{2j+1}$ and $\gamma_{2j+2}$, we immediately get $\prod_{v\in V}S_v = i^{|E|} \gamma_0 \gamma_1 \dots \gamma_{2|E|-1}$. Similarly, each face stabilizer can be written as $S_f = \prod_{j \in I_f} i \gamma_{2j} \gamma_{2j+1}$ where $I_f \subseteq \{0 ,\dots, |E| - 1\}$ contains those integers $j$ for which $\gamma_{2j}$ belongs to an edge adjacent to $f$. It follows that $\prod_{f\in F'}S_f = i^{|E|} \gamma_0 \gamma_1 \dots \gamma_{2|E|-1}$, completing the proof.
\end{enumerate}
\end{proof}

}
\section{Equivalence of Majorana and qubit surface codes}
\label{app:majorana-qubit-code-equivalence}

The goal of this section is to prove the equivalence between the Majorana surface code, defined on $m=M+ 2|E|$ Majoranas (where $m$ is even), and the qubit surface code, defined on $N=M/2 + |E| - |V|$ qubits, as explained in Section~\ref{subsec:qub_surface_codes} (the quantities $M, |E|, |V|$ are defined in Section~\ref{subsec:maj_codes_on_graphs}). The proof relies on some general facts that we also present here. We will use some of the notations and definitions introduced in Section~\ref{subsec:pauli-definitions}. Recall from Section~\ref{subsec:maj_ops_intro} that the Jordan-Wigner transformation identifies the Pauli group $\mathcal{P}_{m/2}$ with the Majorana group $\mathcal{J}_{m}$, in a way that preserves commutation relations. To remain consistent with the notation used for Paulis, define $\text{Com}: \mathcal{J}_{m} \times \mathcal{J}_{m} \rightarrow \mathbb{F}_2$ as $\com{\gamma}{\gamma'} = \com{\text{JW}(\gamma)}{\text{JW}(\gamma')}$, where $\gamma, \gamma' \in \mathcal{J}_{m}$, and $\text{JW}(\cdot)$ is the Jordan-Wigner map (the phase factors of the elements of $\mathcal{J}_{m}$ have been absorbed into $\gamma, \gamma'$ and we continue to do so for the rest of this section\footnote{This is in contrast to the notation in Section~\ref{subsec:maj_ops_intro} where the phase factors were made explicit.}). Due to the Jordan-Wigner identification, Lemma~\ref{lem:majorana-group-props} is also valid for the Pauli group. To simplify notation, let us define $r := m/2$.

Recall that a stabilizer group $\mathcal{S}$ on $r$-qubits is a subgroup of $\mathcal{P}_r$ satisfying $-I \not \in \mathcal{S}$, which implies that $p^2 = I$ for all $p \in \mathcal{S}$, and $\mathcal{S}$ is a commuting subgroup. Let $\mathcal{C}(\mathcal{S})$ be the centralizer of $\mathcal{S}$. If $\dim(\mathcal{S})=k$, for some $0 \leq k \leq r$, then $\dim(\mathcal{C}(\mathcal{S}))=2r-k$. Let $\mathcal{S}'$ be another stabilizer group of $\mathcal{P}_r$ such that $\mathcal{S} \subseteq \mathcal{S}'$. Then by the definition of the centralizer we have $\mathcal{S}' \subseteq \mathcal{C}(\mathcal{S})$. Now suppose that $\kappa : \mathcal{C}(\mathcal{S}) \rightarrow \mathcal{P}_N$ is a homomorphism. Notice that $\pm iI \not \in \kappa(\mathcal{S}')$, because if otherwise $\kappa(p) = \pm iI$ for some $p \in \mathcal{S}'$, then it implies $\kappa(I) = \kappa(p^2) = (\kappa(p))^2 = -I$. We say that $\kappa$ is \textit{commutation preserving} if $\com{p}{q} = \com{\kappa(p)}{\kappa(q)}$ for all $p,q \in \mathcal{C}(\mathcal{S})$. Let $\pi : \mathcal{P}_N \rightarrow \pauligrp{N}$ be the projection map taking each element of $\mathcal{P}_N$ to its equivalence class in $\pauligrp{N}$. It is easily checked that $\pi$ is a surjective homomorphism, so $\kappa$ descends to a homomorphism $\hat{\kappa} := \pi \circ \kappa : \mathcal{C}(\mathcal{S}) \rightarrow \pauligrp{N}$. The homomorphism $\pi$ is commutation preserving in the sense that for all $p,q \in \mathcal{P}_N$, $\com{p}{q} = \com{[p]}{[q]}$. Thus if $\kappa$ is commutation preserving also, then so is $\hat{\kappa}$, as for all $p,q \in \mathcal{C}(\mathcal{S})$, $\com{p}{q} = \com{\hat{\kappa}(p)}{\hat{\kappa}(q)}$. Morover, if $\kappa$ is surjective, it follows by surjectivity of $\pi$ that $\hat{\kappa}$ is surjective. Below in Lemma~\ref{lem:commutation-preserving-props}, we prove a result addressing the converse direction.

\begin{lemma}
\label{lem:commutation-preserving-props}
Let $\mathcal{S}$ be a stabilizer subgroup of $\mathcal{P}_r$, and let $\mu : \mathcal{C}(\mathcal{S}) \rightarrow \pauligrp{N}$ be a surjective, commutation preserving homomorphism. Then there exists a surjective homomorphism $\kappa : \mathcal{C}(\mathcal{S}) \rightarrow \mathcal{P}_N$ such that $\hat{\kappa} = \mu$, and $\kappa(\mathcal{S})$ is a stabilizer subgroup of $\mathcal{P}_N$. Moreover, for any such $\kappa$, if $\mathcal{S}'$ is a stabilizer subgroup of $\mathcal{P}_r$ satisfying $\mathcal{S} \subseteq \mathcal{S}'$, the following holds:
\setmargins
\begin{enumerate}[(a)]
    \item $\kappa$ is commutation preserving.
    \item $\kappa(p) \not \in \langle iI \rangle$ for all $p \in \mathcal{C}(\mathcal{S}) \setminus \langle iI,\mathcal{S} \rangle$.
    \item The kernel of $\kappa$ is $\mathcal{S}$, and $\kappa(iI) = \pm iI$.
    \item $\kappa(\mathcal{S}')$ is a stabilizer subgroup of $\mathcal{P}_N$.
    \item $\kappa(\mathcal{C}(\mathcal{S}')) = \mathcal{C}(\kappa(\mathcal{S}'))$.
    \item $\mathcal{T} / \mathcal{S} \cong \kappa(\mathcal{T})$, and $\dim(\kappa(\mathcal{T})) = \dim(\mathcal{T}) - \dim(\mathcal{S})$, where $\mathcal{T}$ is either $\mathcal{S}'$ or $\mathcal{C}(\mathcal{S}')$. The stabilizer codes given by the stabilizer groups $\mathcal{S}'$ in $\mathcal{P}_r$, and $\kappa(\mathcal{S}')$ in $\mathcal{P}_N$ encode the same number of logical qubits.
\end{enumerate}
\end{lemma}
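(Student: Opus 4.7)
The plan is in two phases: construct a lift $\kappa$, then deduce (a)--(f) from any such $\kappa$. The key preliminary is to identify $\ker\mu$. Commutation-preservation forces $\ker\mu\subseteq\mathcal{C}(\mathcal{C}(\mathcal{S}))=\langle iI,\mathcal{S}\rangle$ (the double-centralizer identity, which is symplectic bi-duality on $\hat{\mathcal{P}}_r\cong\mathbb{F}_2^{2r}$). For the converse, surjectivity of $\mu$ together with non-degeneracy of the symplectic form on $\hat{\mathcal{P}}_N$ gives $\mu(iI)=[I]$, and $\mu(\mathcal{S})$ centralizes all of $\hat{\mathcal{P}}_N$ so equals $\{[I]\}$; thus $\ker\mu=\langle iI,\mathcal{S}\rangle$. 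Note also that since $\mathcal{S}$ is a stabilizer, $\mathcal{S}\cap\langle iI\rangle=\{I\}$, so every element of $\ker\mu$ has a unique expression $i^a s$ with $a\in\{0,1,2,3\}$ and $s\in\mathcal{S}$.

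To construct $\kappa$, I would pick a Hermitian independent generating set $\{g_1,\dots,g_k\}$ of $\mathcal{S}$ and extend by Hermitian elements to $\{g_1,\dots,g_m\}$ (with $m=2r-\dim\mathcal{S}$) whose classes span $\hat{\mathcal{C}}(\mathcal{S})$, so every $p\in\mathcal{C}(\mathcal{S})$ has a unique canonical expression $p=i^a g_1^{b_1}\cdots g_m^{b_m}$. For each $i$ pick a Hermitian representative $h_i\in\mu(g_i)$ (possible since each class in $\hat{\mathcal{P}}_N$ contains Hermitian Paulis), and set $\kappa(i^a g_1^{b_1}\cdots g_m^{b_m}):=i^a h_1^{b_1}\cdots h_m^{b_m}$. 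The homomorphism property reduces to checking that reducing $pq$ to canonical form on both sides accumulates identical $\pm 1$ signs; this holds because $g_j g_i=(-1)^{\com{g_i}{g_j}}g_i g_j$ and $h_j h_i=(-1)^{\com{h_i}{h_j}}h_i h_j$, commutation is preserved by $\mu$, and $g_i^2=h_i^2=I$ means no additional phases appear when collapsing repeated factors. Finally Lemma~\ref{lem:majorana-group-props}(a) applied to the commuting Hermitian list $\{h_1,\dots,h_k\}$ lets me sign-flip the $h_i$'s as needed (each $\pm h_i$ is still a Hermitian representative of $\mu(g_i)$) so that $-I\notin\kappa(\mathcal{S})=\langle h_1,\dots,h_k\rangle$; this does not break the homomorphism check. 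Surjectivity of $\kappa$ follows because $iI=\kappa(iI)$ lies in the image and $\pi(\kappa(\mathcal{C}(\mathcal{S})))=\hat{\mathcal{P}}_N$.

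For a general valid $\kappa$, (a) is immediate from $\mu=\pi\circ\kappa$ and commutation-preservation of $\pi$. For (c), each $s\in\mathcal{S}$ has $\mu(s)=[I]$ so $\kappa(s)\in\langle iI\rangle$; since $\kappa(\mathcal{S})$ is a stabilizer, $\kappa(s)=I$, giving $\mathcal{S}\subseteq\ker\kappa$. Conversely $\ker\kappa\subseteq\ker\mu=\langle iI,\mathcal{S}\rangle$, and surjectivity forces $\kappa(iI)=\pm iI$ (otherwise $iI\in\mathcal{P}_N$ would have no preimage), so only the $a=0$ elements $i^0 s$ lie in $\ker\kappa$. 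Property (b) is then immediate: $p\notin\langle iI,\mathcal{S}\rangle$ gives $\mu(p)\neq[I]$ and hence $\kappa(p)\notin\langle iI\rangle$. For (d), $\kappa(s)^2=\kappa(s^2)=I$ makes each $\kappa(s)$ Hermitian; if $\kappa(s)=-I$ for some $s\in\mathcal{S}'$, write $s=i^a s_0$ with $s_0\in\mathcal{S}\subseteq\mathcal{S}'$ (using (c)), and $\kappa(iI)=\pm iI$ forces $a=2$ and $s=-s_0$, whence $s s_0 = -I\in\mathcal{S}'$, contradicting the stabilizer property of $\mathcal{S}'$. Part (e)'s forward inclusion is (a); for the reverse, surjectivity writes $q\in\mathcal{C}(\kappa(\mathcal{S}'))$ as $q=\kappa(p)$ for some $p\in\mathcal{C}(\mathcal{S})$, and (a) then places $p\in\mathcal{C}(\mathcal{S}')$. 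Part (f) follows from the first isomorphism theorem using $\mathcal{T}\cap\ker\kappa=\mathcal{T}\cap\mathcal{S}=\mathcal{S}$ (which holds for both $\mathcal{T}=\mathcal{S}'$ and $\mathcal{T}=\mathcal{C}(\mathcal{S}')$ since $\mathcal{S}\subseteq\mathcal{S}'\subseteq\mathcal{C}(\mathcal{S}')$), combined with dimension counting, the centralizer-dimension formula $\dim\mathcal{C}(\mathcal{S}')=2r-\dim\mathcal{S}'$, and the identity $N=r-\dim\mathcal{S}$.

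The main obstacle is the homomorphism check for $\kappa$: because $\mathcal{C}(\mathcal{S})$ is non-Abelian, the canonical form $i^a g_1^{b_1}\cdots g_m^{b_m}$ acquires $\pm 1$ phases under multiplication from reordering steps, and the existence of a consistent lift to $\mathcal{P}_N$ rests precisely on commutation-preservation of $\mu$ together with the crucial fact that $g_i^2=h_i^2=I$ (so collapsing repeated factors introduces no extra phase). Once this bookkeeping is verified, everything else is a direct consequence of the kernel computation and the stabilizer constraint $\kappa(\mathcal{S})\not\ni -I$.
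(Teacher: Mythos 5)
Your proposal is correct and follows essentially the same route as the paper: lift $\mu$ by choosing Hermitian representatives of the $\mu$-images of an independent Hermitian generating set (the first $k$ elements generating $\mathcal{S}$), sign-adjust those $k$ representatives via Lemma~\ref{lem:majorana-group-props}(a) so that $-I\notin\kappa(\mathcal{S})$, verify the homomorphism property by the same $\pm1$ phase bookkeeping, and then deduce (a)--(f) from the kernel structure exactly as the paper does. The only cosmetic quibble is in (d): placing an $s$ with $\kappa(s)=-I$ inside $\langle iI,\mathcal{S}\rangle$ invokes (b) (equivalently your computation $\ker\mu=\langle iI,\mathcal{S}\rangle$) rather than (c), but the argument is otherwise sound.
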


\begin{proof}
For the proof, we first note some useful facts. Since $\mu$ is surjective and commutation preserving, we have $\mu(iI)=[I]$. This follows because if $\mu(iI) \neq [I]$, then by surjectivity there exists $p \in \mathcal{C}(\mathcal{S})$ such that $\com{\mu(p)}{\mu(iI)}=1$, while $\com{p}{iI}=0$ giving a contradiction. Also recall that for any $p \in \mathcal{C}(\mathcal{S}) \setminus \langle iI, \mathcal{S}\rangle$, there exists $q \in \mathcal{C}(\mathcal{S}) \setminus \langle iI, \mathcal{S}\rangle$ such that $\com{p}{q}=1$. Now let $\dim(\mathcal{S}) = k \geq 0$. Then there exists an independent list $\mathcal{G} = \{p_{1},\dots,p_{k}, p^{'}_{1},\dots,p^{'}_{2r-2k}\}$  of commuting, Hermitian elements of $\mathcal{P}_r$, 
such that $\mathcal{S} = \langle p_{1},\dots,p_{k} \rangle$, and $\mathcal{C}(\mathcal{S}) = \langle iI, \mathcal{G} \rangle$. Thus every element $p \in \mathcal{C}(\mathcal{S})$ can be uniquely represented in the form $p = (iI)^{\alpha_p} \prod_{y \in \mathcal{H}_p} y$, where $\alpha_p \in \{0,1,2,3\}$, and $\mathcal{H}_p \subseteq \mathcal{G}$ are uniquely determined by $p$. The elements in this product representation appear in the same order as in the list $\mathcal{G}$, with the convention that the product of the empty subset is $I$. 

We first show the existence of $\kappa$. Let us define a map $\psi : \mathcal{G} \rightarrow \mathcal{P}_N$ by letting $\psi(x)$ be any arbitrarily chosen Hermitian element in the equivalence class $\mu(x)$ for each $x \in \mathcal{G}$, and consider the list $\mathcal{T} = \{ \psi(p_1),\dots,\psi(p_k)\}$. Since $\mathcal{S}$ is a commuting subgroup and $\mu$ is a homomorphism, it follows that $\langle \mathcal{T} \rangle$ is a commuting subgroup. If $k \geq 1$, then $\mathcal{T}$ is non-empty, and so by Lemma~\ref{lem:majorana-group-props} we conclude $\langle \mathcal{T} \rangle$ is Hermitian, and the elements of $\mathcal{T}$ can be multiplied by $\pm 1$ to obtain a new list $\tilde{\mathcal{T}} = \{ \tilde{\psi}(p_1),\dots,\tilde{\psi}(p_k)\}$ such that $-I \not \in \langle \tilde{\mathcal{T}} \rangle$. Next define the map $\kappa: \mathcal{G} \cup \{iI\} \rightarrow \pauligrp{N}$ as follows
\begin{equation}
\label{eq:kappa-def}
\kappa(x) =
\begin{cases}
\tilde{\psi} (x) &\;\;\; \text{if} \;\; x = p_{1},\dots,p_{k} \\
\psi(x) &\;\;\; \text{if} \;\; x = p^{'}_{1},\dots,p^{'}_{2r-2k} \\
iI &\;\;\; \text{if} \;\; x = iI,
\end{cases}
\end{equation}
and we claim that $\kappa$ extends uniquely to a homomorphism $\kappa: \mathcal{C}(\mathcal{S}) \rightarrow \mathcal{P}_N$. For any such extension, if $x = (iI)^{\alpha_x} \prod_{y \in \mathcal{H}_x} y \in \mathcal{C}(\mathcal{S})$, then $\kappa(x) = (iI)^{\alpha_x} \prod_{y \in \mathcal{H}_x} \kappa(y)$, showing uniqueness. To prove existence, for any $x = (iI)^{\alpha_x} \prod_{y \in \mathcal{H}_x} y \in \mathcal{C}(\mathcal{S})$, define $\kappa(x) := (\kappa(iI))^{\alpha_x} \prod_{y \in \mathcal{H}_x} \kappa(y)$. This definition satisfies Eq.~\eqref{eq:kappa-def}, so we need to verify that $\kappa$ is a homomorphism. Let $x, x' \in \mathcal{C}(\mathcal{S})$ be given by $x = (iI)^{\alpha_x} \prod_{y \in \mathcal{H}_x} y$, and $x' = (iI)^{\alpha_{x'}} \prod_{y \in \mathcal{H}_{x'}} y$. Then $x'' := x x' = \left( (iI)^{\alpha_x} \prod_{y \in \mathcal{H}_x} y \right) \; \left( (iI)^{\alpha_{x'}} \prod_{y \in \mathcal{H}_{x'}} y \right) = (iI)^{\alpha_{x''}} \prod_{y \in \mathcal{H}_{x''}} y$, where $\mathcal{H}_{x''} =  (\mathcal{H}_{x} \cup \mathcal{H}_{x'}) \setminus (\mathcal{H}_{x} \cap \mathcal{H}_{x'})$, and $\alpha_{x''} = (\alpha_x + \alpha_{x'} + 2t) \mod 4$, with $t$ determined by the commutativity of the elements of $\mathcal{H}_{x}$ and $\mathcal{H}_{x'}$. Now from the definition of $\kappa$, we get $\kappa(x x') = (iI)^{\alpha_{x''}} \prod_{y \in \mathcal{H}_{x''}} \kappa(y)$. We also observe that since $\mu$ is commutation preserving, Eq.~\eqref{eq:kappa-def} implies $\com{x_1}{x_2} = \com{\kappa(x_1)}{\kappa(x_2)}$, for all $x_1, x_2 \in \mathcal{G} \cup \{iI\}$. This gives $\kappa(x) \kappa(x') = \left( (iI)^{\alpha_x} \prod_{y \in \mathcal{H}_x} \kappa(y) \right) \; \left( (iI)^{\alpha_{x'}} \prod_{y \in \mathcal{H}_{x'}} \kappa(y) \right) = (iI)^{\alpha_{x''}} \prod_{y \in \mathcal{H}_{x''}} \kappa(y)$, where in the last equality we have used that all elements of $\kappa(\mathcal{G})$ square to $I$ (as they are Hermitian by choice). This shows $\kappa(x x')=\kappa(x)\kappa(x')$ proving the claim. Next observe that by construction $\kappa(\mathcal{S}) = \langle \tilde{\mathcal{T}} \rangle$, and so $\kappa(\mathcal{S})$ is a stabilizer subgroup of $\mathcal{P}_N$. We moreover have $\hat{\kappa} = \mu$, because for any $x = (iI)^{\alpha_x} \prod_{y \in \mathcal{H}_x} y \in \mathcal{C}(\mathcal{S})$, we have $\hat{\kappa}(x) = \prod_{y \in \mathcal{H}_x} \hat{\kappa}(y) = \prod_{y \in \mathcal{H}_x} \mu (y) = \mu (\prod_{y \in \mathcal{H}_x} y) = \mu(x)$, where the last equality uses $\mu(iI)=[I]$. Finally the condition $\kappa(iI)=iI$ in Eq.~\eqref{eq:kappa-def}, together with the fact that $\mu$ is surjective, and $\hat{\kappa} = \mu$, imply that $\kappa$ is surjective.

We now prove parts (a)-(f) of the lemma, assuming the existence of $\kappa$.
\setmargins
\begin{enumerate}[(a)]
    \item For any two elements $p,q \in \mathcal{C}(\mathcal{S})$, since $\hat{\kappa}=\mu$ and both $\mu$ and $\pi$ are commutation preserving homomorphisms, we have $\com{p}{q}=\com{\hat{\kappa}(p)}{\hat{\kappa}(q)} = \com{\kappa(p)}{\kappa(q)}$.
    \item If $p \in \mathcal{C}(\mathcal{S}) \setminus \langle iI,\mathcal{S} \rangle$, then there exists $q \in \mathcal{C}(\mathcal{S}) \setminus \langle iI,\mathcal{S} \rangle$ such that $\com{p}{q}=1$. As $\kappa$ is commutation preserving by (a), we also have $\com{\kappa(p)}{\kappa(q)}=1$, which is impossible if $\kappa(p) \in \langle iI \rangle$.
    \item The center of $\mathcal{C}(\mathcal{S})$ is $\langle iI, \mathcal{S} \rangle$, and the center of $\mathcal{P}_N$ is $\langle iI \rangle$. Since $\kappa$ is a surjective homomorphism, this implies $\kappa(\langle iI, \mathcal{S} \rangle) \subseteq \langle iI \rangle$. Now fix any $p \in \mathcal{S}$. We have already argued in the paragraph preceding the lemma that $\kappa(p) \neq \pm iI$. Since $\kappa(\mathcal{S})$ is a stabilizer group, we also have $\kappa(p) \neq -I$. This shows $\kappa(\mathcal{S})=\langle I \rangle$ (equivalently $\mathcal{S}$ is in the kernel of $\kappa$). Now by the surjectivity of $\kappa$ and (b), it follows that $\kappa(\langle iI, \mathcal{S} \rangle) =\langle iI \rangle$, and since $\mu(iI) = [I]$, we also have $\kappa(iI) \in \langle iI \rangle$. If $\kappa(iI) = \pm I$, then $\kappa(\langle iI, \mathcal{S} \rangle) = \langle -I \rangle$, which is a contradiction. This proves $\kappa(iI) = \pm iI$. In each of these two possible cases, since $\kappa(\mathcal{S}) = \{I\}$, we have for any $p \in \langle iI,\mathcal{S} \rangle \setminus \mathcal{S}$, $\kappa(p) \neq I$, and this proves that the kernel of $\kappa$ is exactly $\mathcal{S}$.
    \item It suffices to show that $-I \not \in \kappa(\mathcal{S}')$, as then Lemma~\ref{lem:majorana-group-props}(b) implies $\kappa(\mathcal{S}')$ is also commuting and Hermitian. If $\mathcal{S}' = \mathcal{S}$ then $\kappa(\mathcal{S}')=I$ by (c). Now assume $\mathcal{S}' \neq \mathcal{S}$, and for contradiction, assume there exists $p \in \mathcal{S}'$ such that $\kappa(p) = -I$. Then by (c), $p \in \mathcal{S}' \setminus \mathcal{S}$, and since $\mathcal{S}'$ is a stabilizer group it must also hold that $p \not \in \langle iI,\mathcal{S}\rangle$. But this contradicts (b).
    \item As $\kappa$ is a homomorphism, we have $\kappa(\mathcal{C}(\mathcal{S}')) \subseteq \mathcal{C}(\kappa(\mathcal{S}'))$. We want to show $\mathcal{C}(\kappa(\mathcal{S}')) \subseteq \kappa(\mathcal{C}(\mathcal{S}'))$. Pick any $p \in \mathcal{C}(\kappa(\mathcal{S}'))$. By surjectivity of $\kappa$, there exists $q \in \mathcal{C}(\mathcal{S})$ such that $\kappa(q) = p$. Then for all $x \in \mathcal{S}'$, since $\kappa$ is commutation preserving, we have $\com{q}{x} = \com{p}{\kappa(x)} = 0$, which implies $q \in \mathcal{C}(\mathcal{S}')$.
    \item Since $\mathcal{S} \subseteq \mathcal{S}'$, we have $\mathcal{S} \subseteq \mathcal{S}' \subseteq \mathcal{C}(\mathcal{S}') \subseteq \mathcal{C}(\mathcal{S})$. Let $\mathcal{T}$ be either $\mathcal{S}'$ or $\mathcal{C}(\mathcal{S}')$, and let $\kappa_{\mathcal{T}} : \mathcal{T} \rightarrow \kappa(\mathcal{T})$ be the restriction of $\kappa$ to $\mathcal{T}$. Then $\kappa_{\mathcal{T}}$ is surjective, and the kernel of $\kappa_{\mathcal{T}}$ is $\mathcal{S}$. Thus we have the isomorphism $\mathcal{T} / \mathcal{S} \cong \kappa(\mathcal{T})$, which implies $\dim(\kappa(\mathcal{T})) = \dim(\mathcal{T} / \mathcal{S}) = \dim(\mathcal{T}) - \dim(\mathcal{S})$. Using this, we easily obtain the equality $\dim(\kappa(\mathcal{C}(\mathcal{S}'))) - \dim(\kappa(\mathcal{S}')) = \dim(\mathcal{C}(\mathcal{S}')) - \dim(\mathcal{S}')$, where the left (resp.~right) hand side of the equation denotes twice the number of encoded qubits by the stabilizer code specified by $\mathcal{\kappa(\mathcal{S}')}$ (resp.~$\mathcal{S}'$).
\end{enumerate}
\end{proof}

Similarly, a homomorphism $\mu: \mathcal{G} \rightarrow \pauligrp{N}$, where $\mathcal{G}$ is a subgroup of $\mathcal{J}_{m}$, is called commutation preserving if for all $\gamma, \gamma' \in \mathcal{G}$, $\com{\mu(\gamma)}{\mu(\gamma')} = \com{\gamma}{\gamma'}$. A simple fact is the following: if $\mathcal{T} \subseteq \mathcal{G}$ such that $\langle \mathcal{T} \rangle = \mathcal{G}$, and $\mu : \mathcal{G} \rightarrow \pauligrp{N}$ is a homomorphism such that $\com{\gamma}{\gamma'}=\com{\mu(\gamma)}{\mu(\gamma')}$ for all $\gamma,\gamma' \in \mathcal{T}$, then $\mu$ is commutation preserving on $\mathcal{G}$. To see this, if $\gamma,\gamma' \in \mathcal{G}$, there exists $\gamma_1,\dots,\gamma_{k_\gamma},\gamma'_1,\dots,\gamma'_{k_{\gamma'}} \in \mathcal{T}$ such that $\gamma = \prod_{s=1}^{k_\gamma} \gamma_s$, and $\gamma' = \prod_{t=1}^{k_{\gamma'}} \gamma'_t$. Then $\com{\mu(\gamma)}{\mu(\gamma')} = \com{\prod_{s=1}^{k_\gamma} \mu(\gamma_s)}{\prod_{t=1}^{k_{\gamma'}} \mu(\gamma'_t)} = \sum_{s=1}^{k_\gamma} \sum_{t=1}^{k_{\gamma'}} \com{\mu(\gamma_s)}{\mu(\gamma'_t)} = \sum_{s=1}^{k_\gamma} \sum_{t=1}^{k_{\gamma'}} \com{\gamma_s}{\gamma'_t} = \com{\gamma}{\gamma'}$, where the first equality is true since $\mu$ is a homomorphism, and the sums are evaluated over $\mathbb{F}_2$.

Let us discuss a consequence of Majorana and qubit partitioning on $\mathcal{J}_{m}$ and $\pauligrp{N}$, respectively. For $m,N \geq \ell \geq 1$, let $J_1,\dots,J_{\ell}$ (resp.~$J'_1,\dots,J'_{\ell}$) be a partition of $m$ Majoranas (resp.~$N$ qubits), and denote by $\mathcal{J}_{m,J_k}$ (resp.~$\pauligrp{N,J'_k}$) the subgroup of $\mathcal{J}_{m}$ (resp.~$\pauligrp{N}$) formed by those elements of $\mathcal{J}_{m}$ (resp.~$\pauligrp{N}$) whose support is contained in $J_k$ (resp.~$J'_k$). We then have $\mathcal{J}_m = \langle \mathcal{J}_{m,J_1}, \dots, \mathcal{J}_{m,J_\ell} \rangle$ and the direct sum $\pauligrp{N} = \bigoplus_{k=1}^{\ell} \pauligrp{N,J'_k}$. Recall that a stabilizer group $\mathcal{S}$ of $\mathcal{J}_m$ satisfies $-I \not \in \mathcal{S}$ such that all elements of $\mathcal{S}$ have even weight. Now suppose we are given stabilizer groups $\mathcal{S}_1,\dots,\mathcal{S}_{\ell}$  of $\mathcal{J}_m$, such that for each $k$, $\mathcal{S}_k \subseteq \mathcal{J}_{m,J_k}$, and moreover $\mathcal{S} = \langle \mathcal{S}_1,\dots,\mathcal{S}_{\ell} \rangle$ is a stabilizer group. Then the centralizer of $\mathcal{S}$ satisfies $\mathcal{C}(\mathcal{S}) = \langle \mathcal{C}(\mathcal{S}_1) \cap \mathcal{J}_{m,J_1}, \dots, \mathcal{C}(\mathcal{S}_\ell) \cap \mathcal{J}_{m,J_\ell} \rangle$. To prove this, note that for any fixed $k$, if $\gamma \in \mathcal{C}(\mathcal{S}_k) \cap \mathcal{J}_{m,J_k}$, then from the support of $\gamma$ we deduce using Eq.~\eqref{eq:majorana-commutation} that $\gamma \in \mathcal{C}(\mathcal{S})$, showing $\langle \mathcal{C}(\mathcal{S}_1) \cap \mathcal{J}_{m,J_1}, \dots, \mathcal{C}(\mathcal{S}_\ell) \cap \mathcal{J}_{m,J_\ell} \rangle \subseteq \mathcal{C}(\mathcal{S})$. Next let $\gamma \in \mathcal{C}(\mathcal{S})$. As $\gamma$ is an element of $\mathcal{J}_m$, we have a decomposition $\gamma = \prod_{k=1}^{\ell} \gamma_k$, where for each $k$ we have $\gamma_k \in \mathcal{J}_{m,J_k}$. Fixing a $k$, we note that for all $\gamma' \in \mathcal{S}_k$, $\com{\gamma}{\gamma'}=0$ implies $\com{\gamma_k}{\gamma'}=0$ (this uses $\com{\gamma_{k'}}{\gamma'}=0$ for all $k' \neq k$ from support conditions), and this shows $\mathcal{C}(\mathcal{S}) \subseteq \langle \mathcal{C}(\mathcal{S}_1) \cap \mathcal{J}_{m,J_1}, \dots, \mathcal{C}(\mathcal{S}_\ell) \cap \mathcal{J}_{m,J_\ell} \rangle$. This discussion is useful in proving the following lemma:

\begin{lemma}
\label{lem:disjoint-supp-props}
Suppose we are given
\setmargins
\begin{enumerate}[(i)]
    \item $m, N \geq \ell \geq 1$, $J_1,\dots,J_{\ell}$ is a partition of $m$ Majoranas, and $J'_1,\dots,J'_{\ell}$ is a partition of $N$ qubits.
    \item $\mathcal{S}_1,\dots,\mathcal{S}_{\ell}$ are stabilizer groups of $\mathcal{J}_m$, satisfying $\mathcal{S}_k \subseteq \mathcal{J}_{m,J_k}$ for all $k$, and  $\mathcal{S} = \langle \mathcal{S}_1,\dots,\mathcal{S}_{\ell} \rangle$ is a stabilizer group.
    \item For all $k$, and for all $\gamma \in \mathcal{C}(\mathcal{S}_k) \cap \mathcal{J}_{m,J_k}$, $\gamma$ has even weight.
    \item $\mu: \mathcal{C}(\mathcal{S}) \rightarrow \pauligrp{N}$ is a map with the properties: (a) $\mu(\mathcal{C}(\mathcal{S}_k) \cap \mathcal{J}_{m,J_k}) \subseteq \hat{\mathcal{P}}_{N,J'_k}$ for each $k$, and (b) if $\gamma = \prod_{k=1}^{\ell} \gamma_{k}$ with each $\gamma_{k} \in \mathcal{C}(\mathcal{S}_{k}) \cap \mathcal{J}_{m,J_{k}}$, then $\mu(\gamma)=\prod_{k=1}^{\ell} \mu(\gamma_{k})$.
\end{enumerate}
\vspace{0.1cm}
Denote by $\mu_k$ the restriction of $\mu$ to $\mathcal{C}(\mathcal{S}_k) \cap \mathcal{J}_{m,J_k}$. Then the following hold:
\setmargins
\begin{enumerate}[(a)]
    \item Two distinct $\mu$ cannot give rise to the same family $\{\mu_k\}_{k=1}^{\ell}$.
    \item If $\ell > 1$, then for each $k$, we have $\mu_k(iI)=[I]$, and $\mu_k(\gamma) = \mu_k(\gamma')$ whenever $\gamma, \gamma' \in \mathcal{C}(\mathcal{S}_k) \cap \mathcal{J}_{m,J_k}$, and $\gamma, \gamma'$ are equivalent up to phase factors. Moreover, if we are given a family of maps $\mu_k : \mathcal{C}(\mathcal{S}_k) \cap \mathcal{J}_{m,J_k} \rightarrow \hat{\mathcal{P}}_{N,J'_k}$ for $1 \leq k \leq \ell$, satisfying these two conditions, then there exists a corresponding $\mu$ satisfying (iv), and it is unique.
    \item $\mu$ is a surjective, commutation preserving homomorphism if and only if $\mu_k: \mathcal{C}(\mathcal{S}_k) \cap \mathcal{J}_{m,J_k} \rightarrow \hat{\mathcal{P}}_{N,J'_k}$ is a surjective, commutation preserving homomorphism, for each $k$.
\end{enumerate}
\end{lemma}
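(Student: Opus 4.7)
The plan rests on the identity $\mathcal{C}(\mathcal{S}) = \langle \mathcal{C}(\mathcal{S}_1) \cap \mathcal{J}_{m,J_1}, \ldots, \mathcal{C}(\mathcal{S}_\ell) \cap \mathcal{J}_{m,J_\ell} \rangle$ established in the paragraph preceding the lemma, which guarantees that every $\gamma \in \mathcal{C}(\mathcal{S})$ admits a decomposition $\gamma = \prod_k \gamma_k$ with $\gamma_k \in \mathcal{C}(\mathcal{S}_k) \cap \mathcal{J}_{m,J_k}$. Part (a) is then immediate: property (iv)(b) forces $\mu(\gamma) = \prod_k \mu_k(\gamma_k)$, so $\mu$ is completely determined by the family $\{\mu_k\}$, and distinct $\mu$ must yield distinct families.

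For part (b), the key observation is that for every $\eta \in \{\pm 1, \pm i\}$, the scalar $\eta I$ lies in every $\mathcal{C}(\mathcal{S}_k) \cap \mathcal{J}_{m,J_k}$ (it has empty, hence even, support and commutes with all Majoranas). Property (iv)(a) therefore places $\mu(\eta I) \in \bigcap_k \hat{\mathcal{P}}_{N,J'_k} = \{[I]\}$, since the $J'_k$ are pairwise disjoint; in particular $\mu_k(iI) = [I]$ and $\mu_k(\eta I) = [I]$ for all $\eta$. Phase invariance $\mu_k(\eta\gamma) = \mu_k(\gamma)$ then follows from (iv)(b) by comparing the decomposition placing $\eta\gamma$ in slot $k$ (all other slots $I$) against the one placing $\gamma$ in slot $k$, $\eta I$ in some slot $k_0 \neq k$ (available because $\ell > 1$), and $I$ elsewhere. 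For the converse, I will define $\mu(\gamma) := \prod_k \mu_k(\gamma_k)$ via any decomposition; any two decompositions differ only by phase rescalings $\gamma'_k = \eta_k \gamma_k$ with $\prod_k \eta_k = 1$, because the pieces have pairwise disjoint Majorana supports and, by hypothesis (iii), even weights (so they commute), and phase invariance of the $\mu_k$ then makes the product independent of the choice. Properties (iv)(a) and (iv)(b) hold by construction, and uniqueness reduces to part (a).

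Part (c) contains the substantive work. The forward direction is routine: restrictions of surjective, commutation-preserving homomorphisms retain those properties, and surjectivity of $\mu_k$ onto $\hat{\mathcal{P}}_{N,J'_k}$ follows because any $[p] \in \hat{\mathcal{P}}_{N,J'_k}$ equals $\mu(\gamma)$ for some $\gamma$, and decomposing $\gamma = \prod_j \gamma_j$ in conjunction with $\hat{\mathcal{P}}_N = \bigoplus_j \hat{\mathcal{P}}_{N,J'_j}$ forces $\mu_j(\gamma_j) = [I]$ for $j \neq k$ and $\mu_k(\gamma_k) = [p]$. For the backward direction, I will first observe that any surjective, commutation-preserving homomorphism $\mu_k$ automatically satisfies $\mu_k(iI) = [I]$: $\mu_k(iI)$ must commute with the full image $\hat{\mathcal{P}}_{N,J'_k}$ and so lies in its center $\{[I]\}$; together with the homomorphism property this yields phase invariance, so part (b) applies to construct $\mu$. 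The crucial remaining step is the reordering identity $\gamma \gamma' = \prod_k (\gamma_k \gamma'_k)$: for $k \neq j$, the pieces $\gamma_k$ and $\gamma'_j$ have disjoint Majorana supports and, by hypothesis (iii), even weights, so Eq.~\eqref{eq:majorana-commutation} shows they commute, allowing me to pull primed pieces past unprimed ones without signs. Since each $\gamma_k \gamma'_k \in \mathcal{C}(\mathcal{S}_k) \cap \mathcal{J}_{m,J_k}$, this yields a valid decomposition of $\gamma\gamma'$, from which $\mu$ inherits the homomorphism property from each $\mu_k$. The same even-weight fact collapses cross terms in $\com{\gamma}{\gamma'} = \sum_k \com{\gamma_k}{\gamma'_k}$, and disjointness of the $J'_k$ gives the analogous identity $\com{\mu(\gamma)}{\mu(\gamma')} = \sum_k \com{\mu_k(\gamma_k)}{\mu_k(\gamma'_k)}$, so commutation preservation of each $\mu_k$ transfers to $\mu$. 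Surjectivity of $\mu$ is routine from $\hat{\mathcal{P}}_N = \bigoplus_k \hat{\mathcal{P}}_{N,J'_k}$.

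The principal obstacle I foresee is the reordering step in (c): hypothesis (iii) is precisely what prevents unwanted anticommutation signs when permuting Majoranas across distinct $J_k$'s, and tracking these signs carefully—together with separately handling the degenerate $\ell = 1$ case, where $\mu = \mu_1$ and both (b) and (c) collapse to tautologies—will demand care but introduces no new conceptual difficulty.
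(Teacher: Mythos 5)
Your proposal is correct and follows essentially the same route as the paper's proof: it rests on the identity $\mathcal{C}(\mathcal{S}) = \langle \mathcal{C}(\mathcal{S}_1) \cap \mathcal{J}_{m,J_1}, \dots, \mathcal{C}(\mathcal{S}_\ell) \cap \mathcal{J}_{m,J_\ell} \rangle$, the disjointness of the $J'_k$ forcing $\mu_k(iI)=[I]$, and hypothesis (iii) to collapse cross terms in both the product and the commutation computations. The only cosmetic difference is in the reordering step of (c), where you show the sign is exactly $+1$ via even weights and disjoint supports, whereas the paper writes $\gamma\gamma' = \pm\prod_k \gamma_k\gamma'_k$ and absorbs the possible sign using $\mu_1(iI)=[I]$; both are valid.
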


\vspace{0.5cm}
\begin{proof}
\setmargins
\begin{enumerate}[(a)]
    \item Since $\mathcal{C}(\mathcal{S}) = \langle \mathcal{C}(\mathcal{S}_1) \cap \mathcal{J}_{m,J_1}, \dots, \mathcal{C}(\mathcal{S}_\ell) \cap \mathcal{J}_{m,J_\ell} \rangle$, it implies that if $\gamma \in \mathcal{C}(\mathcal{S})$, then $\gamma = \prod_{k=1}^{\ell} \gamma_{k}$ with each $\gamma_{k} \in \mathcal{C}(\mathcal{S}_{k}) \cap \mathcal{J}_{m,J_{k}}$. Then by property (b) of (iv), we get $\mu(\gamma) = \prod_{k=1}^{\ell} \mu_k(\gamma_k)$ showing that distinct $\mu$ cannot give rise to the same family $\{\mu_k\}_{k=1}^{\ell}$.
    \item If there exists $k$ such that $\mu_k(iI) \neq [I]$, then property (a) of (iv) does not hold. Now suppose there exists $k$ and $\gamma_k, \gamma'_k \in \mathcal{C}(\mathcal{S}_k) \cap \mathcal{J}_{m,J_k}$, such that $\gamma_k, \gamma'_k$ are distinct and equivalent up to phase, and $\mu_k(\gamma_k) \neq \mu_k(\gamma'_k)$. Consider any $\gamma \in \mathcal{C}(\mathcal{S})$ given by $\gamma = \prod_{s=1}^{\ell} \gamma_{s}$, where for all $s \neq k$, $\gamma_{s} \in \mathcal{C}(\mathcal{S}_{s}) \cap \mathcal{J}_{m,J_{s}}$ are chosen arbitrarily. By adjusting the phases, one also has $\gamma = \prod_{s=1}^{\ell} \gamma'_s$ with $\gamma_s, \gamma'_s$ equivalent up to phase for all $s$. Then by property (b) of (iv) we get $\prod_{s=1}^{\ell} \mu_s(\gamma_s) = \prod_{s=1}^{\ell} \mu_s(\gamma'_s)$, and the support conditions imply $\mu_k(\gamma_k) = \mu_k(\gamma'_k)$ giving a contradiction. To prove the second part, assume we are given a family $\{\mu_k\}_{k=1}^{\ell}$ satisfying the two conditions of the first part. Take any $\gamma \in \mathcal{C}(\mathcal{S})$, and suppose $\gamma = \prod_{k=1}^{\ell} \gamma_k$ and $\gamma = \prod_{k=1}^{\ell} \gamma'_k$ are two different representations of $\gamma$, where $\gamma_k, \gamma'_k \in \mathcal{C}(\mathcal{S}_k) \cap \mathcal{J}_{m,J_k}$ for all $k$. From support conditions, we conclude that $\gamma_k, \gamma'_k$ are equivalent up to phase for each $k$. Thus  $\prod_{k=1}^{\ell} \mu_k(\gamma_k) = \prod_{k=1}^{\ell} \mu_k(\gamma'_k)$, and so we can define $\mu(\gamma) := \prod_{k=1}^{\ell} \mu_k(\gamma_k)$ unambiguously. This definition of $\mu$ satisfies (iv), and it is unique by (a).
    \item First assume that $\mu$ is a surjective, commutation preserving homomorphism, and fix any $k$. Then if $\gamma,\gamma' \in \mathcal{C}(\mathcal{S}_k) \cap \mathcal{J}_{m,J_k}$, we have $\com{\mu_k(\gamma)}{\mu_k(\gamma')}=\com{\mu(\gamma)}{\mu(\gamma')}=\com{\gamma}{\gamma'}$, and also $\mu_k(\gamma \gamma') = \mu(\gamma \gamma') = \mu(\gamma) \mu(\gamma') = \mu_k(\gamma) \mu_k(\gamma')$, showing that $\mu_k$ is a commutation preserving homomorphism. Next let $[q] \in \hat{\mathcal{P}}_{N,J'_k}$. By surjectivity of $\mu$, we have $\gamma \in \mathcal{C}(\mathcal{S})$ such that $\mu(\gamma) = [q]$. Since we can write $\gamma = \prod_{s=1}^{\ell} \gamma_{s}$ with $\gamma_{s} \in \mathcal{C}(\mathcal{S}_{s}) \cap \mathcal{J}_{m,J_{s}}$ for each $s$, it implies $\mu(\gamma) = \prod_{s=1}^{\ell} \mu(\gamma_{s}) = [q]$. But from property (a) of (iv), it must be that  $\mu(\gamma_{s}) = [I]$ for all $s \neq k$, and so $\mu(\gamma_k)=[q]$ showing that $\mu_k$ is surjective.
    For the converse direction, suppose $\mu_k: \mathcal{C}(\mathcal{S}_k) \cap \mathcal{J}_{m,J_k} \rightarrow \hat{\mathcal{P}}_{N,J'_k}$ is a surjective, commutation preserving homomorphism for all $k$. Take $\gamma,\gamma' \in \mathcal{C}(\mathcal{S})$, and we can write $\gamma = \prod_{k=1}^{\ell} \gamma_{k}$, and $\gamma' = \prod_{k=1}^{\ell} \gamma'_{k}$, with $\gamma_{k}, \gamma'_{k} \in \mathcal{C}(\mathcal{S}_{k}) \cap \mathcal{J}_{m,J_{k}}$ for each $k$, and thus $\gamma \gamma' = \pm \prod_{k=1}^{\ell} \gamma_{k} \gamma'_{k}$. It follows that $\mu(\gamma \gamma') = \mu(\pm \gamma_1 \gamma'_1) \prod_{k=2}^{\ell} \mu(\gamma_k \gamma'_k) = \prod_{k=1}^{\ell} \mu_k (\gamma_k \gamma'_k) = \prod_{k=1}^{\ell} \mu_k(\gamma_k) \mu_k (\gamma'_k)$, where the first equality is by property (b) of (iv), and the second equality uses that $\mu_1(iI)=[I]$ as $\mu_1$ is a surjective, commutation preserving homomorphism (see the first paragraph of the proof of Lemma~\ref{lem:commutation-preserving-props} for a similar reasoning). Similarly, $\mu(\gamma) = \prod_{k=1}^{\ell} \mu_k(\gamma_k)$ and $\mu(\gamma') = \prod_{k=1}^{\ell} \mu_k(\gamma'_k)$, from which we get $\mu(\gamma) \mu(\gamma') = \mu(\gamma \gamma')$. We also have $\com{\mu(\gamma)}{\mu(\gamma')} = \sum_{k=1}^{\ell} \com{\mu_k(\gamma_k)}{\mu_k(\gamma'_k)} = \sum_{k=1}^{\ell} \com{\gamma_k}{\gamma'_k}=\com{\gamma}{\gamma'}$, where the last equality is by (iii). This shows that $\mu$ is a commutation preserving homomorphism. To prove $\mu$ is surjective, let $[q] \in \pauligrp{N}$. Then by the direct sum representation of $\pauligrp{N}$, one can write $[q] = \prod_{k=1}^{\ell} [q_k]$ with $[q_k] \in \hat{\mathcal{P}}_{N,J'_k}$ for all $k$. By surjectivity of $\mu_k$, we have that for all $k$, there exists $\gamma_k \in \mathcal{C}(\mathcal{S}_{k}) \cap \mathcal{J}_{m,J_{k}}$ satisfying $\mu_k(\gamma_k) = [q_k]$. Defining $\gamma = \prod_{k=1}^{\ell} \gamma_k$, gives $\mu(\gamma) = \prod_{k=1}^{\ell} \mu_k(\gamma_k) = \prod_{k=1}^{\ell} [q_k] = [q]$.
\end{enumerate}
\end{proof}

Using the lemmas and facts presented above in this section, it is now easy to show that the Majorana and the qubit surface codes of Section~\ref{subsec:maj_codes_on_graphs} and Section~\ref{subsec:qub_surface_codes} respectively, encode the same number of qubits. Recall that the Majorana code is defined by the stabilizer group $\mathcal{S}(G) \subseteq \mathcal{J}_m$, generated by the vertex stabilizers $S_v$ and the face stabilizers $S_f$, for all $v \in V$ and $f \in F$, as defined by Eqs.~{\eqref{eq:s_v}, \eqref{eq:s_f}}. We partition the Majoranas into $|V|$ subsets $\{J_v\}_{v \in V}$, with each subset indexed by a vertex $v \in V$. If $v$ is an even degree vertex, then $J_v= \{\gamma_{[h]_\tau}: [h]_\tau \subseteq v\}$, and if $v$ is an odd degree vertex, then $J_v= \{\gamma_{[h]_\tau}: [h]_\tau \subseteq v\} \cup \{\bar{\gamma}_v\}$. Similarly, we also partition the $N$ qubits into $|V|$ subsets $\{J'_v\}_{v \in V}$, where $J'_v$ consists of the $N_v$ qubits placed at $v$ for the qubit surface code (see Definition~\ref{def:qubit_surface_code}). Using this, we complete the proof of Corollary~\ref{cor:number_encoded_qubits} below.

\begin{proof}[Proof of Corollary~\ref{cor:number_encoded_qubits}]
Let $\mathcal{S}_v$ be the stabilizer group generated by $S_v$, and note that $\mathcal{S}_v \subseteq \mathcal{J}_{m,J_v}$ for all $v$. Let $\mathcal{S}_{\text{maj}} = \langle \{\mathcal{S}_v : v \in V \} \rangle$, and it is a stabilizer group as $\mathcal{S}_{\text{maj}}  \subseteq \mathcal{S}(G)$. Define $m_v = |J_v|$ and note that $m_v$ is even. If $\gamma \in \mathcal{C}(\mathcal{S}_v) \cap \mathcal{J}_{m,J_v}$ then $|\supp{\gamma}|$ is even, as $\gamma$ must commute with $S_v$. Thus conditions (i)-(iii) of Lemma~\ref{lem:disjoint-supp-props} are satisfied (with $\ell = |V|$). Our goal is to define a family of surjective, commutation preserving homomorphisms $\mu_v: \mathcal{C}(\mathcal{S}_v) \cap \mathcal{J}_{m,J_v} \rightarrow \hat{\mathcal{P}}_{N,J'_v}$, for each vertex $v$. Fixing a $v \in V$, note that any such $\mu_v$ must satisfy $\mu_v(iI) = [I]$ (one argues similarly as for $\mu_1$ in the proof of the converse part of Lemma~\ref{lem:disjoint-supp-props}(c)), and this ensures $\mu_v(\gamma) = \mu_v(\gamma')$ whenever $\gamma, \gamma'$ are equivalent up to phase. Thus the two conditions in Lemma~\ref{lem:disjoint-supp-props}(b) are satisfied by $\mu_v$, so given such a family $\{\mu_v\}_{v \in V}$ one concludes that there exists a unique map $\mu: \mathcal{C}(\mathcal{S}_{\text{maj}}) \rightarrow \pauligrp{N}$ that gives rise to $\{\mu_v\}_{v \in V}$. This ensures $\mu$ satisfies condition (iv) of Lemma~\ref{lem:disjoint-supp-props}, and then Lemma~\ref{lem:disjoint-supp-props}(c) allows us to conclude that $\mu$ is a surjective, commutation preserving homomorphism.

Starting with any half-edge $h \in v$, assign the Majoranas $\gamma_{[h]_\tau}, \gamma_{[\tau \rho h]_\tau}, \dots, \gamma_{[(\tau \rho)^{\text{deg}(v)-1} h]_\tau}$ in $J_v$ the labels $1,\dots,\text{deg}(v)$ respectively, and if $v$ is an odd-degree vertex, assign the Majorana $\bar{\gamma}_v$ the label $\text{deg}(v)+1$ (note that this is independent of the labeling scheme discussed in Section~\ref{subsec:maj_codes_on_graphs}). Also consider the list of sector operators $\mathcal{T}_v = \{q_{[h]_\rho},q_{[\tau\rho h]_\rho},\dots,q_{[(\tau\rho)^{\text{deg}(v)-1}h]_\rho}\} \subseteq \mathcal{C}(\mathcal{S}_v) \cap \mathcal{J}_{m,J_v}$, and let $\chi_v : \mathcal{T}_v \rightarrow \pauligrp{N,J'_v}$ be the map that assigns each sector operator to an extremal CAL on $N_v$ qubits (as discussed in Section~\ref{subsec:qub_surface_codes}). Let $\Xi_v \in \mathbb{F}_2^{(\text{deg}(v)+1) \times m_v}$ be the matrix whose last row is all ones, and the top $\text{deg}(v)$ rows are defined by $(\Xi_v)_{ij} = 1$ if and only if the sector operator $q_{[(\tau\rho)^{i-1} h]_\rho}$ contains the Majorana with label $j$. Let $\mathcal{V}_v$ denote the row space of $\Xi_v$, and since all rows of $\Xi_v$ are binary vectors of even Hamming weight, Lemma~\ref{lem:vector-sum-even-ham-weight} implies that $\mathcal{V}_v \subseteq \{x \in \mathbb{F}_2^{m_v} : \ham{x} \text{ is even}\}$. Also define $\mathcal{N}_v = \{y \in \mathbb{F}_2^{\text{deg}(v)+1} : y \Xi_v = \vec{0} \in \mathbb{F}_2^{m_v} \}$.

We want to construct $\mu_v$ as an extension of $\chi_v$. Since elements of $\mathcal{C}(\mathcal{S}_v) \cap \mathcal{J}_{m,J_v}$ which are equivalent up to phase, must get mapped to the same element of $\hat{\mathcal{P}}_{N,J'_v}$, we may represent any $\gamma \in \mathcal{C}(\mathcal{S}_v) \cap \mathcal{J}_{m,J_v}$ by a row vector $x \in \mathbb{F}_2^{m_v}$ of even Hamming weight, where $x_j = 1$ if and only if $\gamma$ contains the  Majorana with label $j$. Given any such $x$, let $y \in \mathbb{F}_2^{\text{deg}(v)+1}$ be a solution of $x = y \Xi_v$. Then for any $\gamma$ represented by $x$ we define
\begin{equation}
\label{eq:mu_v}
    \mu_v(\gamma) = \prod_{j: y_j =1} \chi_v (q_{[(\tau\rho)^{j-1} h]_\rho}).
\end{equation}
To show that $\mu_v$ is a well defined map, we need to check (i) for all $x \in \mathbb{F}_2^{m_v}$ of even Hamming weight, the equation $x = y \Xi_v$ has a solution $y$, and (ii) if multiple solutions exist for the same $x$, then Eq.~\eqref{eq:mu_v} does not depend on the solution $y$ picked. Observe that $\rank{\Xi_v} = \text{deg}(v)-1$ if $\text{deg}(v)$ is even, and $\rank{\Xi_v} = \text{deg}(v)$ if $\text{deg}(v)$ is odd. Thus $\rank{\Xi_v} = m_v - 1$ in both cases, and we conclude that $|\mathcal{V}_v| = 2^{m_v - 1} = |\{x \in \mathbb{F}_2^{m_v} : \ham{x} \text{ is even}\}|$. This proves (i). To prove (ii), let $\vec{1} \in \mathbb{F}_2^{\text{deg}(v)}$ be the all-ones row vector, and we consider the $\text{deg}(v)$ odd or even cases separately. First suppose $\text{deg}(v)$ is odd. Then $\mathcal{N}_v = \text{span} \{\begin{bmatrix}\vec{1} & 0 \end{bmatrix}\}$, and thus if $y \neq y'$ satisy $x = y \Xi_v = y' \Xi_v$, then $y - y' = \begin{bmatrix}\vec{1} & 0 \end{bmatrix}$. Since $\prod_{j=1}^{\text{deg}(v)} \chi_v (q_{[(\tau\rho)^{j-1} h]_\rho}) = [I]$ by Corollary~\ref{cor:extremal-cal-mult-identity}(a), it follows that Eq.~\eqref{eq:mu_v} gives the same $\mu_v(\gamma)$ whether we choose $y$ or $y'$. Next suppose $\text{deg}(v)$ is even. Then $\mathcal{N}_v = \text{span} \{\begin{bmatrix}\vec{1} & 0 \end{bmatrix}, \; \begin{bmatrix}z & 1 \end{bmatrix}\}$, where $z \in \mathbb{F}_2^{\text{deg}(v)}$ satisfies $z_j = 1$ if and only if $j$ is even. In this case, both $\prod_{j=1}^{\text{deg}(v)} \chi_v (q_{[(\tau\rho)^{j-1} h]_\rho}) = [I]$ and $\prod_{j=1}^{\text{deg}(v)/2} \chi_v (q_{[(\tau\rho)^{2j-1} h]_\rho}) = [I]$ by Corollary~\ref{cor:extremal-cal-mult-identity}(b), and so again Eq.~\eqref{eq:mu_v} is well defined. We also see that $\mu_v$ is an extension of $\chi_v$. For any fixed $j$, consider the sector operator $q_{[(\tau\rho)^{j-1} h]_\rho}$ and let $x$ be its representation. Then a solution to $x = y \Xi_v$ is given by $y$ satisfying $y_{j'} = 1$ if and only if $j'=j$, and thus by Eq.~\eqref{eq:mu_v} we get $\mu_v(q_{[(\tau\rho)^{j-1} h]_\rho}) = \chi_v (q_{[(\tau\rho)^{j-1} h]_\rho})$.

We now show that the map $\mu_v$ is a surjective, commutation preserving homomorphism. Let $\gamma,\gamma' \in \mathcal{C}(\mathcal{S}_v) \cap \mathcal{J}_{m,J_v}$ be represented by $x,x' \in \mathbb{F}_2^{m_v}$ respectively, and suppose $y,y'$ satisfy $x = y \Xi_v$, $x' = y' \Xi_v$. Then we have $(x+x') = (y+y') \Xi_v$, and $\gamma \gamma'$ is represented by $x + x'$. From Eq.~\eqref{eq:mu_v} we conclude $\mu_v(\gamma \gamma') = \mu_v(\gamma) \mu_v(\gamma')$ proving $\mu_v$ is a homomorphism. Since the CAL $\{\chi_v(q_{[h]_\tau}),\dots,\chi_v (q_{[(\tau\rho)^{\text{deg}(v)-1} h]_\rho})\}$ generates $\pauligrp{N,J'_v}$ by Corollary~\ref{cor:cal-dim-extremal}, $\mu_v$ being a homomorphism also implies it is surjective. Next note that the fact $x = y \Xi_v$ always has a solution for any $x \in \mathbb{F}_2^{m_v}$ of even Hamming weight, implies $\mathcal{C}(\mathcal{S}_v) \cap \mathcal{J}_{m,J_v} = \langle iI, S_v, q_{[h]_\rho},\dots,q_{[(\tau\rho)^{\text{deg}(v)-1}h]_\rho} \rangle$, and from Eq.~\eqref{eq:mu_v} we have $\mu_v(iI) = \mu_v(S_v)=[I]$. Thus we see that $\mu_v$ is commutation preserving on a generating set of $\mathcal{C}(\mathcal{S}_v) \cap \mathcal{J}_{m,J_v}$, and by the same argument as in the paragraph below the proof of Lemma~\ref{lem:commutation-preserving-props} we conclude that $\mu_v$ is commutation preserving. 

As mentioned in the first paragraph of the proof, this finishes the construction of $\mu$ whose restriction to $\mathcal{C}(\mathcal{S}_v) \cap \mathcal{J}_{m,J_v}$ is $\mu_v$ for each $v$. Now after identifying $\mathcal{J}_m$ with $\mathcal{P}_{m/2}$ using the Jordan-Wigner map (which is a commutation preserving isomorphism), we can apply Lemma~\ref{lem:commutation-preserving-props} and conclude the existence of a surjective homomorphsim $\kappa: \mathcal{C}(\mathcal{S}_{\text{maj}}) \rightarrow \mathcal{P}_N$ such that $\hat{\kappa} = \mu$, and by Lemma~\ref{lem:commutation-preserving-props}(f) we conclude that the stabilizer codes $\mathcal{S}(G)$ and $\kappa(\mathcal{S}(G))$ encode the same number of logical qubits. It remains to show that $\mu(\mathcal{S}(G))$ equals the stabilizer group of the qubit surface code. By Definition~\ref{def:qubit_surface_code} and Eq.~\ref{eq:face_stabilizer_as_sectors}, for each $f \in F$ there is a stabilizer of the qubit surface code given by $\prod_{[h]_\rho\subseteq f} \chi_{v([h]_\rho)}(q_{[h]_\rho})$, where for each sector $[h]_\rho$, $v([h]_\rho)$ is the unique vertex containing the sector. But by Eq.~\eqref{eq:mu_v} this is equal to $\mu(S_f)$, and the conclusion follows as $\mu(S_v)=[I]$ for all $v \in V$.
\end{proof}
\section{A more general family of cyclic qubit stabilizer codes}
\label{sec:cyclic-code-families}

In this section we show that the cyclic toric code introduced in Section~\ref{subsec:qub_surface_codes} are part of a much larger 2-parameter code family of cyclic qubit stabilizer codes, which in turn are special cases of a 4-parameter cyclic code family. We will prove a few facts about these code families. The results here also provide a completely different proof for the number of encoded qubits for the cyclic toric code (see Theorem~\ref{thm:cyclic_toric_codes}).

\subsection{A four parameter cyclic code family}
\label{subsec:4param-cyclic-code}
We begin by choosing integers $N,p,q,r$ satisfying the conditions (i) $p,q,r \geq 1$, (ii) $r,q,p+r$ all distinct, and (iii) $\max\{q,p+r\} + 1 \leq N$, which ensures $N \geq 4$. We will always assume in this section that $N,p,q,r$ satisfies these conditions. Now consider a Pauli string $a_0 a_1 \dots a_{N-1}$ of length $N$ such that 
\begin{equation}
\label{eq:4param-cyclic-code-def}
    a_k = 
    \begin{cases}
        Z &\;\; \text{if } k \in \{0,q\},\\
        X &\;\; \text{if } k \in \{r,p+r\},\\
        I_2 &\;\; \text{otherwise}.
    \end{cases}
\end{equation}
Let $S_i$ denote the Pauli string in Eq.~\eqref{eq:4param-cyclic-code-def} cyclically shifted by $i$ indices to the right. Thus $S_0$ is the unshifted Pauli string. Let $\mathcal{S} = \{S_i : 0 \leq i \leq N-1\}$ denote the set of all cyclically shifted Pauli strings, and $\langle \mathcal{S} \rangle$ be the subgroup generated by $\mathcal{S}$. It will be convenient to associate subsets of $\mathcal{N} := \{0,\dots,N-1\}$ with subsets of $\mathcal{S}$, so we introduce the map $\Gamma: 2^{\mathcal{N}} \rightarrow 2^{\mathcal{S}}$ defined as $\Gamma(\mathcal{X}) = \{S_x : x \in \mathcal{X}\}$, for all $\mathcal{X} \in 2^\mathcal{N}$. 

If all the cyclic shifts of this Pauli string commute with one another, they can be used to define a stabilizer group with each shifted Pauli string defining a stabilizer up to a phase factor of $1$ or $-1$ (see Lemma~\ref{lem:majorana-group-props}(a) for the equivalent statement about the Majorana group, which also applies for the Pauli group due to identification of the two groups via the Jordan-Wigner transform). 
The quantum stabilizer code that results will be called a 4-parameter $(N,p,q,r)$ cyclic code. Depending on the values of $r,q$, and $p+r$, there are three distinct subfamilies of these codes. The explicit form of the Pauli string $S_0$ corresponding to these three subfamilies are given below:
\begin{equation}
\label{eq:3-types-4param-cyclic-code}
    S_0 = 
    \begin{cases}
        Z I^{\otimes r-1} X I^{\otimes q-r-1} Z I^{\otimes p+r-q-1} X I^{\otimes N - p - r - 1} \;\;\;\; & \text{if } r < q< p+r, \;\;\;\;\;\;\;\; \text{(subfamily A)} \\
        Z I^{\otimes r-1} X I^{\otimes p-1} X I^{\otimes q-p-r-1} Z I^{\otimes N - q - 1} \;\;\;\; & \text{if } r < p+r < q, \;\;\;\;\;\;\;\; \text{(subfamily B)}\\
        Z I^{\otimes q-1} Z I^{\otimes r - q - 1} X I^{\otimes p - 1} X I^{\otimes N - p - r - 1} \;\;\;\; & \text{if } q < r < p+r. \;\;\;\;\;\;\;\;\; \text{(subfamily C)}
    \end{cases}
\end{equation}

We would like to understand the conditions on the values of $N,p,q,r$ under which all cyclic shifts of $S_0$ for each subfamily in Eq.~\eqref{eq:3-types-4param-cyclic-code} commute with one another, which is necessary to define a valid stabilizer group. The following lemma is important towards achieving this goal.

\begin{lemma}
\label{lem:commuting-check-6-cases}
For any Pauli string defined in Eq.~\eqref{eq:4param-cyclic-code-def}, $\langle \mathcal{S} \rangle$ is a commuting subgroup if and only if the set $\{S_{0}, S_{q}, S_{r}, S_{p+r}\}$ is pairwise commuting.
\end{lemma}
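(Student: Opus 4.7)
The forward direction is immediate since $\{S_0, S_q, S_r, S_{p+r\}} \subseteq \langle \mathcal{S} \rangle$. For the reverse direction, my plan begins with a cyclic shift reduction. Because $S_k$ is simply the rightward cyclic shift of $S_0$ by $k$ positions, conjugating by this shift permutation gives $\com{S_i}{S_j} = \com{S_0}{S_{(j-i) \bmod N}}$. Hence $\langle \mathcal{S} \rangle$ is commuting if and only if $\com{S_0}{S_k} = 0$ for every $k \in \mathbb{Z}/N\mathbb{Z}$, reducing the problem to a single parameter $k$.

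Next I will compute $\com{S_0}{S_k}$ explicitly as a function of $k$. From Eq.~\eqref{eq:4param-cyclic-code-def}, $S_0$ places $Z$ at positions $\{0, q\}$ and $X$ at $\{r, p+r\}$, while $S_k$ places $Z$ at $\{k, k+q\}$ and $X$ at $\{k+r, k+p+r\}$ (all modulo $N$). Since $X$ and $Z$ anticommute while pairs of identical Paulis commute, $\com{S_0}{S_k}$ equals modulo $2$ the number of positions where one string has $X$ and the other $Z$. Enumerating the four-by-two possibilities in each direction shows that this parity is exactly the number of times $k$ appears (modulo $N$) in the multiset
\begin{equation*}
B = \{-r,\; -(p+r),\; q-r,\; q-p-r,\; r,\; r-q,\; p+r,\; p+r-q\} = \{\pm r,\ \pm(p+r),\ \pm(q-r),\ \pm(q-p-r)\}.
\end{equation*}
In particular $\com{S_0}{S_k}$ can only be nonzero when $k$ is congruent modulo $N$ to some element of $B$.

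Finally I will argue that the pairwise-commuting hypothesis already forces $\com{S_0}{S_k} = 0$ at every element of $B$. The four pairs $(S_0, S_r)$, $(S_0, S_{p+r})$, $(S_q, S_r)$, $(S_q, S_{p+r})$ give, via the shift identity, $\com{S_0}{S_k} = 0$ for $k \in \{r,\; p+r,\; r-q,\; p+r-q\}$. Because $\com{S_0}{S_k} = \com{S_0}{S_{-k}}$, which follows from symmetry of $\text{Com}$ together with a cyclic shift of both arguments, negating these four values yields $\com{S_0}{S_k}=0$ at $\{-r,\; -(p+r),\; q-r,\; q-p-r\}$ as well. Together these account for all eight elements of $B$, so $\com{S_0}{S_k} = 0$ for every $k \in \mathbb{Z}/N\mathbb{Z}$, completing the proof. (The two remaining pairs $(S_0, S_q)$ and $(S_r, S_{p+r})$ do not enter the argument directly, though of course they are consistent with it.)

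The one delicate point I anticipate is that several elements of $B$ may coincide modulo $N$ for particular choices of $(N, p, q, r)$, in which case the parity of multiplicity really matters rather than mere set membership. My argument sidesteps this because it only invokes the vanishing of $\com{S_0}{S_k}$ at each individual $k \in B$ separately, which is precisely what the pairwise-commuting hypothesis guarantees; the internal multiplicity structure of $B$ is therefore irrelevant to the final conclusion.
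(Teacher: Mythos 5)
Your proof is correct, and it follows the same overall strategy as the paper's: reduce by cyclicity to checking $\com{S_0}{S_k}$ for all $k$, enumerate the shifts at which the supports of $S_0$ and $S_k$ can interact, and use the symmetry $\com{S_0}{S_k}=\com{S_0}{S_{-k}}$ to cover the negated shifts. The one substantive difference is in the granularity of the overlap analysis. The paper records every shift at which the two supports merely intersect, which yields the twelve candidates $\pm p,\pm q,\pm r,\pm(p+r),\pm(q-r),\pm(p+r-q)$ and consequently requires all six pairwise commutations of $\{S_0,S_q,S_r,S_{p+r}\}$ to rule them out. Your bookkeeping counts only the $X$-against-$Z$ coincidences, which is exactly what determines the commutation parity since no $Y$ appears in these strings; this shrinks the candidate set to the eight-element multiset $B$, and since $\pm p$ and $\pm q$ do not appear in $B$ generically (and are already covered whenever they happen to coincide with an element of $B$), the pairs $(S_0,S_q)$ and $(S_r,S_{p+r})$ turn out to be logically redundant hypotheses. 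This is a mild sharpening of the lemma rather than a different proof. Your closing remark about multiplicities is also handled correctly: because the hypothesis kills $\com{S_0}{S_k}$ at each residue in the support of $B$ individually, no parity-of-multiplicity accounting is ever needed.
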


\begin{proof}
In this proof all qubit indices will be evaluated modulo $N$. If $\langle \mathcal{S} \rangle$ is a commuting subgroup, the set $\{S_{0}, S_{q}, S_{r}, S_{p+r}\}$ is pairwise commuting because it is a subset of $\langle \mathcal{S} \rangle$. For the other direction, assume that $\{S_{0}, S_{q}, S_{r}, S_{p+r}\}$ is a pairwise commuting set. To show that $\langle \mathcal{S} \rangle$ is a commuting subgroup, it suffices to show that $\mathcal{S}$ is pairwise commuting, which is in turn equivalent to showing that $S_0$ commutes with each element in $\mathcal{S}$, because of cyclicity. For contradiction, assume that this is not true, so there exists $S_\ell \in \mathcal{S}$ that anticommutes with $S_0$. This is only possible if there exists a qubit index $0 \leq n \leq N-1$, such that both $S_0$ and $S_\ell$ contain a Pauli different from $I_2$ (not necessarily the same) at that qubit, and so this implies $\ell \in \{\pm p, \pm q, \pm r, \pm (p+r), \pm (q - r), \pm (p + r - q)\}$. Now for any index $\ell'$, it follows using cyclicity that $S_0$ commutes with $S_{\ell'}$ if and only if $S_0$ commutes with $S_{-\ell'}$, because the relative shifts in both cases are the same. Also by assumption $S_0$ commutes with each of $S_q$, $S_r$, and $S_{p+r}$. Thus we deduce that $\ell \in \{p, q - r, p + r - q\}$. Using cyclicity again, it also follows by the same reasoning that (i) $S_0$ and $S_p$ are commuting if and only if $S_r$ and $S_{p+r}$ are commuting, (ii) $S_0$ and $S_{q-r}$ are commuting if and only if $S_r$ and $S_{q}$ are commuting, and (iii) $S_0$ and $S_{p+r-q}$ are commuting if and only if $S_q$ and $S_{p+r}$ are commuting. But this is a contradiction as $S_q$, $S_r$ and $S_{p+r}$ are pairwise commuting by assumption.
\end{proof}

Because of Lemma~\ref{lem:commuting-check-6-cases}, given any values of $N,p,q,r$, one only needs to check whether all six pairs $\{S_{0}, S_{q}\}$, $\{S_{0}, S_{r}\}$, $\{S_{0}, S_{p+r}\}$, $\{S_{q}, S_{r}\}$, $\{S_{q}, S_{p+r}\}$ and $\{S_{r}, S_{p+r}\}$ are commuting or not. Each of these checks has complexity $O(1)$, independent of $N$, because each Pauli string in $\{S_{0}, S_{q}, S_{r}, S_{p+r}\}$ has bounded support, and so determining whether $\langle \mathcal{S} \rangle$ is a commuting subgroup or not also has complexity $O(1)$ for any input values of $N,p,q,r$. We will say that a quadruple $(N,p,q,r)$ is \textit{consistent} if and only if $\langle \mathcal{S} \rangle$ is a commuting subgroup. 
For example, the following special case is of particular interest.

\begin{lemma}
\label{lem:special-condition-subfamilyB}
A quadruple $(N,p,q,r)$ satisfying $r < p + r < q$ is consistent if $q = p + 2r$, and corresponds to a cyclic code of subfamily B.
\end{lemma}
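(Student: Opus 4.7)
The plan is to invoke Lemma~\ref{lem:commuting-check-6-cases} and thereby reduce the consistency check to showing that $\{S_0, S_q, S_r, S_{p+r}\}$ is pairwise commuting. By cyclicity, each pair commutes iff $S_0$ commutes with some cyclic shift $S_\ell$, and under the hypothesis $q=p+2r$ the six relative shifts collapse to the four distinct values $\ell \in \{p, r, p+r, p+2r\}$ (since $q-r = p+r$ and $q-p-r = r$).

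Rather than enumerate these four shifts case-by-case, I would exploit a reflection symmetry that $q=p+2r$ confers on $S_0$. Consider the involution $R:k \mapsto q - k \bmod N$ on $\mathbb{Z}/N\mathbb{Z}$. The $Z$-support $\{0,q\}$ of $S_0$ is visibly $R$-invariant, and the $X$-support $\{r,p+r\}$ is $R$-invariant precisely because $q-r = p+r$. Writing $S_0 = Z^a X^b$ in the standard symplectic representation, this translates to $Ra = a$ and $Rb = b$, where $R$ now denotes the induced linear involution $R(e_k)=e_{q-k}$ on $\mathbb{F}_2^N$.

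The key computation uses: (i) $R^\top = R$ (so $(Ru)\cdot v = u \cdot Rv$), (ii) $R$ intertwines cyclic shifts by negation, $RT_\ell = T_{-\ell}R$, and (iii) $T_\ell^\top = T_{-\ell}$. Writing $b^{(\ell)} := T_\ell b$, the symplectic indicator $\com{S_0}{S_\ell} = a \cdot b^{(\ell)} + b \cdot a^{(\ell)}$ simplifies as
\begin{equation*}
a \cdot b^{(\ell)} \;=\; a \cdot T_\ell b \;=\; a \cdot R T_\ell b \;=\; a \cdot T_{-\ell} R b \;=\; a \cdot T_{-\ell} b \;=\; (T_\ell a) \cdot b \;=\; b \cdot a^{(\ell)},
\end{equation*}
so $\com{S_0}{S_\ell} = 2(b\cdot a^{(\ell)}) = 0$ in $\mathbb{F}_2$ for every shift $\ell$. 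In particular all four critical pairs commute, so Lemma~\ref{lem:commuting-check-6-cases} gives that $\langle \mathcal{S}\rangle$ is commuting and $(N,p,q,r)$ is consistent. The subfamily claim is immediate: the hypothesis $r < p+r < q$ is exactly the defining inequality of subfamily~B in Eq.~\eqref{eq:3-types-4param-cyclic-code}.

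There is no real obstacle once the symmetry is noticed; the only care needed is the modular arithmetic confirming that $R$ is well-defined on $\mathbb{Z}/N\mathbb{Z}$ and fixes both supports (which is precisely what $q=p+2r$ encodes). Should the symmetry argument feel opaque, the same conclusion can be obtained by brute enumeration of the non-identity overlaps in each of the six pairs drawn from $\{S_0,S_q,S_r,S_{p+r}\}$ after substituting $q=p+2r$, verifying that each overlap contributes either $0$ or $2$ anticommuting positions; the symmetry argument is simply the conceptual upgrade that makes this evenness manifest.
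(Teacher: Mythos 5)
Your proof is correct. It reaches the same conclusion as the paper's proof and, at bottom, rests on the same mechanism: the hypothesis $q=p+2r$ makes both the $Z$-support $\{0,q\}$ and the $X$-support $\{r,p+r\}$ of $S_0$ symmetric under the reflection $k\mapsto q-k$ on $\mathbb{Z}/N\mathbb{Z}$. The paper exploits this by expanding $\com{S_0}{S_m}$ into eight Kronecker deltas and pairing each delta with its reflected partner so that the sum vanishes mod $2$; your version packages exactly that pairing as the operator identity $a\cdot T_\ell b=b\cdot T_\ell a$ obtained from $R^\top=R$, $RT_\ell=T_{-\ell}R$, $T_\ell^\top=T_{-\ell}$, and $Ra=a$, $Rb=b$. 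The operator formulation is a genuine (if modest) improvement in transparency: it makes the evenness of every overlap manifest for all shifts $\ell$ simultaneously, rather than emerging from a term-by-term cancellation. Two small remarks: first, since your symmetry argument already gives $\com{S_0}{S_\ell}=0$ for every $\ell$, the appeal to Lemma~\ref{lem:commuting-check-6-cases} is redundant — cyclicity alone then yields pairwise commutation of all of $\mathcal{S}$, which is how the paper proceeds (it never invokes that lemma here). Second, you write $S_0=Z^aX^b$, which swaps the paper's convention that $a$ carries the $X$-part and $b$ the $Z$-part; this is harmless because the symplectic pairing is symmetric in the two halves, but it is worth keeping consistent if the argument is spliced into the surrounding text.
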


\begin{proof}
In this proof all qubit indices will be implicitly assumed to be modulo $N$. If we prove that the quadruple is consistent then we know from Eq.~\eqref{eq:3-types-4param-cyclic-code} that the cyclic code belongs to subfamily B. First note that the Pauli string $S_0$ has the form $Z I^{\otimes r-1} X I^{\otimes p-1} X I^{\otimes r-1} Z I^{\otimes N - p - 2r - 1}$. Let $\begin{bmatrix}a & b\end{bmatrix}$ be the symplectic representation of $S_0$ and $\begin{bmatrix}a' & b'\end{bmatrix}$ be the symplectic representation of $S_m$, for some $0 \leq m \leq N-1$, where $a,a',b,b' \in \mathbb{F}_2^{N}$. Then $S_0$ and $S_m$ commute if and only if $(a \cdot b' + a' \cdot b) = 0$, where the left hand side is evaluated over $\mathbb{F}_2$. Now note that $a_i = 1$ if and only if at $i \in \{r,p+r\}$, $b_i = 1$ if and only if $i \in \{0,p+2r\}$, $a'_i = 1$ if and only if at $i \in \{m+r,m+p+r\}$, and $b'_i = 1$ if and only if $i \in \{m,m+p+2r\}$. Then
\begin{equation}
\begin{split}
    a \cdot b' + a' \cdot b =& \left( \delta_{r,m} + \delta_{r,m+p+2r} + \delta_{p+r,m} + \delta_{p+r,m+p+2r} \right) \\
    &+ \left( \delta_{m+r,0} + \delta_{m+r,p+2r} + \delta_{m+p+r,0} + \delta_{m+p+r,p+2r} \right) \\
    =& \left( \delta_{r,m} + \delta_{0,m+p+r} + \delta_{p+r,m} + \delta_{0,m+r} \right) + \left( \delta_{m+r,0} + \delta_{m,p+r} + \delta_{m+p+r,0} + \delta_{m,r} \right) \\
    =& \left ( \delta_{r,m} + \delta_{m,r} \right ) + \left ( \delta_{0,m+p+r} + \delta_{m+p+r,0} \right ) + \left ( \delta_{p+r,m} + \delta_{m,p+r} \right ) + \left ( \delta_{0,m+r} + \delta_{m+r,0} \right ) \\
    =& \; 0.
\end{split}
\end{equation}
Since $m$ is arbitrary this proves that $S_0$ and $S_m$ commute for all $0 \leq m \leq N-1$, and using cyclicity we then conclude that $(N,p,q,r)$ is consistent.
\end{proof}

Given a consistent quadruple $(N,p,q,r)$, we would like to know the number of qubits encoded by the corresponding cyclic code. The answer to this question is completely provided by the next theorem, which we state below.

\begin{theorem}
\label{thm:num-logical-ops-4param-cyclic-code}
For any 4-parameter $(N,p,q,r)$ cyclic code, the number of encoded qubits satisfies
\begin{equation}
\label{eq:num-logical-ops-4param-cyclic-code}
    K = \frac{\gcd (p,N) \; \gcd (q,N)}{\gcd (\lcm (p,q),N)}
\end{equation}
and is independent of $r$.
\end{theorem}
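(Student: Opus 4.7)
The plan is to translate the cyclic structure into the group algebra $R=\mathbb{F}_2[x]/(x^N-1)$ and reduce the computation of $K$ to a gcd of polynomials. In the symplectic picture I encode $S_0$ as the pair $(a(x),b(x))\in R^2$ with $a(x)=x^r+x^{p+r}=x^r(1+x^p)$ (the $X$-positions) and $b(x)=1+x^q$ (the $Z$-positions); the cyclic shift $S_i$ becomes $(x^i a,\,x^i b)$. Modulo phases---which is legitimate for counting logical qubits by the Pauli analogue of Lemma~\ref{lem:majorana-group-props}(a), transported through the Jordan--Wigner identification---the stabilizer group coincides with the $R$-submodule $M=R\cdot(a,b)\subseteq R^2$, and so $K=N-\dim_{\mathbb{F}_2}M$. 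Viewing $M$ as the image of the map $\phi\colon R\to R^2$, $f\mapsto(fa,fb)$, one has $\dim_{\mathbb{F}_2}M=N-\dim_{\mathbb{F}_2}\mathrm{Ann}_R(a,b)$, so the task reduces to computing $\mathrm{Ann}_R(a,b)$.

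Next I would use two standard facts: $x^r$ is a unit in $R$ (since $x\nmid x^N-1$), and the classical identity $\gcd_{\mathbb{F}_2[x]}(x^m-1,\,x^N-1)=x^{\gcd(m,N)}-1$ (together with $1+x^m=x^m-1$ in characteristic two). Writing $n(x)=x^N-1$, these yield
\[
\mathrm{Ann}_R(a)=\Big\langle\tfrac{n(x)}{x^{\gcd(p,N)}-1}\Big\rangle,\qquad \mathrm{Ann}_R(b)=\Big\langle\tfrac{n(x)}{x^{\gcd(q,N)}-1}\Big\rangle.
\]
Intersecting two principal ideals generated by divisors of $n(x)$ gives the ideal generated by their polynomial $\mathrm{lcm}$ (a short divisibility check in $\mathbb{F}_2[x]$, pushed down to $R$), and the identity $\mathrm{lcm}(n/d_1,\,n/d_2)=n/\gcd(d_1,d_2)$ for $d_1,d_2\mid n$, together with one more application of the gcd identity above, simplifies this to $\mathrm{Ann}_R(a,b)=\langle n(x)/(x^{\gcd(p,q,N)}-1)\rangle$. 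The $\mathbb{F}_2$-dimension of a principal ideal of $R$ generated by a divisor $g$ of $n(x)$ is $N-\deg g$, so $\dim_{\mathbb{F}_2}\mathrm{Ann}_R(a,b)=\gcd(p,q,N)$, and therefore $K=\gcd(p,q,N)$. The $x^r$ factor was absorbed as a unit of $R$ and never entered the annihilator, so independence of $K$ from $r$ is automatic.

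Finally, I would verify the algebraic identity
\[
\gcd(p,q,N)\;=\;\frac{\gcd(p,N)\,\gcd(q,N)}{\gcd(\mathrm{lcm}(p,q),\,N)}
\]
by an $\ell$-adic valuation check: setting $\alpha=v_\ell(p)$, $\beta=v_\ell(q)$, $\gamma=v_\ell(N)$ and assuming without loss of generality $\alpha\le\beta$, both sides have $\ell$-adic valuation equal to $\min(\alpha,\gamma)=\min(\alpha,\beta,\gamma)$.

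The main technical obstacle is the intersection step: one must justify that in $R$---which is not a PID when $x^N-1$ is not squarefree over $\mathbb{F}_2$, e.g.\ for $N$ even---the intersection of the principal ideals $\langle n/d_1\rangle_R$ and $\langle n/d_2\rangle_R$ really equals $\langle n/\gcd(d_1,d_2)\rangle_R$. This is handled by lifting to $\mathbb{F}_2[x]$, observing that $g\mid n$ implies $\langle g\rangle_R$ is the image of $\langle g\rangle_{\mathbb{F}_2[x]}$ (because $n\in\langle g\rangle$), and then using the unambiguous identity $\langle g_1\rangle\cap\langle g_2\rangle=\langle\mathrm{lcm}(g_1,g_2)\rangle$ in the polynomial ring. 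Beyond that, the proof is a routine dictionary translation between the polynomial and the symplectic pictures.
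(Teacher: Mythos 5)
Your proof is correct, but it takes a genuinely different route from the paper's. The paper works combinatorially in $\mathbb{Z}/N\mathbb{Z}$: Lemma~\ref{lem:subgroup-size-addition-modN} computes the order of the subgroup $\mathcal{G}=\langle p,q\rangle\le(\mathbb{Z}/N\mathbb{Z},+)$ via an explicit five-step coset decomposition, and Lemma~\ref{lem:independent-4param-stabilizers} characterizes exactly which subsets of stabilizers multiply to an $X$-type string, a $Z$-type string, or the identity (unions of cosets of $\mathcal{G}_q$, $\mathcal{G}_p$, and $\mathcal{G}$, respectively); $K$ is then the number of cosets of $\mathcal{G}$. You instead invoke the standard polynomial-ideal formalism for cyclic codes, identifying the stabilizer group mod phases with the cyclic $R$-module $R\cdot\bigl(x^r(1+x^p),\,1+x^q\bigr)$ over $R=\mathbb{F}_2[x]/(x^N-1)$ and computing $K$ as the $\mathbb{F}_2$-dimension of its annihilator; the identity $\gcd(x^m-1,x^N-1)=x^{\gcd(m,N)}-1$ does essentially all the work, and you correctly flag and handle the one delicate point (intersecting principal ideals in the non-reduced ring $R$ when $N$ is even) by lifting to $\mathbb{F}_2[x]$. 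Your route is shorter, makes the $r$-independence transparent (since $x^r$ is a unit it never enters the annihilator), and produces the cleaner closed form $K=\gcd(p,q,N)$; your $\ell$-adic verification that this equals the stated formula is right, and is in fact implicit in the paper as well, since the last sentence of Lemma~\ref{lem:subgroup-size-addition-modN} identifies the generator of $\mathcal{G}$ as $\gcd(p,N)\gcd(q,N)/\gcd(\lcm(p,q),N)$, which is just $\gcd(p,q,N)$. What the paper's longer detour buys is the finer structural information in Lemma~\ref{lem:independent-4param-stabilizers} --- the classification of which stabilizer products are trivial, $X$-type, or $Z$-type --- which is reused downstream in Corollary~\ref{cor:subsets-prod-I-from-K} and Lemma~\ref{lem:allX-allZ}. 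Your annihilator is precisely the set of trivial products, so the same information is recoverable in your framework, but you would need to develop the $X$-type/$Z$-type refinement separately to replace those later results.
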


The proof of this theorem is delayed to first present a few general facts, on which the proof depends on. Below we start by proving Lemma~\ref{lem:gcd-equality-fact}, which is then used to prove Lemma~\ref{lem:subgroup-size-addition-modN}, the latter being the most important result that we will need. We then prove Lemma~\ref{lem:independent-4param-stabilizers}, which is a structure lemma that tells us exactly which subsets of the stabilizers of the 4-parameter $(N,p,q,r)$ cyclic code are independent. Combined with Lemma~\ref{lem:subgroup-size-addition-modN}, this immediately proves Theorem~\ref{thm:num-logical-ops-4param-cyclic-code}. In the proofs of Lemma~\ref{lem:independent-4param-stabilizers} and Theorem~\ref{thm:num-logical-ops-4param-cyclic-code}, we use the notation introduced in the paragraph below Eq.~\eqref{eq:4param-cyclic-code-def}, that associates subsets of $\{0,\dots,N-1\}$ to subsets of stabilizers for the 4-parameter $(N,p,q,r)$ cyclic code.

\begin{lemma}
\label{lem:gcd-equality-fact}
Let $a,b,m \geq 1$ be positive integers, and $\alpha, \beta, \gamma$ be real numbers. Then
\setmargins
\begin{enumerate}[(a)]
    \item
    \begin{equation}
    \label{eq:min-fact}
        \min(\gamma - \min(\alpha,\gamma), \gamma - \min(\beta,\gamma)) = \gamma - \min(\gamma, \alpha + \beta - \min(\alpha, \beta)),
    \end{equation}
    \item
    \begin{equation}
    \label{eq:gcd-equality-fact}
        \gcd \left( \frac{m}{\gcd(a,m)}, \frac{m}{\gcd(b,m)} \right) = \frac{m}{\gcd(\lcm(a,b),m)}.
    \end{equation}
\end{enumerate}
\end{lemma}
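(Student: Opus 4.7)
The plan is to prove (a) by elementary min/max manipulations, and then reduce (b) to (a) by comparing the two sides prime by prime via the $p$-adic valuation.

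For (a), I would first observe the identity $\alpha+\beta-\min(\alpha,\beta)=\max(\alpha,\beta)$, so the right-hand side of \eqref{eq:min-fact} equals $\gamma-\min(\gamma,\max(\alpha,\beta))$. On the left-hand side, the identity $\min(\gamma-x,\gamma-y)=\gamma-\max(x,y)$ (which follows immediately from $\min(a,b)=-\max(-a,-b)$) gives $\gamma-\max(\min(\alpha,\gamma),\min(\beta,\gamma))$. Thus (a) reduces to the distributivity statement $\max(\min(\alpha,\gamma),\min(\beta,\gamma))=\min(\gamma,\max(\alpha,\beta))$ in the totally ordered reals, which I would verify by a short case split: without loss of generality $\alpha\le\beta$, and then the three cases $\gamma\le\alpha$, $\alpha\le\gamma\le\beta$, and $\gamma\ge\beta$ each immediately give equality.

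For (b), I would fix an arbitrary prime $p$ and compare $p$-adic valuations $v_p$ of the two sides. Recall the standard formulas $v_p(\gcd(x,y))=\min(v_p(x),v_p(y))$, $v_p(\lcm(x,y))=\max(v_p(x),v_p(y))$, and $v_p(x/y)=v_p(x)-v_p(y)$ whenever $y\mid x$. Setting $\alpha=v_p(a)$, $\beta=v_p(b)$, and $\gamma=v_p(m)$, and noting that $\gcd(a,m)$, $\gcd(b,m)$, and $\gcd(\lcm(a,b),m)$ all divide $m$ so the quotient rule is applicable, the valuation of the left-hand side of \eqref{eq:gcd-equality-fact} equals $\min(\gamma-\min(\alpha,\gamma),\gamma-\min(\beta,\gamma))$, while the valuation of the right-hand side equals $\gamma-\min(\gamma,\alpha+\beta-\min(\alpha,\beta))$. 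These agree by part (a). Since this holds at every prime, the two positive integers are equal.

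There is no genuine obstacle here — the content is essentially the distributive law of min over max in a totally ordered set, re-expressed via $v_p$ to handle the gcd/lcm formulation. The only care required is the trivial check that the divisibilities needed for the quotient rule of $v_p$ hold.
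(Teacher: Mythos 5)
Your proof is correct and follows essentially the same route as the paper: part (a) by reducing to a case split on the relative order of $\alpha$, $\beta$, $\gamma$ (your reformulation via $\min$/$\max$ distributivity is just a tidier packaging of the paper's direct case analysis), and part (b) by comparing prime exponents, which is exactly the paper's prime-factorization argument written in $p$-adic valuation language.
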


\vspace{0.5cm}
\begin{proof}
\setmargins
\begin{enumerate}[(a)]
    \item Without loss of generality assume that $\alpha \geq \beta$. Then $\alpha + \beta - \min(\alpha,\beta) = \alpha$, so the expression on the right-hand side of Eq.~\eqref{eq:min-fact} simplifies to $\gamma - \min(\gamma, \alpha)$. First consider the case $\gamma \leq \alpha$. We have $\gamma - \min(\gamma, \alpha) = 0$, while $\min(\gamma - \min(\alpha,\gamma), \gamma - \min(\beta,\gamma)) = \min(0, \gamma - \min(\beta,\gamma)) = 0$, since $\gamma - \min(\beta,\gamma) \geq 0$. Next consider the case $\gamma > \alpha \geq \beta$. Then $\gamma - \min(\gamma, \alpha) = \gamma - \alpha$, and $\min(\gamma - \min(\alpha,\gamma), \gamma - \min(\beta,\gamma)) = \min(\gamma - \alpha, \gamma - \beta) = \gamma - \alpha$. This proves Eq.~\eqref{eq:min-fact} in both cases.
    \item Let $\{p_i\}_{i \geq 1}$ be the enumeration of all the primes in ascending order. Then there exists an $n$ such that $a,b,m$ have the following unique representations
    \begin{equation}
    \label{eq:gcd-equality-fact-proof-1}
        a = \prod_{i=1}^{n} p_i^{\alpha_i}, \; b = \prod_{i=1}^{n} p_i^{\beta_i}, \; m = \prod_{i=1}^{n} p_i^{\gamma_i},
    \end{equation}
    where $\alpha_i, \beta_i,\gamma_i \in \{0,1,2,\dots\}$ for all $1 \leq i \leq n$. Using Eq.~\eqref{eq:gcd-equality-fact-proof-1} it then easily follows that
    \begin{equation}
    \label{eq:gcd-equality-fact-proof-2}
    \begin{split}
        \gcd \left( \frac{m}{\gcd(a,m)}, \frac{m}{\gcd(b,m)} \right) & = \gcd \left( \prod_{i=1}^{n} p_i^{\gamma_i - \min(\alpha_i, \gamma_i)}, \prod_{i=1}^{n} p_i^{\gamma_i - \min(\beta_i, \gamma_i)} \right) \\
        & = \prod_{i=1}^{n} p_i^{\min(\gamma_i - \min(\alpha_i, \gamma_i), \gamma_i - \min(\beta_i, \gamma_i))},
    \end{split}
    \end{equation}
    and
    \vspace*{-0.5cm}
    \begin{equation}
    \label{eq:gcd-equality-fact-proof-3}
        \frac{m}{\gcd(\lcm(a,b),m)} = \frac{\prod\limits_{i=1}^{n} p_i^{\gamma_i}}{\gcd \left( \prod\limits_{i=1}^{n} p_i^{\alpha_i + \beta_i - \min(\alpha_i, \beta_i)}, \prod\limits_{i=1}^{n} p_i^{\gamma_i} \right)} = \prod_{i=1}^{n} p_i^{\gamma_i - \min(\gamma_i, \alpha_i + \beta_i - \min(\alpha_i, \beta_i))},
    \end{equation}
    and thus by (a) these two expressions are equal.
\end{enumerate}
\end{proof}

\begin{lemma}
\label{lem:subgroup-size-addition-modN}
Let $a,b,m \geq 1$ be positive integers, and let ``$\pmb{+}$'' denote addition modulo $m$. Then the set of integers $\mathcal{G} = \{k_1 a \pmb{+} k_2 b : k_1, k_2 \in \mathbb{Z}\}$ is a cyclic subgroup of $(\mathbb{Z} / m \mathbb{Z}, \pmb{+})$, under $\pmb{+}$. The order of $\mathcal{G}$ satisfies
\begin{equation}
\label{eq:order-G}
|\mathcal{G}| = \frac{m \gcd (\lcm (a,b),m)}{\gcd (a,m) \; \gcd (b,m)}.
\end{equation}
The smallest positive integer that generates $\mathcal{G}$ is $\gcd(a,m) \gcd(b,m) / \gcd(\lcm(a,b),m)$.
\end{lemma}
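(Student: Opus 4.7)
My plan is to reduce everything to the standard structure of cyclic subgroups of $\mathbb{Z}/m\mathbb{Z}$, and then to match the resulting formula to the claimed one via a short prime-by-prime identity. The first observation is that $\mathcal{G}$ is exactly the image under the natural quotient map $\mathbb{Z}\to\mathbb{Z}/m\mathbb{Z}$ of the integer subgroup $a\mathbb{Z}+b\mathbb{Z}$, which by B\'ezout equals $\gcd(a,b)\mathbb{Z}$. Hence $\mathcal{G}$ is the cyclic subgroup of $\mathbb{Z}/m\mathbb{Z}$ generated by the single element $\gcd(a,b)\bmod m$, which immediately gives subgroup-ness and cyclicity.

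Next I would apply the standard fact that the cyclic subgroup of $\mathbb{Z}/m\mathbb{Z}$ generated by an integer $c$ has order $m/\gcd(c,m)$ and has $\gcd(c,m)$ as its smallest positive element. Taking $c=\gcd(a,b)$ and using $\gcd(\gcd(a,b),m)=\gcd(a,b,m)$, I get that the smallest positive generator of $\mathcal{G}$ is $\gcd(a,b,m)$ and $|\mathcal{G}|=m/\gcd(a,b,m)$.

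It remains to match these to the formulas in the lemma, i.e.\ to verify
\begin{equation*}
\gcd(a,b,m)\cdot\gcd(\lcm(a,b),m)=\gcd(a,m)\cdot\gcd(b,m).
\end{equation*}
I would do this by writing $a=\prod p_i^{\alpha_i}$, $b=\prod p_i^{\beta_i}$, $m=\prod p_i^{\gamma_i}$ as in the proof of Lemma~\ref{lem:gcd-equality-fact}(b), so the identity reduces prime-by-prime to
\begin{equation*}
\min(\alpha,\beta,\gamma)+\min(\max(\alpha,\beta),\gamma)=\min(\alpha,\gamma)+\min(\beta,\gamma)
\end{equation*}
for non-negative integers, which is trivial after assuming WLOG $\alpha\le\beta$ (both sides collapse to $\min(\alpha,\gamma)+\min(\beta,\gamma)$). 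Dividing $\gcd(a,m)\gcd(b,m)$ by $\gcd(\lcm(a,b),m)$ then gives the smallest positive generator, and dividing $m$ by that generator gives the displayed $|\mathcal{G}|$.

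There is really no hard step here; the main thing to be careful about is to avoid the circular temptation to invoke Lemma~\ref{lem:gcd-equality-fact}(b) in a way that re-derives the min identity I actually need, and to phrase the B\'ezout step and the ``cyclic subgroup of $\mathbb{Z}/m\mathbb{Z}$'' fact cleanly enough that they carry the structural content of the lemma, with the closed-form numerics isolated into the one prime-factorization computation.
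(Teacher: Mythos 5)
Your proof is correct, and it takes a genuinely different and substantially shorter route than the paper's. You pull the problem back to $\mathbb{Z}$: since $a\mathbb{Z}+b\mathbb{Z}=\gcd(a,b)\mathbb{Z}$ and $\mathcal{G}$ is just the image of this under reduction mod $m$, cyclicity is immediate and the standard fact about $\langle c\rangle\le\mathbb{Z}/m\mathbb{Z}$ gives the generator $\gcd(a,b,m)$ and order $m/\gcd(a,b,m)$ at once; the only real work is the prime-exponent identity $\min(\alpha,\beta,\gamma)+\min(\max(\alpha,\beta),\gamma)=\min(\alpha,\gamma)+\min(\beta,\gamma)$, which you verify correctly and which is a different (though equally elementary) identity from the one in Lemma~\ref{lem:gcd-equality-fact}(b) — you are right to flag that you must prove it rather than cite that lemma. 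The paper instead works entirely inside $\mathbb{Z}/m\mathbb{Z}$: it first determines $\mathcal{G}_a=\langle a\rangle$ and $\mathcal{G}_b=\langle b\rangle$ explicitly, identifies $\mathcal{G}_a\cap\mathcal{G}_b$ as $\langle\lcm(a,b)\rangle$ via Lagrange's theorem together with Lemma~\ref{lem:gcd-equality-fact}(b), and then writes $\mathcal{G}$ as a disjoint union of cosets of $\mathcal{G}_b$ to count its elements. Your argument is cleaner and more conceptual; what the paper's longer decomposition buys is a collection of intermediate facts (e.g.\ $\mathcal{G}_a=\{k\gcd(a,m):0\le k\le m/\gcd(a,m)-1\}$ and the identification of $\mathcal{G}_a\cap\mathcal{G}_b$) that are explicitly reused in the proofs of Lemma~\ref{lem:independent-4param-stabilizers} and Lemma~\ref{lem:allX-allZ}, so if you substituted your proof you would want to state those consequences separately.
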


\begin{proof}
$\mathcal{G}$ is closed under $\pmb{+}$, and for every $x \in \mathcal{G}$, $-x \bmod m \in \mathcal{G}$, and so each element has an additive inverse. Thus $\mathcal{G}$ is a subgroup of $(\mathbb{Z} / m \mathbb{Z}, \pmb{+})$ since $\mathcal{G} \subseteq \mathbb{Z} / m \mathbb{Z}$, and because $(\mathbb{Z} / m \mathbb{Z}, \pmb{+})$ is a cyclic group, it follows that $\mathcal{G}$ is cyclic.

We now derive the expression for $|\mathcal{G}|$ in Eq.~\eqref{eq:order-G}, which we break up into a few steps.
\begin{enumerate}[(i)]
    \item \textbf{Step 1:} We first construct two subsets of $\mathcal{G}$: $\mathcal{G}_a = \{k_1 a \bmod m : k_1 \in \mathbb{Z}\}$, and $\mathcal{G}_b = \{k_2 b \bmod m : k_2 \in \mathbb{Z}\}$. Using the same argument as for $\mathcal{G}$, we have that $\mathcal{G}_a$ and $\mathcal{G}_b$ are cyclic subgroups of $\mathcal{G}$ under $\pmb{+}$, hence also of $(\mathbb{Z} / m \mathbb{Z}, \pmb{+})$. Since $\lcm (a,m)$ is the smallest positive multiple of $a$ divisible by $m$, and noting that $ \lcm(a,m) /a = m / \gcd(a,m)$, it follows that $\mathcal{G}_a = \{k_1 a \bmod m : 0 \leq k_1 \leq m / \gcd(a,m) - 1\}$. Moreover for any two distinct integers $0 \leq k_1, k'_1 \leq  m / \gcd(a,m) - 1$, we have $k_1 a \not\equiv k'_1 a \pmod{m}$ because otherwise $|k_1 - k'_1| a$ is a positive multiple of $a$ strictly smaller than $\lcm(a,m)$ that is also divisible by $m$. Thus $|\mathcal{G}_a| = m / \gcd(a,m)$, and similarly $|\mathcal{G}_b| = m / \gcd(b,m)$. 
    
    \item \textbf{Step 2:} Our next goal is to show that $\mathcal{G}_a = \{k_1 \gcd(a,m) : 0 \leq k_1 \leq m/\gcd(a,m) - 1\}$. If $\gcd(a,m) = m$ then the result is clearly true, so assume that $\gcd(a,m) < m$. By B\'{e}zout's identity $\gcd(a,m) = xa + ym$ for some integers $x,y$, i.e. $\gcd(a,m) \equiv xa \pmod{m}$, and so $\gcd(a,m) \in \mathcal{G}_a$. But this implies that $\{k_1 \gcd(a,m) : 0 \leq k_1 \leq m/\gcd(a,m) - 1\} \subseteq \mathcal{G}_a$, since $\mathcal{G}_a$ forms a group under $\pmb{+}$. But all the elements in this subset are distinct and the number of elements in it equal $m/\gcd(a,m)$, which is the order of $\mathcal{G}_a$, and so we have in fact proved that $\mathcal{G}_a = \{k_1 \gcd(a,m) : 0 \leq k_1 \leq m/\gcd(a,m) - 1\}$, and analogously $\mathcal{G}_b = \{k_2 \gcd(b,m) : 0 \leq k_2 \leq m/\gcd(b,m) - 1\}$. This also shows that $\gcd(a,m)$ and $\gcd(b,m)$ are the smallest positive integers that cyclically generate $\mathcal{G}_a$ and $\mathcal{G}_b$ respectively.
    
    \item \textbf{Step 3:} We now want to locate the smallest non-zero integer $p$ in $\mathcal{G}_a$ that is also in $\mathcal{G}_b$. For this we consider two subsets: $\mathcal{G}_a \cap \mathcal{G}_b$ and $\mathcal{G}':= \{k_3 \lcm(a,b) \bmod m : k_3 \in \mathbb{Z}\}$. First note that $\mathcal{G}_a \cap \mathcal{G}_b$ forms a subgroup of both $\mathcal{G}_a$ and $\mathcal{G}_b$ under $\pmb{+}$, because it is the intersection of two subgroups; hence Lagrange's theorem implies $|\mathcal{G}_a \cap \mathcal{G}_b| \leq \gcd(|\mathcal{G}_a|, |\mathcal{G}_b|)$. Also since each element in $\mathcal{G}_a \cap \mathcal{G}_b$ is a multiple of both $a$ and $b$ modulo $m$, it follows that $\mathcal{G}' \subseteq \mathcal{G}_a \cap \mathcal{G}_b$, and so $|\mathcal{G}_a \cap \mathcal{G}_b| \geq |\mathcal{G}'|$. Next, reasoning similarly as in Steps 1 and 2 for $\mathcal{G}_a$ and $\mathcal{G}_b$, we conclude that $\mathcal{G}'$ is a cyclic group under $\pmb{+}$, $\mathcal{G}' = \{k_3 \gcd(\lcm(a,b),m) : 0 \leq k_3 \leq m / \gcd(\lcm(a,b),m) - 1\}$, and $|\mathcal{G}'| = m / \gcd(\lcm(a,b),m)$. But applying Lemma~\ref{lem:gcd-equality-fact}(b) gives that $m / \gcd(\lcm(a,b),m) = \gcd(m/\gcd(a,m), m/\gcd(b,m)) = \gcd(|\mathcal{G}_a|, |\mathcal{G}_b|)$. We have thus proved that $\mathcal{G}' = \mathcal{G}_a \cap \mathcal{G}_b$, and it follows that $p = \gcd(\lcm(a,b),m)$.
    
    \item \textbf{Step 4:} We now give an equivalent characterization of $\mathcal{G}$ that will help us count the number of elements in it. Let $q = p / \gcd(a,m)$, where $p$ is from Step 3, and notice that $q$ is an integer because
    \begin{equation}
    \begin{split}
        \gcd(\lcm(a,b),m) &= \gcd \left( \gcd(a,m) \frac{a}{\gcd(a,m)} \frac{b}{\gcd(a,b)}, \gcd(a,m) \frac{m}{\gcd(a,m)} \right) \\
        &= \gcd(a,m) \gcd \left( \frac{a}{\gcd(a,m)} \frac{b}{\gcd(a,b)}, \frac{m}{\gcd(a,m)} \right),
    \end{split}
    \end{equation}
    and moreover $m$ is a multiple of $q$, because $p$ divides $m$. Now construct the set
    \begin{equation}
    \label{eq:G-alternate-construction}
        \mathcal{G}'' = \bigcup_{k_1 = 0}^{q-1} \left( k_1 \gcd(a,m)  \pmb{+} \mathcal{G}_b \right).
    \end{equation}
    We want to prove that $\mathcal{G}'' = \mathcal{G}$. Clearly we have that $\mathcal{G}'' \subseteq \mathcal{G}$, so we only need to show that $\mathcal{G} \subseteq \mathcal{G}''$. If $r \in \mathcal{G}$ then there exists $k_1, k_2 \in \mathbb{Z}$, such that $r = k_1 a \pmb{+} k_2 b$. But $k_1 a \pmb{+} k_2 b = (k_1 a \bmod m) \pmb{+} (k_2 b \bmod m)$, and so $r = x \pmb{+} y$, for some $x \in \mathcal{G}_a$ and $y \in \mathcal{G}_b$. Note that any element $x \in \mathcal{G}_a$ can be expressed as $x = x' + z$ for some $z \in \mathcal{G}_a \cap \mathcal{G}_b$ and $x' \in \{k_1 \gcd(a,m) : 0 \leq k_1 \leq q-1\}$, and thus we have $r = x' \pmb{+} y \pmb{+} z = x' \pmb{+} y'$ for some $y' \in \mathcal{G}_b$, since both $y,z \in \mathcal{G}_b$. This proves $\mathcal{G} \subseteq \mathcal{G}''$.
    
    \item \textbf{Step 5:} We now want to show that the union appearing in Eq.~\eqref{eq:G-alternate-construction} is a disjoint union. Note that this would imply $|\mathcal{G}| = \sum_{k_1 = 0}^{q-1} |k_1 \gcd(a,m)  \pmb{+} \mathcal{G}_b| = \sum_{k_1 = 0}^{q-1} |\mathcal{G}_b| = q |\mathcal{G}_b|$ by Step 4, thus proving Eq.~\eqref{eq:order-G}. So for the sake of contradiction, assume that there exists distinct integers $0 \leq k_1 < k'_1 \leq q-1$ such that $\left( k_1 \gcd(a,m)  \pmb{+} \mathcal{G}_b \right) \cap \left( k'_1 \gcd(a,m)  \pmb{+} \mathcal{G}_b \right) \neq \emptyset$. Then there exists $b,b' \in \mathcal{G}_b$ such that $k_1 \gcd(a,m) \pmb{+} b = k'_1 \gcd(a,m) \pmb{+} b'$, and so $(k'_1 - k_1) \gcd(a,m) \equiv (b - b') \pmod{m}$ which implies $(k'_1 - k_1) \gcd(a,m) \in \mathcal{G}_b$. But $k'_1 - k_1 < q$ and so $(k'_1 - k_1) \gcd(a,m) < p$, and since $p$ is the smallest non-zero element in $\mathcal{G}_a \cap \mathcal{G}_b$ by Step 3, this implies $k'_1 = k_1$ which is a contradiction.
\end{enumerate}
For the last part, let $s$ be the smallest non-zero integer in $\mathcal{G}$; so $\{ks : 0 \leq k \leq \ceil{m/s} - 1\} \subseteq \mathcal{G}$ because $\mathcal{G}$ is a group. Then $s$ divides $m$ because otherwise $t = s \ceil{m/s} \bmod m \in \mathcal{G}$, $0 \neq t < s$. Suppose $u \in\mathcal{G}$ and $u \notin \{ks : 0 \leq k \leq m/s - 1\}$. Then $0 \neq u - s \floor{u/s} \in \mathcal{G}$ and $u - s \floor{u/s} < s$ which is a contradiction. This proves that $\mathcal{G} = \{ks : 0 \leq k \leq m/s - 1\}$, and so $m/s = |\mathcal{G}|$, or equivalently $s = \gcd(a,m) \gcd(b,m) / \gcd(\lcm(a,b),m)$. Thus $s$ is the smallest positive integer that cyclically generates $\mathcal{G}$.
\end{proof}

\begin{lemma}
\label{lem:independent-4param-stabilizers}
Let ``$\pmb{+}$'' denote addition modulo $N$, $\mathcal{G} = \{k_1 p \pmb{+} k_2 q : k_1, k_2 \in \mathbb{Z}\}$, $\mathcal{G}_p = \{k p \pmb{+} 0 : k \in \mathbb{Z}\}$, and $\mathcal{G}_q = \{k q \pmb{+} 0 : k \in \mathbb{Z}\}$. If $\mathcal{A} \subseteq \mathcal{N}$ and $s \in \mathbb{Z}$, denote $\mathcal{A} \pmb{+} s := \{x \pmb{+} s : x \in \mathcal{A}\}$. Then for any quadruple $(N,p,q,r)$ (not necessarily consistent), and for any non-empty $\mathcal{H} \subseteq \mathcal{N}$, the following holds:
\setmargins
\begin{enumerate}[(a)]
    \item $\Gamma(\mathcal{H})$ multiplies to a $X$-type Pauli string if and only if $\mathcal{H}$ is a union of the cosets of $\mathcal{G}_q$ in $(\mathbb{Z} / N \mathbb{Z}, \pmb{+})$.
    \item $\Gamma(\mathcal{H})$ multiplies to a $Z$-type Pauli string if and only if $\mathcal{H}$ is a union of the cosets of $\mathcal{G}_p$ in $(\mathbb{Z} / N \mathbb{Z}, \pmb{+})$.
    \item $\Gamma(\mathcal{H})$ multiplies to $I$ (up to phase factors) if and only if $\mathcal{H}$ is a union of the cosets of $\mathcal{G}$ in $(\mathbb{Z} / N \mathbb{Z}, \pmb{+})$.
\end{enumerate}
\end{lemma}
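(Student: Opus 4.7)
The plan is to read off the Pauli at each qubit of the product $\prod_{i\in\mathcal{H}} S_i$ in terms of parity conditions on the indicator vector $\mathbf{1}_{\mathcal{H}}\in\mathbb{F}_2^{N}$, and then translate each ``type'' condition into a translation-invariance property of $\mathcal{H}$, which in turn characterizes $\mathcal{H}$ as a union of cosets of the relevant cyclic subgroup of $(\mathbb{Z}/N\mathbb{Z},\pmb{+})$.

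First I would record, from Eq.~\eqref{eq:4param-cyclic-code-def}, that $S_i$ acts as $Z$ exactly at the qubits $i\pmb{+}0$ and $i\pmb{+}q$, and as $X$ exactly at the qubits $i\pmb{+}r$ and $i\pmb{+}p\pmb{+}r$, while acting as the identity everywhere else (the four positions are distinct by the hypotheses on $(N,p,q,r)$). Fix $j\in\mathcal{N}$ and define
\begin{equation}
z_j \;=\; \mathbf{1}_{\mathcal{H}}(j)\,+\,\mathbf{1}_{\mathcal{H}}(j\pmb{-}q),\qquad
x_j \;=\; \mathbf{1}_{\mathcal{H}}(j\pmb{-}r)\,+\,\mathbf{1}_{\mathcal{H}}(j\pmb{-}p\pmb{-}r),
\end{equation}
evaluated over $\mathbb{F}_2$. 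Then the restriction to qubit $j$ of the ordered product $\prod_{i\in\mathcal{H}} S_i$ is (up to an overall $\pm 1$ or $\pm i$ phase) equal to $X^{x_j} Z^{z_j}$; in particular it is $I_2$, $Z$, $X$, or $Y$ according as $(x_j,z_j)$ equals $(0,0)$, $(0,1)$, $(1,0)$, $(1,1)$.

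For part (a), the product is $X$-type if and only if $z_j=0$ for every $j\in\mathcal{N}$, i.e.\ $\mathbf{1}_{\mathcal{H}}(j)=\mathbf{1}_{\mathcal{H}}(j\pmb{-}q)$ for all $j$. This is precisely the statement that $\mathcal{H}\pmb{+}q=\mathcal{H}$, equivalently that $\mathcal{H}$ is invariant under the $q$-translation action of $(\mathbb{Z}/N\mathbb{Z},\pmb{+})$ on itself. A subset of $\mathbb{Z}/N\mathbb{Z}$ invariant under translation by $q$ is the same as a union of orbits of the cyclic subgroup $\langle q\rangle=\mathcal{G}_q$, i.e.\ a union of cosets of $\mathcal{G}_q$. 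Part (b) is proved identically after the substitution $k=j\pmb{-}r$: the condition $x_j=0$ for all $j$ becomes $\mathbf{1}_{\mathcal{H}}(k)=\mathbf{1}_{\mathcal{H}}(k\pmb{-}p)$ for all $k$, and so $\mathcal{H}$ is a union of cosets of $\mathcal{G}_p$.

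Part (c) follows by combining (a) and (b): $\prod\Gamma(\mathcal{H})\propto I$ if and only if both $x_j=0$ and $z_j=0$ for every $j$, which by the previous paragraphs is equivalent to $\mathcal{H}\pmb{+}p=\mathcal{H}\pmb{+}q=\mathcal{H}$. A subset is simultaneously invariant under translation by $p$ and by $q$ if and only if it is invariant under the subgroup $\langle p,q\rangle=\mathcal{G}$, which is exactly the condition that $\mathcal{H}$ is a union of cosets of $\mathcal{G}$. The only genuine bookkeeping point is that we are working with an \emph{ordered} product of Paulis, so different orderings differ only by an overall phase in $\{\pm1,\pm i\}$; this phase does not affect whether the product is $X$-type, $Z$-type, or proportional to $I$, so the argument is unaffected. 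I do not expect any substantial obstacle: once one writes down $x_j$ and $z_j$ as affine indicator sums, the three statements are immediate translations between ``indicator invariant under a shift'' and ``union of cosets of the cyclic group generated by that shift''.
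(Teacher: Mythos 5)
Your proof is correct, and it takes a noticeably more streamlined route than the paper's. Both arguments rest on the same basic observation — at qubit $j$ the only generators contributing a $Z$ are $S_j$ and $S_{j\pmb{-}q}$, and the only ones contributing an $X$ are $S_{j\pmb{-}r}$ and $S_{j\pmb{-}p\pmb{-}r}$ — but you exploit it differently. The paper splits each part into two directions: the ``if'' direction is handled by computing the product over a single coset (its claim (i)), and the ``only if'' direction by an extremal argument that follows a coset $\{0,q,2q,\dots\}$ until it first exits $\mathcal{H}$ and exhibits the resulting uncanceled $Z$ (or $X$) there (its claims (ii)--(iv)). You instead encode the exponents of $X$ and $Z$ at each qubit as the $\mathbb{F}_2$-parities $x_j,z_j$ of two indicator values of $\mathcal{H}$, so that ``$X$-type'' becomes literally the statement $\mathbf{1}_{\mathcal{H}}(j)=\mathbf{1}_{\mathcal{H}}(j\pmb{-}q)$ for all $j$, i.e.\ $\mathcal{H}\pmb{+}q=\mathcal{H}$, and then both directions of each equivalence collapse into the single standard fact that a translation-invariant subset is a union of orbits (cosets) of the cyclic subgroup generated by the shift. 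Part (c) then follows because the stabilizer of $\mathcal{H}$ under translation is a subgroup, so invariance under $p$ and $q$ separately is invariance under $\langle p,q\rangle=\mathcal{G}$; the paper instead derives (c) by intersecting cosets of $\mathcal{G}_p$ and $\mathcal{G}_q$ in its claim (iv). Your handling of the phase ambiguity from reordering the product is also the right (and only) bookkeeping point to flag. The trade-off is purely one of economy: the paper's coset-following argument is self-contained at the level of Pauli strings, while your version front-loads a tiny amount of linear algebra over $\mathbb{F}_2$ and in exchange eliminates the case analysis entirely.
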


\begin{proof}
In this proof all qubit indices and cyclic shifts will be implicitly assumed to be modulo $N$, and products of Pauli strings in $\mathcal{S}$ will be considered modulo phase factors. Note that we have already proved $\mathcal{G}$, $\mathcal{G}_p$, and $\mathcal{G}_q$ are subgroups of $(\mathbb{Z} / N \mathbb{Z}, \pmb{+})$ under $\pmb{+}$ by Lemma~\ref{lem:subgroup-size-addition-modN}, so cosets of $\mathcal{G}$, $\mathcal{G}_p$, and $\mathcal{G}_q$ are well-defined. It will also be useful to remember the following fact: if $\mathcal{T} \subseteq \mathcal{S}$, and $\mathcal{T}'$ is obtained by cyclically shifting each Pauli string in $\mathcal{T}$ by some integer $k$, then $\prod \mathcal{T}'$ is obtained by cyclically shifting $\prod \mathcal{T}$ by $k$.

We next make the following claims, which we prove at the end of the proof: (i) $\prod \Gamma(\mathcal{G}_q)$ is a $X$-type Pauli string, and $\prod \Gamma(\mathcal{G}_p)$ is a $Z$-type Pauli string, (ii) if $\prod \Gamma(\mathcal{H}')$ is a $X$-type Pauli string for any $\mathcal{H}' \subseteq \mathcal{N}$ such that $x \in \mathcal{H}'$, then $\mathcal{G}_q \pmb{+} x \subseteq \mathcal{H}'$, (iii) if $\prod \Gamma(\mathcal{H}')$ is a $Z$-type Pauli string for any $\mathcal{H}' \subseteq \mathcal{N}$ such that $x \in \mathcal{H}'$, then $\mathcal{G}_p \pmb{+} x \subseteq \mathcal{H}'$, and (iv) if $\prod \Gamma(\mathcal{H}')$ equals $I$ (up to phase factors) for any $\mathcal{H}' \subseteq \mathcal{N}$ such that $x \in \mathcal{H}'$, then $\mathcal{G} \pmb{+} x \subseteq \mathcal{H}' $.

\setmargins
\begin{enumerate}[(a)]
    \item Suppose $\mathcal{H}$ is a union of the cosets of $\mathcal{G}_q$ in $(\mathbb{Z} / N \mathbb{Z}, \pmb{+})$. By claim (i) and the fact mentioned at the beginning of the proof, if $\mathcal{H}'$ is a coset of $\mathcal{G}_q$ in $(\mathbb{Z} / N \mathbb{Z}, \pmb{+})$, then $\prod \Gamma(\mathcal{H}')$ is a X-type Pauli string, since $\mathcal{H}' = \mathcal{G}_q \pmb{+} s$ for some $s \in \mathcal{N}$. Thus $\prod\Gamma(\mathcal{H})$ is also a X-type Pauli string. For the converse, suppose for contradiction that $\mathcal{H}$ is non-empty, $\prod\Gamma(\mathcal{H})$ is a X-type Pauli string, and $\mathcal{H}$ is not a union of the cosets of $\mathcal{G}_q$. Then there exists an integer $x$ belonging to some coset $\mathcal{H}'$ of $\mathcal{G}_q$ such that $x \in \mathcal{H}$ and $\mathcal{H}' \not\subseteq \mathcal{H}$. So $\mathcal{H}' = \mathcal{G}_q \pmb{+} x$, which gives a contradiction by claim (ii).
    \item This follows from (a) by performing Clifford transformations on every qubit interchanging $X$ and $Z$, and using cyclicity.
    \item Assume $\mathcal{H}$ is a union of the cosets of $\mathcal{G}$ in $(\mathbb{Z} / N \mathbb{Z}, \pmb{+})$. Since $\mathcal{G}_q$ (resp. $\mathcal{G}_p$) is a subgroup of $\mathcal{G}$, it implies $\mathcal{G}$ is a union of the cosets of $\mathcal{G}_q$ (resp. $\mathcal{G}_p$) in  $(\mathbb{Z} / N \mathbb{Z}, \pmb{+})$. Thus by (a) and (b), $\prod \Gamma(\mathcal{H})$ is both a $X$-type and $Z$-type Pauli string, and so it must equal $I$ up to phase factors. To prove the converse, suppose for contradiction that $\mathcal{H}$ is non-empty, $\prod \Gamma(\mathcal{H})$ equals $I$ (up to phase factors), and $\mathcal{H}$ is not a union of the cosets of $\mathcal{G}$. Then similar to the proof of the converse in (a), there exists an integer $x$ belonging to some coset $\mathcal{H}'$ of $\mathcal{G}$ such that $x \in \mathcal{H}$ and $\mathcal{H}' \not \subseteq \mathcal{H}$. This implies $\mathcal{H}' = \mathcal{G} \pmb{+} x$, but this contradicts claim (iv).
\end{enumerate}

We now prove the claims made above.
\setmargins
\begin{enumerate}[(i)]
    \item We observe that for any qubit index $n \in \mathcal{N}$, there are exactly two Pauli strings $S_{n}$ and $S_{n-q}$ containing $Z$ at index $n$, exactly two other Pauli strings $S_{n-r}$ and $S_{n-p-r}$ containing $X$ at index $n$, while all others have $I_2$ at index $n$ (note that $S_{n}$, $S_{n-q}$, $S_{n-r}$, and $S_{n-p-r}$ are distinct by assumptions on $p,q,r$). Since the relative shift of $S_{n}$ and $S_{n-q}$ is $q$, and the relative shift of $S_{n-r}$ and $S_{n-p-r}$ is $p$, it follows from the group structure of $\mathcal{G}_q$ and $\mathcal{G}_p$ that $S_{n} \in \Gamma(\mathcal{G}_q)$ if and only if $S_{n-q} \in \Gamma(\mathcal{G}_q)$, and $S_{n-r} \in \Gamma(\mathcal{G}_p)$ if and only if $S_{n-p-r} \in \Gamma(\mathcal{G}_p)$. Thus $\prod \Gamma(\mathcal{G}_q)$ has either $X$ or $I_2$ at index $n$, and $\prod \Gamma(\mathcal{G}_p)$ has either $Z$ or $I_2$ at index $n$.
    \item By cyclicity, it suffices to prove this for the special case $x=0$. So suppose $0 \in \mathcal{H}' \subseteq \mathcal{N}$, $\prod \Gamma(\mathcal{H}')$ is a $X$-type Pauli string, and for contradiction also assume $\mathcal{G}_q \not \subseteq \mathcal{H}'$. Noting that $\mathcal{G}_q = \{k q \bmod N : 0 \leq k \leq N / \gcd(q,N) - 1\}$, then there exists $0 < k' \leq N / \gcd(q,N) - 1$, such that $k'q \bmod N \notin \mathcal{H}'$ and $k'' q \bmod N \in \mathcal{H}'$ for all $0 \leq k'' < k'$. Thus the Pauli string $S_{(k' - 1) q} \in \Gamma(\mathcal{H}')$, which contains $Z$ at qubit index $k' q$. The only other Pauli string that has $Z$ at the same index is $S_{k'q}$, but $S_{k'q} \notin \Gamma(\mathcal{H}')$. Thus $\prod \Gamma(\mathcal{H}')$ is not $X$-type, giving a contradiction. 
    \item This follows from claim (ii) by cyclicity, i.e. cyclically shifting every element of $\mathcal{S}$ by $r$ units to the left, and then performing Clifford transformations on every qubit swapping $X$ and $Z$.
    \item Let $\mathcal{H}'$ be as in the claim. Since $\prod \Gamma(\mathcal{H}')$ is a $X$-type Pauli string, we conclude by claim (ii) that $\mathcal{G}_q \subseteq \mathcal{H}'$. Next fix any $q' \in \mathcal{G}_q$, and consider the set $\mathcal{G}_{p} \pmb{+} q'  = \{ k p \pmb{+} q' : 0 \leq k \leq N / \gcd(p,N) - 1 \}$. Since $\prod \Gamma(\mathcal{H}')$ is a $Z$-type Pauli string, by claim (iii) we have $\mathcal{G}_{p} \pmb{+} q' \subseteq \mathcal{H}'$. As $q'$ is arbitrary, this proves $\mathcal{G} = \bigcup_{q' \in \mathcal{G}_q} (\mathcal{G}_{p} \pmb{+} q') \subseteq \mathcal{H}'$.
\end{enumerate}
\end{proof}

\begin{proof}[Proof of Theorem~\ref{thm:num-logical-ops-4param-cyclic-code}]
Let $\mathcal{G}$ be defined as in Lemma~\ref{lem:independent-4param-stabilizers}. For each coset $\mathcal{H}$ of $\mathcal{G}$ in $(\mathbb{Z} / N \mathbb{Z}$, \pmb{+}), we can then remove exactly one stabilizer from $\Gamma(\mathcal{H})$ and take the union of the resulting sets, to get a set of stabilizers $\mathcal{H}'$ that still generates $\langle \mathcal{S} \rangle$ by Lemma~\ref{lem:independent-4param-stabilizers}(c). Since the number of cosets of $\mathcal{G}$ is $N/|\mathcal{G}|$, it follows by Lemma~\ref{lem:subgroup-size-addition-modN} that $|\mathcal{H}'| = N - \gcd(p,N) \gcd(q,N) / \gcd(\lcm(p,q),N)$. Moreover $\Gamma(\mathcal{H}')$ is independent as it has no subset that multiplies to $I$, again by Lemma~\ref{lem:independent-4param-stabilizers}(c), and so the number of independent generators of $\langle \mathcal{S} \rangle$ equals $|\mathcal{H}'|$. The proof is complete noting that $K = N - |\mathcal{H}'|$.
\end{proof}

Combining the results of Lemma~\ref{lem:subgroup-size-addition-modN}, Lemma~\ref{lem:independent-4param-stabilizers}(c), and Theorem~\ref{thm:num-logical-ops-4param-cyclic-code} we also have the following corollary, which allows us to extract an independent subset of stabilizers that generate $\langle \mathcal{S} \rangle$ from the knowledge of $N$ and $K$ alone.

\begin{corollary}
\label{cor:subsets-prod-I-from-K}
If a 4-parameter $(N,p,q,r)$ cyclic code encodes $K$ qubits, then $K \geq 1$ and divides $N$. Moreover a non-empty subset of $\mathcal{S}$ multiplies to $I$ (up to phase factors) if and only if it is a non-empty union of the subsets of $\mathcal{S}$ corresponding to the cosets of $\mathcal{G}$ in $(\mathbb{Z} / N \mathbb{Z}, \pmb{+})$, where $\mathcal{G} = \{k K  : 0 \leq k \leq N/K - 1\}$.
\end{corollary}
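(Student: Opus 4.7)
The plan is to observe that both $K$ and the structure of $\mathcal{G}$ as described in the corollary are already encoded by the two ingredients we have in hand: Theorem~\ref{thm:num-logical-ops-4param-cyclic-code} giving the closed form for $K$, and Lemma~\ref{lem:subgroup-size-addition-modN} giving the structure of $\mathcal{G} = \{k_1 p \pmb{+} k_2 q : k_1,k_2\in\mathbb{Z}\}$ as a cyclic subgroup of $(\mathbb{Z}/N\mathbb{Z},\pmb{+})$. Once we match these two, the characterization of subsets multiplying to $I$ reduces immediately to Lemma~\ref{lem:independent-4param-stabilizers}(c).

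The first step would be to compare the two formulas. Theorem~\ref{thm:num-logical-ops-4param-cyclic-code} gives
\begin{equation*}
K \;=\; \frac{\gcd(p,N)\,\gcd(q,N)}{\gcd(\lcm(p,q),N)},
\end{equation*}
while the final sentence of Lemma~\ref{lem:subgroup-size-addition-modN} asserts that the smallest positive integer generating $\mathcal{G}$ cyclically is exactly this quantity. Hence $K$ is nothing but the smallest positive generator of $\mathcal{G}$ in $(\mathbb{Z}/N\mathbb{Z},\pmb{+})$. Since the proof of Lemma~\ref{lem:subgroup-size-addition-modN} also shows that the smallest positive generator $s$ of any cyclic subgroup of $(\mathbb{Z}/N\mathbb{Z},\pmb{+})$ divides $N$, this gives $K\ge 1$, $K\mid N$, and $|\mathcal{G}|=N/K$, with
\begin{equation*}
\mathcal{G} \;=\; \{\,kK \;:\; 0\le k \le N/K-1\,\},
\end{equation*}
which is precisely the description of $\mathcal{G}$ used in the corollary statement.

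For the second part, note that a non-empty subset $\mathcal{T}\subseteq \mathcal{S}$ can be written uniquely as $\mathcal{T}=\Gamma(\mathcal{H})$ for some non-empty $\mathcal{H}\subseteq \mathcal{N}$. By Lemma~\ref{lem:independent-4param-stabilizers}(c), $\prod \mathcal{T} = \prod \Gamma(\mathcal{H})$ equals $I$ (up to phase) if and only if $\mathcal{H}$ is a union of cosets of $\mathcal{G}$ in $(\mathbb{Z}/N\mathbb{Z},\pmb{+})$. Translating back through $\Gamma$, this says $\mathcal{T}$ is a non-empty union of the subsets $\Gamma(\mathcal{G}\pmb{+}s)$ of $\mathcal{S}$ corresponding to the cosets $\mathcal{G}\pmb{+}s$, completing the proof.

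There is no real obstacle here: the work was already done in Lemma~\ref{lem:subgroup-size-addition-modN} (via the auxiliary identity Lemma~\ref{lem:gcd-equality-fact}(b)) and Lemma~\ref{lem:independent-4param-stabilizers}(c). The only thing to verify carefully is the numerical coincidence that the formula for $K$ in Theorem~\ref{thm:num-logical-ops-4param-cyclic-code} agrees with the formula for the smallest positive generator of $\mathcal{G}$ in Lemma~\ref{lem:subgroup-size-addition-modN}, which is immediate by inspection.
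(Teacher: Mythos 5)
Your proposal is correct and follows essentially the same route as the paper: identify $K$ from Theorem~\ref{thm:num-logical-ops-4param-cyclic-code} with the smallest positive generator $s$ of $\mathcal{G}$ from the final paragraph of the proof of Lemma~\ref{lem:subgroup-size-addition-modN} (which also yields $s\mid N$, hence $K\mid N$ and $\mathcal{G}=\{kK:0\le k\le N/K-1\}$), and then invoke Lemma~\ref{lem:independent-4param-stabilizers}(c) for the characterization of subsets multiplying to $I$. No gaps.
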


\begin{proof}
The second part is a mere restatement of Lemma~\ref{lem:independent-4param-stabilizers}(c), combining the result proved in the last paragraph of the proof of Lemma~\ref{lem:subgroup-size-addition-modN}, and the expression of $K$ in Theorem~\ref{thm:num-logical-ops-4param-cyclic-code}. That $K \geq 1$ and $K$ divides $N$ follows because $\gcd(p,N) \gcd(q,N) / \gcd(\lcm(p,q),N)$ divides $N$, the proof of which is already contained in the proof of Lemma~\ref{lem:subgroup-size-addition-modN}.
\end{proof}

As the final result of this subsection, we prove in Lemma~\ref{lem:allX-allZ} an interesting property of the subgroup $\langle \mathcal{S} \rangle$ for any quadruple $(N,p,q,r)$, which finds use in the next subsection. The lemma studies the membership of the Pauli strings $X^{\otimes N}$ and $Z^{\otimes N}$ in $\langle \mathcal{S} \rangle$, where membership is decided up to phase factors. Thus we will say for instance that $X^{\otimes N}$ belongs to the subgroup $\langle \mathcal{S} \rangle$ if and only if $\{\pm X^{\otimes N}, \pm i X^{\otimes N}\} \cap \langle \mathcal{S} \rangle \neq \emptyset$. For the proof, we need the following fact (Lemma~\ref{lem:combinatorial-fact}) that we first prove. Lemma~\ref{lem:allX-allZ} is stated and proved after that.

\begin{lemma}
\label{lem:combinatorial-fact}
Let $1 \leq a,m \in \mathbb{Z}$, $\mathcal{M} = \{x \in \mathbb{Z} : 0 \leq x \leq m-1\}$, and ``$\pmb{+}$'' denote addition modulo $m$. For any $\mathcal{A} \subseteq \mathcal{M}$, denote $\mathcal{A} \pmb{+} a = \{x \pmb{+}a : x \in \mathcal{A}\}$. A partition $\mathcal{M} = \mathcal{M}_1 \sqcup \mathcal{M}_2$ satisfying $\mathcal{M}_2 = \mathcal{M}_1 \pmb{+} a$ exists, if and only if $m / \gcd(a, m) = \lcm(a,m)/a \equiv 0 \pmod{2}$.
\end{lemma}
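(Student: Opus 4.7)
The plan is to analyze the orbits of $\mathcal{M}$ under the $\mathbb{Z}$-action generated by translation by $a$ modulo $m$. Let $\sigma:\mathcal{M}\to\mathcal{M}$ denote the permutation $\sigma(x)=x\pmb{+}a$. The order of $\sigma$ equals the order of $a$ in the cyclic group $(\mathbb{Z}/m\mathbb{Z},\pmb{+})$, which is $L:=m/\gcd(a,m)=\lcm(a,m)/a$. Since $\sigma$ has order $L$, each orbit of $\sigma$ on $\mathcal{M}$ is a cycle of length exactly $L$, and there are $\gcd(a,m)$ such orbits, which together partition $\mathcal{M}$.

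Now reformulate the condition on the partition. The requirement $\mathcal{M}_2=\mathcal{M}_1\pmb{+}a=\sigma(\mathcal{M}_1)$ together with $\mathcal{M}_1\sqcup\mathcal{M}_2=\mathcal{M}$ is equivalent to saying that $\sigma$ swaps $\mathcal{M}_1$ and $\mathcal{M}_2$, i.e.\ $\sigma(\mathcal{M}_1)=\mathcal{M}\setminus\mathcal{M}_1$. Equivalently, the 2-coloring $c:\mathcal{M}\to\{1,2\}$ defined by $c(x)=i\iff x\in\mathcal{M}_i$ satisfies $c(\sigma(x))\neq c(x)$ for every $x\in\mathcal{M}$. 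Such an ``alternating'' coloring on an orbit cycle of $\sigma$ exists if and only if the cycle has even length.

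For the forward direction, assume a partition $\mathcal{M}=\mathcal{M}_1\sqcup\mathcal{M}_2$ with $\mathcal{M}_2=\sigma(\mathcal{M}_1)$ exists. Pick any $x\in\mathcal{M}$; then applying $\sigma$ traces out the orbit $\{x,\sigma(x),\sigma^2(x),\dots,\sigma^{L-1}(x)\}$ with $\sigma^L(x)=x$. The colors along this cycle must alternate, so after $L$ steps we return to $x$ with color $c(x)$, forcing $L$ to be even. For the reverse direction, assume $L$ is even. On each of the $\gcd(a,m)$ orbits (which are disjoint cycles of length $L$), pick an arbitrary basepoint $x_0$ and place $\sigma^{2k}(x_0)$ into $\mathcal{M}_1$ and $\sigma^{2k+1}(x_0)$ into $\mathcal{M}_2$ for $0\le k\le L/2-1$. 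Because $L$ is even, this is well-defined, and by construction $\mathcal{M}_1\sqcup\mathcal{M}_2=\mathcal{M}$ with $\sigma(\mathcal{M}_1)=\mathcal{M}_2$.

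The only step requiring any care is the orbit-length calculation (orbits all have the same length $L$), but this is standard: the stabilizer of any $x\in\mathcal{M}$ under the $\mathbb{Z}$-action is generated by $L$, so every orbit has size $L$. The rest is a trivial observation about 2-colorability of cycles, so no real obstacle is expected.
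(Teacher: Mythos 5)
Your proof is correct and follows essentially the same route as the paper's: both decompose $\mathcal{M}$ into the $\gcd(a,m)$ orbits of translation by $a$, each of size $L=m/\gcd(a,m)$, and observe that the required partition is exactly an alternating two-coloring of each orbit cycle, which exists precisely when $L$ is even. Your phrasing via the permutation $\sigma$ and the explicit note that $\sigma$ swaps $\mathcal{M}_1$ and $\mathcal{M}_2$ (since $\sigma$ is a bijection) is a clean way to package the same argument.
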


\begin{proof}
Let $\mathcal{G}_a=\{ka\pmb{+}0:k\in\mathbb{Z}\}$ denote an orbit of the function that adds $a$ modulo $m$, i.e.~$f(x)=x\pmb{+}a$. Each orbit $\mathcal{G}_a\pmb{+}x$ has size $\lcm(a,m)/a$, and for $x=0,1,\dots,\gcd(a,m)-1$ they are all disjoint. We may write $\mathcal{M}=\bigsqcup_{x=0}^{\gcd(a,m)-1}\left(\mathcal{G}_a\pmb{+}x\right)$.

Suppose $\lcm(a,m)/a\equiv0 \pmod{2}$ so the orbits have even size. Create the sets $\mathcal{M}_1$ and $\mathcal{M}_2$ as follows. For each $\mathcal{G}_a\pmb{+}x$, place $x\pmb{+}ka$ in $\mathcal{M}_1$ if $k$ is even and $x\pmb{+}ka$ in $\mathcal{M}_2$ if $k$ is odd. Because the orbits have even size, $\mathcal{M}_1$ and $\mathcal{M}_2$ thus defined are disjoint. Moreover, $\mathcal{M}_2=\mathcal{M}_1\pmb{+}a$.

Now suppose $\mathcal{M}=\mathcal{M}_1\sqcup\mathcal{M}_2$ where $\mathcal{M}_2=\mathcal{M}_1\pmb{+}a$. If $x\in\mathcal{M}_1$, then $x\pmb{+}ka$ is in $\mathcal{M}_1$ if $k$ is even and in $\mathcal{M}_2$ if $k$ is odd. Notice $x\equiv x\pmb{+}(\lcm(a,m)/a)a \pmod{m}$. If $\lcm(a,m)/a$ were odd then $x$ would be in both $\mathcal{M}_1$ and $\mathcal{M}_2$, a contradiction with those sets being disjoint. So $\lcm(a,m)/a$ is even.
\end{proof}

\begin{lemma}
\label{lem:allX-allZ}
For any quadruple $(N,p,q,r)$, the following holds:
\setmargins
\begin{enumerate}[(a)]
    \item $Z^{\otimes N}$ belongs to $\langle \mathcal{S} \rangle$ if and only if $\gcd(p,N) / \gcd(q, \gcd(p,N)) \equiv 0 \pmod{2}$.
    \item $X^{\otimes N}$ belongs to $\langle \mathcal{S} \rangle$ if and only if $\gcd(q,N) / \gcd(p, \gcd(q,N)) \equiv 0 \pmod{2}$.
\end{enumerate}
\end{lemma}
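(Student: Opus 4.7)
My plan is to reduce both parts to the combinatorial partition criterion in Lemma~\ref{lem:combinatorial-fact} via the coset characterization of $Z$-type (and $X$-type) products in Lemma~\ref{lem:independent-4param-stabilizers}. I will give the argument for (a) in detail, and then observe that (b) follows by an entirely symmetric argument (swapping the roles of $X$ and $Z$, equivalently the roles of $(q)$ and $(p,r)$).

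First I would invoke Lemma~\ref{lem:independent-4param-stabilizers}(b): since $Z^{\otimes N}$ is a $Z$-type Pauli string, we have $Z^{\otimes N}\in\langle\mathcal{S}\rangle$ (up to phase) if and only if there is a non-empty $\mathcal{H}\subseteq\mathcal{N}$ which is a union of cosets of $\mathcal{G}_p=\{kp\pmb{+}0:k\in\mathbb{Z}\}$ with $\prod\Gamma(\mathcal{H})\propto Z^{\otimes N}$. Next I would compute, entry by entry, which qubits carry $Z$ in $\prod\Gamma(\mathcal{H})$. Because $S_i$ contributes $Z$ at positions $i$ and $i\pmb{+}q$, the product $\prod\Gamma(\mathcal{H})$ has $Z$ at position $n$ if and only if the set $\{i\in\mathcal{H}:i=n\text{ or }i\pmb{+}q=n\}$ has odd size, i.e.\ exactly one of $n$ and $n\pmb{-}q$ lies in $\mathcal{H}$. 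Thus $\prod\Gamma(\mathcal{H})\propto Z^{\otimes N}$ is equivalent to saying that $\mathcal{H}$ and $\mathcal{H}\pmb{+}q$ are disjoint and cover $\mathcal{N}$, i.e.\ they form a partition $\mathcal{N}=\mathcal{H}\sqcup(\mathcal{H}\pmb{+}q)$.

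The second step is to push this partition problem down to the quotient group $\mathcal{N}/\mathcal{G}_p$, which has order $m':=\gcd(p,N)$ and is cyclic. Since $\mathcal{H}$ is a union of cosets of $\mathcal{G}_p$, it corresponds to a subset $\bar{\mathcal{H}}$ of this quotient, and $\mathcal{H}\pmb{+}q$ corresponds to $\bar{\mathcal{H}}\pmb{+}\bar q$ where $\bar q$ is the image of $q$ modulo $m'$. The partition condition on $\mathcal{N}$ is equivalent to the analogous partition condition in $\mathcal{N}/\mathcal{G}_p$. Applying Lemma~\ref{lem:combinatorial-fact} with $m=m'$ and $a=\bar q$, such a partition exists if and only if $m'/\gcd(\bar q,m')$ is even. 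Using $\gcd(\bar q,m')=\gcd(q\bmod m',m')=\gcd(q,m')=\gcd(q,\gcd(p,N))$, this is exactly the stated condition $\gcd(p,N)/\gcd(q,\gcd(p,N))\equiv 0\pmod 2$, which completes (a).

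For (b), I would apply the same argument with the roles of the two Pauli letters reversed: $X^{\otimes N}$ is $X$-type, so by Lemma~\ref{lem:independent-4param-stabilizers}(a) any realizing $\mathcal{H}$ must be a union of cosets of $\mathcal{G}_q$, and $S_i$ contributes $X$ at positions $i\pmb{+}r$ and $i\pmb{+}p\pmb{+}r$. The same entry-by-entry analysis shows $\prod\Gamma(\mathcal{H})\propto X^{\otimes N}$ if and only if $\mathcal{H}$ and $\mathcal{H}\pmb{+}p$ partition $\mathcal{N}$ (the common shift $r$ drops out), and passing to $\mathcal{N}/\mathcal{G}_q$ and invoking Lemma~\ref{lem:combinatorial-fact} with $m=\gcd(q,N)$ and $a=p\bmod m$ yields the condition $\gcd(q,N)/\gcd(p,\gcd(q,N))\equiv 0\pmod 2$. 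The main subtlety I anticipate is making the step from the pointwise parity condition to the global partition condition fully rigorous in the presence of phase factors (which is why we work modulo phase from the outset), and ensuring that the passage to the quotient really does preserve the partition property in both directions; this last point is straightforward because membership in $\mathcal{H}$ depends only on the coset of $\mathcal{G}_p$ (respectively $\mathcal{G}_q$), so set operations commute with the quotient map.
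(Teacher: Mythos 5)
Your proof is correct and follows essentially the same route as the paper's: both use Lemma~\ref{lem:independent-4param-stabilizers}(b) to force any realizing $\mathcal{H}$ to be a union of $\mathcal{G}_p$-cosets, reduce $Z^{\otimes N}\in\langle\mathcal{S}\rangle$ to the existence of a partition of the coset space (the paper's $\mathcal{M}$, your $\mathcal{N}/\mathcal{G}_p$) into a set and its translate by $q$, and then apply Lemma~\ref{lem:combinatorial-fact} to get the parity condition on $\gcd(p,N)/\gcd(q,\gcd(p,N))$. The only differences are cosmetic — you derive the partition by a position-by-position parity count of $Z$-contributions where the paper computes the support of each coset product $\prod\Gamma(\mathcal{G}_p\pmb{+}x)$ — plus one tiny fix: invoke Lemma~\ref{lem:combinatorial-fact} with $a=q\ge 1$ rather than $a=\bar q$ (which could be $0$ when $\gcd(p,N)\mid q$); since the lemma only depends on $a$ modulo $m$, the conclusion is unchanged.
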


\begin{proof}
We only prove (a), as (b) follows from (a) by cyclicity, i.e. cyclically shifting every element of $\mathcal{S}$ by $r$ units to the left, and then performing Clifford transformations on every qubit swapping $X$ and $Z$. For the proof, we introduce some notation. Let $m = \gcd(p,N)$, $\mathcal{M} = \{x \in \mathbb{Z} : 0 \leq x \leq m-1\}$, ``$\pmb{+}$ denote addition modulo $N$, and ``$+_m$'' denote addition modulo $m$. For any $\mathcal{A} \subseteq \mathbb{Z}$ and $a \in \mathbb{Z}$, denote $\mathcal{A} +_m a = \{x +_m a : x \in \mathcal{A}\}$ and $\mathcal{A} \pmb{+} a = \{x \pmb{+} a : x \in \mathcal{A}\}$. To prove (a), it suffices to show $Z^{\otimes N}$ belongs to $\langle \mathcal{S} \rangle$ if and only if there exists a partition $\mathcal{M} = \mathcal{M}_1 \sqcup \mathcal{M}_2$ with $\mathcal{M}_2 = \mathcal{M}_1 +_m q$, because by Lemma~\ref{lem:combinatorial-fact} the latter is true if and only if $m / \gcd(q, m) \equiv 0 \pmod{2}$. In the proof, we consider two Pauli strings equal if and only if they are the same up to phase factors. Let $\mathcal{G} = \{k_1 p \pmb{+} k_2 q: k_1,k_2 \in \mathbb{Z}\}$, $\mathcal{G}_p = \{kp \pmb{+} 0 : k \in \mathbb{Z}\}$, and recall that $\mathcal{G}_p$ has $m$ distinct cosets $\{\mathcal{G}_p \pmb{+} x\}_{x=0}^{m-1}$. Notice when $q \equiv 0 \pmod{m}$, we have $q \in \mathcal{G}_p$ (since $m \in \mathcal{G}_p$ by Step 2 of the proof of Lemma~\ref{lem:subgroup-size-addition-modN}) implying $\mathcal{G} = \mathcal{G}_p$.

Let $x \in \mathcal{M}$. Observe that $x = x +_m q$ if and only if $q \equiv 0 \pmod{m}$, so the cosets $\mathcal{G}_p \pmb{+} x$ and $\mathcal{G}_p \pmb{+} (x +_m q)$ are different if and only if $q \not \equiv 0 \pmod{m}$. We claim $\prod \Gamma(\mathcal{G}_p \pmb{+} x)$ is a $Z$-type Pauli string with $Z$ only at qubit indices $(\mathcal{G}_p \pmb{+} x) \sqcup (\mathcal{G}_p \pmb{+} (x +_m q))$ if $q \not \equiv 0 \pmod{m}$, and $\prod \Gamma(\mathcal{G}_p \pmb{+} x)=I$ if $q \equiv 0 \pmod{m}$. The case $q \equiv 0 \pmod{m}$ follows by Lemma~\ref{lem:independent-4param-stabilizers}(c) as $\mathcal{G} = \mathcal{G}_p$. So now assume $q \not \equiv 0 \pmod{m}$. Lemma~\ref{lem:independent-4param-stabilizers}(b) implies $\prod \Gamma(\mathcal{G}_p \pmb{+} x)$ is a $Z$-type Pauli string. For each $y \in \mathcal{G}_p \pmb{+} x$, the qubit indices where the Pauli string $S_y$ contributes $Z$ are $y$ and $y \pmb{+} q$, where $y \pmb{+} q \in \mathcal{G}_p \pmb{+} (x +_m q)$. Thus we conclude using the disjointness of $\mathcal{G}_p \pmb{+} x$ and $\mathcal{G}_p \pmb{+} (x +_m q)$, that $\prod \Gamma(\mathcal{G}_p \pmb{+} x)$ has $Z$ at qubit indices $\{y : y \in \mathcal{G}_p \pmb{+} x \} \sqcup \{ y \pmb{+} q : y \in \mathcal{G}_p \pmb{+} x \} = (\mathcal{G}_p \pmb{+} x) \sqcup (\mathcal{G}_p \pmb{+} (x +_m q))$.

Assume that there exists a partition $\mathcal{M} = \mathcal{M}_1 \sqcup \mathcal{M}_2$ such that $\mathcal{M}_2 = \mathcal{M}_1 +_m q$. Lemma~\ref{lem:combinatorial-fact} implies that $m/\gcd(q,m) \equiv 0 \pmod{2}$, and so $q \not \equiv 0 \pmod{m}$. Let $\mathcal{H} = \bigsqcup_{x \in \mathcal{M}_1} (\mathcal{G}_p \pmb{+} x)$. By the claim, if $x \in \mathcal{M}_1$, then $\prod \Gamma(\mathcal{G}_p \pmb{+} x)$ has $Z$ at qubit indices $(\mathcal{G}_p \pmb{+} x) \sqcup (\mathcal{G}_p \pmb{+} (x +_m q))$. Moreover for distinct $x,x' \in \mathcal{M}_1$, the conditions on the partitions $\mathcal{M}_1, \mathcal{M}_2$ imply the cosets $\mathcal{G}_p \pmb{+} x$, $\mathcal{G}_p \pmb{+} x'$, $\mathcal{G}_p \pmb{+} (x +_m q)$ and $\mathcal{G}_p \pmb{+} (x' +_m q)$ are distinct. This means that there is no cancellation in the Pauli $Z$s of $\prod\Gamma(\mathcal{G}_p\pmb{+}x)$ and $\prod\Gamma(\mathcal{G}_p\pmb{+}x')$. Since $\mathcal{M} = \mathcal{M}_1 \sqcup \mathcal{M}_2$, we conclude $\prod \Gamma (\mathcal{H})$ has $Z$ at qubit indices $\bigsqcup_{x \in \mathcal{M}} (\mathcal{G}_p \pmb{+} x) = \mathcal{N}$.

Now assume $Z^{\otimes N}$ belongs to $\langle \mathcal{S} \rangle$. By Lemma~\ref{lem:independent-4param-stabilizers}(b), $Z^{\otimes N} = \prod \Gamma(\mathcal{H})$, where $\mathcal{H} = \bigsqcup_{x \in \mathcal{M}_1} (\mathcal{G}_p \pmb{+} x)$ with $\emptyset \neq \mathcal{M}_1 \subseteq \mathcal{M}$. Then $q \not \equiv 0 \pmod{m}$ by claim above, because otherwise $\prod \Gamma(\mathcal{G}_p \pmb{+} x) = I$ for each $x \in \mathcal{M}_1$, contradicting $\prod \Gamma(\mathcal{H})=Z^{\otimes N}$. Define $\mathcal{M}_2 := \mathcal{M}_1 +_m q \subseteq \mathcal{M}$. The claim also implies that the qubit indices where $\prod \Gamma(\mathcal{H})$ contains $Z$ is a subset of $\bigsqcup_{x \in \mathcal{M}_1 \cup \mathcal{M}_2} (\mathcal{G}_p \pmb{+} x)$. Thus we already have $\mathcal{M} = \mathcal{M}_1 \cup \mathcal{M}_2$ using $\prod \Gamma(\mathcal{H})=Z^{\otimes N}$. We want to show $\mathcal{M}_1$ and $\mathcal{M}_2$ are disjoint. For contradiction, suppose there exists $x \in \mathcal{M}_1 \cap \mathcal{M}_2$. Then there exists $x' \in \mathcal{M}_1$ satisfying $x' +_m q = x$. The Pauli string $\prod \Gamma(\mathcal{G}_p \pmb{+} x')$ has $Z$ at qubit indices $(\mathcal{G}_p \pmb{+} x') \sqcup (\mathcal{G}_p \pmb{+} (x' +_m q))$, while $\prod \Gamma(\mathcal{G}_p \pmb{+} x)$ has $Z$ at qubit indices $(\mathcal{G}_p \pmb{+} x) \sqcup (\mathcal{G}_p \pmb{+} (x +_m q))$. Thus there is cancellation of Pauli Zs at the qubit indices in the coset $\mathcal{G}_p \pmb{+} x$ in the product $\prod \Gamma(\mathcal{H})$. This is a contradiction to $Z^{\otimes N} = \prod \Gamma(\mathcal{H})$, so we have proved $\mathcal{M} = \mathcal{M}_1 \sqcup \mathcal{M}_2$.
\end{proof}

\subsection{A two parameter cyclic code family}
\label{subsec:2-param-cyclic-code}
Choose two non-negative integers $s,t \geq 0$. Setting $r = 1, \; p = s+1, \; q = s+3$, and $N = s+t+4$ in the 4-parameter $(N,p,q,r)$ cyclic code family of the last subsection gives rise to a 2-parameter $(s,t)$ cyclic code family whose stabilizers are cyclic shifts of $ZXI^{\otimes s}XZI^{\otimes t}$, up to phase factors. Notice that in this case $q = p + 2r$, and $r < p+r < q$, and so by Lemma~\ref{lem:special-condition-subfamilyB} the quadruple $(N,p,q,r)$ is consistent. We have computed the $\llbracket N,K,D\rrbracket$ values for this 2-parameter family for all $0 \leq s,t \leq 9$, which we provide in Table~\ref{tab:cyclic_code_params}.

\begin{table}[ht]
    \centering
    \begin{tabular}{|c||c|c|c|c|c|c|c|c|c|c|}
         \hline
         $s$\textbackslash$t$ & 0 & 1 & 2 & 3 & 4 & 5 & 6 & 7 & 8 & 9 \\
         \hline\hline
        0 & 4,1,2 & 5,1,3 & 6,1,2 & 7,1,3 & 8,1,3 & 9,1,3 & 10,1,3 & 11,1,3 & 12,1,3 & 13,1,3 \\\hline
        1 & --- & 6,2,2 & 7,1,3 & 8,2,2 & 9,1,3 & 10,2,3 & 11,1,3 & 12,2,3 & 13,1,3 & 14,2,3 \\\hline
        2 & --- & --- & 8,1,2 & 9,1,3 & 10,1,2 & 11,1,3 & 12,1,3 & 13,1,3 & 14,1,4 & 15,1,3 \\\hline
        3 & --- & --- & --- & 10,2,2 & 11,1,3 & 12,2,2 & 13,1,5 & 14,2,3 & 15,1,3 & 16,2,3 \\\hline
        4 & --- & --- & --- & --- & 12,1,2 & 13,1,3 & 14,1,2 & 15,1,3 & 16,1,4 & 17,1,3 \\\hline
        5 & --- & --- & --- & --- & --- & 14,2,2 & 15,1,3 & 16,2,2 & 17,1,5 & 18,2,3 \\\hline
        6 & --- & --- & --- & --- & --- & --- & 16,1,2 & 17,1,3 & 18,1,2 & 19,1,5 \\\hline
        7 & --- & --- & --- & --- & --- & --- & --- & 18,2,2 & 19,1,3 & 20,2,2 \\\hline
        8 & --- & --- & --- & --- & --- & --- & --- & --- & 20,1,2 & 21,1,3 \\\hline
        9 & --- & --- & --- & --- & --- & --- & --- & --- & --- & 22,2,2 \\\hline
    \end{tabular}
    \caption{The code parameters $\llbracket N,K,D\rrbracket$ of cyclic quantum codes with stabilizer group generated by $ZXI^{\otimes s}XZI^{\otimes t}$ and its cyclic shifts. Interchanging $s$ and $t$ does not change the code parameters (see Lemma~\ref{lem:interchange-s-t}), so we show just the cases $t\ge s$.}
    \label{tab:cyclic_code_params}
\end{table}

One should note that Table~\ref{tab:cyclic_code_params} is symmetric about the diagonal -- so interchanging $s$ and $t$ does not change the code parameters $\llbracket N,K,D\rrbracket$. This is proved in the next lemma.

\begin{lemma}
\label{lem:interchange-s-t}
For any 2-parameter $(s,t)$ cyclic code, the $\llbracket N,K,D\rrbracket$ code parameters are invariant upon interchange of $s$ and $t$.
\end{lemma}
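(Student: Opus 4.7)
The plan is to exhibit an explicit local Clifford equivalence between the $(s,t)$ code and the $(t,s)$ code, namely the transversal Hadamard $H^{\otimes N}$. Because a tensor product of single-qubit Cliffords is a weight-preserving automorphism of $\mathcal{P}_N$, it sends any stabilizer code to an equivalent stabilizer code with the same $\llbracket N,K,D\rrbracket$ parameters. Thus it suffices to show that conjugating the stabilizer group of the $(s,t)$ code by $H^{\otimes N}$ yields the stabilizer group of the $(t,s)$ code.

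Under $H^{\otimes N}$, every single-qubit $X$ is swapped with $Z$ (and $Y\mapsto-Y$, which is irrelevant since the generators contain no $Y$). Applied to the $(s,t)$ generator
\begin{equation*}
S_0 \;=\; Z_{0}\,X_{1}\,X_{s+2}\,Z_{s+3}\,,
\end{equation*}
we obtain $\tilde S_0 = X_0 Z_1 Z_{s+2} X_{s+3}$, i.e.~the Pauli string with $Z$s at positions $\{1,s+2\}$ and $X$s at positions $\{0,s+3\}$. On the other hand, the standard generator of the $(t,s)$ code is
\begin{equation*}
S'_0 \;=\; Z_{0}\,X_{1}\,X_{t+2}\,Z_{t+3}\,,
\end{equation*}
with $Z$s at $\{0,t+3\}$ and $X$s at $\{1,t+2\}$. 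Cyclically shifting $S'_0$ by $s+2$ positions modulo $N=s+t+4$ sends the $Z$ positions to $\{s+2,\; s+t+5 \bmod N\} = \{s+2,1\}$ and the $X$ positions to $\{s+3,\; s+t+4 \bmod N\} = \{s+3,0\}$, which matches $\tilde S_0$ exactly. Hence $\tilde S_0$ is a cyclic shift of $S'_0$.

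Since both stabilizer groups are defined as the groups generated by all cyclic shifts of their respective base strings, and since $H^{\otimes N}$ commutes with the cyclic qubit permutation, $H^{\otimes N}$ conjugation maps the full $(s,t)$ stabilizer group onto the full $(t,s)$ stabilizer group (up to phase factors, absorbed as in Lemma~\ref{lem:majorana-group-props}(a)). The two codes are therefore locally Clifford-equivalent, so they have identical $\llbracket N,K,D\rrbracket$. There is no real obstacle here; the one thing to verify cleanly is the cyclic shift computation above and the observation that $s+t+5\equiv 1\pmod{N}$ (and $s+t+4\equiv 0\pmod{N}$), which make the shift by $s+2$ work out.
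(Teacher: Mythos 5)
Your proof is correct and takes essentially the same route as the paper's: both arguments combine the transversal $X\leftrightarrow Z$ single-qubit Clifford with a cyclic relabeling of the qubits (the paper shifts first and then conjugates, you conjugate first and then shift, which commute). Your index arithmetic checks out, and the remark that phases are absorbed by the sign-choice freedom in defining the stabilizer group handles the only subtlety.
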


\begin{proof}
The cyclic code generated by cyclic shifts of $ZXI^{\otimes s}XZI^{\otimes t}$ is the same as that generated by cyclic shifts of $XZI^{\otimes t}ZXI^{\otimes s}$. Now we perform a Clifford transformation on every qubit that interchanges $X$ and $Z$ (note that Clifford transformations do not change commutativity of Pauli strings). This results in a new cyclic code generated by cyclic shifts of $ZXI^{\otimes t}XZI^{\otimes s}$. Clifford transformations are group isomorphisms of the Pauli group, thus $N$ and $K$ are unchanged. Moreover single qubit Clifford transformations don't change the distance of the code, and hence the same is true for their compositions. Thus $D$ is also unchanged under interchange of $s$ and $t$.
\end{proof}

The number of encoded qubits for the $(s,t)$ cyclic code family is explained by the following theorem.
\begin{theorem}
\label{thm:num-logical-ops-2param-cyclic-code}
For any 2-parameter $(s,t)$ cyclic code, the number of encoded qubits satisfies
\begin{equation}
\label{eq:num-logical-ops-2param-cyclic-code}
    K = 
    \begin{cases}
        2 & \;\; \text{if} \;\; s\text{ and }t \text{ are odd}, \\
        1 & \;\; \text{otherwise}.
    \end{cases}
\end{equation}
If $K=1$, the only subset of $\mathcal{S}$ that multiplies to $I$ up to phase factors is $\mathcal{S}$. If $K = 2$, there are two disjoint subsets $\mathcal{S}_{\text{even}}$ and  $\mathcal{S} \setminus \mathcal{S}_{\text{even}}$ of $\mathcal{S}$ that multiplies to $I$ up to phase factors, where $\mathcal{S}_{\text{even}} = \{S_{2n}: 0 \leq n \leq N/2 - 1\}$.
\end{theorem}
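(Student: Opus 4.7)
The plan is to apply Theorem~\ref{thm:num-logical-ops-4param-cyclic-code} to the quadruple $(N,p,q,r) = (s+t+4,\, s+1,\, s+3,\, 1)$. Since $q = p + 2r$ and $r < p+r < q$, Lemma~\ref{lem:special-condition-subfamilyB} guarantees the quadruple is consistent, so the theorem applies.

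The key simplification is to observe that Eq.~\eqref{eq:num-logical-ops-4param-cyclic-code} collapses to $K = \gcd(p,q,N)$. To see this, note that the subgroup $\mathcal{G} = \{k_1 p \pmb{+} k_2 q : k_1,k_2 \in \mathbb{Z}\}$ appearing in Lemma~\ref{lem:subgroup-size-addition-modN} is precisely the image in $\mathbb{Z}/N\mathbb{Z}$ of the subgroup $\gcd(p,q)\mathbb{Z}$ of $\mathbb{Z}$. Therefore $\mathcal{G}$ is cyclically generated in $\mathbb{Z}/N\mathbb{Z}$ by $\gcd(p,q)\bmod N$ and has order $N/\gcd(\gcd(p,q),N) = N/\gcd(p,q,N)$. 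Since the proof of Theorem~\ref{thm:num-logical-ops-4param-cyclic-code} (equivalently, the first sentence of Corollary~\ref{cor:subsets-prod-I-from-K}) establishes $K = N/|\mathcal{G}|$, we conclude $K = \gcd(p,q,N)$.

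Substituting $p = s+1$, $q = s+3$, $N = s+t+4$, and using $\gcd(s+1,\, s+3) = \gcd(s+1,\, 2)$, we obtain $K = \gcd(\gcd(s+1,\,2),\, s+t+4)$. If $s$ is even then $\gcd(s+1,2) = 1$ and $K = 1$ for every $t$. If $s$ is odd then $\gcd(s+1,2) = 2$ and $K = \gcd(2,\, s+t+4)$; since $s+4$ is odd in this case, $s+t+4$ has the opposite parity to $t$, so $K = 2$ exactly when $t$ is odd. This gives Eq.~\eqref{eq:num-logical-ops-2param-cyclic-code}.

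For the claim about subsets of $\mathcal{S}$ multiplying to $I$ up to phase, I will invoke Corollary~\ref{cor:subsets-prod-I-from-K}, which identifies such subsets with $\Gamma$ of non-empty unions of cosets of the subgroup $\{kK : 0 \le k \le N/K - 1\}$ in $\mathbb{Z}/N\mathbb{Z}$. When $K = 1$ this subgroup is all of $\mathbb{Z}/N\mathbb{Z}$, producing the single coset $\mathcal{N}$ and hence the unique qualifying subset $\mathcal{S}$. When $K = 2$ (which forces $N$ even) the subgroup is the set of even residues, whose two cosets correspond exactly to $\mathcal{S}_{\text{even}} = \{S_{2n} : 0 \le n \le N/2 - 1\}$ and its complement $\mathcal{S} \setminus \mathcal{S}_{\text{even}}$, each of which therefore multiplies to $I$ up to phase. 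The only step requiring real care is the algebraic identification $K = \gcd(p,q,N)$; once that is in hand, the remainder is a short parity case analysis with no further obstacles.
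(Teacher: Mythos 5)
Your proof is correct, and its overall skeleton is the same as the paper's: invoke Theorem~\ref{thm:num-logical-ops-4param-cyclic-code} (with consistency from Lemma~\ref{lem:special-condition-subfamilyB}) for the count, and Corollary~\ref{cor:subsets-prod-I-from-K} for the identification of the subsets multiplying to $I$. Where you diverge is in how the formula $K=\gcd(p,N)\gcd(q,N)/\gcd(\lcm(p,q),N)$ gets evaluated. The paper splits into cases on the parity of $s+1$ and then of $t$, using $\gcd(s+1,s+3)\in\{1,2\}$ to simplify $\lcm(s+1,s+3)$ and grinding through substitutions like $s=4x+1$, etc. You instead observe that $\mathcal{G}$ is the reduction mod $N$ of $\gcd(p,q)\mathbb{Z}$, hence cyclic of order $N/\gcd(p,q,N)$, so that $K=N/|\mathcal{G}|=\gcd(p,q,N)$ in full generality; the two-parameter answer then falls out of $\gcd(s+1,s+3)=\gcd(s+1,2)$ and a one-line parity check. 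This is a genuinely cleaner route for the computational core: it replaces the paper's four-subcase analysis with a single identity that is also of independent use (it simplifies Theorem~\ref{thm:num-logical-ops-4param-cyclic-code} itself and makes the ``smallest generator'' statement of Lemma~\ref{lem:subgroup-size-addition-modN} transparent). The only point to be careful about, which you handle correctly, is that $K=N/|\mathcal{G}|$ is what the proof of Theorem~\ref{thm:num-logical-ops-4param-cyclic-code} actually establishes (one dependent stabilizer per coset of $\mathcal{G}$), so the identification of $|\mathcal{G}|$ via the single-generator description is legitimate. The final paragraph on $K=1$ versus $K=2$ matches the paper's use of Corollary~\ref{cor:subsets-prod-I-from-K} exactly.
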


\begin{proof}
In this proof we will use the fact that for any integers $a,b,c \geq 1$, if $\gcd(a,b) = 1$ then $\gcd(ab,c) = \gcd(a,c) \gcd(b,c)$, which easily follows by considering the prime factorizations of $a,b,c$. Also the variables $x$ and $y$ appearing in this proof are always non-negative integers. We start by noting that when $s+1$ is odd, $\gcd (s+1, s+3) = 1$ which implies that $\lcm (s+1,s+3) = (s+1) (s+3)$, while if $s+1$ is even, then $\gcd (s+1, s+3) = 2$ and so $\lcm (s+1,s+3) = (s+1) (s+3) / 2$. Applying Theorem~\ref{thm:num-logical-ops-4param-cyclic-code} then gives
\begin{equation}
\label{eq:num-logical-ops-2param-cyclic-code-proof-1}
    K = 
    \begin{cases}
        \gcd (s+1,N) \; \gcd (s+3,N) / \gcd ((s+1)(s+3), N) & \;\; \text{if} \;\; s+1 \text{ is odd}, \\
        \gcd (s+1,N) \; \gcd (s+3,N) / \gcd \left( \frac{(s+1)(s+3)}{2}, N \right) & \;\; \text{otherwise}.
    \end{cases}
\end{equation}
In the first case ($s+1$ odd), by the fact mentioned above, we have $\gcd ((s+1)(s+3), N) = \gcd(s+1,N) \gcd(s+3,N)$, and so $K = 1$. Now assume that $s+1 = 2x$, i.e. is even, and so $(s+1)(s+3) / 2 = 2x(x+1)$. Then there are two subcases which we consider separately below.
\begin{enumerate}[(i)]
    \item $t$ is odd: In this case $N = s+t+4$ is even, and so suppose that $N = 2y$. Then we have $\gcd(s+1,N) = \gcd(2x,2y) = 2 \gcd(x,y)$, $\gcd(s+3,N) = \gcd(2x+2,2y) = 2 \gcd(x+1,y)$, and $\gcd((s+1)(s+3)/2, N) = \gcd(2x(x+1),2y) = 2 \gcd(x(x+1),y)$. But $\gcd(x,x+1) = 1$ and so $\gcd(x(x+1),y) = \gcd(x,y) \gcd((x+1),y)$. Plugging these into Eq.~\eqref{eq:num-logical-ops-2param-cyclic-code-proof-1} gives $K = 2$.
    
    \item $t$ is even: In this case $N$ is odd, and so $\gcd(2x,N) = \gcd(x,N)$, $\gcd(2(x+1),N) = \gcd(x+1,N)$, and $\gcd(2x(x+1),N) = \gcd(x(x+1),N) = \gcd(x,N) \gcd(x+1,N)$. Eq.~\eqref{eq:num-logical-ops-2param-cyclic-code-proof-1} then gives $K=1$.
\end{enumerate}
We have thus proved Eq.~\eqref{eq:num-logical-ops-2param-cyclic-code}. The last part of the theorem now follows directly from Corollary~\ref{cor:subsets-prod-I-from-K}.
\end{proof}

Theorem~\ref{thm:num-logical-ops-2param-cyclic-code} suggests that the case when $s$ and $t$ are both odd is special. We note another curious property that is also true for this case in the next lemma, which will find an use in the proof of Theorem~\ref{thm:allXYZ-2 param}.

\begin{lemma}
\label{lem:st-odd-coset-property}
For any 2-parameter $(s,t)$ cyclic code with both $s$ and $t$ odd, let $\mathcal{S}_{\text{even}} = \{S_{2n}: 0 \leq n \leq N/2 - 1\}$ and $\mathcal{S}_{\text{odd}} = \mathcal{S} \setminus \mathcal{S}_{\text{even}}$. Fix any qubit index $0 \leq n \leq N-1$. Then
\setmargins
\begin{enumerate}[(a)]
    \item If $n$ is even, $\mathcal{S}_{\text{even}}$ contains both stabilizers that contain $Z$ at index $n$, and $\mathcal{S}_{\text{odd}}$ contains both stabilizers that contain $X$ at index $n$.
    
    \item If $n$ is odd, $\mathcal{S}_{\text{even}}$ contains both stabilizers that contain $X$ at index $n$, and $\mathcal{S}_{\text{odd}}$ contains both stabilizers that contain $Z$ at index $n$.
    
    \item For all subsets $\mathcal{S}' \subset \mathcal{S}_{\text{even}}$ and $\mathcal{S}'' \subset \mathcal{S}_{\text{odd}}$, the products $\prod \mathcal{S}'$ and $\prod \mathcal{S}''$ do not contain $Y$ at any qubit index.
\end{enumerate}
\end{lemma}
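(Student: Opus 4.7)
The plan is to first read off, for each qubit index $n$, which two stabilizers contribute a $Z$ there and which two contribute an $X$. Recall that for this 2-parameter family we have $r=1$, $p=s+1$, $q=s+3$, and $N=s+t+4$, and Eq.~\eqref{eq:4param-cyclic-code-def} places $Z$ at positions $0$ and $q=s+3$ of $S_0$ and $X$ at positions $1$ and $s+2$. Since $S_j$ is the right cyclic shift of $S_0$ by $j$, the stabilizers containing $Z$ at index $n$ are precisely $S_n$ and $S_{n-(s+3)}$ (indices mod $N$), and the stabilizers containing $X$ at index $n$ are precisely $S_{n-1}$ and $S_{n-(s+2)}$.

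Next I would observe that with $s$ and $t$ both odd, $N=s+t+4$ is even, so $\mathcal{S}_{\text{even}}$ and $\mathcal{S}_{\text{odd}}$ are well-defined via the parity of the subscript. The hypothesis that $s$ is odd means $s+3$ is even and $s+2$ is odd. Therefore $n$ and $n-(s+3)$ have the same parity, while $n-1$ and $n-(s+2)$ have the opposite parity of $n$. Doing a case split on the parity of $n$ then gives (a) and (b) directly: when $n$ is even, both $Z$-contributing indices $n,n-(s+3)$ are even and both $X$-contributing indices $n-1,n-(s+2)$ are odd, and vice versa for $n$ odd.

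Finally, part (c) is an immediate corollary. Fix any qubit index $n$ and any $\mathcal{S}'\subseteq\mathcal{S}_{\text{even}}$. By (a) (if $n$ is even) or (b) (if $n$ is odd), the only stabilizers in $\mathcal{S}_{\text{even}}$ that are non-identity at index $n$ all carry the same Pauli type there (all $Z$ if $n$ is even, all $X$ if $n$ is odd). Since products of identical single-qubit Paulis can only give $I$ or that same Pauli (no $Y$), the restriction of $\prod\mathcal{S}'$ to index $n$ is never $Y$. The argument for $\mathcal{S}''\subseteq\mathcal{S}_{\text{odd}}$ is identical with the roles of $X$ and $Z$ swapped on each parity class. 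I do not expect any real obstacle here, since everything reduces to a parity count once the stabilizer support pattern is written out; the only thing to be a little careful about is that $s+3$ and $s+2$ have opposite parities, which is exactly what guarantees the clean separation between $\mathcal{S}_{\text{even}}$ and $\mathcal{S}_{\text{odd}}$.
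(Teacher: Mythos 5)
Your proposal is correct and follows essentially the same route as the paper's proof: identify the four stabilizers supported at index $n$ (namely $S_n, S_{n-s-3}$ for $Z$ and $S_{n-1}, S_{n-s-2}$ for $X$), use that $s+3$ is even and $s+2$ is odd to sort them by parity of subscript, and deduce (c) from the fact that producing a $Y$ at an index requires multiplying an $X$-supporting and a $Z$-supporting stabilizer, which (a) and (b) rule out within either parity class. No gaps.
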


\begin{proof}
All qubit indexes will be assumed to be modulo $N$ in this proof. Since $s,t$ are odd, first note that $N = s+t+4$ is even, $s+3$ is even, and $s+2$ is odd. Now for any qubit index $0 \leq n \leq N-1$, the two stabilizers containing $Z$ at index $n$ are $S_n$ and $S_{n-s-3}$, and the two stabilizers containing $X$ at index $n$ are $S_{n-1}$ and $S_{n-s-2}$ (note that these four stabilizers are distinct). If $n$ is even, we have $(n-s-3) \bmod N$ is even, $(n-1) \bmod N$ is odd, and $(n-s-2) \bmod N$ is odd; while if $n$ is odd, we have $(n-s-3) \bmod N$ is odd, $(n-1) \bmod N$ is even, and $(n-s-2) \bmod N$ is even. We have thus proved (a) and (b). For (c), notice that in order to obtain $Y$ at any fixed qubit index $n$, one needs to at least multiply a stabilizer containing $X$ at index $n$, and another stabilizer containing $Z$ at index $n$. But by (a) and (b) this is impossible for subsets of $\mathcal{S}_{\text{even}}$ or $\mathcal{S}_{\text{odd}}$.
\end{proof}

The next theorem states the conditions that determine the membership of the Pauli strings $X^{\otimes N}$, $Y^{\otimes N}$ and $Z^{\otimes N}$ in the stabilizer group $\langle \mathcal{S} \rangle$ for the $(s,t)$ cyclic code family. As with Lemma~\ref{lem:allX-allZ}, we will only determine membership upto phase factors.

\begin{theorem}
\label{thm:allXYZ-2 param}
For any 2-parameter $(s,t)$ cyclic code, we have the following:
\setmargins
\begin{enumerate}[(a)]
    \item $Z^{\otimes N}$ belongs to $\langle \mathcal{S} \rangle$ if and only if $s \equiv 3 \pmod{4}$ and $t \equiv 1 \pmod{4}$.
    
    \item $X^{\otimes N}$ belongs to $\langle \mathcal{S} \rangle$ if and only if $s \equiv 1 \pmod{4}$ and $t \equiv 3 \pmod{4}$.
    
    \item At most one of $X^{\otimes N}$, $Y^{\otimes N}$, or $Z^{\otimes N}$ can belong to $\langle \mathcal{S} \rangle$.
    
    \item $Y^{\otimes N}$ belongs to $\langle \mathcal{S} \rangle$ if and only if $s \equiv 0 \pmod{2}$ and $t \equiv 0 \pmod{2}$.
\end{enumerate}
\end{theorem}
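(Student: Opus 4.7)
Parts (a) and (b) are immediate consequences of Lemma~\ref{lem:allX-allZ} once one unpacks the arithmetic. I would apply the lemma with $(N,p,q,r) = (s+t+4,\,s+1,\,s+3,\,1)$. For (a), set $m := \gcd(p,N) = \gcd(s+1,\,t+3)$; since $m\mid s+1$ and $q-p=2$, one gets $\gcd(q,m) = \gcd(2,m)$, so the criterion $m/\gcd(2,m) \equiv 0 \pmod 2$ from Lemma~\ref{lem:allX-allZ}(a) reduces by a two-case parity check on $m$ to the single condition $4\mid m$. That in turn is equivalent to $4\mid s+1$ and $4\mid t+3$, i.e.~$s\equiv 3 \pmod 4$ and $t\equiv 1 \pmod 4$. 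Part (b) is identical with the roles of $p$ and $q$ swapped: with $m' := \gcd(q,N) = \gcd(s+3,\,t+1)$, the relation $p - q = -2$ gives $\gcd(p,m') = \gcd(2,m')$, so the criterion becomes $4 \mid \gcd(s+3,\,t+1)$, which unpacks to $s\equiv 1\pmod 4$ and $t\equiv 3\pmod 4$.

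Part (c) is then a short observation. Since $Y = iXZ$ on a single qubit, one has $Y^{\otimes N}$ equal to $X^{\otimes N}Z^{\otimes N}$ up to a phase in $\{\pm 1,\pm i\}$. Consequently, if any two of $X^{\otimes N}$, $Y^{\otimes N}$, $Z^{\otimes N}$ belong to $\langle \mathcal{S}\rangle$ (modulo phase), the third does too. It therefore suffices to show that $X^{\otimes N}$ and $Z^{\otimes N}$ cannot simultaneously belong to $\langle \mathcal{S}\rangle$, which is immediate from (a) and (b): the former forces $s\equiv 3\pmod 4$ and the latter $s\equiv 1\pmod 4$.

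For (d) I plan to work directly. Let $\mathcal{H}\subseteq\{0,1,\dots,N-1\}$ and let $h\in\mathbb{F}_2^N$ be its indicator vector. Using that the stabilizer $S_j$ carries $Z$ at qubit $n$ iff $j\in\{n,\,n-s-3\}$ and $X$ at $n$ iff $j\in\{n-1,\,n-s-2\}$ (indices modulo $N$), the product $\prod_{i\in\mathcal{H}}S_i$ equals $Y^{\otimes N}$ up to phase if and only if at every qubit $n$ both the $X$-count and the $Z$-count are odd. Reindexing, this is exactly the pair of conditions
\begin{equation}
h_n + h_{n-(s+1)} = 1 \quad\text{and}\quad h_n + h_{n-(s+3)} = 1 \qquad\text{for all } n\in\mathbb{Z}/N\mathbb{Z}.
\end{equation}
Subtracting the two gives $h_n = h_{n-2}$ for all $n$, so any admissible $h$ must be $2$-periodic on the cyclic index set.

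The case analysis then finishes the argument. If $N$ is odd, $2$-periodicity forces $h$ to be constant, which makes $h_n+h_{n-(s+1)} = 0$, contradicting the condition. If $N$ is even but $s$ is odd (hence $t$ is odd), then $s+1$ is even, so $h_{n-(s+1)} = h_n$ by $2$-periodicity and again the condition fails. If $N$ is even and $s$ is even (hence $t$ is even), then $s+1$ and $s+3$ are both odd, and writing $h_n = c_0$ for $n$ even and $h_n = c_1$ for $n$ odd, both conditions collapse to $c_0+c_1=1$; the two solutions correspond exactly to $\mathcal{H}$ being the even-indexed or odd-indexed half of the stabilizers, and a direct check confirms that either choice yields $Y^{\otimes N}$ up to phase. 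This proves (d). The only mildly delicate step is bookkeeping the four cyclic-shift offsets $\{0,-1,-s-2,-s-3\}$ correctly when setting up the linear system; once that is done, everything else is routine parity analysis.
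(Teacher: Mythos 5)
Your proposal is correct, and it reaches the result by a route that is genuinely leaner than the paper's in two places. For (a) and (b), the paper substitutes $s=4x+1$, $s=4x+3$, etc., and grinds through several parity subcases of the criterion in Lemma~\ref{lem:allX-allZ}; you instead observe that $q\equiv 2\pmod{\gcd(p,N)}$ (and symmetrically $p\equiv -2\pmod{\gcd(q,N)}$), which collapses the criterion to the single divisibility condition $4\mid\gcd(s+1,t+3)$ (resp.\ $4\mid\gcd(s+3,t+1)$) — the same answer with far less bookkeeping. Part (c) is identical to the paper's argument. The real divergence is (d): the paper splits into three cases and, in the hard subcase $s,t$ both odd with $s\equiv t+2\pmod 4$ failing, invokes Lemma~\ref{lem:st-odd-coset-property} plus a disjoint-support counting argument to force the contradiction $N\equiv 0\pmod 4$. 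You instead encode the unknown subset as an indicator vector $h\in\mathbb{F}_2^N$, note that the product equals $Y^{\otimes N}$ up to phase iff the two cyclic difference equations $h_n+h_{n-(s+1)}=1$ and $h_n+h_{n-(s+3)}=1$ hold (valid because no stabilizer carries a $Y$, so the symplectic $X$- and $Z$-components decouple), and deduce $2$-periodicity of $h$; the parity of $N$ and of $s+1$ then settles every case uniformly, with the two surviving solutions being exactly the even- and odd-indexed halves $\mathcal{S}_{\text{even}}$ and $\mathcal{S}_{\text{odd}}$ that the paper exhibits in its case (i). Your offsets $\{0,-1,-s-2,-s-3\}$ and the reindexing to $s+1$ and $s+3$ check out, and the "direct check" you mention at the end is not even needed, since your two conditions are necessary and sufficient. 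The linear-recurrence argument is the more robust of the two: it avoids Lemma~\ref{lem:st-odd-coset-property} entirely and would generalize to deciding membership of any fixed tensor-power Pauli in this code family.
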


\vspace{0.5cm}
\begin{proof}
\setmargins
\begin{enumerate}[(a)]
    \item Define $\alpha = (s+3) \bmod \gcd(s+1,N)$. Note that $\gcd(s+3, \gcd(s+1,N)) = \gcd(\alpha, \gcd(s+1,N))$, and its useful to also observe that $\gcd(s+1,N) / \gcd(s+3, \gcd(s+1,N)) \equiv 0 \pmod{2}$ implies $\alpha \neq 0$. By Lemma~\ref{lem:allX-allZ}(a), then $Z^{\otimes N}$ belongs to $\langle \mathcal{S} \rangle$ if and only if $\gcd(s+1,N) / \gcd(\alpha, \gcd(s+1,N)) \equiv 0 \pmod{2}$. We proceed to check the various cases. The variables $x$ and $y$ appearing in this proof are always non-negative integers.
    
    \begin{enumerate}[(i)]
        \item At least one of $s$ or $t$ is even: First assume that both $s,t$ are even. Then $N$ is even, while $s+1$ is odd, and so $\gcd(s+1,N)$ is odd. Now assume exactly one of $s,t$ is even. Then $N$ is odd, so again $\gcd(s+1,N)$ is odd. Now if $\gcd(s+1,N) = 1$ we have $\alpha = 0$, while if $\gcd(s+1,N) \geq 3$ then $\alpha = 2$. In the latter case this implies that $\gcd(\alpha, \gcd(s+1,N)) = \gcd(2, \gcd(s+1,N)) = 1$, and so $\gcd(s+1,N) / \gcd(\alpha, \gcd(s+1,N)) = \gcd(s+1,N) \equiv 1 \pmod{2}$. Thus $Z^{\otimes N}$ does not belong to $\langle \mathcal{S} \rangle$ in these cases.
        
        \item $s,t \equiv 1 \pmod{4}$; $s,t \equiv 3 \pmod{4}$; $s \equiv 1 \pmod{4} \text{ and } t \equiv 3 \pmod{4}$: We handle these three cases together. So assume that $s = 4x+1, t = 4y+1$ in the first case, $s = 4x+3, t = 4y+3$ in the second, and $s = 4x+1, t = 4y+3$ in the third. The values of $N$ are $4x+4y+6$, $4x+4y+10$, and $4x+4y+8$ for the three cases respectively. It then follows that
        \begin{equation}
            \gcd(s+1,N) = 
            \begin{cases}
                2 \gcd(2x+1,2y+2) &\;\; \text{if} \;\; s = 4x+1, t = 4y+1, \\
                2 \gcd(2x+2, 2y+3) &\;\; \text{if} \;\; s = 4x+3, t = 4y+3, \\
                2 \gcd(2x+1,2y+3) &\;\; \text{if} \;\; s = 4x+1, t = 4y+3.
            \end{cases}
        \end{equation}
        In each of the above cases $\gcd(s+1,N)$ is twice an odd number, so either $\gcd(s+1,N) = 2$ or $\gcd(s+1,N) \geq 6$. If $\gcd(s+1,N) = 2$ we have $\alpha = (s+3) \bmod 2 = 0$, since $s+3$ is even in all three cases; so $Z^{\otimes N}$ does not belong to $\langle \mathcal{S} \rangle$. If $\gcd(s+1,N) \geq 6$, then first we have $\alpha = (s+3) \bmod \gcd(s+1,N) = 2$, and so $\gcd(\alpha, \gcd(s+1,N)) = \gcd(2, \gcd(s+1,N)) = 2$ in all three cases. It follows that $\gcd(s+1,N) / \gcd(\alpha, \gcd(s+1,N)) = \gcd(s+1,N) / 2 \equiv 1 \pmod{2}$ in each case, and $Z^{\otimes N}$ does not belong to $\langle \mathcal{S} \rangle$.
        
        \item $s \equiv 3 \pmod{4} \text{ and } t \equiv 1 \pmod{4}$: Finally suppose that $s = 4x+3, t = 4y+1$; so $N = 4(x+y+2)$ and $\gcd(s+1,N) = \gcd(4x+4,4x+4y+8) = \gcd(4x+4,4y+4) = 4\gcd(x+1,y+1)$ is a multiple of $4$. This first implies that $\alpha = 2$, and so $\gcd(\alpha, \gcd(s+1,N)) = \gcd(2,4\gcd(x+1,y+1)) = 2$. This in turn implies that $\gcd(s+1,N) / \gcd(\alpha, \gcd(s+1,N)) = 4\gcd(x+1,y+1) / 2 = 2\gcd(x+1,y+1) \equiv 0 \pmod{2}$. Thus $Z^{\otimes N}$ belongs to $\langle \mathcal{S} \rangle$.
    \end{enumerate}
    
    \item Let $\mathcal{S}'$ be the set of stabilizers obtained by performing the Clifford transformation that interchanges $X$ and $Z$ at every qubit, for each stabilizer in $\mathcal{S}$. Thus it suffices to show that $Z^{\otimes N}$ belongs to $\langle \mathcal{S}' \rangle$ if and only if $s \equiv 1 \pmod{4}$ and $t \equiv 3 \pmod{4}$. Now the stabilizers in $\mathcal{S}'$ are generated by cyclic shifts of the Pauli string $ZXI^{\otimes t}XZI^{\otimes s}$. The result then follows from (a).
    
    \item For the sake of contradiction assume that any two of $X^{\otimes N}$, $Y^{\otimes N}$, or $Z^{\otimes N}$ belong to $\langle \mathcal{S} \rangle$. Then their product also belongs to $\langle \mathcal{S} \rangle$, and so in fact all three $X^{\otimes N}$, $Y^{\otimes N}$, and $Z^{\otimes N}$ belong to $\langle \mathcal{S} \rangle$. But by (a) and (b) $X^{\otimes N}$ and $Z^{\otimes N}$ cannot belong to $\langle \mathcal{S} \rangle$ simultaneously, which is a contradiction.
    
    \item We analyze the different cases separately.
    
    \begin{enumerate}[(i)]
        \item $s,t$ are even: In this case $N = s+t+4$ is even. Then for the subset $\mathcal{S}_{\text{even}} = \{S_{2n} : 0 \leq n \leq N/2 - 1 \}$ it is easily calculated that $\prod \mathcal{S}_{\text{even}} = \prod (\mathcal{S} \setminus \mathcal{S}_{\text{even}}) = Y^{\otimes N}$ up to phase factors.
        
        \item Exactly one of $s,t$ is even: In this case $N = s+t+4$ is odd, and so the Pauli strings $X^{\otimes N}$, $Y^{\otimes N}$, and $Z^{\otimes N}$ are pairwise anticommuting. Noticing that each stabilizer in $\mathcal{S}$ also commutes with each of $X^{\otimes N}$, $Y^{\otimes N}$, and $Z^{\otimes N}$, we then conclude that none of the Pauli strings $X^{\otimes N}$, $Y^{\otimes N}$, or $Z^{\otimes N}$ belong to $\langle \mathcal{S} \rangle$.
        
        \item $s,t$ are odd: If $s \equiv 3 \pmod{4} \text{ and } t \equiv 1 \pmod{4}$, then $Z^{\otimes N}$ belongs to $\langle \mathcal{S} \rangle$ by (a); while if $s \equiv 1 \pmod{4}$ and $t \equiv 3 \pmod{4}$, then $X^{\otimes N}$ belongs to $\langle \mathcal{S} \rangle$ by (b). In both cases $Y^{\otimes N}$ does not belong to $\langle \mathcal{S} \rangle$ by (c).  We are now left with two cases: either $s,t \equiv 1 \pmod{4}$, or $s,t \equiv 3 \pmod{4}$, both satisfying $N = s+t+4 \equiv 2 \pmod{4}$. We handle these cases together and show below that $Y^{\otimes N}$ does not belong to $\langle \mathcal{S} \rangle$.
        
        The proof is by contradiction, so let us suppose that $Y^{\otimes N}$ belongs to $\langle \mathcal{S} \rangle$, and $\prod \mathcal{S}' = Y^{\otimes N}$ up to phase factors, for some subset $\mathcal{S}' \subseteq \mathcal{S}$. Since both $s$ and $t$ are odd, consider the subsets $\mathcal{S}_{\text{even}}$ and $\mathcal{S}_{\text{odd}}$ defined in Lemma~\ref{lem:st-odd-coset-property}, and let $\mathcal{S}_{\text{even}}^{'} = \mathcal{S}' \cap \mathcal{S}_{\text{even}}$, and $\mathcal{S}_{\text{odd}}^{'} = \mathcal{S}' \cap \mathcal{S}_{\text{odd}}$. It must then be true that the products $\prod \mathcal{S}_{\text{even}}^{'}$ and $\prod \mathcal{S}_{\text{odd}}^{'}$ have the property that for any fixed qubit index $n$, exactly one of them contains $X$ and the other one contains $Z$ at index $n$ (this follows from Lemma~\ref{lem:st-odd-coset-property}(c) because otherwise one cannot get $Y$ at index $n$). Thus both $\prod \mathcal{S}_{\text{even}}^{'}$ and $\prod \mathcal{S}_{\text{odd}}^{'}$ don't contain $I_2$ at any qubit index. Moreover any two distinct Pauli strings $S_i, S_j \in \mathcal{S}_{\text{even}}^{'}$ must have disjoint supports, because otherwise by Lemma~\ref{lem:st-odd-coset-property}(a),(b) the product $\prod \mathcal{S}_{\text{even}}^{'}$ will contain $I_2$ at every qubit index in the intersection of the supports of $S_i$ and $S_j$. Since each stabilizer is supported at exactly four qubit indices this implies that $N \equiv 0 \pmod{4}$, which is a contradiction.
    \end{enumerate}
\end{enumerate}
\end{proof}

Finally, we can find the distance of the two-parameter cyclic toric codes when $s$ is close to $t$, the first three diagonals of Table~\ref{tab:cyclic_code_params}.

\begin{theorem}
Consider a two-parameter $(s,t)$ cyclic toric code. If $t=s$ or $t=s+2$, its code distance is two. If $t=s+1$, then its code distance is three.
\end{theorem}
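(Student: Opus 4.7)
The strategy is to establish matching upper and lower bounds on $D$. For each $P \in \{X, Y, Z\}$ let $A_P = \{j : P_0 \text{ anticommutes with } S_j\}$, and let $B_Q^k = A_Q + k \pmod N$ denote the analogous set for a Pauli at position $k$. Because of the cyclic structure, a two-qubit operator $P_0 Q_k$ lies in the centralizer $\mathcal C(\mathcal S)$ iff $A_P = B_Q^k$, and the explicit values are $A_X = \{0, -(s+3)\}$, $A_Z = \{-1, -(s+2)\}$, $A_Y = A_X \cup A_Z$ (all mod $N$). The universal lower bound $D \geq 2$ is immediate: every single-qubit Pauli $P_i$ anticommutes with at least one of $S_i, S_{i-1}, S_{i-s-2}, S_{i-s-3}$, because each $S_j$ contains both an $X$ and a $Z$ in its support and every non-identity Pauli anticommutes with at least one of $X, Z$.

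For the upper bounds I would exhibit explicit low-weight logicals in each case. When $t = s$ and $N = 2s+4$, one has $-(s+3) \equiv s+1 \pmod N$, giving $A_X = B_Z^{s+2} = \{0, s+1\}$, so $X_0 Z_{s+2}$ commutes with every stabilizer. When $t = s+2$ and $N = 2s+6$, one has $-(s+3) \equiv s+3 \pmod N$, giving $A_X = B_X^{s+3} = \{0, s+3\}$, so $X_0 X_{s+3}$ commutes with every stabilizer. When $t = s+1$ and $N = 2s+5$ is odd, the global $Y^{\otimes N}$ commutes with $\langle\mathcal S\rangle$ (each stabilizer has even weight $4$) and fails to lie in $\mathcal S$ by Theorem~\ref{thm:allXYZ-2 param}(d) (since $t = s+1$ forces $\{s, t\}$ to contain an odd integer); a direct symplectic calculation shows that $Y^{\otimes N} \cdot S_0 S_1 \cdots S_{s+1}$ collapses to $\pm X_0 Y_{s+2} X_{2s+4}$, a weight-$3$ member of the same nontrivial coset. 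To verify non-triviality of the two-qubit candidates, I would use the polynomial representation $R = \mathbb F_2[x]/(x^N - 1)$: elements of $\mathcal S$ take the symplectic form $(u(x + x^{s+2}), u(1 + x^{s+3}))$ for $u \in R$. For the $t = s$ candidate, setting up the system and evaluating at $x = 1$ yields $0 = 1$, a contradiction. For the $t = s+2$ candidate one uses that $x^N - 1 = (1 + x^{s+3})^2$ in $\mathbb F_2[x]$, so the annihilator of $1 + x^{s+3}$ in $R$ equals $(1 + x^{s+3})R$; this forces $u = (1 + x^{s+3})v$ with $v(x + x^{s+2}) \equiv 1 \pmod{1 + x^{s+3}}$, and in that quotient $x + x^{s+2} = x^{-1}(x+1)^2$ is a zero divisor because $x+1$ divides $1 + x^{s+3}$.

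For the remaining lower bound $D \geq 3$ when $t = s+1$, I would enumerate the choices $(P, Q) \in \{X, Y, Z\}^2$ for a weight-$2$ candidate $P_0 Q_k$ with $1 \leq k < N$. In each of the four cases with $P, Q \in \{X, Z\}$, the equation $A_P = B_Q^k$ yields two contradictory values of $k$ because $-(s+3) \equiv s+2$ and $-(s+2) \equiv s+3$ are distinct for $N = 2s+5 \geq 5$. In the four mixed cases involving exactly one $Y$, the sets have different sizes and cannot coincide, since the elements of $B_Y^k$ have no collisions (the pairwise differences $\pm 1, \pm(s+1), \pm(s+2), \pm(s+3)$ never vanish modulo $N$ for $s \geq 0$). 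In the $Y$--$Y$ case, $A_Y$ consists of two adjacent pairs separated by $s+2$, and a nontrivial rotation $B_Y^k$ of $A_Y$ preserving this pair structure would require $N \mid 2$, which fails. The main technical obstacle I anticipate is the non-triviality step for the $t = s+2$ candidate, where a simple evaluation at $x = 1$ is insufficient and one must work in the quotient ring $\mathbb F_2[x]/(1 + x^{s+3})$; the other verifications reduce to mechanical evaluations.
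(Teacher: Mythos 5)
Your proposal is correct, and the constructions of the low-weight logical operators coincide (up to cyclic shifts) with the paper's: $X_0Z_{s+2}$, $X_0X_{s+3}$, and the weight-three operator $X_0Y_{s+2}X_{2s+4}$, which is a cyclic shift of the paper's $YI^{\otimes s+1}XXI^{\otimes s+1}$; I verified your stabilizer-product identity for small $s$ and your enumeration of weight-two centralizer elements via the sets $A_P$ is sound (your $Y$--$Y$ case is slightly hand-waved, but the translation stabilizer of a $4$-element subset of $\mathbb{Z}/N\mathbb{Z}$ with $N$ odd is trivial, so the conclusion holds). Where you genuinely diverge from the paper is in certifying \emph{non-triviality} of these centralizer elements. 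The paper sidesteps any membership computation by exhibiting, for each case, a \emph{pair} of anticommuting centralizer elements (e.g.\ $X_0X_{s+3}$ paired with $Y^{\otimes s+2}ZXI^{\otimes s+2}$): since every stabilizer commutes with the whole centralizer, anticommutation instantly certifies that neither element is a stabilizer. You instead prove non-membership directly in the polynomial ring $\mathbb{F}_2[x]/(x^N-1)$, which requires the evaluation at $x=1$ for $t=s$ and the factorization $x^N-1=(1+x^{s+3})^2$ together with a zero-divisor argument for $t=s+2$ — correct, but considerably more machinery for the same conclusion. Your route does have the virtue of being systematic (it would, in principle, decide membership for any candidate, not just ones for which a convenient anticommuting partner happens to exist), and your use of Theorem~\ref{thm:allXYZ-2 param}(d) to certify $Y^{\otimes N}\notin\langle\mathcal{S}\rangle$ in the $t=s+1$ case is a clean alternative to the paper's implicit pairing of the weight-three operator with $X^{\otimes N}$. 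If you revise, consider adopting the anticommuting-pair trick: it collapses the hardest step of your argument to a one-line parity check.
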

\begin{proof}
Letting $D$ indicate the code distance, in all cases $D>1$ because no Pauli with weight one commutes with all stabilizers. 

In the $t=s$ case, we note that both $ZI^{\otimes s+1}XI^{\otimes s+1}$ and $Y^{\otimes s+2}I^{\otimes s+2}$ commute with all stabilizers. They also anticommute with each other. Thus, they are nontrivial logical operators and $D=2$.

In the $t=s+1$ case, we note $YI^{\otimes s+1}XXI^{\otimes s+1}$ and $X^{\otimes N}$ are nontrivial logical operators and therefore $D\le3$. We must also show there is no weight-two Pauli commuting with all stabilizers. A weight-two Pauli has the form $T=PI^{\otimes a}QI^{\otimes N-a-2}$ for single-qubit Paulis $P,Q\in\{X,Y,Z\}$. If $a<s+1$ or $N-a-2=2s-a+3<s+1$, then $T$ clearly must anticommute with some stabilizer. This leaves two cases $a=s+1$ and $a=s+2$. In the former case, commuting with all stabilizers requires that $Q=X$ and also that $P\otimes Q$ commutes with $Z\otimes Z$ and with $Z\otimes X$, which is clearly impossible. In the latter case, the analogous argument shows $P=X$ and $P\otimes Q$ commutes with $Z\otimes Z$ and with $X\otimes Z$, again a contradiction.

In the $t=s+2$ case, we note that $XI^{\otimes s+2}XI^{\otimes s+2}$ and $Y^{\otimes s+2}ZXI^{\otimes s+2}$ are a pair of anticommuting logical operators, implying $D=2$.
\end{proof}

\section{Embedding of the medial graph}\label{app:medial-graphs-and-trisection}
The goal here is to prove some properties of medial graphs claimed in Section~\ref{subsec:relate_to_homological}. See Definition~\ref{def:medial_graph} for the definition of a medial rotation system and, by extension, a medial graph.

\begin{lemma}
The rotation system $R=(H,\lambda,\rho,\tau)$ and its medial rotation system $\widetilde R=(\widetilde H,\widetilde\lambda,\widetilde\rho,\widetilde\tau)$ embed in the same manifold.
\end{lemma}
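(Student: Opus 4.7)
The plan is to exploit the classification of closed surfaces: a compact connected 2-manifold is determined up to homeomorphism by its Euler characteristic together with its orientability class. So it suffices to show that $R$ and $\widetilde R$ agree on both invariants.

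For the Euler characteristic, I would count vertices, edges, and faces of $\widetilde R$ directly from the rotation system. Edges of $\widetilde R$ are orbits of $\langle\widetilde\lambda,\widetilde\tau\rangle$; from properties (i)--(ii) of Definition~\ref{def:rotation_system} applied to $\widetilde R$, each such orbit has size four, giving $|\widetilde E| = |\widetilde H|/4 = 2|E|$. Both $\widetilde\lambda$ and $\widetilde\rho$ preserve the second coordinate $j$, so faces of $\widetilde R$ split into orbits lying inside $H\times\{1\}$ and orbits lying inside $H\times\{-1\}$. On $H\times\{1\}$ these involutions act on the first coordinate as $\rho$ and $\tau$, and on $H\times\{-1\}$ as $\rho$ and $\lambda$; the resulting orbits are therefore in bijection with the vertices and faces of $R$, so $|\widetilde F| = |V|+|F|$. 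Finally, a short orbit computation starting at $(h,1)$ and alternating $\widetilde\tau$ and $\widetilde\rho$ (using the relation $\lambda\tau=\tau\lambda$ to match up the two sides $j=\pm1$) shows that the $\langle\widetilde\rho,\widetilde\tau\rangle$-orbit through $(h,1)$ is exactly the eight flags $(h',\pm1)$ with $h'\in\{h,\lambda h,\tau h,\lambda\tau h\}$, the edge of $G$ through $h$. Hence $|\widetilde V|=|E|$, and $\widetilde\chi = |E|-2|E|+|V|+|F| = \chi$.

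For orientability, I would prove both directions. Given an orientation partition $H=H_{+1}\sqcup H_{-1}$ of $R$, define
\[
\widetilde H_{\epsilon} = \{(h,j)\in\widetilde H : h\in H_{\epsilon j}\},\qquad \epsilon\in\{\pm1\}.
\]
Since $\rho$, $\tau$, $\lambda$ each swap $H_{+1}$ and $H_{-1}$, a direct check against the defining formulas shows that $\widetilde\lambda$, $\widetilde\rho$, and $\widetilde\tau$ each swap $\widetilde H_{+1}$ and $\widetilde H_{-1}$. Conversely, from an orientation partition $\widetilde H=\widetilde H_{+}\sqcup\widetilde H_{-}$ of $\widetilde R$, set $H_{+}=\{h\in H : (h,1)\in\widetilde H_{+}\}$ and $H_{-}=H\setminus H_{+}$. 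That $\rho$ and $\tau$ swap $H_{\pm}$ is immediate from $\widetilde\lambda$ and $\widetilde\rho$ swapping $\widetilde H_{\pm}$ on the slice $j=1$. For $\lambda$, observe that $\widetilde\tau\widetilde\rho\widetilde\tau(h,1)=(\lambda h,1)$; since this composition is a product of three parity-swapping maps, it itself swaps $\widetilde H_{\pm}$, so $\lambda$ swaps $H_{\pm}$ as required.

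The main obstacle is the vertex orbit calculation: the subtlety is that $\widetilde\rho$ behaves differently on the two slices $j=\pm 1$ (acting as $\tau$ on one, $\lambda$ on the other), so one has to see how $\widetilde\tau$ stitches the slices together to fill out the full orbit of size eight, which is precisely where $\lambda\tau=\tau\lambda$ enters. Everything else is mechanical bookkeeping against the definitions.
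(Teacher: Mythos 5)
Your proposal is correct and follows essentially the same route as the paper: establish $\widetilde\chi=\chi$ by counting $|\widetilde V|=|E|$, $|\widetilde E|=2|E|$, $|\widetilde F|=|V|+|F|$, and transfer the orientation partition via $(h,j)\in\widetilde H_k \Leftrightarrow h\in H_{jk}$. The only difference is cosmetic — you carry out the orbit computations and the converse orientability direction explicitly where the paper states the counts and says the reverse direction is analogous — and your size-eight vertex-orbit calculation checks out.
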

\begin{proof}
We show that both rotation systems have the same orientability and the same genus. 

For orientability, we show $R$ is orientable if and only if $\widetilde R$ is. Start by assuming $R$ is orientable. Therefore, we can partition $H$ into two sets $H_{\pm1}$ such that $\lambda$, $\rho$, $\tau$ applied to any element of $H_{k}$ takes it to an element of $H_{-k}$. Define two sets $\widetilde H_{\pm1}\subseteq\widetilde H=H\times\{-1,1\}$ such that $(h,j)\in\widetilde H_{k}$ if and only if $h\in H_{jk}$. These two sets clearly partition $\widetilde H$. By definition, if $(h,j)\in\widetilde H_k$ then $\widetilde\tau(h,j)=(h,-j)\in\widetilde H_{-k}$. Also, by using the orientability of $R$, $\widetilde\lambda(h,j)=(\rho(h),j)\in\widetilde H_{-k}$ and $\widetilde\rho(h,j)$ is either $(\lambda(h),j)$ or $(\tau(h),j)$ both of which are in $\widetilde H_{-k}$. We conclude that $\widetilde R$ is orientable. The other direction -- if $\widetilde R$ is orientable, then so is $R$ -- follows the same argument in reverse.

The genus question boils down to counting the numbers of vertices, edges, and faces. Let $V,E,F$ and $\widetilde V,\widetilde E,\widetilde F$ be the sets of vertices, edges, faces for $R$ and $\widetilde R$. Then, we note
\begin{align}
|\widetilde V|&=|E|,\\
|\widetilde E|&=\sum_{v\in V}\text{deg}(v)=2|E|,\\
|\widetilde F|&=|V|+|F|.
\end{align}
Therefore, $\widetilde\chi=|\widetilde V|-|\widetilde E|+|\widetilde F|=\chi$ and thus the genuses are the same.
\end{proof}

\begin{lemma}
Let $R'=(H',\lambda',\rho',\tau')$ be a rotation system. Then, $R'=\widetilde R$ is the medial rotation system of some rotation system $R=(H,\lambda,\rho,\tau)$ if and only if $R'$ is checkerboardable and 4-valent.
\end{lemma}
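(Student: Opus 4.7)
The plan is to prove the two directions separately, with the forward direction (medial systems are 4-valent and checkerboardable) amounting to direct unpacking of Definition~\ref{def:medial_graph}, and the reverse direction requiring an explicit reconstruction of $R$ from $R'$ where 4-valency is used in exactly one place.

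For the forward direction, starting from $R' = \widetilde R$ with $\widetilde H = H \times \{\pm 1\}$, I will verify checkerboardability by observing that the formulas in Definition~\ref{def:medial_graph} make $\widetilde\lambda$ and $\widetilde\rho$ preserve the second coordinate while $\widetilde\tau$ flips it, so the partition $\widetilde H = (H\times\{1\}) \sqcup (H\times\{-1\})$ meets the criterion of Lemma~\ref{lem:checkerboardability_rotation_system}. For 4-valency I will compute the orbit of an arbitrary flag $(h,1)$ under $\langle\widetilde\rho,\widetilde\tau\rangle$ and exhibit exactly the eight elements $\{(h,\pm 1),(\tau h,\pm 1),(\lambda h,\pm 1),(\tau\lambda h,\pm 1)\}$, using $\lambda\tau=\tau\lambda$ from property~(ii) of $R$ to close the orbit; since vertex degree is half the orbit size, this gives degree four.

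For the reverse direction, suppose $R'=(H',\lambda',\rho',\tau')$ is 4-valent and checkerboardable with partition $H'=H'_w\sqcup H'_b$. I will set $H:=H'_w$ and, motivated by matching the medial formulas, define $\rho := \lambda'|_{H'_w}$, $\tau := \rho'|_{H'_w}$, and $\lambda := (\tau'\rho'\tau')|_{H'_w}$. These are well-defined fixed-point-free involutions on $H$: $\lambda'$ and $\rho'$ preserve the partition by checkerboardability, while $\tau'$ swaps it so the composition $\tau'\rho'\tau'$ preserves $H'_w$ and squares to the identity. Transitivity of the monodromy group of $R$ on $H$ will follow from transitivity of $R'$ on $H'$ via the parity homomorphism $\phi:\langle\lambda',\rho',\tau'\rangle\to\mathbb{Z}/2\mathbb{Z}$ sending $\tau'\mapsto 1$: any word taking $h\in H'_w$ to $h'\in H'_w$ must lie in $\ker\phi$, and using $\lambda'\tau'=\tau'\lambda'$ one sees that $\ker\phi$ is generated by $\lambda',\rho',\tau'\rho'\tau'$, which restrict on $H'_w$ to exactly $\rho,\tau,\lambda$.

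The main obstacle, and the only place 4-valency actually enters, is verifying $\lambda\tau=\tau\lambda$ on $H$. Unpacking the definitions, this reduces to $(\rho'\tau')^2=(\tau'\rho')^2$ on $H'_w$. The key observation is that 4-valency means every orbit of $\langle\rho',\tau'\rangle$ has size $8$, so $(\rho'\tau')^4=\mathrm{id}$ globally, whence $(\rho'\tau')^2=(\rho'\tau')^{-2}=(\tau'\rho')^2$ as required. Having verified $R$ is a rotation system, I will close by checking that the identification $(h,1)\leftrightarrow h$, $(h,-1)\leftrightarrow\tau'(h)$ carries the medial system $\widetilde R$ to $R'$; this is a direct comparison of Definition~\ref{def:medial_graph} against our definitions of $\lambda,\rho,\tau$, with the $j=-1$ case for $\widetilde\lambda$ invoking $\lambda'\tau'=\tau'\lambda'$ once more.
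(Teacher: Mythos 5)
Your proposal is correct and follows essentially the same route as the paper's proof in Appendix~E: the same checkerboard partition $\widetilde H = (H\times\{1\})\sqcup(H\times\{-1\})$ and orbit computation for the forward direction, and the same reconstruction $\rho=\lambda'|_{H'_w}$, $\tau=\rho'|_{H'_w}$, $\lambda=(\tau'\rho'\tau')|_{H'_w}$ with $4$-valency supplying $(\rho'\tau')^4=\mathrm{id}$ and hence $\lambda\tau=\tau\lambda$, followed by the same identification $(h,1)\leftrightarrow h$, $(h,-1)\leftrightarrow\tau'(h)$. The one point where you go beyond the paper is your explicit verification of transitivity of $\langle\lambda,\rho,\tau\rangle$ on $H'_w$ via the parity homomorphism and the generating set $\{\lambda',\rho',\tau'\rho'\tau'\}$ of its kernel --- a check the paper leaves implicit, and which you carry out correctly.
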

\begin{proof}
We begin by showing a medial rotation system is 4-valent and checkerboardable. The degree of a vertex is the size of an orbit of $\widetilde\rho\widetilde\tau$. Clearly, any orbit must have even size $L$ because $\widetilde\rho\widetilde\tau$ applied to $(h,j)\in\widetilde H= H\times\{-1,1\}$ flips the sign of $j$. Applying $(\widetilde\rho\widetilde\tau)^L$ to $(h,j)$ gives $((\tau\lambda)^{L/2}h,j)$ if $j=1$ and $((\lambda\tau)^{L/2}h,j)$ if $j=-1$. By definition, $(\lambda\tau)$ has order two, so $L=4$ is the size of any orbit of $\widetilde\rho\widetilde\tau$ and the degree of any vertex in the medial graph.

Checkerboardability demands that there is a partition of $\widetilde H$ into two sets $\widetilde H_{\pm1}$ such that $\tau$ applied to any element $(h,j)\in\widetilde H_k$ gives an element of $\widetilde H_{-k}$ and that applying $\rho$ or $\lambda$ to $(h,j)$ gives an element of $\widetilde H_k$. This clearly achieved by putting $(h,j)$ for any $h$ into the set $\widetilde H_j$.

Now we must show the other direction, that a 4-valent, checkerboardable rotation system $R'$ is the medial graph of another rotation system. Because $R'$ is checkerboardable there is a partition $H'=H'_{-1}\sqcup H'_1$ where $\tau'$ moves elements from $H'_{k}$ to $H'_{-k}$ while $\lambda'$ and $\rho'$ preserve the sets. We let $H=\left\{(h,\tau'(h)):h\in H'_1\right\}$ and define permutations of $H$ called $\rho$, $\lambda$, $\tau$ so that
\begin{align}
\lambda(h,\tau'(h))&=(\tau'\rho'\tau'(h),\rho'\tau'(h)),\\
\rho(h,\tau'(h))&=(\lambda'(h),\tau'\lambda'(h)),\\
\tau(h,\tau'(h))&=(\rho'(h),\tau'\rho'(h)).
\end{align}
Each of these are clearly involutions. To be a legitimate rotation system, $\lambda$ and $\tau$ must also commute. They do so because $R'$ is 4-valent, i.e.~$(\rho'\tau')^4=1$.

We claim the medial rotation system of $R=(H,\lambda,\rho,\tau)$ is $R'$. We must therefore establish a one-to-one map $\phi:\widetilde H=H\times\{-1,1\}\rightarrow H'$. We do this by defining, for all $h\in H'$ and $j\in\{-1,1\}$,
\begin{align}
\phi((h,\tau'(h)),j)&=\bigg\{\begin{array}{ll}h,&j=1\\\tau'(h),&j=-1\end{array},\\
\phi^{-1}(h)&=\bigg\{\begin{array}{ll}((h,\tau'(h)),1),&h\in H'_1\\((\tau'(h),h),-1),&h\in H'_{-1}\end{array}.
\end{align}
Then one can easily show $\phi(\widetilde\lambda(\phi^{-1}(h)))=\lambda'(h)$ for all $h$ and likewise for $\lambda$ replaced with $\rho$ or $\tau$. This completes the proof.
\end{proof}


\section{Using GAP to find normal subgroups}\label{app:GAP_example}
To elucidate the process of finding hyperbolic codes, we give an example of finding low-index normal subgroups in GAP. In particular, we use the LINS package \cite{GAP_LINS}, version $0.5$. For the sake of illustration, we focus on the index 160 normal subgroup $S$ that defines the $\llbracket20,5,4\rrbracket$ code in Fig.~\ref{fig:hyperbolic_codes}(a) and the first entry of Table~\ref{tab:noncheckerboardable_hyperbolic_codes}.

\begin{align*}
\begin{array}{lr}
\mathrm{LoadPackage}(``\mathrm{LINS}");;&\\
F:=\mathrm{FreeGroup}(``l",``r",``t");;&\text{//}l=\lambda, r=\rho, t=\tau\\
\mathrm{AssignGeneratorVariables}(F);;&\text{//makes $l,r,t$ into variable names}\\
G:=F/[l^2, r^2, t^2, (l*r)^5, (l*t)^2, (r*t)^4];;&\text{//quotient group, face-degree 5, vertex-degree 4}\\
H:=\mathrm{LowIndexNormalSubgroupsSearchForAll}(G,200);&\text{//LINS graph of normal subgroups, index $\le200$}\\
L:=\mathrm{List}(H);&\text{//convert LINS graph into a list of LINS nodes}\\
S:=\mathrm{Grp}(L[9]);;&\text{//$S$ happens to be the 9th entry in the list}\\
\mathrm{IsNormal}(G,S);&\text{//check it is a normal subgroup; returns true}\\
\mathrm{GeneratorsOfGroup}(S);&\text{//display (overcomplete) relations of the group}
\end{array}
\end{align*}


Reducing overcomplete generating sets to independent ones was done by hand and verified by drawing the codes as those in Fig.~\ref{fig:hyperbolic_codes} are drawn.

\end{document}